\documentclass[11pt]{article}
\usepackage[left=1in,top=1in,right=1in,bottom=1in,head=.1in,nofoot]{geometry}

\setlength{\footskip}{24pt} %
\usepackage{setspace,url,bm,amsmath} %
\usepackage{tabularx}

\newcommand\clearrow{\global\let\rowmac\relax}
\usepackage{titlesec} %
\titlelabel{\thetitle.\quad} %
\titleformat*{\section}{\bf\Large\center}

\usepackage{graphicx} %
\usepackage{bbm}
\usepackage{latexsym}
\usepackage{caption}
\usepackage[margin=20pt]{subcaption}
\usepackage{hyperref}
\usepackage{booktabs}

\def\undertilde#1{\mathord{\vtop{\ialign{##\crcr
				$\hfil\displaystyle{#1}\hfil$\crcr\noalign{\kern 1.5pt\nointerlineskip}
				$\hfil\tilde{}\hfil$\crcr\noalign{\kern 0pt}}}}}

\usepackage[table]{xcolor}

\newcommand{\GG}[1]{}

\usepackage{amsthm}
\usepackage{amssymb}
\usepackage{amsmath}
\usepackage{color}

\usepackage{algorithm}

\usepackage{comment}
\theoremstyle{definition}

\newtheorem*{theorem*}{Theorem}
\newtheorem{theorem}{Theorem}
\newtheorem*{rmk*}{Remark}
\newtheorem{proposition}{Proposition}
\newtheorem{lemma}{Lemma}

\newtheorem{remark}{Remark}
\newtheorem{corollary}{Corollary}
\newtheorem*{corollary*}{Corollary}

\newcommand{\PPT}{partial permutation test }
\usepackage[osf]{mathpazo}

\usepackage{natbib} %
\bibpunct{(}{)}{;}{a}{}{,} %

\usepackage{etoolbox} %
\apptocmd{\sloppy}{\hbadness 10000\relax}{}{} %

\usepackage{color}
\usepackage{listings}

\def\ind{\begin{picture}(9,8)
         \put(0,0){\line(1,0){9}}
         \put(3,0){\line(0,1){8}}
         \put(6,0){\line(0,1){8}}
         \end{picture}
        }

\def\converged{\stackrel{d}{\longrightarrow}}
\def\convergep{\stackrel{\Pr}{\longrightarrow}}

\newcommand{\bs}{\boldsymbol}
\newcommand{\bXmatrix}{\bs{\mathrm{X}}}

\def\E{\mathbb{E}}

\def\perm{\text{p}}
\def\TV{\text{TV}}

\def\basis{e}

\def\linear{\textrm{Linear}}
\def\poly{\textrm{Poly}}
\def\Pr{\text{Pr}}
\def\remainder{\text{r}}

\def\iid{\text{i.i.d.}}
\def\correct{\textrm{c}}

\usepackage{bbm}
\def\I{\mathbbm{1}}

\usepackage[textsize=scriptsize, textwidth = 2cm, shadow]{todonotes}

\usepackage{multirow}
\usepackage{float}
\newfloat{algorithm}{t}{lop}

\allowdisplaybreaks

\begin{document}
\onehalfspacing

\title{\bf Kernel-based Partial Permutation Test for Detecting Heterogeneous Functional Relationship
}
\author{
Xinran Li, Bo Jiang and Jun S. Liu
\footnote{
Xinran Li, Department of Statistics, University of Illinois at Urbana-Champaign, Champaign, IL 61820 (E-mail: \href{mailto:xinranli@illinois.edu}{xinranli@illinois.edu}). 
Bo Jiang, Managing Director, Two Sigma Investments LP, New York, NY 10013 
(E-mail: \href{mailto:bojiang83@gmail.com}{bojiang83@gmail.com}). 
Jun S. Liu, Department of Statistics, Harvard University, Cambridge, MA 02138
(E-mail: \href{mailto:jliu@stat.harvard.edu}{jliu@stat.harvard.edu}). The views expressed herein are the authors’ alone and are not necessarily the views of Two Sigma Investments LP, or any of its affiliates. This research is partly supported by the NSF grant DMS-1712714.}
}
\date{}
\maketitle
\begin{abstract}
We propose a kernel-based partial permutation test for checking the equality of functional relationship between response and covariates among different groups. The main idea, which is intuitive and easy to implement, is to keep the projections of the response vector $\bs{Y}$ on leading principle components of a kernel matrix fixed and permute $\bs{Y}$'s projections on the remaining principle components. 
The proposed test allows for different choices of kernels, corresponding to different classes of functions under the null hypothesis. 
First, using linear or polynomial kernels, our partial permutation tests are exactly valid in finite samples for linear or polynomial regression models with Gaussian noise; similar results straightforwardly extend to kernels with finite feature spaces. 
Second, by allowing the kernel feature space to diverge with the sample size, the test can be large-sample valid for a wider class of functions. 
Third, for general kernels with possibly infinite-dimensional feature space, 
the partial permutation test is exactly valid when the covariates are exactly balanced across all groups, 
or asymptotically valid when the underlying function follows certain regularized Gaussian processes. 
We further suggest test statistics using likelihood ratio between two (nested) GPR
models, 
and propose computationally efficient algorithms utilizing the EM algorithm and Newton's method, 
where the latter also involves Fisher scoring and quadratic programming and 
is particularly useful when EM suffers from slow convergence. 
Extensions to correlated and non-Gaussian noises have also been investigated theoretically or numerically.  
Furthermore, 
the test can be extended to 
use multiple kernels together 
and can thus 
enjoy properties from each kernel. 
Both simulation study and application illustrate the properties of the proposed test. 
\end{abstract}

{\bf Keywords}: 
permutation test, 
polynomial kernel,
Gaussian kernel, 
Gaussian process regression,
regression discontinuity design

\section{Introduction}
Testing whether the same functional relationship between response and covariates holds across different groups is a challenging and important problem.  %
For example, in clinical trial studies, people want to compare effects of several treatments conditional on some important covariates of patients such as age, gender and genetic information.
Traditional methods assume parametric forms of these functional relationships,  such as linear or quadratic with unknown coefficients.
When such assumptions cannot be supported by prior knowledge, nonparametric tests for the equality of functional relationships were recommended, especially in the exploratory stage of data analysis. Most methods in the nonparametric setting focus on univariate 
functions 
and use kernel estimator for estimating regression curves. For example,  \citet{pardo2007testing} proposed empirical process based procedures for testing the equality of multiple regression curves. 
A comprehensive review on this topic can be found in \citet{neumeyer2003nonparametric}. 

Testing the equality of functions has also been studied from a Bayesian perspective. \citet{behseta2005testing} proposed two methods for testing the equality of two univariate functions using the Bayesian adaptive regression splines. \citet{benavoli2015gaussian} used Gaussian processes for Bayesian hypothesis testing on the equality of two functions, as well as the monotonicity and periodicity of a function. 
\citet{behseta2005hierarchical} applied hierarchical Gaussian processes to study the variability among multiple functions, where they assumed an independent Gaussian process prior for each function and focused on the estimation of the variance component.

A closely related study for comparing two regression functions, but with slightly different focus, is the regression discontinuity design \citep{thistlethwaite1960regression}, 
under which there can be no overlap between covariate distributions for the two groups in comparison. 
In this case, testing equality of two functions essentially reduces to testing whether two functions can be smoothly connected at the boundary. 
Various frequentist approaches have been proposed, including nonparametric kernel regression methods and local linear regression \citep{Hahn2001}; for a comprehensive review, see \citet{Imbensrdd2008}. 
Most existing methods focus mainly on the case with univariate covariates. 
Recently, \citet{BRANSON201914} and \citet{rischard2018bayesian} proposed Bayesian approaches utilizing Gaussian processes, and extended the regression discontinuity design to multivariate settings with spatial covariates. 

In this paper, we first propose a partial permutation test 
for linear functional relationships, and then  generalize it to handle non-linear relationships via kernel methods. 
We demonstrate the exact validity of the \PPT when the kernel corresponds to a finite-dimensional feature mapping whose linear span contains the underlying true function, or the covariates are exactly balanced across all groups.  
We further establish the asymptotic validity of the \PPT for a general smooth functional relationship when we choose the kernel adaptively with the sample size, 
or when the underlying function is from some Gaussian process.  
Note that 
the 
Gaussian process regression
model has received much attention recently for modeling functional relationships \citep[see, e.g.,][]{rasmussen2006gaussian, shi2011gaussian}
and is closely related to the kernel regression, which minimizes a squared loss with penalization on the functional norm in a reproducing kernel Hilbert space (RKHS) characterized by a kernel. 
Intuitively, 
we can understand 
$p$-values under the Gaussian process regression model 
in an averaging sense over the Gaussian process prior on the underlying function. As \citet*{meng1994posterior} suggested, uniformity under parameters following prior is a useful criterion for the evaluation of any proposed $p$-value. 
We also investigate the power of the test when there exists functional heterogeneity across different groups, and extend the test to cases with correlated errors across individuals.

The paper proceeds as follows. 
Section \ref{sec:notation} introduces notations, model assumptions and the \PPT  based on the linear or polynomial kernel, 
and proves its finite-sample validity when the underlying function is linear or polynomial. 
Section \ref{sec:general} first studies the \PPT using general kernels for general underlying functions under the null hypothesis with homogeneous functional relationship across all groups, 
and then 
shows its finite-sample or asymptotic validity under additional conditions on the kernel, the underlying function and the covariate distribution.  
Section \ref{sec:ppt_alter} studies the power of the \PPT under the alternative hypotheses with heterogeneous function relations across all groups. 
Section \ref{sec:implement} discusses practical implementation of the partial permutation test. 
Section \ref{sec:corr} extends the \PPT to correlated noises. 
Section \ref{sec:simu} conducts simulation study, and Section \ref{sec:app} applies the proposed test to a real data set. 
Section \ref{sec:dicuss} concludes with a short discussion.

\section{Notations, Hypotheses, Kernels, and Permutation Tests}\label{sec:notation}

\subsection{Notations and Problem Formulation}
Let $Y_i \in \mathbb{R}$, $\bs{X}_i \in \mathbb{R}^d$ and $Z_i\in \{1,2,\cdots, H\}$ denote the response variable, covariates of dimension $d$, and the group indicator for the $i$th ($1\le i \le n$) observation, respectively,  
and let $\bs{Y} = (Y_1, Y_2, \ldots, Y_n)^\top$, $\bXmatrix = (\bs{X}_1, \bs{X}_2, \ldots, \bs{X}_n)^\top$
and 
$\bs{Z} = (Z_1, Z_2, \ldots, Z_n)^\top$ be the %
corresponding vectors of all the $n$ units. 
Given observations from multiple groups, we want to test whether they share the same (unknown) functional relationship.  
Specifically, given a response variable $Y$ and a vector of covariates $\bs{X}$, the null hypothesis assumes that individuals from $H$ ($H \geq 2$) groups have the same relationship $\E(Y \mid \bs{X}) = f_0 ( \bs{X} )$ plus a Gaussian noise with constant variance, where $f_0$ is an unknown function in a given class (e.g., linear or polynomial functions), i.e.,
\begin{align}\label{eq:H_0_fixed}
    H_0:  Y_i = f_0( \bs{X}_i ) + \varepsilon_i, \quad \varepsilon_i \mid \bXmatrix, \bs{Z} \ \overset{\iid}{\sim} \  \mathcal{N}(0, \sigma_0^2), \qquad (1\le i \le n)
\end{align}
where $\bXmatrix$ and $\bs{Z}$ can be either fixed or random,
and noises $\varepsilon_i$'s are independent and identically distributed ($\iid$) conditional on $\bXmatrix$ and $\bs{Z}$. 
The alternative hypothesis allows different groups to have different (unknown) functions $f_1, \ldots, f_H$:
\begin{align}\label{eq:H_1_equal}
    H_{1}: Y_i =  f_{Z_i}( \bs{X}_{i} )+\varepsilon_i, \quad \varepsilon_i \mid \bXmatrix, \bs{Z} \ \overset{\iid}{\sim} \  \mathcal{N}(0,\sigma_0^2), \qquad (1\le i \le n)
\end{align}
or even different noise variances in different groups:
\begin{align}\label{eq:H_1_unequal}
H'_{1}: Y_i =  f_{Z_i}(\bs{X}_{i})+\varepsilon_{i}, \quad \varepsilon_i \mid \bXmatrix, \bs{Z} \ \overset{\iid}{\sim} \  \mathcal{N}(0,\sigma_{Z_i}^2), \qquad (1\le i \le n). 
\end{align}

\subsection{Partial Permutation Test for Linear Functional Relationship}\label{sec:linear}
We first consider a special case in which the relationship between the response and covariates under $H_0$ is linear,  i.e.,
$f_0(\bs{x}) = \beta_0 + \sum_{k=1}^d \beta_k x_k$  in \eqref{eq:H_0_fixed}
for some $\bs{\beta} = (\beta_0, \beta_1, \ldots, \beta_d)^\top \in \mathbb{R}^{d+1}$. Let $K_\linear(\bs{x},\bs{x}')= 1+ \bs{x}^\top\bs{x}'$ denote the linear kernel function with $\bs{x}, \bs{x}'\in \mathbb{R}^d$. We write the corresponding sample kernel matrix as
$\bs{K}_n \in \mathbb{R}^{n\times n}$, with its $(i,j)$-th element being $[\bs{K}_n]_{ij} = K_\linear(\bs{X}_i, \bs{X}_j)$ and its eigen-decomposition denoted as  $\bs{\Gamma} \bs{C} \bs{\Gamma}^\top$. Here, $\bs{\Gamma} =(\bs{\gamma}_1, \cdots, \bs{\gamma}_n) \in \mathbb{R}^{n\times n}$ is an orthogonal matrix and  $\bs{C} = \text{diag}(c_1, \ldots, c_n)
\in \mathbb{R}^{n\times n}$ has non-negative diagonal elements in a descending order.

The linear kernel can be equivalently written as 
$K_\linear(\bs{x},\bs{x}') = \phi(\bs{x})^\top \phi(\bs{x}')$, an inner product in a feature space defined by the feature mapping $\phi: \bs{x} \rightarrow (1, \bs{x}^\top)^\top \in \mathbb{R}^{d+1}$. 
Let $\bs{\Phi} \equiv (\phi(\bs{X}_1), \ldots, \phi(\bs{X}_n))^\top \in \mathbb{R}^{n \times (d+1)}$ be the matrix 
of all the observed covariates mapped into the feature space,
and let $\bs{f}_0 \equiv (f_0(\bs{X}_1), \cdots, f_0(\bs{X}_n))^\top$ be the vector 
of function values evaluated at these covariates. Under 
the null model  \eqref{eq:H_0_fixed}, 
we can verify that 
$\bs{f}_0
= \bs{\Phi} \bs{\beta}$ lies in the column space of $\bs{\Phi}$, or equivalently the column space of kernel matrix $\bs{K}_n = \bs{\Phi} \bs{\Phi}^\top$. 
Because  $\bs{K}_n$ has at most rank $d+1$, 
the eigenvectors $(\bs{\gamma}_{d+2}, \ldots, \bs{\gamma}_{n})$ must be orthogonal to the column space of $\bs{K}_n$, as well as the vector $\bs{f}_0$ in this column space.  
As a result, under $H_0$,
we have 
\begin{align*}
\bs{\Gamma}^\top \bs{Y} =
\left(
\bs{\gamma}_1^\top \bs{f}_0, \bs{\gamma}_2^\top \bs{f}_0, \cdots, \bs{\gamma}_{d+1}^\top \bs{f}_0, 0, \cdots, 0
\right)^\top +
\left(
\bs{\gamma}_1^\top \bs{\varepsilon},  \bs{\gamma}_2^\top \bs{\varepsilon}, \cdots, \bs{\gamma}_n^\top \bs{\varepsilon}
\right)^\top,
\end{align*}
where $\bs{\varepsilon} = (\varepsilon_1, \ldots, \varepsilon_n)^\top \in \mathbb{R}^n$. 
Therefore, $\bs{\gamma}_i^\top \bs{Y}=\bs{\gamma}_i^\top \bs{\varepsilon}$ for $i=d+2, \ldots, n$,
and are $\iid$ conditional on $\bXmatrix$ and $\bs{Z}$. 
Consequently, 
given any test statistic,
we can perform permutation tests by randomly permuting $\bs{\gamma}_i^\top \bs{Y}$  for $i=d+2,\ldots,n$. 
Note that this procedure 
takes advantage of the fact that  projections of $\bs{Y}$ onto the eigenvectors corresponding to zero eigenvalues are just random noises. 
Intuitively, this observation may be generalized so that one can treat  projections of $\bs{Y}$ onto eigenvectors with small eigenvalues as Gaussian noises (i.e., $\bs{\varepsilon}$) instead of signals (i.e., $\bs{f}_0$), which are then exchangeable and permit permutation tests.

For the convenience of presentation, we summarize in Algorithm \ref{alg:partial_permu}  a general discrete or continuous partial permutation test procedure with a given kernel function $K$, permutation size $b_n$, and test statistic $T$. 
For linear functional relationships, we have the following theorem on the validity of the $p$-value from either the discrete or continuous partial permutation test using the linear kernel. 

\begin{algorithm}
\small
\begin{itemize}
		\item[1)] Perform eigen-decomposition for kernel matrix $\bs{K}_n = \bs{\Gamma} \bs{C} \bs{\Gamma}^\top$, where $[\bs{K}_n]_{ij}=K(\bs{X}_i, \bs{X}_j)$, $\bs{\Gamma}$ is an orthogonal matrix and $\bs{C}$ is a diagonal matrix with diagonal elements in descending order.
		\item[2)] Let $\bs{W} = \bs{\Gamma}^\top \bs{Y}\equiv (W_1,\ldots, W_n)^\top.$
		\begin{itemize}
		    \item[(a)] For the {\it discrete} partial permutation test, we define the permutation set $\mathcal{S}_y$ as follows:
		\begin{align*}
		\mathcal{S}_y =& \{ \bs{Y}_\psi: \bs{Y}_\psi=\bs{\Gamma} \bs{W}_\psi, \bs{W}_\psi \in \mathcal{S}_w \}, \ \mbox{with} \\
		\mathcal{S}_w =& \{ \bs{W}_\psi: \bs{W}_\psi = (W_{\psi(1)}, W_{\psi(2)}, \cdots, W_{\psi(n)}), \psi \in \mathcal{M}(n, b_n) \},
		\end{align*}
		where $\mathcal{M}(n, b_n)$ is defined to be the set of permutations of $\{1,2,\cdots,n\}$ that keep the first $n-b_n$ elements invariant, i.e.,
		\begin{align*}
		\mathcal{M}(n,b_n) = & \big\{\psi:  \ \psi(i)=i, \text{ for } i = 1,2,\cdots, n-b_n,\\
		& \ \ \text{ and } \{ \psi(n-b_n+1 ), \cdots, \psi(n) \} \text{ is a permutation of } \{ n-b_n+1,\cdots, n \} \big\}.
		\end{align*}
        Note that we allow the  sets $\mathcal{S}_y$ and $\mathcal{S}_w$ to have members of identical value. 
        For example, if  $\psi \neq \psi' \in \mathcal{M}(n, b_n)$ but $\bs{W}_{\psi} = \bs{W}_{\psi'}$ (which may happen if some $W_j$'s take on the same value), then they are treated as two elements in $\mathcal{S}_w$.
		\item[(b)] For {\it continuous} partial permutation test, define the permutation set $\mathcal{S}_y$ as follows:
		\begin{align*}
		\mathcal{S}_y =& \{ \bs{Y}^*: \bs{Y}^*=\bs{\Gamma} \bs{W}^*,\bs{W}^* \in \mathcal{S}_w \}, \ \mbox{with}\\
		\mathcal{S}_w =& \left\{ \bs{W}^*: W^*_i = W_i, i=1,2,...,n-b_n, \sum_{i=n-b_n+1}^n (W_i^*)^2 = \sum_{i=n-b_n+1}^n W_i^2 \right\}.
		\end{align*}
		\end{itemize}
		\item[3)] Draw $\bs{W}^{\perm} \in \mathcal{S}_w$ uniformly, and let  $\bs{Y}^{\perm} = \bs{\Gamma} \bs{W}^{\perm}$. Naturally,
	    $\bs{Y}^{\perm}$ is uniformly distributed on $\mathcal{S}_y$,  
        where both $\mathcal{S}_w$ and $\mathcal{S}_y$ can be viewed as a function of $\bs{X}$ and $\bs{Y}$.
		\item[4)] The resulting partial permutation $p$-value with test statistic $T$ is then defined as
		\begin{align*}
		p( \bXmatrix, \bs{Y}, \bs{Z} ) = 
		\Pr \{ T( \bXmatrix, \bs{Y}^{\perm}, \bs{Z} ) \geq T(\bXmatrix, \bs{Y}, \bs{Z} ) \mid \bXmatrix, 
        \bs{Y}, 
        \bs{Z} 
		 \}.
		\end{align*}
\end{itemize}
\caption{
{ Discrete and continuous partial permutation tests with kernel function $K$, permutation size $b_n$ and test statistic $T$ for $\{ \bXmatrix, \bs{Y}, \bs{Z} \}$}}
\label{alg:partial_permu}
\end{algorithm}

\begin{theorem}\label{linear_permutation_pval_valid}
Let $\{(\bs{X}_i,Y_i,Z_i)\}_{1\leq i \leq n}$ denote samples from 
the model under $H_0$ in (\ref{eq:H_0_fixed}), where the functional relationship $f_0(\bs{x})$ is linear in $\bs{x}$. 
Then, the $p$-value obtained by either the discrete or continuous partial permutation test described in Algorithm \ref{alg:partial_permu} with kernel $K_\linear$, permutation size $b_n \le n-(d+1)$, and any test statistic $T$ is valid, i.e., $\forall \alpha \in (0,1)$, 
$
\Pr_{H_0} \{ p(\bXmatrix, \bs{Y}, \bs{Z} ) \leq \alpha 
\mid \bXmatrix, \bs{Z}
\} \leq \alpha. 
$
\end{theorem}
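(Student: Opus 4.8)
The plan is to argue conditionally on $\bXmatrix$ and $\bs{Z}$ throughout, treating $\bs{\Gamma}$ and the split index $n-b_n$ as fixed, and to reduce the claim to the classical validity of a randomization test with respect to a group of transformations under which the conditional law of the data is invariant. The two cases (discrete and continuous) will be handled in parallel, differing only in the choice of group.

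First, I would record the distributional structure of $\bs{W} = \bs{\Gamma}^\top \bs{Y}$. As already shown in the discussion preceding the theorem, under $H_0$ with the linear kernel the vector $\bs{f}_0$ lies in the column space of $\bs{K}_n$, which has rank at most $d+1$, so $\bs{\gamma}_i^\top \bs{f}_0 = 0$ and hence $W_i = \bs{\gamma}_i^\top \bs{Y} = \bs{\gamma}_i^\top \bs{\varepsilon}$ for every $i \ge d+2$. Because $\bs{\Gamma}$ is orthogonal and $\bs{\varepsilon}\mid \bXmatrix,\bs{Z} \sim \mathcal{N}(0,\sigma_0^2 \bs{I}_n)$, the transformed noise $\bs{\Gamma}^\top\bs{\varepsilon}$ is again $\mathcal{N}(0,\sigma_0^2\bs{I}_n)$; consequently $W_{d+2},\ldots,W_n$ are i.i.d.\ $\mathcal{N}(0,\sigma_0^2)$ and independent of $(W_1,\ldots,W_{d+1})$. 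The hypothesis $b_n\le n-(d+1)$ gives $n-b_n+1\ge d+2$, so the coordinates $\{n-b_n+1,\ldots,n\}$ that the test permutes (or rotates) are a subset of these pure-noise coordinates. Thus, conditional on $\bXmatrix,\bs{Z}$, the subvector $(W_{n-b_n+1},\ldots,W_n)$ is i.i.d.\ $\mathcal{N}(0,\sigma_0^2)$ and independent of the fixed block $(W_1,\ldots,W_{n-b_n})$.

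Next, I would identify the transformation group and its invariance. For the discrete test the relevant group is $G=\mathcal{M}(n,b_n)\cong S_{b_n}$, acting on $\bs{W}$ by permuting the last $b_n$ coordinates; for the continuous test it is the orthogonal group $G=O(b_n)$ acting on the last $b_n$ coordinates, with the uniform law on $\mathcal{S}_w$ being exactly the push-forward of Haar measure on $O(b_n)$ under the orbit map (the orbit of the last block is the sphere of radius $(\sum_{i>n-b_n}W_i^2)^{1/2}$, on which $O(b_n)$ acts transitively). The structure from the first step---an i.i.d.\ Gaussian block independent of the remaining coordinates---is invariant under permutations of that block and, being spherically symmetric, under rotations of it. Hence in both cases the conditional law of $\bs{W}$ given $\bXmatrix,\bs{Z}$ is $G$-invariant, i.e.\ $g\bs{W}\stackrel{d}{=}\bs{W}$ for every $g\in G$. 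Writing the $p$-value returned by Algorithm \ref{alg:partial_permu} as $p(\bs{W})=\mu\{g\in G: T(\bXmatrix,\bs{\Gamma} g\bs{W},\bs{Z})\ge T(\bXmatrix,\bs{\Gamma}\bs{W},\bs{Z})\}$, with $\mu$ the uniform (counting, resp.\ Haar) probability measure on $G$, the group structure yields the familiar identity that $p(g_0\bs{W})$ equals the $\mu$-rank of $T(\bXmatrix,\bs{\Gamma} g_0\bs{W},\bs{Z})$ among $\{T(\bXmatrix,\bs{\Gamma} g\bs{W},\bs{Z})\}_{g\in G}$, via the substitution $g\mapsto g g_0$ that preserves $\mu$. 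Combining this with the $G$-invariance (so that $\bs{W}\stackrel{d}{=}g^*\bs{W}$ for $g^*\sim\mu$ independent of $\bs{W}$) and the elementary bound that a $\mu$-rank statistic is stochastically no smaller than uniform---the source of the inequality rather than equality, because of the ties that the algorithm explicitly retains with multiplicity---gives $\Pr_{H_0}\{p(\bs{W})\le\alpha\mid\bXmatrix,\bs{Z}\}\le\alpha$, which is the stated conditional inequality.

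I expect the main obstacle to be the clean and rigorous treatment of the continuous case: verifying that drawing $\bs{W}^{\perm}$ uniformly on the sphere $\mathcal{S}_w$ coincides with applying a Haar-random element of $O(b_n)$ to the last block, and that the rank / probability-integral-transform argument---transparent for the finite group $S_{b_n}$---carries over verbatim to the compact group $O(b_n)$ equipped with Haar measure (measurability of $g\mapsto T(\bXmatrix,\bs{\Gamma} g\bs{W},\bs{Z})$ and an application of Fubini). By contrast, the first two steps are essentially bookkeeping once the noise-projection structure of $\bs{W}$ is in hand.
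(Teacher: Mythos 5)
Your proposal is correct and follows essentially the same route as the paper, which proves this theorem as a special case of Corollary \ref{cor:kernel_finite_dim_feature_space}: the key step in both is that $\bs{\gamma}_i^\top\bs{f}_0=0$ for $i\ge d+2$, so the permuted block of $\bs{W}$ is pure i.i.d.\ Gaussian noise, making the conditional law of $\bs{Y}$ given the orbit $\mathcal{S}_y$ exactly uniform on $\mathcal{S}_y$ (the paper phrases this as the total-variation bound of Lemma A3 collapsing to zero), after which validity follows from the standard super-uniformity of the orbit-rank $p$-value (the paper's Lemmas A1--A2, your group-invariance/Haar-measure formulation). The two arguments differ only in packaging, not in substance.
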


\begin{remark}
When the matrix $\bs{\Phi}$ consisting of the covariates mapped into the feature space is not of full rank, we can relax the constraint to be $b_n \le n - \text{rank}(\bs{\Phi})$. Similar relaxations also hold for Theorem \ref{poly_permutation_pval_valid} and Corollaries \ref{cor:kernel_finite_dim_feature_space} and \ref{cor:diverg_kernel}.
\end{remark}

Theorem 
\ref{linear_permutation_pval_valid}
suggests that we can use any test statistic to conduct a valid permutation test as long as the underlying functional relationship 
between the response and the covariates is linear.
To achieve a high power when the null hypothesis is false, we suggest to use the likelihood ratio statistics with respect to alternative hypotheses that are of particular interest. 
For example, we may choose either \eqref{eq:H_1_equal} or  \eqref{eq:H_1_unequal} as the alternative hypothesis, where we assume that the functions 
$f_1, \ldots, f_H$
are still linear in the covariates but can have different coefficients across the $H$ groups. 

Under the Gaussian linear regression model,
Algorithm \ref{alg:partial_permu} is able to generate permutation samples that change only the responses but keep both the covariates and group indicators fixed, which means that our partial permutation test is an exact conditional test -- conditioning on the the covariates and group indicators $(\bXmatrix, \bs{Z})$.
This is important since it 
avoids imposing 
any distributional assumption on $(\bXmatrix, \bs{Z})$. 
As a side note, simply permuting the group indicators $Z_i$'s may not lead to a valid permutation test since such a permutation does not maintain the joint distribution of the covariates and the group indicator.
An analogous approach is the classic bootstrap procedure based on residual  resampling \citep{Freedman1984, Hinkley1988}, which generates new data similar to the observed ones but keeps $(\bXmatrix, \bs{Z})$ fixed. 
In general, the residual bootstrap can help relax the Gaussianity assumption on the noises, but loses the finite-sample exact validity. 
Moreover,  the parametric F-test, whose test statistic is equivalent to the likelihood ratio statistic, is finite-sample valid and also regarded as most powerful under the linear model with Gaussian noises. 
As demonstrated both theoretically and empirically in Sections \ref{sec:ppt_alter} and \ref{sec:simu}, 
our partial permutation test can have almost the same power as the F-test.

\subsection{Partial Permutation Test for Polynomial Functional Relationship}\label{sec:poly}

We consider here a more general case in which the relationship between the response and covariates under $H_0$ is a polynomial of degree 
$p$ (or smaller), where $p$ is a positive integer. 
Specifically,  under $H_0$,
we assume that  $f_0(\bs{x}) = \sum_{j_1 + j_2 + \ldots + j_d \le p} \beta_{j_1j_2\ldots j_d} x_{1}^{j_1} x_{2}^{j_2} \cdots x_{d}^{j_d}$.
Let $K_\poly(\bs{x},\bs{x}')= (1+ \bs{x}^\top\bs{x}')^p$ denote a degree-$p$ polynomial kernel function.
We again let $\bs{K}_n$ denote the corresponding sample kernel matrix with entries $[\bs{K}_n]_{ij} = K_\poly(\bs{X}_i, \bs{X}_j)$, 
and let $\bs{\Gamma} \bs{C} \bs{\Gamma}^\top$ be the eigen-decomposition of  $\bs{K}_n$, where $\bs{C} = \text{diag}(c_1, \ldots, c_n)$ is the diagonal matrix with   non-negative eigenvalues $c_i$  in descending order,
and  $\bs{\Gamma} =(\bs{\gamma}_1, \cdots, \bs{\gamma}_n) \in \mathbb{R}^{n\times n}$ is an orthogonal matrix.

As with the linear case, we can rewrite the kernel matrix's entry  as an inner product in a feature space defined by the feature mapping $\phi$, i.e., $K_\poly(\bs{x},\bs{x}') = \phi(\bs{x})^\top \phi(\bs{x}')$, where $\phi(\bs{x})$ consists of all the monomials $x_1^{j_1}x_2^{j_2}\ldots x_d^{j_d}$ with $j_1+j_2 + \ldots +j_d \le p$ up to some positive coefficients. 
Let $\bs{\Phi} =  (\phi(\bs{X}_1), \ldots, \phi(\bs{X}_n))^\top$ be the $n \times \binom{d+p}{d}$ matrix consisting of the observed covariates mapped into the feature space, and  let 
$\bs{f}_0 = (f_0(\bs{X}_1), \cdots, f_0(\bs{X}_n))^\top$ denote the vector of  function values evaluated at these covariates. 
Under the null model  \eqref{eq:H_0_fixed}, 
we can verify that 
$\bs{f}_0$ must lie in the column space of $\bs{\Phi}$ or equivalently the column space of $\bs{K}_n = \bs{\Phi} \bs{\Phi}^\top$, i.e., $\bs{f}_0 = \bs{\Phi} \bs{\beta}$ for some $\bs{\beta}\in \mathbb{R}^{\binom{d+p}{d}}$.  
Because the rank of $\bs{K}_n$ is at most $\binom{d+p}{d}$ provided that $\binom{d+p}{d}<n$,  
$(\bs{\gamma}_{\binom{d+p}{d}+1}, \ldots, \bs{\gamma}_{n})$ must be orthogonal to the column space of $\bs{K}_n$, as well as $\bs{f}_0$ in the column space.  
Therefore, under $H_0$, we have
\begin{align*}
\bs{\Gamma}^\top \bs{Y} =
\left(
\bs{\gamma}_1^\top \bs{f}_0, \cdots, \bs{\gamma}_{\binom{d+p}{d}}^\top \bs{f}_0, 0, \cdots, 0
\right)^\top +
\left(
\bs{\gamma}_1^\top \bs{\varepsilon}, \cdots, \bs{\gamma}_n^\top \bs{\varepsilon}
\right)^\top, 
\end{align*}
recalling that $\bs{Y}=(Y_1, \ldots, Y_n)^\top$ and $\bs{\varepsilon} = (\varepsilon_1, \ldots, \varepsilon_n)^\top$. 
So
$\{\bs{\gamma}_i^\top \bs{Y}, \ i = \binom{d+p}{d}+1, \ldots,  n\} = \{\bs{\gamma}_i^\top \bs{\varepsilon}, \ i = \binom{d+p}{d}+1, \ldots, n\}$ are $\iid$ conditional on $(\bXmatrix, \bs{Z})$, and
we can perform permutation test by permuting $\{\bs{\gamma}_i^\top \bs{Y}, \  i=\binom{d+p}{d}+1, \ldots, n\}$.

\begin{theorem}\label{poly_permutation_pval_valid}
	Let $\{(\bs{X}_i,Y_i,Z_i)\}_{1\leq i \leq n}$ denote samples from 
	the model under $H_0$ in (\ref{eq:H_0_fixed}), where the functional relationship $f_0(\bs{x})$ is polynomial in $\bs{x}$ with degree at most $p$. 
	Then, the $p$-value obtained by either the discrete or continuous partial permutation test with kernel $K_\poly$, permutation size $b_n \le n-\binom{d+p}{d}$, and any test statistic $T$ is valid, i.e., $\forall \alpha \in (0, 1)$, 
	$
	\Pr_{H_0}\{p(\bXmatrix,\bs{Y}, \bs{Z} ) \leq \alpha 
	\mid \bXmatrix, \bs{Z}
	\} \leq \alpha. 
	$
\end{theorem}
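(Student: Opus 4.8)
The plan is to mirror the proof of Theorem \ref{linear_permutation_pval_valid}, since the polynomial case differs only in the dimension of the relevant feature space. The whole argument rests on the single structural observation recorded just before the statement: under $H_0$ the signal vector $\bs{f}_0 = \bs{\Phi}\bs{\beta}$ lies in the column space of $\bs{K}_n = \bs{\Phi}\bs{\Phi}^\top$, and this column space has dimension at most $\binom{d+p}{d}$ because $\bs{\Phi}$ has only $\binom{d+p}{d}$ columns. Consequently the eigenvectors $\bs{\gamma}_i$ with $i > \binom{d+p}{d}$ are orthogonal to $\bs{f}_0$, so that $\bs{\gamma}_i^\top \bs{Y} = \bs{\gamma}_i^\top \bs{\varepsilon}$ for those indices. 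The first thing I would do is make precise the claim that the feature map $\phi$ associated with $K_\poly$ collects exactly the monomials of total degree at most $p$, of which there are $\binom{d+p}{d}$; this follows from expanding $(1 + \bs{x}^\top\bs{x}')^p$ by the multinomial theorem and reading off the coordinates of $\phi$, and it is the only place where the specific value $\binom{d+p}{d}$ enters.

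Next I would set up the conditioning. Writing $\bs{W} = \bs{\Gamma}^\top\bs{Y}$ and using orthogonality of $\bs{\Gamma}$ together with $\bs{\varepsilon} \mid \bXmatrix, \bs{Z} \sim \mathcal{N}(0, \sigma_0^2 \bs{I}_n)$, the coordinates of $\bs{\Gamma}^\top\bs{\varepsilon}$ are $\iid$ $\mathcal{N}(0,\sigma_0^2)$ conditional on $(\bXmatrix, \bs{Z})$. The constraint $b_n \le n - \binom{d+p}{d}$ is exactly what guarantees that every permuted index $i \in \{n - b_n + 1, \ldots, n\}$ satisfies $i > \binom{d+p}{d}$, so that the entire permuted block $(W_{n-b_n+1}, \ldots, W_n)$ coincides with the corresponding block of $\bs{\Gamma}^\top\bs{\varepsilon}$ and is therefore $\iid$ $\mathcal{N}(0,\sigma_0^2)$ and independent of the first $n - b_n$ coordinates of $\bs{W}$.

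With this in hand, I would invoke the standard finite-sample permutation argument. For the discrete test, conditional on $(\bXmatrix, \bs{Z})$ and on $(W_1, \ldots, W_{n-b_n})$, the block $(W_{n-b_n+1}, \ldots, W_n)$ is exchangeable, so the conditional law of $\bs{W}$ is invariant under the action of $\mathcal{M}(n, b_n)$; the group-invariance lemma for permutation tests then yields $\Pr_{H_0}\{p(\bXmatrix,\bs{Y},\bs{Z}) \le \alpha \mid \bXmatrix, \bs{Z}\} \le \alpha$. For the continuous test, the same conditioning shows the permuted block is spherically symmetric, hence uniform on the sphere of fixed radius $\sum_{i > n - b_n} W_i^2$, which is precisely the reference set $\mathcal{S}_w$; validity then follows from the analogous rotation-invariance argument. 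Because $T$ enters only through the observed value and the reference draws, both conclusions hold for an arbitrary statistic $T$.

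I do not expect a genuine obstacle here: the argument is identical in form to Theorem \ref{linear_permutation_pval_valid}, and the only substantive verification is the rank bound $\text{rank}(\bs{K}_n) \le \binom{d+p}{d}$, which reduces to counting monomials. The one point demanding care is the conditioning: I must condition on the first $n - b_n$ coordinates, and not merely on $(\bXmatrix, \bs{Z})$, so that the permuted or rotated block is exactly the pure-noise, signal-free subvector. Mixing any signal coordinate into the permuted block would break exchangeability, and it is precisely the bound on $b_n$ that prevents this.
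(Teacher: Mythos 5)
Your proposal is correct and follows essentially the same route as the paper: the entire content is the rank bound $\text{rank}(\bs{K}_n)\le\binom{d+p}{d}$ and the resulting orthogonality $\bs{\gamma}_i^\top\bs{f}_0=0$ for $i>\binom{d+p}{d}$, after which the permuted block is pure $\iid$ Gaussian noise. The only cosmetic difference is that the paper (which treats this theorem as a special case of Corollary \ref{cor:kernel_finite_dim_feature_space}) packages the final step through its total-variation lemmas — showing $\omega(b_n,\sigma_0^{-1}f_0)=0$ forces the conditional law of $\bs{Y}$ on $\mathcal{S}_y$ to coincide exactly with the uniform reference law — whereas you invoke the classical exchangeability/spherical-symmetry group-invariance argument directly; in this exact-null setting the two are equivalent.
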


Similar to that for Theorem \ref{linear_permutation_pval_valid}, 
we recommend to use a likelihood ratio statistic with carefully chosen alternative hypothesis of interest in order  to achieve a good power. 
For example,  
we may hypothesize that under the alternative hypothesis \eqref{eq:H_1_equal} or \eqref{eq:H_1_unequal} the functions $f_1, \ldots, f_H$ are still polynomial up to degree $p$
but can have different coefficients across different groups. 

\section{Partial Permutation Test for General Functional   Relationship}\label{sec:general}

\subsection{Partial Permutation Test under the Null Hypothesis}\label{sec:general_fix_null}

Inspired by the partial permutation test
based on linear and polynomial kernels, 
we generalize 
it to arbitrary kernels. 
Let $K$ be any kernel that is symmetric, positive definite and continuous, and let $\bs{K}_n \in \mathbb{R}^{n \times n}$ be the corresponding sample kernel matrix with $[\bs{K}_n]_{ij} = K(\bs{X}_i, \bs{X}_j)$.
Similar to the previous sections, we define eigen-decomposition on the kernel matrix $\bs{K}_n = \bs{\Gamma} \bs{C} \bs{\Gamma}^\top$, where $\bs{\Gamma} =(\bs{\gamma}_1, \cdots, \bs{\gamma}_n) \in \mathbb{R}^{n\times n}$ is an orthogonal matrix and $\bs{C} = \text{diag}(c_1, \cdots, c_n)
\in \mathbb{R}^{n\times n}$ has non-negative diagonal elements in descending order. 
Recall that $\bs{Y}=(Y_1, \ldots, Y_n)^\top$, $\bs{\varepsilon} = (\varepsilon_1, \ldots, \varepsilon_n)^\top$, and $\bs{f}_0 = (f_0(\bs{X}_1),  \cdots, f_0(\bs{X}_n))^\top$. 
Then, we have
\begin{align}\label{eq:general_kernel_Gamma_Y}
\bs{\Gamma}^\top \bs{Y} =
\left(
\bs{\gamma}_1^\top \bs{f}_0,  \bs{\gamma}_2^\top \bs{f}_0, \cdots, \bs{\gamma}_n^\top \bs{f}_0
\right)^\top +
\left(
\bs{\gamma}_1^\top \bs{\varepsilon}, \bs{\gamma}_2^\top \bs{\varepsilon}, \cdots, \bs{\gamma}_n^\top \bs{\varepsilon}
\right)^\top.
\end{align}
Different from 
linear or polynomial kernel for linear or polynomial functions, 
the kernel matrix $\bs{K}_n$ can be of full rank, 
and
$\bs{\gamma}^\top_i \bs{f}_0 = 0$ may not hold exactly for any $i$. 
However, 
the kernel matrix $\bs{K}_n$ often has its eigenvalues decreasing quickly and is effectively rank-deficient
\citep[see, e.g.,][]{penalty2006trevor}, 
and 
$\bs{\gamma}^\top_i \bs{f}_0$ is often very close to $0$ for sufficiently large $i$ when $f_0$ is relatively smooth with respect to  kernel $K$.
Below we give some intuition for this. 

Assume that covariates $\bs{X}_i$'s are $\iid$ with respect to probability measure $\mu$.  
By  Mercer's theorem, the kernel function has the following eigen-decomposition: 
$K(\bs{x}, \bs{x}') = \sum_{i=1}^{\infty} \lambda_i \psi_i(\bs{x}) \psi_i(\bs{x}')$, 
where $\lambda_1 \ge \lambda_2 \ge \ldots$ are the eigenvalues, 
and the eigenfunctions $\psi_i$'s are orthonormal bases for the class of square-integrable functions.  
The cross-product 
$\bs{\gamma}_i^\top \bs{f}_0$ can be intuitively understood as an approximation (or sample analog) of the inner product $\int f_0 \psi_j \text{d}\mu$ between the function $f_0$ and the $i$the eigenfunction $\psi_i$  after proper scaling \citep[see, e.g.,][]{braun2008relevant}. 
Note that $\psi_i$ becomes more and more non-smooth with respect to kernel $K$ as $i$ increases. 
When the underlying function $f_0$ is relatively smooth with respect to $K$, the inner product between $f_0$ and $\psi_i$, and thus the sample version  $\bs{\gamma}_i^\top \bs{f}_0$, diminishes quickly as $i$ increases.
Consequently,
the projection of $\bs{Y}$ onto the space spanned by the $\bs{\gamma}_i$'s  for large $i$ 
is mostly dominated by the Gaussian noise, based on which we can then conduct permutation tests. 

Unlike Theorems \ref{linear_permutation_pval_valid} and \ref{poly_permutation_pval_valid}, 
with a general kernel $K$ and a general function $f_0$, 
the partial permutation test is  not finite-sample valid. 
This motivates us to investigate how
to adjust 
the partial permutation test. It turns out that the correction needed for the partial permutation $p$-value depends crucially on 
\begin{equation}\label{eq:left-over}
    \omega(b_n,\sigma_0^{-1}f_0) = \sum_{i=n-b_n+1}^n ( \sigma_0^{-1}\bs{\gamma}_{i}^\top \bs{f}_0)^2
    = \sigma_0^{-2} \sum_{i=n-b_n+1}^n ( \bs{\gamma}_{i}^\top \bs{f}_0)^2, 
\end{equation} 
which can be intuitively understood as the {\it left-over signal-proportion} (LOSP) among the components used for the partial permutation test of size $b_n$. 
Let 
$Q_{b_n}$ denote the quantile function of the $\chi^2$-distribution with degrees of freedom $b_n$, 
and for any $0<\alpha_0<1$, define 
\begin{align}\label{eq:v_fixed_H0}
v(b_n,\sigma_0^{-1}f_0,\alpha_0) = \frac{1}{2}\exp\left\{
2\sqrt{2\omega(b_n,\sigma_0^{-1}f_0) } \sqrt{Q_{b_n}(1-\alpha_0)+ \omega(b_n,\sigma_0^{-1}f_0)} 
\right\}-\frac{1}{2}
\end{align}
as a function of 
permutation size $b_n$,
standardized function $\sigma_0^{-1} f_0$ and $\alpha_0\in (0,1)$.
The following theorem shows that, by adding the correction term \eqref{eq:v_fixed_H0} to the $p$-value, the partial permutation test becomes valid under $H_0$.

\begin{theorem}\label{thm:fixed_property_theorem}
    Let $\{(\bs{X}_i,Y_i,Z_i)\}_{1\leq i \leq n}$ denote samples from model ${H}_0$ in \eqref{eq:H_0_fixed}. Given $1\leq b_n \leq n$ and $\alpha_0 \in (0,1)$, we define the corrected partial permutation $p$-value as
    \begin{align*}
    p_{\correct}( \bXmatrix, \bs{Y}, \bs{Z} ) = p(\bXmatrix, \bs{Y}, \bs{Z}  ) + v(b_n,\sigma_0^{-1}f_0,\alpha_0) +\alpha_0,
    \end{align*}
    where $p(\bXmatrix, \bs{Y}, \bs{Z})$ is the $p$-value from either the discrete or continuous partial permutation test (as in Algorithm~\ref{alg:partial_permu}) with kernel $K$, permutation size $b_n$, and any test statistic $T$; and $v(b_n,\sigma_0^{-1}f_0,\alpha_0)$ is defined as in \eqref{eq:v_fixed_H0}. Then the corrected partial permutation $p$-value is valid under model ${H}_0$, i.e., 
    $\forall \alpha \in (0,1)$, 
    $
    \Pr_{{H}_0}\{{p}_{\correct}( \bXmatrix, \bs{Y}, \bs{Z} )\leq\alpha \mid  \bXmatrix, \bs{Z} \}\leq\alpha. 
    $
\end{theorem}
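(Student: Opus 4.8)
The plan is to condition throughout on $(\bXmatrix,\bs{Z})$, which fixes the eigenbasis $\bs{\Gamma}$, the vector $\bs{f}_0$, the left-over signal-proportion $\omega\equiv\omega(b_n,\sigma_0^{-1}f_0)$, and hence the deterministic quantities $Q_{b_n}(1-\alpha_0)$ and $v$. Writing $\bs{W}=\bs{\Gamma}^\top\bs{Y}$ and collecting the last $b_n$ coordinates into $\bs{u}=(W_{n-b_n+1},\dots,W_n)^\top$, relation \eqref{eq:general_kernel_Gamma_Y} gives $\bs{u}=\bs{m}+\bs{\eta}$, where $\bs{m}=(\bs{\gamma}_{n-b_n+1}^\top\bs{f}_0,\dots,\bs{\gamma}_n^\top\bs{f}_0)^\top$ is deterministic with $\sigma_0^{-2}\|\bs{m}\|^2=\omega$, and $\bs{\eta}\sim\mathcal{N}(\bs{0},\sigma_0^2\bs{I}_{b_n})$ is independent of the first $n-b_n$ coordinates of $\bs{W}$. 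Since $p_{\correct}=p+v+\alpha_0$ and the correction is deterministic, it suffices to prove $\Pr_{H_0}\{p\le\alpha'\mid\bXmatrix,\bs{Z}\}\le\alpha$ with $\alpha'\equiv\alpha-v-\alpha_0$; the case $\alpha'\le 0$ is immediate because $p\ge 0$ (and $\Pr\{p=0\}=0$ for both the discrete and the continuous tests), so I focus on $\alpha'\in(0,1)$.

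The key step is a change-of-measure argument on the permutation orbit. If there were no left-over signal ($\bs{m}=\bs{0}$), the block $\bs{u}$ would be exchangeable (discrete test) or spherically symmetric (continuous test), matching exactly the reference permutation measure $\nu$, so the permutation $p$-value would be super-uniform, $\Pr_\nu\{p\le\alpha'\}\le\alpha'$, by the same logic underlying Theorems \ref{linear_permutation_pval_valid} and \ref{poly_permutation_pval_valid}. With $\bs{m}\neq\bs{0}$ I would condition additionally on the permutation orbit of $\bs{u}$ --- the coordinate-permutations of $\bs{u}$ in the discrete case, or the sphere $\{\bs{w}:\|\bs{w}\|=\|\bs{u}\|\}$ in the continuous case --- and compute the density $h$ of the true conditional law of $\bs{u}$ with respect to $\nu$. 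Because permutations preserve $\|\bs{u}\|$ and $\|\bs{m}\|$, the Gaussian density reduces on the orbit to $h(\bs{w})\propto\exp\{\sigma_0^{-2}\bs{m}^\top\bs{w}\}$. I would then bound $h$ above by $\exp\{\sqrt{\omega}\,r\}$, where $r=\sigma_0^{-1}\|\bs{u}\|$ is constant on the orbit: the numerator is at most $\exp\{\sigma_0^{-2}\|\bs{m}\|\,\|\bs{u}\|\}=\exp\{\sqrt{\omega}\,r\}$ by Cauchy--Schwarz, while the normalizing constant is at least $1$ by Jensen's inequality together with the fact that $\bs{w}$ averages to $\bs{0}$ under $\nu$ (continuous) or, in the discrete case, after re-centering $\bs{m}$ and $\bs{w}$ by their coordinate means --- which leaves $\bs{m}^\top\bs{w}$ unchanged and only decreases $\|\bs{w}\|$.

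Combining the density bound with super-uniformity under $\nu$ gives, on each orbit, $\Pr\{p\le\alpha'\mid\text{orbit}\}\le\exp\{\sqrt{\omega}\,r\}\,\alpha'$. I would then split on the size of $r$ using the threshold $t=(\sqrt{Q_{b_n}(1-\alpha_0)}+\sqrt{\omega})^2$:
\begin{align*}
\Pr\{p\le\alpha'\mid\bXmatrix,\bs{Z}\}\le\alpha'\,\exp\{\sqrt{\omega}\sqrt{t}\}+\Pr\{r^2> t\}.
\end{align*}
For the tail I would write $r=\|\bs{a}+\bs{g}\|$ with $\bs{g}\sim\mathcal{N}(\bs{0},\bs{I}_{b_n})$ and $\|\bs{a}\|^2=\omega$, so that $r\le\|\bs{g}\|+\sqrt{\omega}$ and hence $\Pr\{r^2> t\}\le\Pr\{\|\bs{g}\|^2> Q_{b_n}(1-\alpha_0)\}=\alpha_0$. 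Finally, setting $E=\exp\{\sqrt{\omega}\sqrt{t}\}=\exp\{\sqrt{\omega}(\sqrt{Q_{b_n}(1-\alpha_0)}+\sqrt{\omega})\}$, the inequality $\sqrt{Q}+\sqrt{\omega}\le\sqrt{2}\sqrt{Q+\omega}$ with $Q=Q_{b_n}(1-\alpha_0)$ (AM--GM) shows $E^2\le 2v+1$, whence $v\ge\tfrac12(E^2-1)\ge E-1$; therefore $\alpha' E+\alpha_0=\alpha'+\alpha'(E-1)+\alpha_0\le\alpha'+v+\alpha_0=\alpha$, which is the claim.

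The main obstacle I anticipate is making the change of measure on the orbit fully rigorous and obtaining the sharp density bound $\exp\{\sqrt{\omega}\,r\}$ rather than a looser one. In particular, in the discrete case the reference measure is a finite average over coordinate permutations, and a naive lower bound on its normalizing constant costs an extra factor in the exponent (effectively $E^2$ in place of $E$), which would only yield validity for $\alpha'\le\tfrac12$. The re-centering trick --- replacing $\bs{m}$ and $\bs{w}$ by their deviations from their coordinate means --- is precisely what restores the same bound as in the continuous case and makes the single correction $v$ suffice for every $\alpha'\in(0,1)$; verifying this, and checking the measurability/Radon--Nikodym details of the orbit conditioning, is where the real work lies.
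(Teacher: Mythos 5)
Your proof is correct, and it takes a genuinely different route from the paper's. The paper (Lemmas \ref{lemma:basic_ineq_perm_p_val} and \ref{lemma:bound_TV} in the Supplementary Material) works with an \emph{additive} error: it bounds $\Pr\{p\le\alpha\mid\bXmatrix,\bs{Z}\}\le\alpha+\delta+\Pr\{\|\nu-\nu_0\|_{\TV}>\delta\mid\bXmatrix,\bs{Z}\}$ and controls the total variation distance between the true conditional law on the permutation orbit and the uniform reference by bounding the \emph{pairwise} density ratio $g(\bs{Y}_\psi)/g(\bs{Y}_\phi)$ over all pairs of orbit points via Cauchy--Schwarz; this pairwise comparison is what produces the factor $2\sqrt{2}$ in the exponent of $v$. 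You instead use a \emph{multiplicative} change-of-measure bound, $\Pr_\nu\{p\le\alpha'\}\le(\sup h)\cdot\Pr_{\nu_0}\{p\le\alpha'\}$, controlling the Radon--Nikodym derivative $h\propto\exp\{\sigma_0^{-2}\bs{m}^\top\bs{w}\}$ directly by Cauchy--Schwarz on the numerator and Jensen on the normalizing constant; your re-centering device for the discrete orbit (which the paper never needs, since it never lower-bounds a partition function) is exactly what makes Jensen give $Z\ge 1$ there. Both arguments then reduce the stochastic part to the same $\chi^2_{b_n}$ tail event calibrated by $Q_{b_n}(1-\alpha_0)$ at cost $\alpha_0$. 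What your route buys is a strictly smaller exponent ($\sqrt{\omega}\,r\le\omega+\sqrt{\omega}\|\bs{g}\|$ versus the paper's $2\sqrt{2}\sqrt{\omega}\sqrt{\omega+\|\bs{g}\|^2}\ge 2\omega+2\sqrt{\omega}\|\bs{g}\|$) and a correction $\alpha'(E-1)$ that shrinks with $\alpha'$, so your final step $E-1\le\tfrac12(E^2-1)\le v$ recovers the paper's stated correction term with room to spare; the paper's route avoids any orbit-level normalizing-constant analysis and feeds directly into a reusable TV lemma that it also deploys for Theorems \ref{thm:gp_finite_property_theorem}, \ref{thm:fixed_property_theorem_corr} and the corollaries. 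The only point worth tightening in a write-up is the boundary case $\alpha'=0$: rather than asserting $\Pr\{p=0\}=0$ (which can fail for the continuous test with a pathological $T$), just apply your main bound at $\alpha'=0$ to get $\Pr\{p\le 0\mid\bXmatrix,\bs{Z}\}\le\alpha_0\le\alpha$, exactly as the paper's indicator argument does.
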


In Theorem \ref{thm:fixed_property_theorem}, 
the correction term $v(b_n,\sigma_0^{-1}f_0,\alpha_0)$ is increasing in $b_n$, i.e., the larger the permutation size,
the larger the correction for the  $p$-value will be.
Note that the corrected permutation $p$-value ${p}_{\correct}( \bXmatrix, \bs{Y}, \bs{Z} )$ depends on the unknown true function $f_0$ and cannot be calculated directly. 
Besides, the asymptotic validity of the uncorrected partial permutation $p$-value, $p( \bXmatrix, \bs{Y}, \bs{Z} )$, requires that $b_n \cdot \omega(b_n,\sigma_0^{-1}f_0)$ converges to zero in probability as $n\rightarrow\infty$, which may or may not hold depending on the complexity of  function $f_0$ as well as the choice of the permutation size. Nevertheless, 
Theorem \ref{thm:fixed_property_theorem} helps us understand the bias of this $p$-value
for finite samples and provides insights on how to correct for it.  
In  Sections \ref{sec:fix_special_finite}--\ref{sec:fix_special_balance}
we will consider special cases under which the LOSP $\omega(b_n,\sigma_0^{-1}f_0)$ defined in \eqref{eq:left-over} can be exactly or asymptotically zero and the partial permutation test can itself be finite- or large-sample valid without requiring any correction. 
Moreover, 
in Section \ref{sec:GP}
we consider Gaussian process regression models, under which the LOSP can be bounded stochastically and the permutation test becomes asymptotically valid.

\subsection{Special Case: Kernels with Finite-Dimensional Feature Space}\label{sec:fix_special_finite}

We first consider the case in which  kernel $K$ has only a finite number of nonzero eigenvalues, or equivalently, the corresponding feature space is finite-dimensional, i.e., $K(\bs{x}, \bs{x}') = 
\phi(\bs{x})^\top \phi(\bs{x}')$  
with $\phi(\bs{x})\in \mathbb{R}^q$
for some $q < \infty$.
Following the same arguments as with
the linear and polynomial kernels discussed in Sections \ref{sec:linear} and \ref{sec:poly}, which are special cases of the current setting, we
decompose  kernel matrix $\bs{K}_n$ as $\bs{\Gamma} \bs{C} \bs{\Gamma}^\top$, with $\bs{\Gamma}$ being the orthogonal matrix of eigenvectors and $\bs{C}$
the diagonal matrix of eigenvalues. 
If  function 
$f_0(\bs{x})$ is linear in  $\phi(\bs{x})$,  then 
$\bs{\gamma}_i^\top \bs{f}_0 = 0$ for $i > q$ and thus the partial permutation test is finite-sample valid when the permutation size $b_n$ is no larger than $n-q$. 
We summarize the results in the following corollary. 
Although it is a straightforward extension of Theorem \ref{poly_permutation_pval_valid},
this result provides us some intuitions and a bridge to general kernels.

\begin{corollary}\label{cor:kernel_finite_dim_feature_space}
    Let $\{(\bs{X}_i,Y_i,Z_i)\}_{1\leq i \leq n}$ denote samples from model $H_0$ in (\ref{eq:H_0_fixed}). 
    Suppose that the kernel function $K$ has the decomposition 
	$K(\bs{x}, \bs{x}') = \phi(\bs{x})^\top \phi(\bs{x}')$ with $ \phi(\bs{x})\in \mathbb{R}^q$ for some $q< \infty$, 
	and the underlying function $f_0(\bs{x})$ is linear in $\phi(\bs{x})$. 
	Then, the $p$-value obtained by either the discrete or continuous partial permutation test with kernel $K$, permutation size $b_n \le n-q$, and any test statistic $T$ is valid, i.e., $\forall \alpha \in (0,1)$, 
	$
	\Pr_{H_0}\{p(\bXmatrix,\bs{Y}, \bs{Z} ) \leq \alpha 
	\mid \bXmatrix, \bs{Z}
	\} \le \alpha. 
	$
\end{corollary}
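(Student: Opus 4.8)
The plan is to reduce the statement to the argument already carried out for Theorem \ref{poly_permutation_pval_valid}, with the feature dimension $q$ now playing the role of $\binom{d+p}{d}$. First I would record the rank bound: since $K(\bs{x},\bs{x}')=\phi(\bs{x})^\top\phi(\bs{x}')$ with $\phi(\bs{x})\in\mathbb{R}^q$, the sample kernel matrix factors as $\bs{K}_n=\bs{\Phi}\bs{\Phi}^\top$ with $\bs{\Phi}=(\phi(\bs{X}_1),\ldots,\phi(\bs{X}_n))^\top\in\mathbb{R}^{n\times q}$, so $\mathrm{rank}(\bs{K}_n)\le q$. Because $f_0$ is linear in $\phi$, the vector $\bs{f}_0=\bs{\Phi}\bs{\beta}$ lies in the column space of $\bs{\Phi}$, which equals the column space of $\bs{K}_n$; hence the eigenvectors $\bs{\gamma}_{q+1},\ldots,\bs{\gamma}_n$ associated with the zero eigenvalues are orthogonal to $\bs{f}_0$, giving $\bs{\gamma}_i^\top\bs{f}_0=0$ for all $i>q$.

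Second, combining this with the decomposition \eqref{eq:general_kernel_Gamma_Y}, I would conclude that the coordinates $\bs{\gamma}_i^\top\bs{Y}=\bs{\gamma}_i^\top\bs{\varepsilon}$ for $i=q+1,\ldots,n$ carry no signal. Since $\bs{\Gamma}$ is a deterministic function of $\bXmatrix$ and $\bs{\varepsilon}\mid\bXmatrix,\bs{Z}\sim\mathcal{N}(0,\sigma_0^2\bs{I}_n)$, orthonormality of the $\bs{\gamma}_i$'s shows that, conditional on $(\bXmatrix,\bs{Z})$, these $n-q$ coordinates are $\iid$ $\mathcal{N}(0,\sigma_0^2)$ and are independent of the first $q$ (signal-bearing) coordinates. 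The constraint $b_n\le n-q$ guarantees $n-b_n\ge q$, so the permuted indices $n-b_n+1,\ldots,n$ all exceed $q$; that is, Algorithm \ref{alg:partial_permu} only rearranges pure-noise coordinates while leaving every signal-bearing coordinate fixed.

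Third, I would invoke the standard randomization / group-invariance argument. For the discrete test the relevant group is the set of permutations of the last $b_n$ coordinates of $\bs{W}$, and the exchangeability of the $\iid$ Gaussian vector $(\bs{\gamma}_{n-b_n+1}^\top\bs{\varepsilon},\ldots,\bs{\gamma}_n^\top\bs{\varepsilon})$, conditional on $(\bXmatrix,\bs{Z})$ and on the first $n-b_n$ coordinates, yields invariance under this group; for the continuous test the relevant group is the rotations of the last $b_n$ coordinates fixing their squared norm, and spherical symmetry of the same $\iid$ Gaussian vector gives the required invariance. In either case $\bs{Y}^{\perm}=\bs{\Gamma}\bs{W}^{\perm}$ has, conditional on $(\bXmatrix,\bs{Z})$, the same distribution as $\bs{Y}$, so $T(\bXmatrix,\bs{Y},\bs{Z})$ and $T(\bXmatrix,\bs{Y}^{\perm},\bs{Z})$ are conditionally exchangeable; the usual super-uniformity bound for randomization $p$-values then delivers $\Pr_{H_0}\{p(\bXmatrix,\bs{Y},\bs{Z})\le\alpha\mid\bXmatrix,\bs{Z}\}\le\alpha$.

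I do not anticipate a genuine obstacle, since the corollary is essentially Theorem \ref{poly_permutation_pval_valid} (and Theorem \ref{linear_permutation_pval_valid}) with $q$ in place of $\binom{d+p}{d}$. The step demanding the most care is the randomization argument of the third paragraph: one must verify that the group action touches only the pure-noise coordinates and that the invariance holds \emph{conditionally} on $(\bXmatrix,\bs{Z})$, which is what fixes $\bs{\Gamma}$ and the partition into signal and noise coordinates, and that a single line of reasoning covers both the discrete (exchangeability) and continuous (spherical symmetry) variants. Once this conditional invariance is established, the validity bound follows immediately from the definition of the partial permutation $p$-value in Algorithm \ref{alg:partial_permu}.
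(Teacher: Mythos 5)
Your proof is correct, and its first half --- the factorization $\bs{K}_n=\bs{\Phi}\bs{\Phi}^\top$, the rank bound, and the conclusion that $\bs{\gamma}_i^\top\bs{f}_0=0$ for $i>q$ so that the permuted coordinates $\bs{\gamma}_i^\top\bs{Y}=\bs{\gamma}_i^\top\bs{\varepsilon}$ are pure noise --- is exactly the paper's argument. Where you diverge is in how you close: you invoke the classical group-invariance/randomization argument directly (exchangeability of the iid Gaussian coordinates for the discrete test, spherical symmetry for the continuous one), whereas the paper instead specializes the total-variation machinery it built for the general Theorem \ref{thm:fixed_property_theorem}: it plugs $\omega(b_n,\sigma_0^{-1}f_0)=0$ into Lemmas \ref{lemma:basic_ineq_perm_p_val} and \ref{lemma:bound_TV}, obtains $\Delta_n=0$, hence zero total-variation distance between the conditional law of $\bs{Y}$ on the permutation set and the uniform law on that set, and reads off $\Pr\{p\le\alpha\mid\bXmatrix,\bs{Z}\}\le\alpha$ from the resulting bound $\alpha+\delta+0$. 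The two routes are substantively equivalent --- zero TV distance to the uniform distribution on the orbit \emph{is} the conditional invariance you establish --- but the paper's version buys uniformity with the correction-term analysis of Theorem \ref{thm:fixed_property_theorem} (the corollary is literally the $\omega=0$ special case), while yours is more self-contained and makes the connection to classical randomization tests explicit. The one step you flag as delicate, verifying invariance conditionally on $(\bXmatrix,\bs{Z})$ and handling the continuous (rotation) group alongside the discrete one, is handled in the paper by Lemma \ref{lemma:survival_dominant_by_uniform} together with the survival-function definition of the $p$-value, which is the same super-uniformity bound you appeal to.
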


\subsection{Special Case: Kernels with Diverging-Dimensional Feature Space}\label{sec:null_diverg}

We extend Section \ref{sec:fix_special_finite} to consider kernels whose feature space dimensions can increase with the sample size, under which the partial permutation test can be (asymptotically) valid for a wider class of underlying functional relationships.  
Specifically, let $\{\basis_j: j = 1,2 , \ldots\}$ be a given series of basis functions of the covariate. 
For each integer $q>0$, we define kernel $K_q(\bs{x}, \bs{x}') \equiv \phi_q(\bs{x})^\top \phi_q(\bs{x}') \equiv \sum_{j=1}^q e_j(\bs{x}) e_j(\bs{x}')$, where $\phi_q(\bs{x}) = (e_1(\bs{x}), e_2(\bs{x}), \ldots, e_q(\bs{x}))^\top$ denotes the corresponding feature mapping based on the first $q$ basis functions. 
Motivated by Corollary \ref{cor:kernel_finite_dim_feature_space}, intuitively, the partial permutation test using kernel $K_q$ is approximately valid 
if the underlying function $f_0(\cdot)$ can be approximated well by a linear combination of the first $q$ basis functions. Moreover, we can increase the feature space dimension $q$ at a proper rate as the sample size increases, and render the partial permutation test asymptotically valid provided that $f_0(\cdot)$ lies in the space generated  %
by the infinite series of basis functions %
$\{e_1(\bs{x}), e_2(\bs{x}),\ldots\}$, 
as characterized more precisely in the following corollary. 
For any function $f$ of the covariate, we introduce 
\begin{align*}
    \remainder(f; q) = \min_{\bs{b}\in \mathbb{R}^q} \int \big(f - \bs{b}^\top \phi_q\big)^2 \text{d}\mu = 
\min_{b_1, \ldots, b_q \in \mathbb{R}} \int \big( f - \sum_{j=1}^q b_j e_j \big)^2 \text{d}\mu
\end{align*}
to 
denote the squared distance between $f$ and its best linear approximation using the first $q$ basis functions. 
The limiting behavior of $\remainder(f; q)$ as $q$ goes to infinity then characterizes how well $f$ can be linearly approximated by this infinite series of basis functions. 
Note that here we implicitly assume that both $f$ and $e_j$'s are square-integrable. 

\begin{corollary}\label{cor:diverg_kernel}
Let $\{(\bs{X}_i,Y_i,Z_i)\}_{1\leq i \leq n}$ denote samples from model $H_0$ in (\ref{eq:H_0_fixed}), 
and assume that $\bs{X}_i$'s are identically distributed from some probability measure $\mu$. 
Suppose that kernel function $K_q$ has the form 
$K_q(\bs{x}, \bs{x}') \equiv \phi_q(\bs{x})^\top \phi_q(\bs{x}') \equiv \sum_{j=1}^q \basis_j(\bs{x}) \basis_j(\bs{x}')$ for $q\ge 1$ and some series of basis functions $\{\basis_j\}_{j=1}^{\infty}$. 
If 
there exists a sequence $\{q_n\}_{n=1}^\infty$ such that $q_n < n$ for all $n$ and $n (n-q_n) \remainder(f_0; q_n) \rightarrow 0$ as $n\rightarrow \infty$, 
then the resulting $p$-value obtained by either the discrete or continuous partial permutation test (as in Algorithm~\ref{alg:partial_permu}) with kernel $K_{q_n}$, permutation size $b_n \le n-q_n$, and any test statistic $T$ is asymptotically valid, i.e., $\forall \alpha \in (0,1)$, 
$
\limsup_{n\rightarrow \infty}
\Pr_{H_0}\{p(\bXmatrix,\bs{Y}, \bs{Z} ) \leq \alpha 
\} \le \alpha. 
$
\end{corollary}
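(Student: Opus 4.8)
The plan is to deduce the asymptotic validity of the \emph{uncorrected} $p$-value directly from the finite-sample correction established in Theorem~\ref{thm:fixed_property_theorem}, by showing that under the stated rate condition both correction terms become negligible. The starting point is the identity $p_{\correct}(\bXmatrix,\bs{Y},\bs{Z}) = p(\bXmatrix,\bs{Y},\bs{Z}) + v(b_n,\sigma_0^{-1}f_0,\alpha_0) + \alpha_0$, whose left-hand side is valid conditionally on $(\bXmatrix,\bs{Z})$. Since $v(b_n,\sigma_0^{-1}f_0,\alpha_0)$ depends only on the eigenvectors of $\bs{K}_n$ and on $\bs{f}_0=(f_0(\bs{X}_1),\ldots,f_0(\bs{X}_n))^\top$, it is measurable with respect to $\bXmatrix$ alone; thus it suffices to prove $v(b_n,\sigma_0^{-1}f_0,\alpha_0)\convergep 0$ for each fixed $\alpha_0$, and then let $\alpha_0\downarrow 0$ after sending $n\to\infty$.

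First I would control the left-over signal-proportion $\omega(b_n,\sigma_0^{-1}f_0)$ in expectation. Because the kernel $K_{q_n}$ has feature matrix $\bs{\Phi}\equiv(\phi_{q_n}(\bs{X}_1),\ldots,\phi_{q_n}(\bs{X}_n))^\top$ of rank at most $q_n$ and $b_n\le n-q_n$, every eigenvector $\bs{\gamma}_i$ with $i\ge n-b_n+1$ lies in the null space of $\bs{K}_n=\bs{\Phi}\bs{\Phi}^\top$ and hence is orthogonal to the column space of $\bs{\Phi}$. Consequently $\sum_{i=n-b_n+1}^n(\bs{\gamma}_i^\top\bs{f}_0)^2$ is bounded above by the squared residual $\min_{\bs{b}\in\mathbb{R}^{q_n}}\sum_{i=1}^n\big(f_0(\bs{X}_i)-\bs{b}^\top\phi_{q_n}(\bs{X}_i)\big)^2$ of projecting $\bs{f}_0$ onto the column space of $\bs{\Phi}$. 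Plugging the \emph{population}-optimal coefficient vector into this empirical least-squares objective and taking expectation over $\bs{X}_i\iidsim\mu$ yields $\E\big[\min_{\bs{b}}\sum_i(f_0(\bs{X}_i)-\bs{b}^\top\phi_{q_n}(\bs{X}_i))^2\big]\le n\,\remainder(f_0;q_n)$. Hence $\E[\omega(b_n,\sigma_0^{-1}f_0)]\le\sigma_0^{-2}n\,\remainder(f_0;q_n)$, so that $\E[b_n\,\omega(b_n,\sigma_0^{-1}f_0)]\le\sigma_0^{-2}(n-q_n)\,n\,\remainder(f_0;q_n)\to 0$; by Markov's inequality, $b_n\,\omega(b_n,\sigma_0^{-1}f_0)\convergep 0$, and in particular $\omega(b_n,\sigma_0^{-1}f_0)\convergep 0$ since $b_n\ge 1$.

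Next I would translate $b_n\,\omega\convergep 0$ into $v\convergep 0$. Writing $\omega=\omega(b_n,\sigma_0^{-1}f_0)$, the exponent in $v$ equals $2\sqrt{2}\,\sqrt{\omega\,(Q_{b_n}(1-\alpha_0)+\omega)}$, so it suffices to show $\omega\,Q_{b_n}(1-\alpha_0)\convergep 0$ and $\omega^2\convergep 0$; the latter is immediate. For the former I would invoke a standard chi-square tail bound (Laurent--Massart) to get $Q_{b_n}(1-\alpha_0)\le b_n+2\sqrt{b_n\log(1/\alpha_0)}+2\log(1/\alpha_0)$, which for fixed $\alpha_0$ and $b_n\ge 1$ gives $\omega\,Q_{b_n}(1-\alpha_0)=O(b_n\omega)\convergep 0$. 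The exponent therefore tends to $0$ in probability, and $v=\tfrac12\exp\{\cdot\}-\tfrac12\convergep 0$.

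Finally I would combine the pieces. Fix $\alpha\in(0,1)$ and small $\delta,\alpha_0>0$. Splitting on the $(\bXmatrix,\bs{Z})$-measurable event $\{v\le\delta\}$ gives $\{p\le\alpha\}\subseteq\{p_{\correct}\le\alpha+\delta+\alpha_0\}\cup\{v>\delta\}$, so by the conditional validity of $p_{\correct}$ and averaging over $(\bXmatrix,\bs{Z})$, $\Pr_{H_0}\{p\le\alpha\}\le(\alpha+\delta+\alpha_0)+\Pr\{v>\delta\}$. Sending $n\to\infty$ and using $v\convergep 0$ kills the last term, leaving $\limsup_n\Pr_{H_0}\{p\le\alpha\}\le\alpha+\delta+\alpha_0$; letting $\delta,\alpha_0\downarrow 0$ finishes the proof. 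The main obstacle is the expectation bound of the second paragraph: one must recognize $\omega$ as (a bound on) an empirical projection residual, relate it to the population remainder $\remainder(f_0;q_n)$, and observe that what is actually needed is the rate $b_n\,\omega\to 0$ rather than merely $\omega\to 0$. This is precisely why the hypothesis carries \emph{both} the factor $n$ (incurred in passing from the empirical to the population residual via expectation) and the factor $n-q_n\ge b_n$ (incurred because the correction $v$, not just $\alpha_0$, must vanish).
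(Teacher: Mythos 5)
Your proof is correct, and its central estimate is exactly the one the paper uses: since $b_n\le n-q_n$ forces the eigenvectors $\bs{\gamma}_{n-b_n+1},\ldots,\bs{\gamma}_n$ to be orthogonal to the column space of $\bs{\Phi}_{q_n}$, the LOSP $\omega(b_n,\sigma_0^{-1}f_0)$ is dominated by the empirical projection residual, which after plugging in the population-optimal coefficient and invoking Markov gives $\omega(b_n,\sigma_0^{-1}f_0)=\sigma_0^{-2}\,n\,\remainder(f_0;q_n)\cdot O_{\Pr}(1)$ and hence $b_n\,\omega(b_n,\sigma_0^{-1}f_0)\convergep 0$ under the hypothesis $n(n-q_n)\remainder(f_0;q_n)\to 0$. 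Where you diverge is in the packaging of the second half: the paper works directly from Lemmas \ref{lemma:basic_ineq_perm_p_val} and \ref{lemma:bound_TV}, bounding $\Pr\{p\le\alpha\}\le\alpha+\delta+\Pr\{\Delta_n>\log(1+2\delta)\}$ with $\Delta_n$ involving the random variable $\sigma_0^{-2}\sum_{i=n-b_n+1}^n(\bs{\gamma}_i^\top\bs{\varepsilon})^2$, which is simply $O_{\Pr}(b_n)$ as a $\chi^2_{b_n}$ variate; you instead route through the finished Theorem \ref{thm:fixed_property_theorem}, which replaces that variate by its $(1-\alpha_0)$-quantile and therefore obliges you to supply a deterministic bound $Q_{b_n}(1-\alpha_0)=O(b_n)$ (Laurent--Massart), followed by a double limit in $\delta$ and $\alpha_0$. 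Both routes are sound and rest on the same mechanism -- indeed Theorem \ref{thm:fixed_property_theorem} was itself proved from those lemmas -- so your version trades a small amount of extra machinery (the quantile bound and the $\alpha_0\downarrow 0$ step) for the convenience of quoting an already-established validity result as a black box; the paper's direct route is marginally leaner because the chi-square term enters as a random variable rather than a quantile. One small point worth making explicit in your write-up: the measurability of $v(b_n,\sigma_0^{-1}f_0,\alpha_0)$ with respect to $\bXmatrix$ is what licenses splitting on $\{v\le\delta\}$ before averaging the conditional guarantee of Theorem \ref{thm:fixed_property_theorem} over $(\bXmatrix,\bs{Z})$; you state this correctly, and it does hold since $\omega$ depends only on the eigenvectors of $\bs{K}_n$ and on $\bs{f}_0$.
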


From Corollary \ref{cor:diverg_kernel}, the existence and construction of a valid large-sample partial permutation test depends crucially on how well the underlying functional relationship $f_0$ %
can be linearly approximated by the basis functions $\{\basis_j\}_{j=1}^\infty$.
Below we consider constructing $\{\basis_j\}_{j=1}^\infty$ based on a general kernel $K$ with infinite-dimensional feature space. 
Recall the discussion in Section \ref{sec:general_fix_null}. 
Suppose we have the eigen-decomposition of the kernel $K(\bs{x}, \bs{x}') = \sum_{j=1}^{\infty} \lambda_j \psi_j(\bs{x}) \psi_j(\bs{x}')$, 
where $\lambda_1 \ge \lambda_2 \ge \ldots$ are the eigenvalues, and $\psi_1, \psi_2, \ldots$ are the eigenfunctions and form an orthonormal basis for the space of square-integrable functions. 
If we choose $e_j = \lambda_j^{1/2} \psi_j$ for all $j\ge 1$, then  kernel $K_q$ based on the first $q$ basis functions converges to $K$ as $q$ goes to infinity. 
Moreover, 
if the underlying function $f_0$ belongs to the RKHS $\mathcal{H}_K$ corresponding to kernel $K$, 
then we have $f_0 = \sum_{j=1}^\infty \alpha_j \lambda_j^{1/2} \psi_j = \sum_{j=1}^\infty \alpha_j e_j$ for some coefficients $\alpha_j$'s with $\sum_{j=1}^{\infty} \alpha_j^2 < \infty$, 
under which 
$\remainder(f_0; q) = \sum_{j>q} \alpha_j^2 \lambda_j \le \lambda_q  \sum_{j>q} \alpha_j^2 = o(\lambda_q)$ as $q\rightarrow \infty$. 
As discussed later in Section \ref{sec:kernel}, the eigenvalues $\lambda_j$'s often decays at a polynomial rate with power greater than $1$, in the sense that $\lambda_q = O(q^{-\kappa})$ for some $\kappa>1$. 
This then implies that $\remainder(f_0; q) = o(\lambda_q) = o(q^{-\kappa})$ . 

Intuitively, we prefer a larger permutation size and thus a smaller $q_n$, which can generally lead to a 
more powerful test.
However, conditions in Corollary \ref{cor:diverg_kernel} require us to be more considerate in selecting $q_n$.
Let us focus on the case where the approximation error for $f_0$ decays polynomially, i.e., $\remainder(f_0; q) = o(\lambda_q) = o(q^{-\kappa})$ for some $\kappa > 1$. 
When $\kappa \in (1, 2)$, we can choose $q_n = n - c_n n^{\kappa-1}$ with $c_n$ being of constant order; 
in this case, the permutation size can be $n-q_n \asymp n^{\kappa-1}$ 
and $n(n-q_n) \remainder(f_0; q_n)$ must be of order $o(1)$. 
When $\kappa \ge 2$, we can choose $q_n = c_n n^{2/\kappa}$ with $c_n$ being of constant order; 
in this case, the permutation size can be $n-q_n = n - c_n n^{2/\kappa} \asymp n$ and $n(n-q_n) \remainder(f_0; q_n)$ must be of order $o(1)$.

\subsection{Special Case: Exactly Balanced Covariates across All Groups}\label{sec:fix_special_balance}
Assume that 
the design matrix $\bXmatrix$ enjoys a balancing property that the empirical distributions of covariates are exactly the same across all $H$ groups, i.e.,
\begin{align}\label{eq:balanced}
    \{\bs{X}_i: Z_i = 1, 1\le i \le n\} = \{\bs{X}_i: Z_i = 2, 1\le i \le n\} = \ldots = \{\bs{X}_i: Z_i = H, 1\le i \le n\}. 
\end{align}
Let $r$ denote the number of distinct covariate values. Obviously, $r \le n/H$, and the equality holds if and only if the covariates within each group are all distinct. 
We can verify that the rank of kernel matrix $\bs{K}_n$ for all units is the same as that for the $r$ distinct covariate values. 
Thus, $\text{rank}(\bs{K}_n) \le r$; moreover, the equality generally holds when 
 kernel function $K$ 
corresponds to 
an infinite-dimensional feature space, e.g., the Gaussian kernel. 
When $\bs{K}_n$ is indeed of rank $r$,
as demonstrated in the Supplementary Material, for any underlying function $f_0$, 
the LOSP  $\omega(b_n, \sigma_0^{-1}f_0)$ 
in \eqref{eq:left-over} is exactly zero as long as the permutation size $b_n$ is no larger than $n-r$, under which the correction term $v(b_n,\sigma_0^{-1}f_0,\alpha_0)$ in \eqref{eq:v_fixed_H0} also reduces to zero. 
Consequently, 
the partial permutation $p$-value $p(\bXmatrix,\bs{Y}, \bs{Z} )$ 
must be 
valid under model $H_0$, 
as summarized in the following corollary. 

\begin{corollary}\label{cor:fixed_property_theorem_balanced_design}
Let $\{(\bs{X}_i,Y_i,Z_i)\}_{1\leq i \leq n}$ denote samples from model ${H}_0$ in (\ref{eq:H_0_fixed}). 
If the design matrix is exactly balanced in the sense that \eqref{eq:balanced} holds 
and the kernel matrix for all the $r \le n/H$ distinct covariate values in each group is of full rank (or equivalently $\text{rank}(\bs{K}_n) = r$), 
then
the partial permutation $p$-value from either the discrete or continuous partial permutation test with kernel $K$, permutation size $b_n \le n-r$, and any test statistic $T$ is valid under model $H_0$, i.e., $\forall$ $\alpha \in (0,1)$,  
$
\Pr_{H_0}\{p(\bXmatrix,\bs{Y}, \bs{Z} ) \leq \alpha 
\mid \bXmatrix, \bs{Z}
\} \le \alpha. 
$
\end{corollary}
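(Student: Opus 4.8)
The plan is to reduce the claim to the single fact that the left-over signal-proportion $\omega(b_n,\sigma_0^{-1}f_0)$ in \eqref{eq:left-over} vanishes whenever $b_n \le n-r$, and then read off validity from Theorem~\ref{thm:fixed_property_theorem}. Indeed, once $\omega(b_n,\sigma_0^{-1}f_0)=0$, the correction term \eqref{eq:v_fixed_H0} collapses to $v=\tfrac12\exp\{0\}-\tfrac12=0$, so Theorem~\ref{thm:fixed_property_theorem} says that $p+\alpha_0$ is a valid $p$-value for every $\alpha_0\in(0,1)$, and a limiting argument upgrades this to validity of $p$ itself. Thus the whole proof hinges on showing $\bs{\gamma}_i^\top\bs{f}_0=0$ for all $i>r$, i.e. that $\bs{f}_0$ lies in the $r$-dimensional column space of $\bs{K}_n$; see \eqref{eq:general_kernel_Gamma_Y}.

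To establish that containment I would introduce the assignment structure coming from the $r$ distinct covariate values $\bs{x}_1^\ast,\ldots,\bs{x}_r^\ast$. Writing $c(i)\in\{1,\ldots,r\}$ for the index of the distinct value equal to $\bs{X}_i$ and letting $\bs{A}\in\{0,1\}^{n\times r}$ be the matrix with rows $\bs{e}_{c(i)}^\top$, one checks directly that $\bs{K}_n=\bs{A}\bs{K}^\ast\bs{A}^\top$, where $\bs{K}^\ast$ is the $r\times r$ kernel Gram matrix on the distinct values. Since the columns of $\bs{A}$ are indicators of disjoint nonempty cells, $\mathrm{rank}(\bs{A})=r$, so $\mathrm{col}(\bs{K}_n)\subseteq\mathrm{col}(\bs{A})$ always; under the full-rank hypothesis $\mathrm{rank}(\bs{K}_n)=r$ the two spaces coincide. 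The key point is then that $f_0$ being a \emph{function} of the covariate forces $\bs{f}_0=\bs{A}\bs{g}$ with $\bs{g}=(f_0(\bs{x}_1^\ast),\ldots,f_0(\bs{x}_r^\ast))^\top$, so $\bs{f}_0\in\mathrm{col}(\bs{A})=\mathrm{col}(\bs{K}_n)$. Because $\bs{\gamma}_{r+1},\ldots,\bs{\gamma}_n$ are eigenvectors for the zero eigenvalue, they are orthogonal to $\mathrm{col}(\bs{K}_n)$ and hence to $\bs{f}_0$, giving $\bs{\gamma}_i^\top\bs{f}_0=0$ for $i>r$. For $b_n\le n-r$ the permuted block of indices $\{n-b_n+1,\ldots,n\}$ is contained in $\{r+1,\ldots,n\}$, whence $\omega(b_n,\sigma_0^{-1}f_0)=0$ as required. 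Here the balancing assumption \eqref{eq:balanced} enters only to guarantee $r\le n/H$, so that the admissible permutation size $n-r\ge n(H-1)/H$ is large and the test nontrivial; the validity mechanism itself uses only $\mathrm{rank}(\bs{K}_n)=r$.

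An alternative to invoking Theorem~\ref{thm:fixed_property_theorem}, and the route I would actually prefer because it avoids the $\alpha_0\downarrow0$ passage, is to argue validity directly, mirroring the proofs of Theorems~\ref{linear_permutation_pval_valid} and~\ref{poly_permutation_pval_valid}. Once $\bs{\gamma}_i^\top\bs{f}_0=0$ for $i>r$, the coordinates $W_i=\bs{\gamma}_i^\top\bs{Y}=\bs{\gamma}_i^\top\bs{\varepsilon}$ with $i>r$ are, conditional on $(\bXmatrix,\bs{Z})$, exactly the last coordinates of $\bs{\Gamma}^\top\bs{\varepsilon}\sim\mathcal{N}(0,\sigma_0^2\bs{I}_n)$, hence $\iid$ $\mathcal{N}(0,\sigma_0^2)$. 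In the discrete case these $b_n$ coordinates are exchangeable, and in the continuous case their joint law is spherically symmetric and invariant under the radius-preserving resampling of Algorithm~\ref{alg:partial_permu}; either way the standard permutation argument yields $\Pr_{H_0}\{p\le\alpha\mid\bXmatrix,\bs{Z}\}\le\alpha$.

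I expect the main obstacle to be conceptual rather than computational: pinning down why $\bs{f}_0\in\mathrm{col}(\bs{K}_n)$ for an \emph{arbitrary} null function $f_0$, with no smoothness or RKHS membership assumed, which is precisely where the factorization $\bs{K}_n=\bs{A}\bs{K}^\ast\bs{A}^\top$ and the full-rank hypothesis do the work. A secondary but genuine subtlety, if one goes through Theorem~\ref{thm:fixed_property_theorem}, is the limit $\alpha_0\downarrow0$: the theorem delivers $\Pr\{p\le\alpha-\alpha_0\}\le\alpha$ for each $\alpha_0$, and one must then take $\alpha'\downarrow\alpha$ from above to conclude $\Pr\{p\le\alpha\}\le\alpha$ rather than merely $\Pr\{p<\alpha\}\le\alpha$. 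The direct exchangeability argument sidesteps this entirely and is the cleaner finish.
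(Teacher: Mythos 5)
Your proposal is correct and follows essentially the same route as the paper: the paper likewise factorizes $\bs{K}_n=\bs{\Pi}\bs{G}\bs{\Pi}^\top$ with $\bs{\Pi}$ the indicator matrix of distinct covariate values and $\bs{G}$ the $r\times r$ Gram matrix, deduces $\bs{\gamma}_i^\top\bs{\Pi}=\bs{0}$ for $i>r$ from $\mathrm{rank}(\bs{K}_n)=r$, writes $\bs{f}_0=\bs{\Pi}\tilde{\bs{f}}_0$, and concludes $\omega(b_n,\sigma_0^{-1}f_0)=0$. For the finish the paper does not pass through Theorem~\ref{thm:fixed_property_theorem} and the $\alpha_0\downarrow 0$ limit you flag, but instead applies Lemmas~\ref{lemma:basic_ineq_perm_p_val} and~\ref{lemma:bound_TV} with $\delta=0$ (the total-variation bound $(e^{\Delta_n}-1)/2$ vanishes when $\omega=0$), which is equivalent in substance to your preferred direct exchangeability argument.
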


The validity of the  test in Corollary \ref{cor:fixed_property_theorem_balanced_design} is closely related to that of the usual permutation test, which permutes the group indicators of samples with the same covariate value. 
Corollary \ref{cor:fixed_property_theorem_balanced_design} is more general in the sense that it allows for more general rotations (instead of purely switching) of the responses, utilizing the Gaussianity of the noises.

\subsection{Special Case: Gaussian Process Regression Model}\label{sec:GP}

In this subsection, 
instead of treating $f_0$ in \eqref{eq:H_0_fixed} under  $H_0$ as a fixed unknown function  as in previous sections, %
we here assume that the function follows a Gaussian process and 
show that $p( \bXmatrix, \bs{Y}, \bs{Z} )$ is asymptotically valid under such a Gaussian process regression (GPR) model. We note that the GPR model has been widely used in functional analysis.

\subsubsection{The model formulation}

Given a symmetric, positive definite, and continuous kernel $K$, 
the GPR model assumes that %
\begin{align}\label{eq:h0_gp}
    \tilde{H}_0: & Y_i =  f (\bs{X}_i) + \varepsilon_i, \quad \varepsilon_i \mid \bXmatrix, \bs{Z} \ \overset{\iid}{\sim} \ \mathcal{N}(0,\sigma_0^2) , \quad f \sim  \text{GP}\left( 0, \ \frac{\delta_0^2}{n^{1-\gamma}} K \right),
\end{align}
where $f$ is independent of $\bXmatrix, \bs{Z}$ and the $\varepsilon_i$'s, 
and %
$\delta_0^2/{n^{1-\gamma}}$, which depends on the sample size $n$, 
represents our belief on the smoothness of the underlying function. 
The GPR model is closely related to kernel regression, which minimizes a penalized mean squared loss over a RKHS $\mathcal{H}_K$ corresponding to kernel $K$. 
Specifically, the kernel regression estimator $\hat{f}_{n,\tau_n}$
is given by
\begin{align}\label{eq:kernel_penal}
	\hat{f}_{n,\tau_n} = \arg\min_{f \in \mathcal{H}_K}\frac{1}{n}\sum_{i=1}^{n}|Y_i-f(X_i)|^2 + \tau_n \|f\|_{\mathcal{H}_K}^2,
\end{align}
where $\tau_n$ is a regularization parameter penalizing the $\mathcal{H}_K$ norm of $f$. 
This estimator is identical to the posterior mean of $f$ under the GPR model %
in \eqref{eq:h0_gp} when $\tau_n = \sigma_0^2/(n^\gamma \delta_{0}^2)$. 
\citet{christmann2007consistency} studied  sufficient conditions on $\tau_n$ to guarantee the consistency of  kernel regression estimator $\hat{f}_{n,\tau_n}$, 
which then provides us some guidance on the choice of 
the smoothness parameter $\delta_{0}^2/n^{1-\gamma}$. 
The following proposition is a direct corollary of \citet[][Theroem 12]{christmann2007consistency}. 

\begin{proposition}\label{thm:general_consistency}
Let $\{(\bs{X}_i, Y_i, Z_i)\}_{i=1}^n$ be random samples from model $H_0$ in (\ref{eq:H_0_fixed}). If the $\bs{X}_i$'s are $\iid$ with a compact support $\mathcal{X}$, 
$f_0$ is a measurable function, $\mathbb{E}_{H_0}(Y^2)<\infty$, $K$ is a universal kernel, and  $0<\gamma<1/4$, 
then the posterior mean $\tilde{f}$ %
induced by model $\tilde{H}_0$ in \eqref{eq:h0_gp} is consistent for the underlying true $f_0$ in \eqref{eq:H_0_fixed},
i.e.,
$
\mathbb{E}_{H_0}|\tilde{f}(\bs{X})-f_0(\bs{X})|^{2}\stackrel{\Pr}{\longrightarrow} 0$ as $n \rightarrow \infty$.
\end{proposition}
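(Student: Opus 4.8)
The plan is to reduce the claim entirely to the consistency result of \citet{christmann2007consistency} via the exact correspondence between the GPR posterior mean and a kernel ridge estimator. As observed in the text preceding the statement, the posterior mean $\tilde{f}$ induced by the GPR model $\tilde{H}_0$ in \eqref{eq:h0_gp} coincides with the regularized kernel estimator $\hat{f}_{n,\tau_n}$ defined in \eqref{eq:kernel_penal}, provided we take $\tau_n = \sigma_0^2/(n^\gamma \delta_0^2)$. Hence it suffices to show that this particular regularized estimator is $L^2(\mu)$-consistent for the regression function $f_0$ in \eqref{eq:H_0_fixed}, which is exactly what \citet[][Theorem 12]{christmann2007consistency} delivers. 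The whole argument therefore amounts to verifying that the hypotheses of that theorem hold for our choice of $\tau_n$ and then reading off its conclusion.

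First I would check the structural hypotheses. The squared loss in \eqref{eq:kernel_penal} is convex and locally Lipschitz, and its Bayes risk is finite because $\mathbb{E}_{H_0}(Y^2)<\infty$; under $H_0$ the conditional mean, hence the risk-minimizing predictor, is exactly the measurable function $f_0$. Since $\mathcal{X}$ is compact and $K$ is universal, the RKHS $\mathcal{H}_K$ is dense in $C(\mathcal{X})$ and so approximates $f_0$ arbitrarily well in $L^2(\mu)$, supplying the universal-approximation condition of the theorem. These are precisely the measurability, moment, compactness, and universality assumptions listed in the proposition.

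Next comes the rate bookkeeping, which is where $0<\gamma<1/4$ enters. With $\tau_n = \sigma_0^2/(n^\gamma \delta_0^2)$ we have $\tau_n \asymp n^{-\gamma}$, so $\tau_n\to 0$ for every $\gamma>0$, forcing the approximation error to vanish. The estimation-error side of \citet[][Theorem 12]{christmann2007consistency} imposes a lower bound on the decay rate, of the form $n\,\tau_n^{4}\to\infty$; matching this against $\tau_n\asymp n^{-\gamma}$ gives $n\,\tau_n^{4}\asymp (\sigma_0^2/\delta_0^2)^4\, n^{1-4\gamma}$, which diverges exactly when $\gamma<1/4$. Having verified both $\tau_n\to 0$ and the divergence condition, the theorem yields risk consistency, and since the excess risk for the squared loss equals $\mathbb{E}_{H_0}|\hat{f}_{n,\tau_n}(\bs{X})-f_0(\bs{X})|^2$, substituting $\tilde{f}=\hat{f}_{n,\tau_n}$ gives the claimed convergence in probability.

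The main obstacle is not conceptual but one of matching hypotheses: I must state \citet{christmann2007consistency}'s Theorem 12 in a form whose regularization-rate requirement is cleanly implied by $0<\gamma<1/4$, and confirm that their notion of risk consistency for the squared (or a suitable Lipschitz surrogate) loss translates to the $L^2(\mu)$ statement about $f_0$ here. Some attention is also needed to ensure that the finite-second-moment assumption on $Y$, rather than boundedness, suffices for their integrability conditions; this is the one place where the proposition's hypotheses must be carefully aligned with those of the external theorem.
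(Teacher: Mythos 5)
Your proposal matches the paper's proof essentially verbatim: the paper also identifies the GPR posterior mean with the kernel ridge estimator $\hat{f}_{n,\tau_n}$ at $\tau_n = \sigma_0^2/(n^{\gamma}\delta_0^2)$ (Lemma \ref{bayesian_penal}), invokes \citet[][Theorem 12]{christmann2007consistency} for consistency under $\tau_n \to 0$ and $\tau_n^4 n \to \infty$ (Lemma \ref{consistency_penalty}), and closes by noting that $\tau_n^4 n = (\sigma_0^8/\delta_0^8)\, n^{1-4\gamma} \to \infty$ precisely because $0<\gamma<1/4$. The argument is correct and takes the same route.
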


In Proposition \ref{thm:general_consistency}, 
the universal kernel was introduced by \citet{micchelli2006universal}, which has the property that the corresponding RKHS is dense in $\mathcal{C}(\mathcal{X})$, the space consisting of all continuous functions on $\mathcal{X}$ with the infinity norm. 
We can intuitively summarize conditions on 
the Gaussian process prior of $f$ and the relation between its variance parameter and sample size as follows. First, $f$ should be almost surely continuous. Second, if two realizations of $f$ fit observations equally well, it is preferable to give the smoother one more weight. Third, as the sample size increases, the posterior mean and mode of $f$ should increasingly concentrate around the true functional relationship.
All the requirements above can be satisfied by the GPR model with an appropriate choice of the kernel function and by letting the variance parameter decrease at a proper rate as the sample size increases.

\subsubsection{Large-sample valid partial permutation test}

The following theorem shows that 
the partial permutation $p$-value is asymptotically valid under the GPR model $\tilde{H}_0$ in \eqref{eq:h0_gp} under certain conditions.

\begin{theorem}\label{thm:general_ppt_valid}
	Let $\{(\bs{X}_i,Y_i,Z_i)\}_{1\leq i \leq n}$ denote samples from model $\tilde{H}_0$ in (\ref{eq:h0_gp}).
	If 
    the  $\bs{X}_i$'s are $\iid$ from a compact support $\mathcal{X}$ with some probability measure $\mu$, 
    the eigenvalues $\{\lambda_k\}$ of kernel $K$ on $(\mathcal{X},\mu)$ satisfy $\lambda_k = O(k^{-\rho})$ for some $\rho >1$, and 
    $\gamma< 1-\rho^{-1}$, then for sequence $\{b_n\}$ satisfying 
    $b_n=   O(n^\kappa)$ with $0<\kappa <1-\rho^{-1}-\gamma$, 
    the partial permutation $p$-value from either the discrete or continuous partial permutation test with kernel $K$, permutation size $b_n$, and any test statistic $T$ is asymptotically valid under $\tilde{H}_0$, 
    i.e., $\forall \alpha\in (0,1)$, 
    $
    \limsup_{n \rightarrow \infty} \Pr_{\tilde{H}_0} \{p(\bs{X}, \bs{Y}, \bs{Z} ) \leq \alpha \} \leq \alpha. 
    $
\end{theorem}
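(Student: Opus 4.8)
The plan is to leverage Theorem~\ref{thm:fixed_property_theorem}, which already tells us exactly how much bias the uncorrected partial permutation $p$-value carries when the true function is fixed. Conditional on the realized function $f$ drawn from the Gaussian process, model $\tilde H_0$ is exactly the fixed-function null model $H_0$ with $f_0 = f$. Hence, for every realization of $f$, Theorem~\ref{thm:fixed_property_theorem} gives
\begin{align*}
\Pr_{\tilde H_0}\{p(\bXmatrix,\bs{Y},\bs{Z})\le \alpha \mid \bXmatrix,\bs{Z}, f\}
\le \alpha + v\bigl(b_n,\sigma_0^{-1}f,\alpha_0\bigr) + \alpha_0,
\end{align*}
where $v$ is the correction defined in \eqref{eq:v_fixed_H0} and is an increasing function of the LOSP $\omega(b_n,\sigma_0^{-1}f)$. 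The idea is then to average this bound over the Gaussian process prior on $f$ and show that the expected correction vanishes asymptotically. Since $\alpha_0\in(0,1)$ is an arbitrary free parameter, I would take $\alpha_0 = \alpha_{0,n}\to 0$ slowly as $n\to\infty$, so the additive $\alpha_0$ term disappears in the limit; the main work is to control $\E[v(b_n,\sigma_0^{-1}f,\alpha_{0,n})]$.

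The key step is therefore a stochastic bound on the LOSP $\omega(b_n,\sigma_0^{-1}f) = \sigma_0^{-2}\sum_{i=n-b_n+1}^n (\bs{\gamma}_i^\top \bs{f})^2$ under the prior $f\sim \text{GP}(0,\,\delta_0^2 n^{-(1-\gamma)}K)$. First I would observe that, conditional on $\bXmatrix$, the vector $\bs{f} = (f(\bs{X}_1),\dots,f(\bs{X}_n))^\top$ is multivariate Gaussian with covariance $\delta_0^2 n^{-(1-\gamma)} \bs{K}_n$. Projecting onto the eigenbasis $\bs{\Gamma}$ of $\bs{K}_n$ diagonalizes this: the components $\bs{\gamma}_i^\top \bs{f}$ are independent mean-zero Gaussians with variance $\delta_0^2 n^{-(1-\gamma)} c_i$, where $c_i$ is the $i$th eigenvalue of $\bs{K}_n$. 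Consequently,
\begin{align*}
\E\bigl[\omega(b_n,\sigma_0^{-1}f)\mid \bXmatrix\bigr]
= \frac{\delta_0^2}{\sigma_0^2\, n^{1-\gamma}} \sum_{i=n-b_n+1}^n c_i .
\end{align*}
The tail eigenvalues $c_i$ for large $i$ are governed by the decay $\lambda_k = O(k^{-\rho})$ of the population kernel eigenvalues; invoking standard spectral concentration results (relating the empirical eigenvalues $c_i/n$ of $\bs{K}_n$ to the population eigenvalues $\lambda_i$, as in the Mercer/Nyström literature already cited around the discussion in Section~\ref{sec:general_fix_null}), the smallest $b_n$ eigenvalues satisfy $\sum_{i=n-b_n+1}^n c_i = O(n\cdot b_n \lambda_{n-b_n})$, which under $\lambda_k=O(k^{-\rho})$ and $b_n=o(n)$ behaves like $O(n\, b_n\, n^{-\rho}) = O(b_n n^{1-\rho})$. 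Plugging in and using $b_n=O(n^\kappa)$ gives $\E[\omega]\lesssim n^{-(1-\gamma)}\cdot n\cdot n^\kappa n^{-\rho} = O\!\bigl(n^{\gamma+\kappa-\rho}\bigr)$, and the condition $\kappa < 1-\rho^{-1}-\gamma$ should be exactly what forces $b_n\cdot\omega\to 0$, hence $v\to 0$, in the relevant mode of convergence.

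The main obstacle I anticipate is twofold. The harder technical part is controlling the tail empirical eigenvalues $c_i$ rigorously rather than heuristically: one needs a quantitative comparison between the spectrum of $\bs{K}_n$ and the population eigenvalues $\lambda_k$ that holds with high probability, since $\omega$ depends on the random empirical eigenvalues. The second delicate point is the order of operations in the averaging: $v$ is a convex, rapidly growing (exponential) function of $\omega$, so controlling $\E[v]$ requires more than controlling $\E[\omega]$—I would need either a high-probability bound on $\omega$ (splitting the event $\{\omega \le \eta_n\}$ for a slowly vanishing $\eta_n$ from its low-probability complement, on which the $p$-value bound is trivially $\le 1$) together with Markov's inequality on the tail, or a direct concentration argument for the quadratic form $\sum_{i>n-b_n}(\bs{\gamma}_i^\top\bs{f})^2$. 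This high-probability route fits naturally with the $\limsup$ formulation of the conclusion: on the good event $\omega\le\eta_n$ we have $v(b_n,\sigma_0^{-1}f,\alpha_{0,n})\to 0$, and the probability of the bad event is $o(1)$, so taking expectations and letting $n\to\infty$ yields $\limsup_n \Pr_{\tilde H_0}\{p\le\alpha\}\le\alpha$ as claimed.
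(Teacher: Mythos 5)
Your reduction to Theorem \ref{thm:fixed_property_theorem} is not the paper's route, and as written it cannot reach the full range of $\kappa$ claimed in the theorem. The correction term $v(b_n,\sigma_0^{-1}f,\alpha_0)$ in \eqref{eq:v_fixed_H0} comes from a worst-case Cauchy--Schwarz bound on the cross term $\sum_{i>n-b_n} W_{\psi(i)}\bs{\gamma}_i^\top\bs{f}$, so $v\to 0$ forces $\omega(b_n,\sigma_0^{-1}f)\cdot Q_{b_n}(1-\alpha_0)\to 0$, i.e.\ essentially $b_n\,\omega\to 0$. Under the prior, $\omega$ concentrates at $\frac{\delta_0^2}{\sigma_0^2 n^{1-\gamma}}\sum_{i=n-b_n+1}^n c_i \asymp \frac{b_n\,c_{n-b_n+1}}{n^{1-\gamma}}$, so your route needs $\frac{b_n^2\,c_{n-b_n+1}}{n^{1-\gamma}}\to 0$ --- two powers of $b_n$ --- which translates to $\kappa<(1-\rho^{-1}-\gamma)/2$, only half the stated range. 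The paper avoids this by integrating $f$ out \emph{first}: under $\tilde H_0$ the marginal law of $\bs{W}=\bs{\Gamma}^\top\bs{Y}$ given $(\bXmatrix,\bs{Z})$ is $\mathcal{N}\bigl(\bs 0,\ \frac{\delta_0^2}{n^{1-\gamma}}\bs C+\sigma_0^2\bs I_n\bigr)$, a product of \emph{centered} Gaussians, so the log-density ratio between two permutations has no signal cross term at all; it is a sum of squares weighted by the small variance discrepancies $\frac{1}{2\sigma_0^2}-\frac{1}{2(c_j\delta_0^2/n^{1-\gamma}+\sigma_0^2)}$ and is bounded by $\frac{c_{n-b_n+1}\delta_0^2/n^{1-\gamma}}{2\sigma_0^2}\cdot\chi^2_{b_n}\asymp\frac{b_n\,c_{n-b_n+1}}{n^{1-\gamma}}$ (Lemmas in the supplement analogous to your Step 2, but for the marginal model). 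That single power of $b_n$ is exactly what matches $\kappa<1-\rho^{-1}-\gamma$. To salvage a conditional-on-$f$ argument you would have to exploit the randomness of $f$ \emph{inside} the cross term (its conditional law given $\bs{W}$ is centered Gaussian with variance $\lesssim\frac{\delta_0^2 c_{n-b_n+1}}{n^{1-\gamma}}\sum W_i^2$), rather than applying Cauchy--Schwarz and then averaging the deterministic correction.

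Separately, your spectral heuristic $\sum_{i=n-b_n+1}^n c_i = O(n\,b_n\,\lambda_{n-b_n}) = O(b_n n^{1-\rho})$ is wrong: for indices near $n$ the empirical eigenvalues are dominated by estimation error, not by the population eigenvalues. The high-probability bound the paper actually uses (via \citet{braun2006accurate}) is $|c_i/n-\lambda_i|\lesssim \log n/n^{1-1/\rho}$ for $i\ge n^{1/\rho}$, so $c_{n-b_n+1}\lesssim n^{1/\rho}\log n$, which is far larger than $n\lambda_{n-b_n}\asymp n^{1-\rho}\to 0$. This is precisely where the exponent $1-\rho^{-1}$ in the hypothesis comes from; with your estimate the exponent would involve $\rho$ itself. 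Your final bookkeeping happens to land inside the theorem's hypothesis only because this underestimate partially offsets the extra factor of $b_n$ from the first issue. Your framing of the high-probability/bad-event split and the handling of $\alpha_{0,n}\to 0$ are fine and mirror the structure of the paper's Lemma on permutation $p$-values, but the two quantitative gaps above are real.
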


In Section \ref{sec:kernel} we will show that there exist universal kernels with polynomially decaying eigenvalues. Coupled with a choice of the smoothness parameter $\gamma$ that satisfies the conditions in Proposition \ref{thm:general_consistency} and Theorem \ref{thm:general_ppt_valid}, the partial permutation test is asymptotically valid under a GPR model that imposes a reasonable amount of regularization on the underlying function.
Furthermore, we emphasize that the asymptotic validity of the partial permutation test essentially requires that the ratio between the variance parameter for the Gaussian process prior and the variance of observation noises is of order $n^{-(1-\gamma)}$ for some $\gamma < 1 - \rho^{-1}$. 
Thus, even if the Gaussian process prior on $f$ does not follow the regularized form as in \eqref{eq:h0_gp}, we 
can still 
perform asymptotically valid partial permutation test by adding noises to the responses.

Theorem \ref{thm:general_ppt_valid} proves the large-sample validity of the partial permutation test.  
Below we investigate its finite-sample performance in analogous to Section \ref{sec:general_fix_null}. 
Let $\xi_n=(\delta_0^2/n^{1-\gamma})/\sigma_0^2$ denote the variance ratio for the function and noise. 
For any given $b_n$, 
we define
\begin{align*}
    \tilde{\omega}\left(b_n, 
    \xi_n
    \right)
    = 
    \frac{\delta_0^2/n^{1-\gamma}}{\sigma_0^2} \cdot c_{n-b_n+1} 
    = \xi_n \cdot c_{n-b_n+1} 
\end{align*}
to denote the LOSP for the components of $\bs{Y}$ used for the partial permutation test of size $b_n$, 
recalling that $c_{n-b_n+1}$ is the $(n-b_n+1)$th largest eigenvalue of $\bs{K}_n$. 
Note that here the LOSP $\omega(b_n, \sigma_0^{-1}f)$ defined in Section \ref{sec:general_fix_null} can be bounded by $b_n \cdot \tilde{\omega}(b_n, \xi_n)$ in expectation under the 
GPR in
\eqref{eq:h0_gp}. 
Recall that $Q_{b_n}$ is the quantile function of the $\chi^2$-distribution with degrees of freedom $b_n$. 
For $1\le b_n \le n$ and $\alpha_0 \in (0,1)$, 
we define 
\begin{align}\label{eq:v_correct_gaussian}
\tilde{v}(b_n, \xi_n, \alpha_0)=\frac{1}{2} \exp\left[
\frac{1}{2}
\tilde{\omega}(b_n, \xi_n)
\cdot Q_{b_n}(1-\alpha_0)
\right]-\frac{1}{2}. 
\end{align}
The following theorem shows that, by adding a correction term, the partial permutation $p$-value becomes finite-sample valid under $\tilde{H}_0$.  

\begin{theorem}\label{thm:gp_finite_property_theorem}
	Let $\{(\bs{X}_i,Y_i,Z_i)\}_{1\leq i \leq n}$ denote samples from model $\tilde{H}_0$ in \eqref{eq:h0_gp}.
	Given $1\leq b_n \leq n$ and $0<\alpha_0<1$, we define the corrected partial permutation $p$-value as follows:
	\begin{align*}
	\tilde{p}_{\correct}(\bXmatrix, \bs{Y}, \bs{Z}) = p(\bXmatrix, \bs{Y}, \bs{Z} ) + \tilde{v}(b_n, \xi_n, \alpha_0) + \alpha_0,
	\end{align*}
	where $p(\bXmatrix, \bs{Y}, \bs{Z})$ is the $p$-value from either the discrete or continuous partial permutation test with kernel $K$, permutation size $b_n$, and any test statistic $T$, and $\tilde{v}(b_n, \xi_n, \alpha_0)$ is as defined in \eqref{eq:v_correct_gaussian}. Then the corrected partial permutation $p$-value is valid under model $\tilde{H}_0$, i.e., 
	$\forall \alpha \in (0,1)$, 
	$
	\Pr_{\tilde{H}_0}
	\{
	\tilde{p}_c(\bXmatrix, \bs{Y}, \bs{Z} )\leq\alpha
	\mid \bXmatrix,\bs{Z}
	\}
	\leq\alpha. 
	$
\end{theorem}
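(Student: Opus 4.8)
The plan is to mirror the structure of Theorem \ref{thm:fixed_property_theorem}, but to exploit the Gaussian process prior so that the left-over signal $\omega(b_n, \sigma_0^{-1} f)$ becomes a \emph{random} quantity whose distribution we can control explicitly, rather than a fixed (unknown) number. First I would condition on $(\bXmatrix, \bs{Z})$ throughout, so that the kernel matrix $\bs{K}_n = \bs{\Gamma}\bs{C}\bs{\Gamma}^\top$ and its eigenstructure are fixed. Under $\tilde{H}_0$, the vector $\bs{f} \equiv (f(\bs{X}_1), \ldots, f(\bs{X}_n))^\top$ is multivariate Gaussian with mean $\bs{0}$ and covariance $(\delta_0^2/n^{1-\gamma})\bs{K}_n$, independent of $\bs{\varepsilon}$. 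Projecting onto the eigenbasis, the components $\bs{\gamma}_i^\top \bs{f}$ are independent $\mathcal{N}(0, (\delta_0^2/n^{1-\gamma}) c_i)$, and in particular, for the indices $i \ge n-b_n+1$ used in the permutation, $\bs{\gamma}_i^\top \bs{f}$ has variance bounded by $(\delta_0^2/n^{1-\gamma}) c_{n-b_n+1}$ since the $c_i$ are in descending order. Consequently the random LOSP satisfies $\E\big[\omega(b_n, \sigma_0^{-1} f) \mid \bXmatrix, \bs{Z}\big] = \xi_n \sum_{i=n-b_n+1}^n c_i \le b_n \cdot \xi_n c_{n-b_n+1} = b_n \tilde{\omega}(b_n, \xi_n)$, which is the stochastic bound hinted at in the text.

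The core analytic step I would carry over from the fixed-$f_0$ argument is the bound relating the uncorrected permutation $p$-value to the event that the permuted noise-plus-signal statistic fails to dominate. The key probabilistic lemma underlying Theorem \ref{thm:fixed_property_theorem} should be a comparison, via a likelihood-ratio / density-ratio bound between the distribution of $\big(\bs{\gamma}_i^\top\bs{Y}\big)_{i>n-b_n}$ under the true model (noise $+$ signal) and under the pure-noise model for which exchangeability and hence exactness of the permutation test holds. Writing $\bs{U} = (\bs{\gamma}_{n-b_n+1}^\top\bs{Y}, \ldots, \bs{\gamma}_n^\top\bs{Y})^\top$, the pure-noise law is $\mathcal{N}(\bs{0}, \sigma_0^2 \bs{I})$ while the actual conditional law (given $\bs{f}$) is $\mathcal{N}(\bs{\mu}, \sigma_0^2\bs{I})$ with $\|\sigma_0^{-1}\bs{\mu}\|^2 = \omega(b_n,\sigma_0^{-1}f)$; the Radon--Nikodym derivative is $\exp\{\sigma_0^{-1}\bs{\mu}^\top \bs{z} - \tfrac12\|\sigma_0^{-1}\bs{\mu}\|^2\}$ in the standardized coordinate $\bs{z}$. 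The plan is to integrate this density ratio against the $\chi^2_{b_n}$ tail: on the event that the pure-noise test rejects at level $\alpha_0$ (governed by $Q_{b_n}(1-\alpha_0)$), the inflation of the true-model probability over the pure-noise probability is controlled by a factor whose logarithm is bounded by a quantity of the form $\sqrt{2\,\omega}\sqrt{Q_{b_n}(1-\alpha_0)+\omega}$ — exactly the expression appearing inside $v(\cdot)$ in \eqref{eq:v_fixed_H0}. Here, however, I would substitute the Gaussian-averaged bound: taking expectation over $f \sim \text{GP}$ and using the moment-generating-function structure of $\omega$ (a weighted sum of independent $\chi^2_1$ terms with the weights being the rescaled eigenvalues $\xi_n c_i$), the averaging yields the cleaner \emph{exponential} form $\tfrac12\exp\{\tfrac12 \tilde\omega(b_n,\xi_n) Q_{b_n}(1-\alpha_0)\} - \tfrac12$ of \eqref{eq:v_correct_gaussian}, since $\E\exp\{\tfrac12 t \sum w_i \chi^2_{1,i}\}$ factorizes and is dominated by $\exp\{\tfrac12 t \sum w_i\}$ up to the stated bound.

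With that bound in hand, the remainder is bookkeeping: I would show that $\Pr_{\tilde{H}_0}\{p(\bXmatrix,\bs{Y},\bs{Z}) \le \alpha - \tilde{v}(b_n,\xi_n,\alpha_0) - \alpha_0 \mid \bXmatrix, \bs{Z}\} \le \alpha$ by splitting into the pure-noise reference measure (which contributes at most $\alpha - \tilde v - \alpha_0$ by exactness of the permutation distribution, since under pure noise the permuted and observed statistics are exchangeable) plus the two correction terms: $\alpha_0$ absorbing the probability that the realized $\omega$ is anomalously large relative to its mean, and $\tilde v$ absorbing the density-ratio inflation on the complementary good event. Rearranging $p_c = p + \tilde v + \alpha_0$ then gives $\Pr_{\tilde{H}_0}\{\tilde p_c \le \alpha \mid \bXmatrix, \bs{Z}\} \le \alpha$, and integrating out $(\bXmatrix, \bs{Z})$ preserves the inequality. \textbf{The main obstacle} I anticipate is making the density-ratio/MGF averaging rigorous while keeping the bound as tight as the claimed exponential form: one must carefully handle the coupling between the event $\{$permutation test rejects$\}$ and the realized signal $\bs{\mu}$, and verify that the worst case over test statistics $T$ is still controlled — which is precisely why the argument must route through the exchangeability of the \emph{entire} sub-vector $\bs{U}$ under pure noise rather than through any property specific to $T$. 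The secondary technical point is confirming that the descending-eigenvalue bound $c_i \le c_{n-b_n+1}$ is the correct quantity to appear, i.e. that replacing $\sum_{i>n-b_n} c_i$ by its largest term $b_n c_{n-b_n+1}$ in the exponent does not break the direction of the inequality; this follows because the MGF of a weighted $\chi^2$ is monotone in each weight.
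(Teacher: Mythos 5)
Your high-level architecture (condition on $(\bXmatrix,\bs{Z})$, bound a density-ratio/TV discrepancy between the law of $\bs{Y}$ restricted to the permutation orbit and the uniform law on that orbit, and convert the bound into a correction indexed by the $\chi^2_{b_n}$ quantile $Q_{b_n}(1-\alpha_0)$) is the right one, but the route you take through it has a genuine gap. You condition on $f$, invoke the fixed-function bound of the form $\exp\{2\sqrt{2\omega}\sqrt{Q_{b_n}(1-\alpha_0)+\omega}\}$ from \eqref{eq:v_fixed_H0} with the now-random LOSP $\omega(b_n,\sigma_0^{-1}f)$, and then try to average over $f\sim\mathrm{GP}$ to land on \eqref{eq:v_correct_gaussian}. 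Three things break. First, the averaging step as you state it relies on $\E\exp\{\tfrac12 t\sum_i w_i\chi^2_{1,i}\}\le\exp\{\tfrac12 t\sum_i w_i\}$, which is false in direction: by Jensen's inequality the MGF of a (weighted) chi-square \emph{exceeds} the exponential of its mean, so the domination you need does not hold. Second, even granting some MGF control, the fixed-$f$ correction involves $\sqrt{\omega}$ and $\sqrt{Q_{b_n}(1-\alpha_0)+\omega}$ multiplicatively, and there is no visible reason its expectation over the prior collapses to the much simpler form $\tfrac12\exp\{\tfrac12\tilde{\omega}(b_n,\xi_n)Q_{b_n}(1-\alpha_0)\}-\tfrac12$. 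Third, your bookkeeping spends $\alpha_0$ twice: in the fixed-$f$ argument $\alpha_0$ is already consumed by the tail of $\sigma_0^{-2}\sum_{i>n-b_n}(\bs{\gamma}_i^\top\bs{\varepsilon})^2\sim\chi^2_{b_n}$, so it cannot also absorb the event that the realized $\omega(b_n,\sigma_0^{-1}f)$ is anomalously large; replacing the random $v(b_n,\sigma_0^{-1}f,\alpha_0)$ by a deterministic $\tilde v$ requires an additional tail probability for $\omega(f)$ that your decomposition has no budget for.

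The missing idea is to integrate out $f$ \emph{before} doing any density-ratio computation. Under $\tilde{H}_0$ in \eqref{eq:h0_gp}, conditional on $(\bXmatrix,\bs{Z})$ the vector $\bs{W}=\bs{\Gamma}^\top\bs{Y}$ is exactly $\mathcal{N}\bigl(\bs{0},\,(\delta_0^2/n^{1-\gamma})\bs{C}+\sigma_0^2\bs{I}_n\bigr)$, i.e.\ independent centered Gaussians with known variances $c_j\delta_0^2/n^{1-\gamma}+\sigma_0^2$. The log density ratio between two points of the permutation orbit is then $\sum_{j>n-b_n}\bigl\{w_{\phi(j)}^2-w_{\psi(j)}^2\bigr\}/\bigl\{2(c_j\delta_0^2/n^{1-\gamma}+\sigma_0^2)\bigr\}$, whose magnitude is controlled by the spread of these variances over $j>n-b_n$ and hence by the single eigenvalue $c_{n-b_n+1}$; this yields a TV bound of $\tfrac12(e^{\tilde\Delta_n}-1)$ with $\tilde\Delta_n$ proportional to $\tilde{\omega}(b_n,\xi_n)\sum_{j>n-b_n}W_j^2/(c_j\delta_0^2/n^{1-\gamma}+\sigma_0^2)$, and the latter sum is \emph{exactly} $\chi^2_{b_n}$ under $\tilde H_0$, so the event $\{\mathrm{TV}>\tilde v\}$ has probability exactly $\alpha_0$. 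No averaging over $f$, no MGF of weighted chi-squares, and no second use of $\alpha_0$ is needed. Your secondary point — that replacing the individual $c_j$ by $c_{n-b_n+1}$ preserves the inequality because the eigenvalues are in descending order — is correct and is indeed where $\tilde{\omega}(b_n,\xi_n)=\xi_n c_{n-b_n+1}$ enters, but it enters through the variance spread in the marginal law of $\bs{W}$, not through the mean of the random LOSP.
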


Note that in Theorem \ref{thm:gp_finite_property_theorem}, 
the correction term $\tilde{v}(b_n, \xi_n, \alpha_0)$ is 
monotone increasing in 
the permutation size $b_n$. 
This is intuitive since the larger the permutation size, the 
larger the correction for the partial permutation $p$-value
is needed. 
As discussed shortly in Section \ref{sec:choice_size},
Theorem \ref{thm:gp_finite_property_theorem}  provides us some guidance on the choice of permutation size in finite samples.

\section{Partial Permutation Test under Alternative Hypotheses}\label{sec:ppt_alter}

\subsection{Kernels with Finite-Dimensional Feature Space}\label{sec:power_finite}

While previous discussions focused  on the validity of partial permutation tests under the null hypothesis that the samples share the same functional relationship across all groups, we here investigate 
how such tests behave under alternative hypotheses. 
As the permutation test allows for a flexible choice of test statistics, which can be tailored based on the alternative hypotheses of interest,
we study a special class of test statistics that are linked to a certain form of likelihood ratio statistics under a general kernel with finite-dimensional feature space. 
That is, 
the kernel function can be decomposed as 
$K(\bs{x}, \bs{x}') =
\phi(\bs{x})^\top \phi(\bs{x}')$ 
with $\phi(\bs{x})\in \mathbb{R}^q$ for some $q < \infty$. 

As demonstrated in Corollary \ref{cor:kernel_finite_dim_feature_space}, 
the partial permutation test is exactly valid under model $H_0$ in \eqref{eq:H_0_fixed} when   $f_0(\bs{x})$ is linear in the transformed covariates $\phi(\bs{x})$. 
It is then straightforward to hypothesize that, under the alternative model specified in  \eqref{eq:H_1_equal},  
the functional relationship between the response and covariates is also linear in the transformed covariates, but the coefficients can vary across groups, i.e., 
\begin{align}\label{eq:linear_feature_alter}
    Y_i = 
    \sum_{h=1}^H \I(Z_i=h)\bs{\beta}_h^\top \phi(\bs{X}_i)
    + \varepsilon_i, 
    \quad 
    \varepsilon_i \mid \bs{\mathrm{X}}, \bs{Z} \ \overset{\iid}{\sim} \ \mathcal{N}(0, \sigma^2_0), 
    \quad 
    (1\le i \le n)
\end{align}
where $\bs{\beta}_h$ denotes the regression coefficient vector for samples in the $h$th group. 
This motivates us to use the F statistic for testing 
$\bs{\beta}_1 = \ldots = \bs{\beta}_H$
as our test statistic, 
which is equivalent to the likelihood ratio statistic up to a monotone transformation. 
Let $\bs{P}_0$ and $\bs{P}_1$ denote the projection matrices onto the column spaces of the transformed covariates for regression model \eqref{eq:linear_feature_alter} under the null model that $\bs{\beta}_1 = \ldots = \bs{\beta}_H$ and the full model without any constraint on the parameters, respectively, 
and let $p_0$ and $p_1$ denote the matrices' ranks. 
Then, the $F$ statistic for testing $\bs{\beta}_1 = \ldots = \bs{\beta}_H$ has the form 
\begin{align}\label{eq:Fstat}
    F(\bXmatrix, \bs{Y}, \bs{Z}) = \frac{\bs{Y}^\top (\bs{P}_1 - \bs{P}_0) \bs{Y}^\top/(p_1 - p_0)}{\bs{Y}^\top (\bs{I}_n - \bs{P}_1) \bs{Y}^\top/(n - p_1)}. 
\end{align}
It turns out that 
the permutation distribution of the F statistic in \eqref{eq:Fstat} under our continuous partial permutation test with kernel $K$ and permutation size $b_n = n-p_0$ is F distributed with degrees of freedom $p_1 - p_0$ and $n-p_1$, which matches the repeated sampling distribution of the F statistic when model \eqref{eq:linear_feature_alter} holds with $\bs{\beta}_1 = \ldots = \bs{\beta}_H$. 
Therefore, 
with the same choice of the test statistic (i.e., F statistic or equivalently the likelihood ratio statistic), 
the partial permutation test is equivalent to the usual F-test or likelihood ratio test for nested regression models. We summarize the results in the following theorem. 
Let $F_{d_1, d_2}$ denote the distribution function of the F distribution with degrees of freedom $d_1$ and $d_2$.

\begin{theorem}\label{thm:general_F_test}
    Consider any samples $\{(\bs{X}_i,Y_i,Z_i)\}_{1\leq i \leq n}$
    and any kernel $K$ of form $K(\bs{x}, \bs{x}') = \phi(\bs{x})^\top \phi(\bs{x}')$ with $\phi(\bs{x}) \in \mathbb{R}^q$ and $q< \infty$. 
    The permutation distribution of the F statistic in \eqref{eq:Fstat} under the continuous partial permutation test with kernel $K$ and permutation size $n-p_0$ is an F distribution with degrees of freedom $p_1 - p_0$ and $n-p_1$, 
    and the corresponding partial permutation $p$-value is 
    $p(\bs{\mathrm{X}}, \bs{Y}, \bs{Z}) = 1 - F_{p_1-p_0, n-p_1}(F(\bs{\mathrm{X}}, \bs{Y}, \bs{Z}))$. 
\end{theorem}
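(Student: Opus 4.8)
The plan is to show that, under the continuous partial permutation with size $b_n = n-p_0$, the permuted F statistic follows exactly an $F_{p_1-p_0,\,n-p_1}$ law by reducing the whole construction to a uniform rotation on a sphere inside a fixed residual subspace. First I would identify the subspace that the permutation keeps fixed. Since $K(\bs{x},\bs{x}') = \phi(\bs{x})^\top\phi(\bs{x}')$ with $\phi(\bs{x})\in\mathbb{R}^q$, the kernel matrix factors as $\bs{K}_n = \bs{\Phi}\bs{\Phi}^\top$, so $\mathrm{rank}(\bs{K}_n) = \mathrm{rank}(\bs{\Phi}) = p_0$ and the column space of $\bs{K}_n$ coincides with that of $\bs{\Phi}$, which is precisely the column space of $\bs{P}_0$ because the null-model design (with $\bs{\beta}_1=\cdots=\bs{\beta}_H$) is $\bs{\Phi}$. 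Hence the leading eigenvectors $\bs{\gamma}_1,\ldots,\bs{\gamma}_{p_0}$ span $\mathrm{col}(\bs{P}_0)$, while $\bs{\gamma}_{p_0+1},\ldots,\bs{\gamma}_n$ form an orthonormal basis of $\mathrm{col}(\bs{I}_n-\bs{P}_0)$. Choosing $b_n = n-p_0$, the continuous permutation fixes $W_1,\ldots,W_{p_0}$, so $\bs{P}_0\bs{Y}^{\perm}=\bs{P}_0\bs{Y}$, whereas $(\bs{I}_n-\bs{P}_0)\bs{Y}^{\perm}=\sum_{i>p_0}\bs{\gamma}_i W_i^{\perm}$ is, by the definition of $\mathcal{S}_w$ in the continuous test, uniformly distributed on the sphere of radius $\|(\bs{I}_n-\bs{P}_0)\bs{Y}\|$ inside $\mathrm{col}(\bs{I}_n-\bs{P}_0)$.

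Next I would argue that the F statistic depends only on this residual vector. Because the models are nested, $\mathrm{col}(\bs{P}_0)\subseteq\mathrm{col}(\bs{P}_1)$ and thus $\bs{P}_1\bs{P}_0=\bs{P}_0$, which gives $(\bs{P}_1-\bs{P}_0)\bs{P}_0=\bs{0}$ and $(\bs{I}_n-\bs{P}_1)\bs{P}_0=\bs{0}$. Writing $\bs{u}=(\bs{I}_n-\bs{P}_0)\bs{Y}$, both quadratic forms in \eqref{eq:Fstat} collapse to $\bs{u}^\top(\bs{P}_1-\bs{P}_0)\bs{u}$ and $\bs{u}^\top(\bs{I}_n-\bs{P}_1)\bs{u}$, so $F$ is a function of $\bs{u}$ alone. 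Moreover $\bs{P}_1-\bs{P}_0$ and $\bs{I}_n-\bs{P}_1$ are orthogonal projections whose ranges lie in $\mathrm{col}(\bs{I}_n-\bs{P}_0)$, they are mutually orthogonal, and they sum to $\bs{I}_n-\bs{P}_0$; hence they split the $(n-p_0)$-dimensional permuted subspace into orthogonal pieces of dimensions $p_1-p_0$ and $n-p_1$.

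Finally I would invoke the standard sphere-to-F fact. Representing the uniform sphere vector as $\bs{u}^{\perm}=\|\bs{u}\|\,\bs{g}/\|\bs{g}\|$ with $\bs{g}\sim\mathcal{N}(\bs{0},\bs{I}_{n-p_0})$ supported on $\mathrm{col}(\bs{I}_n-\bs{P}_0)$, the squared projections onto the two orthogonal subspaces become independent scaled $\chi^2$ variables with $p_1-p_0$ and $n-p_1$ degrees of freedom, the common normalization $\|\bs{g}\|^2$ cancels in the ratio, and
\[
F(\bXmatrix,\bs{Y}^{\perm},\bs{Z}) = \frac{(\bs{u}^{\perm})^\top(\bs{P}_1-\bs{P}_0)\bs{u}^{\perm}/(p_1-p_0)}{(\bs{u}^{\perm})^\top(\bs{I}_n-\bs{P}_1)\bs{u}^{\perm}/(n-p_1)} \sim F_{p_1-p_0,\,n-p_1}.
\]
Since this permutation distribution is absolutely continuous, the partial permutation $p$-value $\Pr\{F(\bXmatrix,\bs{Y}^{\perm},\bs{Z})\geq F(\bXmatrix,\bs{Y},\bs{Z})\mid\bXmatrix,\bs{Y},\bs{Z}\}$ equals $1-F_{p_1-p_0,\,n-p_1}(F(\bXmatrix,\bs{Y},\bs{Z}))$, as claimed.

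The main obstacle is the projection bookkeeping: verifying carefully that $\mathrm{col}(\bs{P}_1-\bs{P}_0)$ and $\mathrm{col}(\bs{I}_n-\bs{P}_1)$ are exact orthogonal complements within the precisely $(n-p_0)$-dimensional subspace that the permutation rotates. This is what makes the rotational invariance translate into an independent $\chi^2$ split and hence an exact F law; once that alignment between $\mathrm{col}(\bs{P}_0)$ and the fixed eigenvectors is pinned down, the remaining computation is routine.
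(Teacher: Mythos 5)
Your proposal is correct and follows essentially the same route as the paper's proof: identify that the fixed eigenvectors span $\mathrm{col}(\bs{P}_0)$ so the permutation uniformly rotates $(\bs{I}_n-\bs{P}_0)\bs{Y}$ on a sphere, use nestedness to reduce $F$ to a ratio of quadratic forms in that residual vector, and invoke the Gaussian representation of the uniform sphere distribution to get independent $\chi^2$ components. The only cosmetic difference is that you argue coordinate-free with the complementary projections $\bs{P}_1-\bs{P}_0$ and $\bs{I}_n-\bs{P}_1$, whereas the paper verifies the same orthogonal split by checking idempotency of $\bs{\Gamma}_2^\top(\bs{I}_n-\bs{P}_1)\bs{\Gamma}_2$ in the eigenbasis.
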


In Theorem \ref{thm:general_F_test}, if the transformed covariates are linearly independent within each group (i.e., the matrix whose rows consist of $\phi(\bs{X}_i)^\top$ for samples in group $h$ is of full column rank, $1\le h \le H$), 
then $p_0 = q$ and $p_1 = H q$.
The equivalence between the partial permutation test and F-test in 
Theorem \ref{thm:general_F_test} has two implications. 
First, it confirms the finite-sample validity of the partial permutation test when the null hypothesis (i.e., model \eqref{eq:linear_feature_alter} with $\bs{\beta}_1 = \ldots = \bs{\beta}_H$) holds. 
Second, it shows that the partial permutation test using the F statistic is most powerful when the the alternative hypothesis is indeed of form \eqref{eq:linear_feature_alter} with possibly unequal $\bs{\beta}_h$'s.
Furthermore, as demonstrated in Corollary \ref{cor:kernel_finite_dim_feature_space}, 
the partial permutation test allows for a more  flexible choice of test statistics, which can be tailored towards any alternative hypothesis of interest, 
since the test uses partial permutation to get the valid null distribution. 
Finally, although Theorem \ref{thm:general_F_test} considers only kernels with a finite-dimensional feature space, 
it sheds light on general kernels as well since we can always view a general kernel as the limit of kernels with finite-dimensional feature spaces. 

\subsection{Kernels with Diverging-Dimensional Feature Space}\label{sec:diverg_power}

We now extend the discussion in Section \ref{sec:power_finite} to kernels with diverging-dimensional feature spaces as the sample size increases, similar to that in Section \ref{sec:null_diverg}.
Let $\{e_j\}_{j=1}^\infty$ be a given series of basis functions of the covariate, 
and let $K_q(\bs{x}, \bs{x}') = \phi_q (\bs{x})^\top \phi_q (\bs{x}') = \sum_{j=1}^q e_j(\bs{x})e_j(\bs{x}')$ be the kernel with feature mapping $\phi_q(\bs{x}) = (e_1(\bs{x}), \ldots, e_q(\bs{x}))^\top$ consisting of the first $q$ basis functions. 
We consider partial permutation test based on kernel $K_{q_n}$ whose feature space dimension $q_n$ can vary with the sample size $n$, 
and studies its power using the F statistic as in \eqref{eq:Fstat} with $\phi$ replaced by $\phi_{q_n}$. 
Analogously, 
we let $\bs{P}_{n0}$ and $\bs{P}_{n1}$ denote the projection matrices on to the column spaces of the transformed covariates under the null and the full models, 
and let $p_{n0}$ and $p_{n1}$ denote the matrices' ranks, respectively. %
Moreover, since we will investigate the power of the test under local alternatives, we allow the  functional relationship between response and covariates under model $H_1$ in \eqref{eq:H_1_equal} to also vary with the sample size, and write them explicitly as $f_{n1}, f_{n2}, \ldots, f_{nH}$. 
Throughout this subsection, we  assume that the covariates $\bs{X}_i$'s are identically distributed from some probability measure $\mu$, 
and use $\remainder(f_{nh}; q) = \min_{\bs{b}\in \mathbb{R}^q} \int (f_{nh} - \bs{b}^\top \phi_q)^2 \text{d}\mu$ to denote the squared error for the best linear approximation of $f_{nh}$ using the first $q$ basis functions.

\begin{theorem}\label{thm:power_diverg_kernel}
Let $\{(\bs{X}_i,Y_i,Z_i)\}_{1\leq i \leq n}$ denote samples from model $H_1$ in \eqref{eq:H_1_equal}, 
and assume that the $\bs{X}_i$'s follow probability measure $\mu$. 
If, as $n\rightarrow \infty$ and for some $\theta \ge 0$,  
\begin{align}\label{eq:cond_diverg_kernel}
    p_{n1} - p_{n0} \rightarrow \infty, \ \  
    \frac{p_{n1} - p_{n0}}{n-p_{n1}} \rightarrow 0, \ \  
    \frac{n \sum_{h=1}^H \remainder(f_{nh}; q_n)}{\sqrt{p_{n1} - p_{n0}}} \rightarrow 0, \ \  
     \frac{\bs{f}^\top (\bs{I}_n - \bs{P}_{n0}) \bs{f}}{\sqrt{p_{n1} - p_{n0}}} = \text{(or $\ge$) } \theta + o_{\Pr}(1),
\end{align}
where $\bs{f} = (f_{n Z_1}(\bs{X}_1), f_{n Z_2}(\bs{X}_2), \ldots, f_{n Z_n}(\bs{X}_n))^\top$, 
then, $\forall \alpha\in (0,1)$, the $p$-value
from
the continuous partial permutation test with kernel $K_{q_n}$, permutation size $n - p_{n0}$, and F test statistic as in \eqref{eq:Fstat} 
must satisfy that 
\begin{align*}
    \Pr\left( p(\bXmatrix, \bs{Y}, \bs{Z}) \le \alpha \right) = \text{(or $\ge$) } \Phi\left( z_{\alpha} + \theta/\sqrt{2} \right) + o(1),
\end{align*}
where $\Phi(\cdot)$ denotes the distribution function of standard Gaussian distribution. 
\end{theorem}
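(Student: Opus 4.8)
The plan is to reduce the power computation to the behavior of a single noncentral F statistic, exploiting the exact permutation null law already established in Theorem~\ref{thm:general_F_test}. Writing $d_1 = p_{n1}-p_{n0}$ and $d_2 = n-p_{n1}$, that theorem gives the permutation $p$-value in closed form, $p(\bXmatrix,\bs{Y},\bs{Z}) = 1 - F_{d_1,d_2}(F(\bXmatrix,\bs{Y},\bs{Z}))$, so that $\Pr\{p\le\alpha\} = \Pr\{F \ge c_{n,\alpha}\}$ with $c_{n,\alpha} = F_{d_1,d_2}^{-1}(1-\alpha)$ the central-F critical value. It therefore suffices to (i) identify the repeated-sampling law of the observed $F$ under $H_1$, (ii) locate $c_{n,\alpha}$, and then compare the two. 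Throughout I would condition on $(\bXmatrix,\bs{Z})$, so that $\bs{P}_{n0},\bs{P}_{n1},d_1,d_2$ and $c_{n,\alpha}$ are fixed and the only randomness is $\bs{\varepsilon}$; the requirements in \eqref{eq:cond_diverg_kernel} are then statements holding in probability over the design, and I would pass back to the unconditional conclusion at the end by bounded convergence.

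The first step is the exact conditional law of $F$. With $\bs{Y}=\bs{f}+\bs{\varepsilon}$ and $\bs{\varepsilon}\sim\mathcal{N}(0,\sigma_0^2\bs{I}_n)$, and since $\bs{P}_{n1}-\bs{P}_{n0}$ and $\bs{I}_n-\bs{P}_{n1}$ are orthogonal projections onto orthogonal subspaces (using the nesting $\bs{P}_{n0}\bs{P}_{n1}=\bs{P}_{n0}$), the numerator and denominator quadratic forms are independent, with $N := \bs{Y}^\top(\bs{P}_{n1}-\bs{P}_{n0})\bs{Y}/\sigma_0^2 \sim \chi^2_{d_1}(\lambda_1)$ and $D := \bs{Y}^\top(\bs{I}_n-\bs{P}_{n1})\bs{Y}/\sigma_0^2 \sim \chi^2_{d_2}(\lambda_0)$, where $\lambda_1 = \bs{f}^\top(\bs{P}_{n1}-\bs{P}_{n0})\bs{f}/\sigma_0^2$ and $\lambda_0 = \bs{f}^\top(\bs{I}_n-\bs{P}_{n1})\bs{f}/\sigma_0^2$. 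Next I would control both noncentralities. For $\lambda_0$, because $\bs{P}_{n1}$ projects onto the group-wise span of $\phi_{q_n}$, least squares gives $\bs{f}^\top(\bs{I}_n-\bs{P}_{n1})\bs{f} \le \sum_{h}\sum_{i:Z_i=h}\bigl(f_{nh}(\bs{X}_i)-\bs{b}_h^{*\top}\phi_{q_n}(\bs{X}_i)\bigr)^2$ for the population-optimal coefficients $\bs{b}_h^*$; taking expectations over the i.i.d.\ $\bs{X}_i$ bounds this by $n\sum_h \remainder(f_{nh};q_n)$, so Markov's inequality with the third condition in \eqref{eq:cond_diverg_kernel} yields $\lambda_0 = o_{\Pr}(\sqrt{d_1})$. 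For $\lambda_1$, the orthogonal decomposition $\bs{I}_n-\bs{P}_{n0}=(\bs{P}_{n1}-\bs{P}_{n0})+(\bs{I}_n-\bs{P}_{n1})$ gives $\sigma_0^2\lambda_1 = \bs{f}^\top(\bs{I}_n-\bs{P}_{n0})\bs{f} - \bs{f}^\top(\bs{I}_n-\bs{P}_{n1})\bs{f}$, and combining the fourth condition with the bound just obtained gives $\lambda_1/\sqrt{d_1} = \theta + o_{\Pr}(1)$ (or $\ge\theta+o_{\Pr}(1)$). Here $\theta$ is read as the noise-standardized noncentrality, consistent with the $\sigma_0^{-1}f_0$ normalization used elsewhere in the paper; with the literal scaling in \eqref{eq:cond_diverg_kernel} one simply absorbs the factor $\sigma_0^{-2}$ into $\theta$.

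The third step is the Gaussian limit. Since $d_1\to\infty$ and $\lambda_1 = O_{\Pr}(\sqrt{d_1}) = o(d_1)$, the central limit theorem for a noncentral $\chi^2_{d_1}(\lambda_1)$ applies (the single shifted coordinate contributes variance $O(\lambda_1)/d_1\to 0$), giving $(N-d_1-\lambda_1)/\sqrt{2d_1}\converged \mathcal{N}(0,1)$; meanwhile the first two conditions force $D/d_2 = 1 + o_{\Pr}(1/\sqrt{d_1})$, because $\lambda_0/d_2$ and $\sqrt{2/d_2}$ are both $o(1/\sqrt{d_1})$ when $d_1/d_2\to 0$. By Slutsky these combine into $\sqrt{d_1/2}\,(F-1) \converged \mathcal{N}(\theta/\sqrt 2,\,1)$, using $\lambda_1/\sqrt{2d_1}\to\theta/\sqrt2$. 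Finally I would expand the critical value: under the central F (i.e.\ $\lambda_0=\lambda_1=0$) the same argument yields $\sqrt{d_1/2}(F_{d_1,d_2}-1)\converged\mathcal{N}(0,1)$, so by quantile convergence $\sqrt{d_1/2}(c_{n,\alpha}-1)\to z_{1-\alpha}=-z_\alpha$. Putting these together, $\Pr\{F\ge c_{n,\alpha}\} = \Pr\{\sqrt{d_1/2}(F-1)\ge \sqrt{d_1/2}(c_{n,\alpha}-1)\}\to \Phi(\theta/\sqrt 2 + z_\alpha)$, with ``$\ge$'' replacing ``$=$'' in the one-sided case by monotonicity of the normal tail in the mean shift; bounded convergence then transfers the conditional statement to the unconditional one.

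I expect the main obstacle to be the Step-two control of the denominator noncentrality $\lambda_0$: one must translate the population approximation error $\remainder(f_{nh};q_n)$, defined through $\mu$, into a bound on the \emph{empirical} residual that defines the random projection $\bs{P}_{n1}$, and show it is not merely $o(1)$ but remains $o(\sqrt{d_1})$ after multiplication by $n$. A secondary delicate point is that the noncentrality $\lambda_1$ is itself random and only $O_{\Pr}(\sqrt{d_1})$, so the noncentral-$\chi^2$ CLT must be applied conditionally on the design and then averaged; I would handle this with the conditioning-then-bounded-convergence device sketched above, which also cleanly accommodates the misspecification that each true $f_{nh}$ need not be exactly linear in $\phi_{q_n}$.
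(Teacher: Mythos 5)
Your proposal is correct and follows essentially the same route as the paper: reduce via Theorem~\ref{thm:general_F_test} to the repeated-sampling law of the F statistic, bound the denominator noncentrality by $n\sum_h \remainder(f_{nh};q_n)$, identify the numerator noncentrality with $\bs{f}^\top(\bs{I}_n-\bs{P}_{n0})\bs{f}$ up to that same error, apply a CLT for the ratio of (non)central chi-squares conditionally on the design, and uncondition by bounded convergence. Your packaging of the quadratic forms as noncentral $\chi^2$ variables is just a relabeling of the paper's explicit decomposition into signal-plus-cross-term and central $\chi^2$ pieces (Lemmas A6--A8), and your remark about absorbing $\sigma_0^{-2}$ into $\theta$ correctly identifies a normalization the paper leaves implicit.
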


From the discussion after Theorem \ref{thm:general_F_test},
we generally expect that $p_{n0} \asymp q_n$ 
and $p_{n1}-p_{n0} \asymp q_n$, 
under which the first two conditions in \eqref{eq:cond_diverg_kernel} reduce to that $q_n \rightarrow \infty$ and $q_n/n \rightarrow 0$ as $n\rightarrow \infty$. 
Below we assume these are true and discuss two implications from Theorem \ref{thm:power_diverg_kernel}.

First, we consider the case where the null hypothesis holds, i.e., $f_{n1}= f_{n2} = \ldots = f_{nH} = f_0$ for some $f_0$ that depends neither on the group index nor the sample size. 
We can then demonstrate that $\bs{f}^\top (\bs{I}_n - \bs{P}_{n0}) \bs{f} = n \remainder(f_0; q_n) \cdot O_{\Pr}(1)$. 
From Theorem \ref{thm:power_diverg_kernel} with $\theta=0$, the partial permutation test will be asymptotically valid when $n \remainder(f_0; q_n) = o(q_n^{1/2})$. 
Suppose the approximation error for $f_0$ decays polynomially, i.e., $\remainder(f_0; q) = o( q^{-\kappa} )$ for some $\kappa > 0$. 
Then a sufficient condition for the large-sample validity of the partial permutation test will be $n q_n^{-\kappa} = O(q_n^{1/2})$, under which we can choose $q_n \asymp n^{2/(2\kappa+1)}$. 
Compared to Corollary \ref{cor:diverg_kernel}, Theorem \ref{thm:power_diverg_kernel} imposes weaker conditions on $\{q_n\}$ for ensuring the validity of the test. 
This is not surprising since Corollary \ref{cor:diverg_kernel} allows for an arbitrary choice of test statistics while Theorem \ref{thm:power_diverg_kernel} concerns only the F statistic. 

Second, we consider the case where the alternative hypothesis holds, 
and assume that the underlying functions have the form $f_{nh} = f_0 + \delta_n \zeta_h$ ($1\le h\le H$), for some constant sequence $\delta_n = O(1)$ and some functions $f_0, \zeta_1, \ldots, \zeta_H$ that do not vary with the sample size. 
Intuitively, $\{\delta_n\}$ and $\tau_{hh'} \equiv \int (\zeta_h-\zeta_{h'})^2 \text{d}\mu$ measure the functional heterogeneity across the $H$ groups. 
For simplicity, we further assume that the covariates in the $H$ groups are exactly balanced and the covariates within each 
group are i.i.d., 
under which we can bound 
$\bs{f}^\top (\bs{I}_n - \bs{P}_{n0}) \bs{f}$ from below by $(2H)^{-1} \delta_n^2 \sum_{i=1}^n (\zeta_h(\bs{X}_i) - \zeta_{h'}(\bs{X}_i))^2 = (2H)^{-1} n \delta_n^2 \{ \tau_{hh'} + o_{\Pr}(1) \}$ for all $1\le h,h'\le H$. 
From Theorem \ref{thm:power_diverg_kernel},
if $n \remainder(f_0; q_n) = o(q_n^{1/2})$, $n \remainder(\zeta_h; q_n) = o(q_n^{1/2})$ for all $h$, and $n \delta_n^2 \ge \theta \sqrt{8H^3 q_n}$ for sufficiently large $n$ and some $\theta\ge 0$, 
then asymptotically the power of the level-$\alpha$ partial permutation test is at least $\Phi(z_{\alpha} + \theta \max_{h,h'} \tau_{hh'})$; see the Supplementary Material for details. 
Suppose that the approximation errors for functions $f_0,\zeta_1, \ldots, \zeta_H$ all decay polynomially, i.e., $\remainder(f_0; q) = o(q^{-\kappa})$ and $\remainder(\zeta_h; q) = o(q^{-\kappa})$ as $q\rightarrow \infty$, 
and that $\tau_{h,h'} > 0$ for at least one pair of $h\ne h'$. 
From the discussion before, we can then choose $q_n \asymp n^{2/(2\kappa+1)}$ to ensure type-I error control. 
Consequently, 
if $\delta_n \gg (\sqrt{q_n}/n)^{1/2} = n^{-\kappa/(2\kappa+1)}$, 
then the power of the level-$\alpha$ partial permutation test must converge to $1$ as the sample size $n$ goes to $\infty$. 
Recall the discussion in Section \ref{sec:null_diverg} and note that the $m$th Sobolev space on $[0,1]$ corresponds to a RKHS with eigenvalue $\lambda_j$ decaying polynomially at rate $j^{-2m}$ \citep{xin2020}. 
The derived rate with $\kappa=2m$ actually matches the minimax distinguishable rate $n^{-2m/(4m+1)}$ in \citet{xin2020} for testing whether two functions in the $m$th order Sobolev space are parallel; see also \citet{shang2013local}.

\section{Implementation of Partial Permutation Test}\label{sec:implement}

\subsection{Choice of the Kernel Function}\label{sec:kernel}

We first show there exist kernel functions 
with polynomially decaying eigenvalues as discussed in Sections \ref{sec:null_diverg} and \ref{sec:GP}. 
Indeed, 
as demonstrated by \citet{kuhn1987eigenvalues}, such a property holds for a general kernel as long as it is sufficiently smooth.
For any set $\mathcal{D} \subset \mathbb{R}^d$ and $0\le s\le 1$, 
define $\mathcal{C}^{s,0}(\mathcal{D}, \mathcal{D})$ as the set consisting  of all continuous functions $G:\mathcal{D}\times\mathcal{D}\rightarrow\mathbb{R}$
such that
\begin{align}\label{eq:C_s0}
 \|G\|_{C^{s,0}(\mathcal{D}, \mathcal{D})} \equiv \max\left\{ \sup_{\bs{x_1},\bs{x_2}\in\mathcal{D}}\left|G(\bs{x}_1,\bs{x}_2)\right|,
 \sup_{\bs{x}_1,\bs{x}_2,\bs{x}_3\in\mathcal{D},\bs{x}_1\neq \bs{x}_2} \frac{\left|G(\bs{x}_1,\bs{x}_3)-G(\bs{x}_2,\bs{x}_3)\right|}{\|\bs{x}_1-\bs{x}_2\|_2^{s}} \right\} <\infty.   
\end{align}
The following proposition is a direct corollary of \citet{kuhn1987eigenvalues}.

\begin{proposition}\label{prop:decay_eigenvalue}
    For any compact set $\mathcal{X} \subset \mathbb{R}^d$ with any probability measure $\mu$ and any positive definite kernel $K: \mathcal{X}\times \mathcal{X} \rightarrow \mathbb{R}$,   
    if there exists $b$ such that (i) $\mathcal{X} \subset \overline{\mathcal{X}} \equiv [-b, b]^d$, 
    (ii) the kernel function $K$ can be extended to domain $\overline{\mathcal{X}} \times \overline{\mathcal{X}}$, 
    and (iii) $K \in C^{s, 0}(\overline{\mathcal{X}}, \overline{\mathcal{X}})$, 
    then the corresponding eigenvalues of $K$,  $\{\lambda_k\}_{k\geq 1} $, satisfy that $\lambda_k = O(k^{-1-s/d}).$
\end{proposition}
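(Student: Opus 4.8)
I would begin by realizing $\{\lambda_k\}$ as the eigenvalues of the integral operator $T_K:L^2(\mathcal{X},\mu)\to L^2(\mathcal{X},\mu)$, $(T_Kf)(\bs{x})=\int_{\mathcal{X}}K(\bs{x},\bs{y})f(\bs{y})\,\mathrm{d}\mu(\bs{y})$. Since $K$ is continuous on the compact set $\mathcal{X}$ it is bounded, so $T_K$ is Hilbert--Schmidt, self-adjoint, and, being positive definite, positive semidefinite; hence $\lambda_1\ge\lambda_2\ge\cdots\ge 0$ and $T_K$ is trace class with $\mathrm{tr}(T_K)=\int_{\mathcal{X}}K(\bs{x},\bs{x})\,\mathrm{d}\mu(\bs{x})<\infty$. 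There are two natural routes: (a) verify the hypotheses of \citet{kuhn1987eigenvalues} and quote the stated rate; or (b) give a direct argument. I would pursue (b), as it makes transparent how the exponent $-1-s/d$ arises, and fall back on (a) if convenient.

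\textbf{Direct argument.} The plan is a single quantitative estimate combined with Ky Fan's maximum principle. For an integer $m\ge 1$, partition the cube $\overline{\mathcal{X}}=[-b,b]^d$ into $m^d$ congruent subcubes $Q_1,\ldots,Q_{m^d}$ of side $2b/m$, discard those with $\mu(Q_i)=0$, and let $P_m$ be the orthogonal projection of $L^2(\mathcal{X},\mu)$ onto the span of the normalized indicators $\{\mu(Q_i)^{-1/2}\,\I_{Q_i\cap\mathcal{X}}\}$, so that $\mathrm{rank}(P_m)\le m^d$. Using cyclicity of the trace and the explicit form of $P_m$, I would show the trace gap
\[
\mathrm{tr}(T_K)-\mathrm{tr}(T_K P_m)=\sum_i \frac{1}{\mu(Q_i)}\int_{Q_i}\!\int_{Q_i}\big[K(\bs{x},\bs{x})-K(\bs{x},\bs{y})\big]\,\mathrm{d}\mu(\bs{y})\,\mathrm{d}\mu(\bs{x}),
\]
and then bound the bracket by the Hölder seminorm in \eqref{eq:C_s0}: using symmetry of $K$ and $\|\bs{x}-\bs{y}\|_2\le 2b\sqrt{d}/m$ on each $Q_i$, one gets $|K(\bs{x},\bs{x})-K(\bs{x},\bs{y})|\le \|K\|_{C^{s,0}}(2b\sqrt{d})^s m^{-s}$. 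Since $\sum_i\mu(Q_i)=1$, this yields $\mathrm{tr}(T_K)-\mathrm{tr}(T_K P_m)\le C m^{-s}$. Ky Fan's principle for the positive operator $T_K$ gives $\mathrm{tr}(T_K P_m)\le\sum_{k\le m^d}\lambda_k$, hence the tail bound $\sum_{k>m^d}\lambda_k\le C m^{-s}$. Finally, monotonicity of the $\lambda_k$ gives $m^d\lambda_{2m^d}\le\sum_{k=m^d+1}^{2m^d}\lambda_k\le Cm^{-s}$, so $\lambda_{2m^d}\le Cm^{-s-d}$; choosing $m=\lfloor(k/2)^{1/d}\rfloor$ converts this into $\lambda_k=O(k^{-1-s/d})$.

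\textbf{Citation route and main obstacle.} Alternatively, restrict $K$ to $\mathcal{X}\times\mathcal{X}$: condition (iii) makes $K$ Hölder-$s$ in each argument uniformly in the other on the compact metric space $(\mathcal{X},\|\cdot\|_2)$, positivity is inherited, and the embedding $\mathcal{X}\subset[-b,b]^d$ from (i) forces covering numbers $N(\mathcal{X},\epsilon)=O(\epsilon^{-d})$, which is exactly the ``dimension $d$'' input of \citet{kuhn1987eigenvalues}; the stated rate then follows. I expect the main obstacle to be the interplay of the two effects rather than either one alone: a crude approximation bound (in sup-norm or Hilbert--Schmidt norm) using only smoothness yields the weaker rates $O(k^{-s/d})$ or $O(k^{-1/2-s/d})$, whereas trace-class positivity alone only gives $\lambda_k=o(k^{-1})$. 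Getting the \emph{sum} $1+s/d$ into the exponent is the crux, and in the direct argument this is precisely where positive semidefiniteness enters, through Ky Fan's inequality, letting the $m^{-s}$ smoothness estimate for the trace gap be amortized against the $m^d$ degrees of freedom of $P_m$. A minor technical check is matching the one-sided $C^{s,0}$ condition to whatever symmetric or joint Hölder hypothesis Kühn's theorem requires, which the symmetry of $K$ handles.
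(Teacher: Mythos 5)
Your direct argument is correct and is a genuinely different route from the paper's. The paper's proof is a two-line reduction: it extends the measure $\mu$ from $\mathcal{X}$ to the full cube $\overline{\mathcal{X}}=[-b,b]^d$ by setting $\tilde\mu(A)=\mu(A\cap\mathcal{X})$, observes that the eigenvalues are unchanged, and then quotes K\"uhn's entropy-number theorem (Lemma A8 in the supplement, $\lambda_n=O(n^{-1}\varepsilon_n(\mathcal{X})^s)$) together with $\varepsilon_n([-b,b]^d)\asymp n^{-1/d}$. Your trace-gap-plus-Ky-Fan argument reproves the needed special case of K\"uhn's theorem from scratch: the identity for $\mathrm{tr}(T_K)-\mathrm{tr}(T_KP_m)$ is correct (the diagonal term rewrites as $\mu(Q_i)^{-1}\int_{Q_i}\int_{Q_i}K(\bs{x},\bs{x})\,\mathrm{d}\mu(\bs{y})\,\mathrm{d}\mu(\bs{x})$, and symmetry of $K$ converts the second-argument increment into a first-argument one so that the one-sided H\"older condition in \eqref{eq:C_s0} applies), Ky Fan gives $\sum_{k>m^d}\lambda_k\le Cm^{-s}$, and the monotonicity step $m^d\lambda_{2m^d}\le\sum_{k=m^d+1}^{2m^d}\lambda_k$ correctly extracts the pointwise rate. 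Your diagnosis of where the exponent $1+s/d$ comes from --- positivity via Ky Fan amortizing the $m^{-s}$ smoothness gain over $m^d$ degrees of freedom --- is exactly right. What the paper's route buys is brevity and access to the sharper general statement; what yours buys is self-containedness (modulo Mercer's theorem for $\mathrm{tr}(T_K)=\sum_k\lambda_k=\int K(\bs{x},\bs{x})\,\mathrm{d}\mu$, which the paper already uses elsewhere) and transparency about the mechanism. Only trivial bookkeeping is missing: take the subcubes half-open so the indicators are genuinely orthogonal, and note the final substitution $m=\lfloor(k/2)^{1/d}\rfloor$ only needs $k$ large.

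One caveat on your fallback citation route: K\"uhn's theorem as stated requires the doubling condition $\varepsilon_n(\mathcal{X})\asymp\varepsilon_{2n}(\mathcal{X})$, which an arbitrary compact $\mathcal{X}\subset[-b,b]^d$ need not satisfy; an upper bound $N(\mathcal{X},\epsilon)=O(\epsilon^{-d})$ on covering numbers alone does not verify it. This is precisely why the paper extends the \emph{measure} to the full cube (whose entropy numbers do satisfy $\varepsilon_n\asymp\varepsilon_{2n}$) rather than restricting the kernel to $\mathcal{X}$. Your primary argument does not suffer from this issue.
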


From Proposition \ref{prop:decay_eigenvalue} and by the definition in \eqref{eq:C_s0}, 
if a symmetric and positive definite kernel function is continuously differentiable on $\mathbb{R} \times \mathbb{R}$ and the covariate support $\mathcal{X}$ is  compact, 
then the eigenvalue $\lambda_k$ of the kernel must decay at least in an order of $k^{-1-1/d}$, 
under which the condition in Theorem \ref{thm:general_ppt_valid} holds with $\rho = 1 +1/d$. 
Two examples of continuously differentiable kernels are the Gaussian kernel and rational quadratic kernel, which have the following forms: 
\begin{align}\label{eq:gaussian_kernel}
	K_{\text{G}}(\bs{x}, \bs{x}') = \exp\left\{-\sum_{k=1}^d \omega_k (x_{k}-x'_{k})^2 \right\}, 
	\qquad
	K_{\text{R}}(\bs{x}, \bs{x}') = \left\{1+\sum_{k=1}^d \omega_k (x_{k}-x'_{k})^2\right\}^{-\eta},
\end{align}
where $\omega_j$'s and $\eta$ are arbitrary positive numbers. 

Moreover, 
both kernels in \eqref{eq:gaussian_kernel} are also universal \citep{micchelli2006universal}. 
Thus, if we use
any of them
for  model %
 \eqref{eq:h0_gp} 
and let the smoothness parameter be  any constant between between $0$ and $\min\{1/4, 1/(d+1)\}$, 
then the conditions in both Proposition \ref{thm:general_consistency} and Theorem \ref{thm:general_ppt_valid} hold. 
Consequently, 
we are able to conduct asymptotically valid partial permutation test under the GPR model, with a certain regularized but still flexible prior for the underlying functional relationship. 
As discussed shortly in the next subsection, 
the choice of $\gamma$ is not crucial in practice. 
However, the choice of parameters for the kernel function, e.g., the $\omega_j$'s for the Gaussian kernel in \eqref{eq:gaussian_kernel}, does play an important role. 

Parameters in the kernel function play an important role in controlling the smoothness of the underlying functional relationship. 
For instance, for the Gaussian kernel in \eqref{eq:gaussian_kernel}, 
smaller $\omega_j$'s imply wider, flatter kernels and a suppression of wiggly and rough functions \citep{penalty2006trevor}. 
In contrast, larger $\omega_j$'s indicate a more wiggly functional relation and thus generally lead to a smaller permutation size. 
Theoretical investigation for the optimal choice of kernel parameters for testing is challenging, and it may differ from that for the optimal estimation \citep{shang2013local,xin2020}. 
In the literature, various approaches have been proposed to choose  kernel parameters, or more generally kernel functions, adaptively based on the data, 
such as cross validation and maximizing marginal likelihood \citep{rasmussen2006gaussian}. 
We here opt to use the maximum marginal likelihood approach, choosing the kernel parameter to be the one that maximizes the marginal likelihood of $\bs{Y}$ given $\bXmatrix$ and $\bs{Z}$ under the Gaussian process model $\tilde{H}_0$ in \eqref{eq:h0_gp}. 

When the data follow an alternative hypothesis model in which the functional relationships for different $h$ are  different,  the marginal likelihood for the null, which is based on a common model built using the pooled data,  tends to suggest kernels that can tolerate more erratic functions, e.g.,
 large values of $\omega_j$'s for the Gaussian kernel. 
This may be due to the fact that, when the data contain multiple functional relationships between the response and covariates, enforcing a common functional relationship necessarily results in an overly volatile function,
which  then reduces the partial permutation size and damages the power of the test. 
To avoid this potential power loss, 
we  also obtain kernel parameters that maximize the marginal likelihood using samples from each group separately. 
If all of them suggest smoother functional relationships (e.g., smaller $\omega_j$'s for Gaussian kernels) than the pooled data, 
we require the smoothness of the shared functional relationship to be no worse than the most non-smooth one among those obtained within each group (e.g., choosing the maximum $\omega_j$'s estimated from individual groups).

\subsection{Choice of Permutation Size}\label{sec:choice_size}

Both Theorems \ref{thm:fixed_property_theorem} and \ref{thm:gp_finite_property_theorem} provide us with guidance on the choice of permutation size $b_n$: 
 we want  $b_n$ to be large and the correction terms $v$ in \eqref{eq:v_fixed_H0} (or $\tilde{v}$ in \eqref{eq:v_correct_gaussian}) and $\alpha_0$ to be small 
 in order to have a good power for the test.
Note that 
either $v(b_n, \sigma_0^{-1} f_0, \alpha_0)$ in \eqref{eq:v_fixed_H0} or $\tilde{v}(b_n, \xi_n, \alpha_0)$ in \eqref{eq:v_correct_gaussian}
depends on unknown functional relation $f_0$ and noise level $\sigma_0$ or the unknown variance ratio $\xi_n$. 
Therefore, we first estimate $f_0$ and $\sigma_0$ (or $\xi_n$)  
and then use a plug-in approach to 
compute $v$ or $\tilde{v}$ under model $H_0$ or $\tilde{H}_0$. To be more specific, we choose $\alpha_0$ and $b_n$ in the following way:  
\begin{itemize}
\item[1)] for model \eqref{eq:H_0_fixed} of $H_0$, 
$
\alpha_0 = 10^{-4}\alpha, 
$
and 
$
b_n  = \max\{b_n : {v}(b_n, \hat{\sigma}_0^{-1}\hat{f}_0, \alpha_0)+\alpha_0 \leq 10^{-3}\alpha 
\}; 
$
\item[2)] For model \eqref{eq:h0_gp} of $\tilde{H}_0$, 
$
\alpha_0 = 10^{-4}\alpha,
$
and 
$
b_n  = \max\{b_n : {\tilde{v}}(b_n, \hat{\xi}_n, \alpha_0)+\alpha_0 \leq 10^{-3}\alpha 
\}.
$
\end{itemize}
There is a trade-off for the choice of $\alpha_0$ and $b_n$: a larger permutation size $b_n$ can lead to a larger power for detecting violation of the null hypothesis while at the same time requires a larger correction to avoid type-I error inflation. Here we consider an intuitive scheme that requires only a small correction for the partial permutation $p$-value.
For model \eqref{eq:h0_gp},
the estimate $\hat{\xi}_n$ can be obtained by using the maximum likelihood estimates for $\delta_{0}^2/n^{1-\gamma}$ and $\sigma_0^2$. 
For model \eqref{eq:H_0_fixed},
we can estimate $f_0$ based on the penalized regression of form \eqref{eq:kernel_penal} or other regularization method such as early stopping \citep{raskutti2014early,Liu2018early}.  
Here, for simplicity, 
we first obtain the posterior mean of $f$ under $\tilde{H}_0$, denoted by $\hat{f}$, after plugging in the maximum likelihood estimates, and then use $\hat{f}$ as an estimator for $f_0$ and $
n^{-1} \sum_{i=1}^{n}(Y_{i}-\hat{f}(\bs{X}_{i}))^{2}
$
as an estimator for the variance of noise.
Finally, the corrected $p$-value is simply the $p$-value from partial permutation plus the correction term $10^{-3}\alpha$.

\subsection{Choice of the Test Statistic}\label{sec:test_stat}

One advantage of the permutation test is that it allows for a flexible choice of test statistics, for which we can use permutations, instead of a complicated  and often unreliable asymptotic analysis,  to get its reference null distribution. 
Moreover, we can choose the test statistic tailored to the alternative hypothesis of interest so as to gain power.

For a general kernel function, we first consider test statistics based on kernel  regression of form \eqref{eq:kernel_penal}. 
Specifically, we  perform kernel regression both to fit a common function using all samples and to fit  group-specific functions using samples from each group separately, with, say, cross-validation or marginal likelihood maximization for choosing the regularization parameter $\tau_n$ in \eqref{eq:kernel_penal}. 
Motivated by the likelihood ratio test for nested regression models, 
we  compare the 
mean squared errors from the pooled and group-specific kernel regressions to construct test statistics. 
For example, we can consider test statistic of the following form:
\begin{align}\label{eq:test_mse}
    T(\bXmatrix, \bs{Y}, \bs{Z} )
    & = 
    n \log (\text{MSE}) - \sum_{h=1}^H n_h \log (\text{MSE}_h), 
\end{align}
where $n_1, \ldots, n_H$ are the group sizes, $\text{MSE} = n^{-1} \sum_{i=1}^n (Y_i - \hat{f}(\bs{X}_i))^2$ with $\hat{f}$ being the kernel regression estimate using all the samples, 
and 
$\text{MSE}_h = n_h^{-1} \sum_{i:Z_i=h} (Y_i - \hat{f}(\bs{X}_i))^2$ with $\hat{f}_h$ being the kernel regression estimate using only the samples in group $h$.
Due to the flexibility of the permutation method, 
we can use loss functions other than the squared loss in \eqref{eq:kernel_penal} to conduct kernel regression, such as the epsilon-intensive loss and Huber loss \citep[see, e.g.,][]{wang2005support, cavazza2016active}.

We then consider test statistics based on GPR models. 
We introduce two general alternative models and compute their likelihood ratios against  $\tilde{H}_0$. 
Specifically, we model  functions in different groups as dependent Gaussian processes under the alternative hypothesis, and decompose each function into two components, a shared component and a group-specific component, assuming that these components follow independent Gaussian processes with the same general kernel but different variances:
\begin{align}\label{eq:h1}
	\tilde{H}_1 : & Y_i =  f_{Z_i}(\bs{X}_i) + \varepsilon_i, \quad \varepsilon_i \mid \bs{X}_i, Z_i \sim \mathcal{N}(0,\sigma_0^2), \quad f_{h} =  f + \bar{f}_{h}, \\\nonumber
	& f \sim  \text{GP}\left( 0, \frac{\delta_{0}^2}{n^{1-\gamma}} K \right), \quad \bar{f}_h \sim \text{GP}\left( 0, \frac{\delta_{h}^2}{n^{1-\gamma}} K \right),
\end{align}  
where $\{(\bs{X}_i, Z_i, \varepsilon_i)\}_{i=1}^n$ are $\iid$, and $f, \bar{f}_1, \cdots, \bar{f}_H$ and $\{(\bs{X}_i, Z_i, \varepsilon_i)\}_{i=1}^n$ are jointly independent. We can further extend the above homoscedastic model to allow noises to have different conditional variances in different groups as follows:
\begin{align}\label{eq:h1prime}
	\tilde{H}'_1 : \text{same as $\tilde{H}_1$ in (\ref{eq:h1}) except that } \ \varepsilon_i \mid \bs{X}_i, Z_i \sim \mathcal{N}(0,\sigma_{Z_i}^2).
\end{align}  
We define the test statistic based on the likelihood ratio of $\tilde{H}_1$ in \eqref{eq:h1} (or $\tilde{H}_1'$ in \eqref{eq:h1prime}) versus $\tilde{H}_0$ in \eqref{eq:h0_gp}, that is,
\begin{align}\label{eq:lik_ratio}
T(\bXmatrix, \bs{Y}, \bs{Z} ) = & \frac{\max f( \bs{Y} \mid \bXmatrix, \bs{Z}, \tilde{H}_1)}{\max f( \bs{Y} \mid \bXmatrix, \bs{Z}, \tilde{H}_0)}\  \left( \text{or \ }
 \frac{\max f( \bs{Y} \mid \bXmatrix, \bs{Z}, \tilde{H}'_1)}{\max f( \bs{Y} \mid \bXmatrix, \bs{Z}, \tilde{H}_0)}\ 
\right).
\end{align}
In the Supplementary Material, we discuss different ways to compute (\ref{eq:lik_ratio}) including the EM algorithm \citep{rubinEM1977}, Newton's method, the Fisher scoring, and quadratic programming. 

Here we briefly comment on hypothesis testing of $\tilde{H}_0$ against $\tilde{H}_1$ or $\tilde{H}_1'$. 
Note that under  $\tilde{H}_0$, 
the variance parameters $\delta_h^2$'s are zero and thus are at their boundaries. 
Therefore, the classical likelihood ratio testing procedure using the chi-square approximation for the null distribution does not work here.  This also suggests the importance and nontriviality of Theorem \ref{thm:general_ppt_valid}. 
To reduce the computational cost, we further introduce the following "pseudo" alternative model, which may not contain the null model $\tilde{H}_0$ as a submodel:
\begin{align}\label{eq:h_ps}
	\tilde{H}_{\text{pseudo}} : & Y_i =  f_{Z_i}(X_i) + \varepsilon_i, \ \ \varepsilon_i \mid \bs{X}_i, Z_i \sim \mathcal{N}(0,\sigma_0^2), \ \ f_h \sim \text{GP}\left( 0, \frac{\delta_{h}^2}{n^{1-\gamma}} K \right).
\end{align}  
As discussed in the Supplementary Material, the likelihood ratio of $\tilde{H}_{\text{pseudo}}$ versus $\tilde{H}_0$ can be efficiently computed using the EM algorithm.

\section{Extension to Correlated Noises}\label{sec:corr}

In the following discussion, we assume that the noises $\varepsilon_i$'s are correlated instead of $\iid$ as in models $H_0$ in \eqref{eq:H_0_fixed} and $\tilde{H}_0$ in \eqref{eq:h0_gp}, and the
covariance matrix of  $\bs{\varepsilon} = (\varepsilon_1, \ldots, \varepsilon_n)^\top$ is known up to a certain positive scale, unless otherwise stated. 
For example, when the residuals have equal variance, we essentially require that the correlation matrix of $\bs{\varepsilon}$ is known. 
In practice, we suggest to first estimate the covariance matrix for $\bs{\varepsilon}$ based on all the structure information we have (e.g., equal correlations or block-wise independence), and then plug in the estimate to  conduct the partial permutation tests described below. 

We 
extend the regression model $H_0$ in \eqref{eq:H_0_fixed} to allow for correlated noises: 
\begin{align}\label{eq:H_0_fixed_corr}
    H_{0}^{\text{C}}: & Y_i = f_0( \bs{X}_i ) + \varepsilon_i, \quad 
    \bs{\varepsilon} = (\varepsilon_1, \varepsilon_2, \ldots, \varepsilon_n)^\top \sim \mathcal{N}(\bs{0}, \ \sigma_0^2  \bs{\Sigma}), \qquad (1\le i \le n)
\end{align}
where we use the supscript in $H_{0}^{\text{C}}$ to emphasize that the noises under model 
\eqref{eq:H_0_fixed_corr} are allowed to be correlated. 
Moreover, we assume that $\bs{\Sigma}$ is known and positive definitive but $\sigma_0^2$ can be unknown, 
i.e., the covariance matrix of $\bs{\varepsilon}$ is known up to a positive scale. 
Recall that 
$\bs{Y}=(Y_1, \ldots, Y_n)^\top$ and $\bs{f}_0 = (f_0(\bs{X}_1), \cdots, f_0(\bs{X}_n))^\top$. 
Under $H_{0}^{\text{C}}$ in \eqref{eq:H_0_fixed_corr}, we have 
$
\bs{\Sigma}^{-1/2} \bs{Y} = \bs{\Sigma}^{-1/2} \bs{f}_0 + \bs{\Sigma}^{-1/2} \bs{\varepsilon}, 
$
where $\bs{\Sigma}^{-1/2}$ is the inverse of the positive definitive square root of $\bs{\Sigma}$. 
By our model assumption, it is easy to see that 
the elements of $\bs{\Sigma}^{-1/2} \bs{\varepsilon}$ are $\iid$ Gaussian with mean zero and variance $\sigma_0^2$. 
This then motivates us to consider a partial permutation test based on response vector $\bs{Y}^{\text{C}} \equiv \bs{\Sigma}^{-1/2} \bs{Y}$ and sample ``kernel'' matrix $\bs{K}^{\text{C}}_{n} \equiv \bs{\Sigma}^{-1/2}  \bs{K}_n \bs{\Sigma}^{-1/2}$. 
More precisely, in Algorithm \ref{alg:partial_permu}, we replace $\bs{Y}$ and $\bs{K}_n$ by $\bs{Y}^{\text{C}}$ and $\bs{K}^{\text{C}}_{n}$, and denote the resulting $p$-value by $p(\bs{X}, \bs{Y},  \bs{Z}, \bs{\Sigma})$, which depends crucially on the noise covariance structure $\bs{\Sigma}$.

By the same logic as Theorem \ref{thm:fixed_property_theorem}, we can derive a finite-sample valid partial permutation test with a certain correction on the permutation $p$-value. 
Specifically,  for $1\le b_n\le n$ and $0<\alpha_0<1$,
we define $\omega_{\text{C}}(b_n,\sigma_0^{-1}f_0, \bs{\Sigma}) = \sigma_0^{-2} \sum_{i=n-b_n+1}^n (\bs{\gamma}_{i}^\top \bs{\Sigma}^{-1/2} \bs{f}_0)^2$ to denote the LOSP for the components used for partial permutation,  
and
\begin{align}\label{eq:v_fixed_H0_corr}
v_{\text{C}}(b_n,\sigma_0^{-1}f_0, \bs{\Sigma}, \alpha_0) = \frac{1}{2}\exp\left\{
2\sqrt{2\omega_{\text{C}}(b_n,\sigma_0^{-1}f_0, \bs{\Sigma})}
\sqrt{Q_{b_n}(1-\alpha_0)+\omega_{\text{C}}(b_n,\sigma_0^{-1}f_0, \bs{\Sigma})} 
\right\}-\frac{1}{2}, 
\end{align}
where $Q_{b_n}$ denotes the quantile function of the $\chi^2_{b_n}$-distribution.

\begin{theorem}\label{thm:fixed_property_theorem_corr}
    Let $\{(\bs{X}_i,Y_i,Z_i)\}_{1\leq i \leq n}$ denote samples from model ${H}_{0}^{\text{C}}$ in \eqref{eq:H_0_fixed_corr}. 
    Given $1\leq b_n \leq n$ and $\alpha_0 \in (0,1)$, we define the corrected partial permutation $p$-value as
    \begin{align*}
    p_{\correct}( \bXmatrix, \bs{Y}, \bs{Z}, \bs{\Sigma} ) = p(\bXmatrix, \bs{Y}, \bs{Z}, \bs{\Sigma}  ) + v_{\text{C}}(b_n,\sigma_0^{-1}f_0, \bs{\Sigma}, \alpha_0) +\alpha_0,
    \end{align*}
    where $p(\bXmatrix, \bs{Y}, \bs{Z}, \bs{\Sigma}  )$ is the $p$-value from either the discrete or continuous partial permutation test based on kernel $K$, permutation size $b_n$, any test statistic $T$, and covariance matrix $\bs{\Sigma}$, 
    and $v_{\text{C}}(b_n,\sigma_0^{-1}f_0, \bs{\Sigma},\alpha_0)$ is as defined in \eqref{eq:v_fixed_H0_corr}. Then the corrected partial permutation $p$-value is valid under model ${H}_{0}^{\text{C}}$, i.e., 
    $\forall \alpha \in (0,1)$, 
    $
    \Pr_{{H}_{0}^{\text{C}}}\{{p}_{\correct}( \bXmatrix, \bs{Y}, \bs{Z}, \bs{\Sigma} )\leq\alpha \mid  \bXmatrix, \bs{Z} \}\leq\alpha. 
    $
\end{theorem}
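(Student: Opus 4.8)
The plan is to reduce Theorem~\ref{thm:fixed_property_theorem_corr} to the already-established Theorem~\ref{thm:fixed_property_theorem} by a whitening transformation, exploiting the fact that the correlated-noise partial permutation test is, by construction, nothing but the ordinary partial permutation test of Algorithm~\ref{alg:partial_permu} applied to decorrelated data. First I would define the transformed quantities $\bs{Y}^{\text{C}} = \bs{\Sigma}^{-1/2} \bs{Y}$, $\bs{f}_0^{\text{C}} = \bs{\Sigma}^{-1/2} \bs{f}_0$, and $\bs{\varepsilon}^{\text{C}} = \bs{\Sigma}^{-1/2} \bs{\varepsilon}$, so that under $H_0^{\text{C}}$ in \eqref{eq:H_0_fixed_corr} one has $\bs{Y}^{\text{C}} = \bs{f}_0^{\text{C}} + \bs{\varepsilon}^{\text{C}}$ with $\bs{\varepsilon}^{\text{C}} \mid \bXmatrix, \bs{Z} \sim \mathcal{N}(\bs{0}, \sigma_0^2 \bs{I}_n)$. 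Since $\bs{\Sigma}$ is known and positive definite, its symmetric inverse square root is a deterministic matrix in the conditional probability space given $(\bXmatrix, \bs{Z})$, so the decorrelated data satisfy exactly the i.i.d.-Gaussian-noise structure of model $H_0$ in \eqref{eq:H_0_fixed}, with fixed mean vector $\bs{f}_0^{\text{C}}$ and ``kernel'' matrix $\bs{K}_n^{\text{C}} = \bs{\Sigma}^{-1/2} \bs{K}_n \bs{\Sigma}^{-1/2}$.

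Next I would record the identifications that make the reduction exact. By the definition preceding the theorem, the correlated-noise test runs Algorithm~\ref{alg:partial_permu} verbatim on the pair $(\bs{Y}^{\text{C}}, \bs{K}_n^{\text{C}})$; in particular the eigenvectors $\bs{\gamma}_i$ entering \eqref{eq:v_fixed_H0_corr} are the eigenvectors of $\bs{K}_n^{\text{C}}$, and the $p$-value $p(\bXmatrix, \bs{Y}, \bs{Z}, \bs{\Sigma})$ coincides with the uncorrelated $p$-value of the transformed problem. I would then match the correction machinery: the LOSP $\omega_{\text{C}}(b_n, \sigma_0^{-1} f_0, \bs{\Sigma}) = \sigma_0^{-2} \sum_{i=n-b_n+1}^n (\bs{\gamma}_i^\top \bs{f}_0^{\text{C}})^2$ is precisely the quantity $\omega(b_n, \sigma_0^{-1} f_0^{\text{C}})$ of \eqref{eq:left-over} computed for the transformed problem, whence $v_{\text{C}}$ in \eqref{eq:v_fixed_H0_corr} equals $v$ in \eqref{eq:v_fixed_H0} evaluated at $\sigma_0^{-1} f_0^{\text{C}}$. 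Therefore $p_{\correct}(\bXmatrix, \bs{Y}, \bs{Z}, \bs{\Sigma})$ is literally the corrected $p$-value of Theorem~\ref{thm:fixed_property_theorem} for the transformed data, and invoking that theorem yields $\Pr\{p_{\correct} \le \alpha \mid \bXmatrix, \bs{Z}\} \le \alpha$, which is exactly the asserted validity under $H_0^{\text{C}}$.

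The one point requiring care, which I regard as the main obstacle, is that $\bs{f}_0^{\text{C}} = \bs{\Sigma}^{-1/2} \bs{f}_0$ is no longer the evaluation of a single function at the covariates $\bs{X}_i$ (it mixes all observations), whereas Theorem~\ref{thm:fixed_property_theorem} is nominally stated for samples from \eqref{eq:H_0_fixed} with $\bs{f}_0 = (f_0(\bs{X}_1), \ldots, f_0(\bs{X}_n))^\top$. I would resolve this by observing that the proof of Theorem~\ref{thm:fixed_property_theorem} never invokes the functional form of $\bs{f}_0$: the argument is conditional on $(\bXmatrix, \bs{Z})$ and uses $\bs{f}_0$ only through its role as a fixed mean vector and the induced LOSP, while the exchangeability driving the permutation step comes solely from the i.i.d.\ Gaussianity of the noise and the orthogonality of the eigenbasis $\bs{\Gamma}$. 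Hence Theorem~\ref{thm:fixed_property_theorem} is valid for an arbitrary fixed mean vector, and the reduction goes through without modification.
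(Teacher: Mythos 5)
Your proposal is correct and is essentially the paper's own argument: the paper likewise observes that under $H_0^{\text{C}}$ the whitened data $\bs{\Sigma}^{-1/2}\bs{Y} = \bs{\Sigma}^{-1/2}\bs{f}_0 + \bs{\Sigma}^{-1/2}\bs{\varepsilon}$ have i.i.d.\ Gaussian noise and that the correlated-noise test is just Algorithm~\ref{alg:partial_permu} applied to $(\bs{\Sigma}^{-1/2}\bs{Y}, \bs{K}_n^{\text{C}})$, then invokes the same logic as Theorem~\ref{thm:fixed_property_theorem} with details omitted. Your explicit check that the proof of Theorem~\ref{thm:fixed_property_theorem} uses $\bs{f}_0$ only as an arbitrary fixed mean vector (never through its functional form) is a worthwhile addition that the paper leaves implicit.
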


Again, it is generally difficult to show the asymptotic validity of the $p$-value $p(\bXmatrix, \bs{Y}, \bs{Z}, \bs{\Sigma}  )$ for a general kernel under general underlying function and noise covariance structure, and its correction term in \eqref{eq:v_fixed_H0_corr} depends on the unknown $f_0$ and $\sigma_0$. 
In practice, we can adopt similar strategies as discussed in Section \ref{sec:implement}. 
Below we consider four special cases, in parallel to Sections \ref{sec:fix_special_finite}--\ref{sec:GP}, 
under which we can demonstrate the exact or asymptotic validity of the partial permutation test that takes into account the covariance structure.

\subsection{Special Case: Kernels with Finite-Dimensional Feature Space}

When the kernel has a finite-dimensional feature space and the underlying function is linear in features mapped to this space, the partial permutation test is exactly valid.

\begin{corollary}\label{cor:kernel_finite_dim_feature_space_corr}
    Let $\{(\bs{X}_i,Y_i,Z_i)\}_{1\leq i \leq n}$ denote samples from model $H_{0}^{\text{C}}$ in \eqref{eq:H_0_fixed_corr}. Suppose  kernel function $K$ has the decomposition 
	$K(\bs{x}, \bs{x}') = \phi(\bs{x})^\top \phi(\bs{x}')$ with $\bs{\phi}(\bs{x}) \in \mathbb{R}^q$ for some $q< \infty$, 
	and the underlying function $f_0(\bs{x})$ is linear in $\phi(\bs{x})$. 
	Then, the $p$-value obtained by either the discrete or continuous partial permutation test with kernel $K$, permutation size $b_n \le n-q$, any test statistic $T$, and covariance matrix $\bs{\Sigma}$ is valid, i.e., $\forall \alpha \in (0,1)$, 
	$
	\Pr_{H_{0}^{\text{C}}}\{ p( \bXmatrix,\bs{Y}, \bs{Z}, \bs{\Sigma} ) \leq \alpha 
	\mid \bXmatrix, \bs{Z}
	\} \le \alpha. 
	$
\end{corollary}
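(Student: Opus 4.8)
The plan is to reduce the correlated-noise problem to the i.i.d.-noise, finite-feature setting of Corollary \ref{cor:kernel_finite_dim_feature_space} through the whitening transformation already introduced before Theorem \ref{thm:fixed_property_theorem_corr}. Write $\bs{\Phi} = (\phi(\bs{X}_1), \ldots, \phi(\bs{X}_n))^\top \in \mathbb{R}^{n\times q}$ for the feature matrix, so that under $H_0^{\text{C}}$ linearity of $f_0$ in $\phi$ gives $\bs{f}_0 = \bs{\Phi}\bs{\beta}$ for some $\bs{\beta}\in\mathbb{R}^q$ and $\bs{K}_n = \bs{\Phi}\bs{\Phi}^\top$. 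Setting $\bs{\Phi}^{\text{C}} \equiv \bs{\Sigma}^{-1/2}\bs{\Phi}$, $\bs{Y}^{\text{C}} = \bs{\Sigma}^{-1/2}\bs{Y}$, $\bs{f}_0^{\text{C}} = \bs{\Sigma}^{-1/2}\bs{f}_0$, and $\bs{\varepsilon}^{\text{C}} = \bs{\Sigma}^{-1/2}\bs{\varepsilon}$, the transformed model reads $\bs{Y}^{\text{C}} = \bs{f}_0^{\text{C}} + \bs{\varepsilon}^{\text{C}}$ with $\bs{K}_n^{\text{C}} = \bs{\Phi}^{\text{C}}(\bs{\Phi}^{\text{C}})^\top$ and $\bs{f}_0^{\text{C}} = \bs{\Phi}^{\text{C}}\bs{\beta}$. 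This is structurally identical to the linear/finite-feature setup, with the role of $\bs{\Phi}$ now played by $\bs{\Phi}^{\text{C}}$.

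First I would record the two linear-algebra facts that make the argument go through. Because $\bs{\Sigma}$ is positive definite, $\bs{\Sigma}^{-1/2}$ is invertible, so $\text{rank}(\bs{K}_n^{\text{C}}) = \text{rank}(\bs{\Phi}^{\text{C}}) = \text{rank}(\bs{\Phi}) \le q$; and since $\text{col}(M M^\top) = \text{col}(M)$ for any matrix $M$, the signal $\bs{f}_0^{\text{C}} = \bs{\Phi}^{\text{C}}\bs{\beta}$ lies in $\text{col}(\bs{\Phi}^{\text{C}}) = \text{col}(\bs{K}_n^{\text{C}})$. Letting $\bs{K}_n^{\text{C}} = \bs{\Gamma}\bs{C}\bs{\Gamma}^\top$ be the eigendecomposition used in Algorithm \ref{alg:partial_permu}, the eigenvectors $\bs{\gamma}_{q+1}, \ldots, \bs{\gamma}_n$ associated with the zero eigenvalues are orthogonal to $\text{col}(\bs{K}_n^{\text{C}})$, hence to $\bs{f}_0^{\text{C}}$, so $\bs{\gamma}_i^\top \bs{f}_0^{\text{C}} = 0$ for every $i > q$.

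Next, since $\bs{\varepsilon}^{\text{C}} = \bs{\Sigma}^{-1/2}\bs{\varepsilon}$ has i.i.d.\ $\mathcal{N}(0, \sigma_0^2)$ entries (already observed before Theorem \ref{thm:fixed_property_theorem_corr}, as $\bs{\Sigma}^{-1/2}(\sigma_0^2\bs{\Sigma})\bs{\Sigma}^{-1/2} = \sigma_0^2 \bs{I}_n$) and $\bs{\Gamma}$ is orthogonal, the coordinates $\bs{\gamma}_i^\top \bs{\varepsilon}^{\text{C}}$ are jointly i.i.d.\ $\mathcal{N}(0, \sigma_0^2)$ conditional on $(\bXmatrix, \bs{Z})$. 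Combining with the previous step, for $i>q$ we have $W_i \equiv \bs{\gamma}_i^\top \bs{Y}^{\text{C}} = \bs{\gamma}_i^\top \bs{f}_0^{\text{C}} + \bs{\gamma}_i^\top \bs{\varepsilon}^{\text{C}} = \bs{\gamma}_i^\top \bs{\varepsilon}^{\text{C}}$, so $W_{q+1}, \ldots, W_n$ are pure noise. Because $b_n \le n-q$ forces $n - b_n + 1 > q$, the block $\{W_{n-b_n+1}, \ldots, W_n\}$ permuted by Algorithm \ref{alg:partial_permu} consists entirely of these i.i.d.\ Gaussian coordinates. Validity then follows exactly as in Theorem \ref{poly_permutation_pval_valid}: conditional on $(\bXmatrix, \bs{Z})$, these coordinates are exchangeable, so the observed $\bs{W}$ and any discrete partial permutation $\bs{W}_\psi$ are identically distributed, and they are spherically symmetric, so conditional on $\sum_{i=n-b_n+1}^n W_i^2$ the observed direction is uniform on the sphere, matching the continuous draw. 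Either way the permuted and observed test statistics are exchangeable given $(\bXmatrix, \bs{Z})$, which yields $\Pr_{H_0^{\text{C}}}\{ p(\bXmatrix,\bs{Y},\bs{Z},\bs{\Sigma}) \le \alpha \mid \bXmatrix, \bs{Z}\} \le \alpha$ for every test statistic $T$.

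The bulk of the work is the two linear-algebra facts of the second paragraph; once they are in place the probabilistic part is verbatim the i.i.d.\ argument. The one point that deserves care, and which I expect to be the main (though modest) obstacle, is that $\bs{K}_n^{\text{C}}$ is \emph{not} itself a pointwise kernel matrix $K(\cdot,\cdot)$ of any transformed covariates: the whitening $\bs{\Sigma}^{-1/2}$ mixes observations, so the rows of $\bs{\Phi}^{\text{C}}$ are linear combinations of the $\phi(\bs{X}_j)$ rather than feature images of individual points. For this reason I would not cite Corollary \ref{cor:kernel_finite_dim_feature_space} as a black box but instead re-run its proof with $\bs{\Phi}$ replaced by $\bs{\Phi}^{\text{C}}$, observing that that proof only ever uses the rank bound, the column-space containment, and the i.i.d.\ Gaussianity of the (whitened) noise, all of which survive the transformation.
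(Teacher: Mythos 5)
Your proposal is correct and follows essentially the same route as the paper: whiten by $\bs{\Sigma}^{-1/2}$, observe that $\bs{K}_n^{\text{C}} = \bs{\Sigma}^{-1/2}\bs{\Phi}\bs{\Phi}^\top\bs{\Sigma}^{-1/2}$ has rank at most $q$ so its trailing eigenvectors satisfy $\bs{\gamma}_i^\top\bs{\Sigma}^{-1/2}\bs{\Phi}=\bs{0}$ and hence annihilate $\bs{f}_0^{\text{C}}=\bs{\Sigma}^{-1/2}\bs{\Phi}\bs{\beta}$, making the left-over signal proportion exactly zero, after which validity follows from the i.i.d.\ Gaussian argument of the finite-feature case. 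The only cosmetic difference is that you close with an explicit exchangeability/spherical-symmetry argument while the paper routes the conclusion through its total-variation lemmas with $\Delta_n=0$; these are the same fact in different clothing.
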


\subsection{Special Case: Kernels with Diverging-Dimensional Feature Space}

Similar to Sections \ref{sec:null_diverg} and \ref{sec:diverg_power}, we consider kernels with diverging-dimensional feature space, i.e., 
$K_q(\bs{x}, \bs{x}') = \phi_{q}(\bs{x})^\top \phi_q(\bs{x}')$ for $q\ge 1$ with 
$\phi_q(\bs{x}) = (e_1(\bs{x}), e_2(\bs{x}), \ldots, e_q(\bs{x}))^\top$ and $\{e_j\}_{j=1}^\infty$ being a series of basis functions. 
We assume that the covariates are identically distributed from some probability measure $\mu$, 
and use $\remainder(f; q) = \min_{\bs{b}\in \mathbb{R}^q} \int (f - \bs{b}^\top \phi_q)^2 \text{d} \mu$ to denote the squared error for the best linear approximation of $f$ using the first $q$ basis functions. 
The following corollary shows that the partial permutation test is asymptotically valid when the underlying functional relationship can be well approximated by the basis functions and the smallest eigenvalue of the noise covariance matrix $\lambda_{\min}(\bs{\Sigma})$ decays not too fast.  

\begin{corollary}\label{cor:corr_diverg_kernel}
Let $\{(\bs{X}_i,Y_i,Z_i)\}_{1\leq i \leq n}$ denote samples from model $H_{0}^{\text{C}}$ in \eqref{eq:H_0_fixed_corr}, 
and assume that $\bs{X}_i$'s are identically distributed from some probability measure $\mu$. 
Suppose that the kernel function $K_q$ has the form 
$K_q(\bs{x}, \bs{x}') \equiv \phi_q(\bs{x})^\top \phi_q(\bs{x}') \equiv \sum_{j=1}^q \basis_j(\bs{x}) \basis_j(\bs{x}')$ for $q\ge 1$ and some series of basis functions $\{\basis_j\}_{j=1}^{\infty}$. 
If 
there exists a sequence $\{q_n\}_{n=1}^\infty$ such that $q_n < n$ for all $n$ and $n (n-q_n) \remainder(f_0; q_n) / \lambda_{\min}(\bs{\Sigma}) \rightarrow 0$ as $n\rightarrow \infty$, 
then the resulting $p$-value obtained by either the discrete or continuous partial permutation test with kernel $K_{q_n}$, permutation size $b_n \le n-q_n$, any test statistic $T$, and covariance matrix $\bs{\Sigma}$ is asymptotically valid, i.e., $\forall \alpha \in (0,1)$, 
$
\limsup_{n\rightarrow \infty}
\Pr_{H_0^{\text{C}}}\{p(\bXmatrix,\bs{Y}, \bs{Z}, \bs{\Sigma} ) \leq \alpha 
\} \le \alpha. 
$
\end{corollary}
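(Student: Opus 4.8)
The plan is to reduce the correlated-noise setting to the i.i.d.\ setting already handled in Corollary \ref{cor:diverg_kernel}, via the whitening device introduced in Section \ref{sec:corr}, and then to control the left-over signal-proportion $\omega_{\text{C}}$ so that the correction term $v_{\text{C}}$ from Theorem \ref{thm:fixed_property_theorem_corr} vanishes asymptotically. First I would whiten: set $\tilde{\bs{Y}} = \bs{\Sigma}^{-1/2}\bs{Y}$, $\tilde{\bs{f}}_0 = \bs{\Sigma}^{-1/2}\bs{f}_0$ and $\tilde{\bs{\varepsilon}} = \bs{\Sigma}^{-1/2}\bs{\varepsilon}$, so that $\tilde{\bs{Y}} = \tilde{\bs{f}}_0 + \tilde{\bs{\varepsilon}}$ with $\tilde{\bs{\varepsilon}} \sim \mathcal{N}(\bs{0}, \sigma_0^2 \bs{I}_n)$ having i.i.d.\ components. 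Since $\bs{K}_n = \bs{\Phi}_{q_n}\bs{\Phi}_{q_n}^\top$ with $[\bs{\Phi}_{q_n}]_{ij} = \basis_j(\bs{X}_i)$, the transformed matrix factorizes as $\bs{K}_n^{\text{C}} = \bs{\Sigma}^{-1/2}\bs{K}_n\bs{\Sigma}^{-1/2} = \tilde{\bs{\Phi}}_{q_n}\tilde{\bs{\Phi}}_{q_n}^\top$ with $\tilde{\bs{\Phi}}_{q_n} = \bs{\Sigma}^{-1/2}\bs{\Phi}_{q_n}$, whose rank is at most $q_n$. Hence the eigenvectors $\bs{\gamma}_i$ of $\bs{K}_n^{\text{C}}$ with $i > q_n$ are orthogonal to the column space of $\tilde{\bs{\Phi}}_{q_n}$.

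Second, I would bound the LOSP. Let $\bs{b}^* = \argmin_{\bs{b}\in\mathbb{R}^{q_n}} \int (f_0 - \bs{b}^\top \phi_{q_n})^2 \,\text{d}\mu$ be the \emph{population} minimizer and $\bs{r} = \bs{f}_0 - \bs{\Phi}_{q_n}\bs{b}^*$ the corresponding empirical remainder vector. Because $b_n \le n-q_n$, every index $i \in \{n-b_n+1,\ldots,n\}$ exceeds $q_n$, so $\bs{\gamma}_i^\top \tilde{\bs{\Phi}}_{q_n}\bs{b}^* = 0$ and therefore $\bs{\gamma}_i^\top \tilde{\bs{f}}_0 = \bs{\gamma}_i^\top \bs{\Sigma}^{-1/2}\bs{r}$. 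This gives
\[
\omega_{\text{C}}(b_n,\sigma_0^{-1}f_0,\bs{\Sigma})
= \sigma_0^{-2}\sum_{i=n-b_n+1}^n \big(\bs{\gamma}_i^\top \bs{\Sigma}^{-1/2}\bs{r}\big)^2
\le \sigma_0^{-2}\,\bs{r}^\top \bs{\Sigma}^{-1}\bs{r}
\le \frac{\sigma_0^{-2}}{\lambda_{\min}(\bs{\Sigma})}\,\|\bs{r}\|_2^2 .
\]
Taking expectation over $\bs{X}_i \iidsim \mu$ yields $\E\|\bs{r}\|_2^2 = n\,\remainder(f_0; q_n)$ exactly, because $\bs{b}^*$ is the population minimizer; hence $\E[b_n\,\omega_{\text{C}}] \le \sigma_0^{-2}(n-q_n)\,n\,\remainder(f_0;q_n)/\lambda_{\min}(\bs{\Sigma}) \to 0$ by hypothesis, and Markov's inequality gives $b_n\,\omega_{\text{C}} \convergep 0$.

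Third, I would conclude exactly as in Corollary \ref{cor:diverg_kernel}. Choosing $\alpha_0 = \alpha_0(n) \to 0$ at a suitable rate so that $\omega_{\text{C}}\cdot Q_{b_n}(1-\alpha_0) \convergep 0$ (using $Q_{b_n}(1-\alpha_0)\asymp b_n$ when $b_n\to\infty$, and $b_n\ge 1$ otherwise) forces the correction term $v_{\text{C}}(b_n,\sigma_0^{-1}f_0,\bs{\Sigma},\alpha_0)$ in \eqref{eq:v_fixed_H0_corr} to satisfy $v_{\text{C}} \convergep 0$. Theorem \ref{thm:fixed_property_theorem_corr} then yields $\Pr_{H_0^{\text{C}}}\{p(\bXmatrix,\bs{Y},\bs{Z},\bs{\Sigma}) \le \alpha \mid \bXmatrix,\bs{Z}\} \le \alpha + v_{\text{C}} + \alpha_0$; integrating over $(\bXmatrix,\bs{Z})$ and using the elementary bound $\min(1,c+v)\le c + \min(1,v)$ together with bounded convergence on $\min(1,v_{\text{C}})$ gives $\limsup_{n\to\infty}\Pr_{H_0^{\text{C}}}\{p \le \alpha\} \le \alpha$, as claimed.

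The main obstacle, and the only genuinely new ingredient beyond the i.i.d.\ proof, is propagating the best-linear-approximation remainder through the whitening map, i.e.\ the step $\bs{r}^\top \bs{\Sigma}^{-1}\bs{r} \le \|\bs{r}\|_2^2/\lambda_{\min}(\bs{\Sigma})$, which is precisely where the extra factor $\lambda_{\min}(\bs{\Sigma})^{-1}$ in the hypothesis enters. A secondary subtlety worth care is that $\remainder(f_0;q_n)$ is defined through the population measure $\mu$ whereas the orthogonality $\bs{\gamma}_i \perp \tilde{\bs{\Phi}}_{q_n}$ is an empirical statement; anchoring $\bs{r}$ to the population minimizer $\bs{b}^*$ is what makes $\E\|\bs{r}\|_2^2$ equal to $n\,\remainder(f_0;q_n)$ and closes the gap cleanly.
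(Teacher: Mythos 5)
Your proposal is correct and follows essentially the same route as the paper: whiten by $\bs{\Sigma}^{-1/2}$, use the rank-$q_n$ factorization $\bs{K}_n^{\text{C}}=\bs{\Sigma}^{-1/2}\bs{\Phi}_{q_n}\bs{\Phi}_{q_n}^\top\bs{\Sigma}^{-1/2}$ to replace $\bs{f}_0$ by the population-minimizer remainder $\bs{r}$, and bound the LOSP by $\|\bs{r}\|_2^2/(\sigma_0^2\lambda_{\min}(\bs{\Sigma}))=n\,\remainder(f_0;q_n)/\lambda_{\min}(\bs{\Sigma})\cdot O_{\Pr}(1)$, which is exactly the paper's key step. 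The only (immaterial) difference is that you conclude by sending $\alpha_0\to 0$ in the finite-sample correction of Theorem \ref{thm:fixed_property_theorem_corr}, whereas the paper concludes directly from the total-variation bound (Lemmas \ref{lemma:basic_ineq_perm_p_val} and \ref{lemma:bound_TV}) by showing $\Delta_n=o_{\Pr}(1)$; since the correction term is derived from that same bound, the two finishes are equivalent.
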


\subsection{Special Case: Exactly Balanced Covariates across All Groups}

In the case that the covariates are exactly balanced across all groups as in \eqref{eq:balanced} and the kernel matrix for distinct covariates within each group is of full rank (which generally holds
when the kernel has an infinite-dimensional feature space, 
e.g., the Gaussian kernel),
the following corollary shows that the partial permutation test is exactly valid under a general functional relationship. 

\begin{corollary}\label{cor:fixed_property_theorem_balanced_design_corr}
Let $\{(\bs{X}_i,Y_i,Z_i)\}_{1\leq i \leq n}$ denote samples from model ${H}_{0}^{\text{C}}$ in \eqref{eq:H_0_fixed_corr}. 
If the design matrix is exactly balanced in the sense that \eqref{eq:balanced} holds 
and the kernel matrix for the $r\le n/H$ distinct covariates within each group is of full rank (or equivalently $\text{rank}(\bs{K}_n) = r$), 
then
the partial permutation $p$-value from either the discrete or continuous partial permutation test with kernel $K$, permutation size $b_n \le n-r$, any test statistic $T$, and covariance matrix $\bs{\Sigma}$ 
is valid under model $H_{0}^{\text{C}}$, i.e., $\forall \alpha \in (0,1)$,  
$
\Pr_{H_{0}^{\text{C}}}\{p(\bXmatrix,\bs{Y}, \bs{Z}, \bs{\Sigma} ) \leq \alpha 
\mid \bXmatrix, \bs{Z}
\} \le \alpha. 
$
\end{corollary}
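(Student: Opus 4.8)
The plan is to reduce the correlated-noise setting to the homoscedastic one by whitening, and then reuse the column-space argument underlying Corollary \ref{cor:fixed_property_theorem_balanced_design}. Recall from Section \ref{sec:corr} that the test is actually run on the whitened response $\bs{Y}^{\text{C}} = \bs{\Sigma}^{-1/2}\bs{Y}$ and the whitened kernel matrix $\bs{K}_n^{\text{C}} = \bs{\Sigma}^{-1/2}\bs{K}_n\bs{\Sigma}^{-1/2}$, and that under $H_{0}^{\text{C}}$ the whitened noise $\bs{\Sigma}^{-1/2}\bs{\varepsilon}$ has $\iid$ $\mathcal{N}(0,\sigma_0^2)$ entries. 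Thus the whitened problem is an instance of the homoscedastic setup with signal $\bs{\Sigma}^{-1/2}\bs{f}_0$ and kernel matrix $\bs{K}_n^{\text{C}}$, and by Theorem \ref{thm:fixed_property_theorem_corr} it suffices to show that the correlated LOSP $\omega_{\text{C}}(b_n,\sigma_0^{-1}f_0,\bs{\Sigma}) = \sigma_0^{-2}\sum_{i=n-b_n+1}^n(\bs{\gamma}_i^\top \bs{\Sigma}^{-1/2}\bs{f}_0)^2$ vanishes, so that the correction $v_{\text{C}}$ in \eqref{eq:v_fixed_H0_corr} is zero.

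The key structural step is to prove $\bs{\Sigma}^{-1/2}\bs{f}_0 \in \text{col}(\bs{K}_n^{\text{C}})$. First I would record that, since $f_0$ is a fixed function, its evaluation vector $\bs{f}_0$ is constant across all units sharing a common covariate value; hence $\bs{f}_0$ lies in the span of the $r$ indicator vectors of the distinct covariate values. Writing $\bs{K}_n = \bs{A}\tilde{\bs{K}}\bs{A}^\top$, where $\bs{A}\in\mathbb{R}^{n\times r}$ collects those indicators and $\tilde{\bs{K}}$ is the kernel matrix on the $r$ distinct values, the full-rank assumption on $\tilde{\bs{K}}$ (equivalently $\text{rank}(\bs{K}_n)=r$) forces $\text{col}(\bs{K}_n)$ and $\text{col}(\bs{A})$ to have equal dimension $r$ with one contained in the other, so $\text{col}(\bs{K}_n)=\text{col}(\bs{A})$ and therefore $\bs{f}_0\in\text{col}(\bs{K}_n)$. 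Since $\bs{\Sigma}^{-1/2}$ is invertible, $\text{col}(\bs{K}_n^{\text{C}}) = \bs{\Sigma}^{-1/2}\,\text{col}(\bs{K}_n)$, and thus $\bs{\Sigma}^{-1/2}\bs{f}_0\in\text{col}(\bs{K}_n^{\text{C}})$.

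Next I would translate this into the orthogonality needed for the LOSP. Because $\bs{K}_n^{\text{C}}$ is symmetric positive semidefinite with $\text{rank}(\bs{K}_n^{\text{C}})=\text{rank}(\bs{K}_n)=r$, its eigenvectors $\bs{\gamma}_{r+1},\ldots,\bs{\gamma}_n$ (associated with the zero eigenvalues) span $\text{col}(\bs{K}_n^{\text{C}})^\perp$, so $\bs{\gamma}_i^\top\bs{\Sigma}^{-1/2}\bs{f}_0=0$ for every $i>r$. As $b_n\le n-r$ forces $\{n-b_n+1,\ldots,n\}\subseteq\{r+1,\ldots,n\}$, every summand of $\omega_{\text{C}}(b_n,\sigma_0^{-1}f_0,\bs{\Sigma})$ is zero; hence $\omega_{\text{C}}=0$ and consequently $v_{\text{C}}(b_n,\sigma_0^{-1}f_0,\bs{\Sigma},\alpha_0)=0$ for every $\alpha_0\in(0,1)$.

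Finally I would remove the residual $\alpha_0$. Theorem \ref{thm:fixed_property_theorem_corr} now asserts that $p(\bXmatrix,\bs{Y},\bs{Z},\bs{\Sigma})+\alpha_0$ is conditionally valid for each $\alpha_0$; applying it at level $\alpha+\alpha_0$ yields $\Pr_{H_{0}^{\text{C}}}\{p(\bXmatrix,\bs{Y},\bs{Z},\bs{\Sigma})\le\alpha\mid\bXmatrix,\bs{Z}\}\le\alpha+\alpha_0$, and letting $\alpha_0\downarrow0$ gives the claim. The main obstacle is the middle step: checking that whitening preserves the containment $\bs{f}_0\in\text{col}(\bs{K}_n)$ and that this containment, together with the rank assumption on the distinct-value kernel matrix, exactly annihilates the correlated LOSP. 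The balanced design \eqref{eq:balanced} itself enters only to guarantee $r\le n/H$, ensuring the admissible permutation size $b_n\le n-r$ stays large (at least $n(1-1/H)$) so that the resulting test is not vacuous.
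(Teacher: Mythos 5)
Your proposal is correct and follows essentially the same route as the paper: whiten by $\bs{\Sigma}^{-1/2}$, factor $\bs{K}_n = \bs{\Pi}\bs{G}\bs{\Pi}^\top$ and $\bs{f}_0 = \bs{\Pi}\tilde{\bs{f}}_0$ through the $n\times r$ indicator matrix of distinct covariate values, and use $\text{rank}(\bs{K}_n^{\text{C}})=r$ to conclude $\bs{\gamma}_i^\top\bs{\Sigma}^{-1/2}\bs{f}_0=0$ for $i>r$, so the correlated LOSP vanishes. Your only departures are cosmetic: you phrase the orthogonality via column-space equality rather than the paper's direct quadratic-form computation, and you close by applying Theorem \ref{thm:fixed_property_theorem_corr} at level $\alpha+\alpha_0$ and letting $\alpha_0\downarrow 0$ instead of invoking the total-variation lemmas directly with $\Delta_n=0$; both are valid.
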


Furthermore, if all covariates within each group are distinct and the covariance among noises enjoys the following structure: (i) the noises have equal variances, 
(ii) the noises for samples with different covariates are uncorrelated, 
and (iii) the noises for samples with the same covariates are equally correlated with correlation $\rho$, 
then the partial permutation test is always valid even if we use a correlation matrix with incorrect correlation $\tilde{\rho} \ne \rho$.
This means that, 
with this special covariance structure, we are able to conduct valid permutation tests even if the true correlation matrix is unknown. 
Such covariance structure is reasonable when the same covariate corresponds to the same individual and the response within each group corresponds to measurement at different time periods. 
As a side note, the usual permutation test that switches group indicators of samples with the same covariates is valid in more general setting, as long as the noises for samples with different covariate values are mutually independent and the noises for samples with the same covariate values are exchangeable. 
The partial permutation test allows for more general permutation or rotation, but with a stronger Gaussianity assumption on the noises.

\subsection{Special Case: Gaussian Process Regression Model}

Finally we extends the GPR model $\tilde{H}_{0}$ in \eqref{eq:h0_gp} to allow for correlated noises: 
\begin{align}\label{eq:h0_gp_corr}
    \tilde{H}_{0}^{\text{C}}: & Y_i =  f(\bs{X}_i) + \varepsilon_i, 
    \quad 
    \bs{\varepsilon} \mid \bXmatrix, \bs{Z} \ \overset{\iid}{\sim} \ \mathcal{N}( \bs{0}, \sigma_0^2 \bs{\Sigma}), 
    \quad 
    f \sim  \text{GP}\left( 0, \ \frac{\delta_0^2}{n^{1-\gamma}} K(\cdot,\cdot) \right). 
\end{align}
The following theorem extends Theorem \ref{thm:general_ppt_valid} and  demonstrates the asymptotic validity of the partial permutation test after taking into the account the noise covariance structure.

\begin{theorem}\label{thm:general_ppt_valid_corr}
	Let $\{(\bs{X}_i,Y_i,Z_i)\}_{1\leq i \leq n}$ denote samples from model $\tilde{H}_{0}^{\text{C}}$ in \eqref{eq:h0_gp_corr}. If 
    the covariates $\bs{X}_i$'s are $\iid$ from a compact support $\mathcal{X}$ with probability measure $\mu$, 
    the eigenvalues $\{\lambda_k\}$ of kernel $K$ on $(\mathcal{X},\mu)$ satisfy $\lambda_k = O(k^{-\rho})$ with $\rho >1$, 
    the smallest eigenvalue of 
    $\bs{\Sigma}$ for the noises satisfies $\lambda_{\min}(\bs{\Sigma}) \ge c n^{-\zeta}$ for some positive $c$ and $\zeta < 1 - \rho^{-1}$, 
    and $\gamma$ is a constant less than $1-\rho^{-1} - \zeta$, then for sequence $\{b_n\}$ satisfying 
    $b_n=   O(n^\kappa)$ with $0<\kappa <1-\rho^{-1}-\zeta-\gamma$, 
    the partial permutation $p$-value from either the discrete or continuous partial permutation test with kernel $K$, permutation size $b_n$, any test statistic $T$, and covariance matrix $\bs{\Sigma}$ is asymptotically valid under $\tilde{H}_0^{\text{C}}$, i.e., $\forall \alpha\in (0,1)$, 
    $
    \limsup_{n \rightarrow \infty} \Pr_{\tilde{H}_0^{\text{C}}} \{p(\bs{X}, \bs{Y}, \bs{Z},  \bs{\Sigma}) \leq \alpha \} \leq \alpha. 
    $
\end{theorem}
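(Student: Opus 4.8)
The plan is to whiten the noise, reducing model $\tilde{H}_0^{\text{C}}$ to the i.i.d.-noise GPR model of Theorem \ref{thm:general_ppt_valid}, and then to absorb the cost of whitening into the eigenvalues of $\bs{\Sigma}$. First I would pass to the transformed quantities $\bs{Y}^{\text{C}} = \bs{\Sigma}^{-1/2}\bs{Y}$ and $\bs{K}_n^{\text{C}} = \bs{\Sigma}^{-1/2}\bs{K}_n\bs{\Sigma}^{-1/2}$ on which the test is actually defined. Conditionally on $(\bXmatrix,\bs{Z})$, under \eqref{eq:h0_gp_corr} the vector $\bs{f}=(f(\bs{X}_1),\ldots,f(\bs{X}_n))^\top$ is Gaussian with $\bs{f}\sim\mathcal{N}(\bs{0},(\delta_0^2/n^{1-\gamma})\bs{K}_n)$ and is independent of $\bs{\varepsilon}\sim\mathcal{N}(\bs{0},\sigma_0^2\bs{\Sigma})$; hence $\bs{\Sigma}^{-1/2}\bs{f}\sim\mathcal{N}(\bs{0},(\delta_0^2/n^{1-\gamma})\bs{K}_n^{\text{C}})$ while $\bs{\Sigma}^{-1/2}\bs{\varepsilon}\sim\mathcal{N}(\bs{0},\sigma_0^2\bs{I}_n)$ has i.i.d.\ entries. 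Thus $(\bs{Y}^{\text{C}},\bs{K}_n^{\text{C}})$ obeys exactly the distributional structure underlying Theorems \ref{thm:general_ppt_valid} and \ref{thm:gp_finite_property_theorem}, with the kernel matrix $\bs{K}_n$ replaced throughout by $\bs{K}_n^{\text{C}}$. In particular, the finite-sample correction of Theorem \ref{thm:gp_finite_property_theorem} applies verbatim to the whitened problem, with LOSP $\tilde{\omega}^{\text{C}}(b_n,\xi_n)=\xi_n c^{\text{C}}_{n-b_n+1}$, where $c^{\text{C}}_1\ge\cdots\ge c^{\text{C}}_n$ are the eigenvalues of $\bs{K}_n^{\text{C}}$ and $\xi_n=(\delta_0^2/n^{1-\gamma})/\sigma_0^2$.

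It then suffices to show that this correction vanishes, which, as in the proof of Theorem \ref{thm:general_ppt_valid}, reduces after inspecting \eqref{eq:v_correct_gaussian} to $b_n\,\tilde{\omega}^{\text{C}}(b_n,\xi_n)\to 0$ in probability for a suitably vanishing $\alpha_0$. The only new ingredient is an eigenvalue transfer from $\bs{K}_n$ to $\bs{K}_n^{\text{C}}$: since $\bs{\Sigma}^{-1/2}\bs{K}_n\bs{\Sigma}^{-1/2}$ is similar to $\bs{K}_n^{1/2}\bs{\Sigma}^{-1}\bs{K}_n^{1/2}\preceq\lambda_{\max}(\bs{\Sigma}^{-1})\bs{K}_n$, Weyl monotonicity yields $c^{\text{C}}_k\le c_k/\lambda_{\min}(\bs{\Sigma})$ for every $k$. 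Hence, using $\lambda_{\min}(\bs{\Sigma})\ge cn^{-\zeta}$,
\begin{equation*}
\tilde{\omega}^{\text{C}}(b_n,\xi_n)\;\le\;\frac{\xi_n}{\lambda_{\min}(\bs{\Sigma})}\,c_{n-b_n+1}\;\le\;c^{-1}\,\frac{\delta_0^2}{\sigma_0^2}\,n^{-(1-(\gamma+\zeta))}\,c_{n-b_n+1}.
\end{equation*}
The right-hand side is, up to the constant $c^{-1}$, precisely the LOSP of an i.i.d.-noise GPR model with the same kernel $K$ but with the smoothness exponent $\gamma$ replaced by $\gamma+\zeta$. Since the hypotheses $\zeta<1-\rho^{-1}$ and $\gamma<1-\rho^{-1}-\zeta$ guarantee $\gamma+\zeta<1-\rho^{-1}$, and $\kappa<1-\rho^{-1}-\zeta-\gamma=1-\rho^{-1}-(\gamma+\zeta)$, the conclusion of Theorem \ref{thm:general_ppt_valid} for this surrogate model gives $b_n\,\tilde{\omega}^{\text{C}}(b_n,\xi_n)\to 0$, whence $\limsup_n\Pr_{\tilde{H}_0^{\text{C}}}\{p(\bs{X},\bs{Y},\bs{Z},\bs{\Sigma})\le\alpha\}\le\alpha$.

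The main obstacle is not the whitening step, which is an elementary consequence of the multiplicative eigenvalue inequality, but the spectral tail control that Theorem \ref{thm:general_ppt_valid} supplies, namely bounding the empirical (Gram-matrix) eigenvalue $c_{n-b_n+1}$ of $\bs{K}_n$ through the Mercer decay $\lambda_k=O(k^{-\rho})$; this requires concentration of the kernel-matrix spectrum around the population eigenvalues and is where the exponent $1-\rho^{-1}-\gamma$ originates. The reparametrization $\gamma\mapsto\gamma+\zeta$ makes clear that the correlated problem inherits this bound intact provided $\lambda_{\min}(\bs{\Sigma})$ does not decay faster than $n^{-\zeta}$; the remaining care is purely in checking that the constant $c$ and the error terms in that concentration argument survive the extra $n^{\zeta}$ factor without shrinking the admissible range of $\kappa$ beyond $(0,\,1-\rho^{-1}-\zeta-\gamma)$.
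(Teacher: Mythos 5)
Your proposal is correct and follows essentially the same route as the paper: whiten via $\bs{\Sigma}^{-1/2}$ so that the problem reduces to the i.i.d.-noise GPR setting with kernel matrix $\bs{K}_n^{\text{C}}$, transfer the eigenvalue bound $c^{\text{C}}_{n-b_n+1}\le c_{n-b_n+1}/\lambda_{\min}(\bs{\Sigma})$, and absorb the resulting $n^{\zeta}$ factor into the exponent budget before invoking the Gram-matrix spectral tail bound (the paper's Lemma on $c_{n-b_n+1}/n^{1-\gamma}\cdot b_n\convergep 0$). Your reparametrization $\gamma\mapsto\gamma+\zeta$ is just a repackaging of the paper's check that $\zeta+\kappa<1-\rho^{-1}-\gamma$, so there is nothing substantive to add.
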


Theorem \ref{thm:general_ppt_valid} proves the large-sample validity of the partial permutation test.  
Below we investigate its finite-sample performance. 
Analogous to Theorem \ref{thm:gp_finite_property_theorem}, 
let $\xi_n=(\delta_0^2/n^{1-\gamma})/\sigma_0^2$ denote the variance ratio, and 
$\tilde{\omega}_{\text{C}}(b_n, \xi_n, \bs{\Sigma}) = \xi_n \cdot \zeta_{n-b_n+1}$ denote the LOSP,  
where 
$\zeta_{n-b_n+1}$ denotes the $(n-b_n+1)$th largest eigenvalue of $\bs{K}_n^{\text{C}}$. 
We then define 
\begin{align}\label{eq:v_correct_gaussian_corr}
\tilde{v}_{\text{C}}(b_n, \xi_n, \bs{\Sigma}, \alpha_0)=\frac{1}{2} \exp\left[
\frac{1}{2}
\tilde{\omega}_{\text{C}}(b_n, \xi_n, \bs{\Sigma}) \cdot Q_{b_n}(1-\alpha_0)
\right]-\frac{1}{2}, 
\end{align}
recalling that $Q_{b_n}$ is the quantile function of the $\chi^2_{b_n}$-distribution. 
The following theorem shows that the partial permutation $p$-value can be finite-sample valid under $\tilde{H}_{0}^{\text{C}}$ after an adjustment. %

\begin{theorem}\label{thm:gp_finite_property_theorem_corr}
	Let $\{(\bs{X}_i,Y_i,Z_i)\}_{1\leq i \leq n}$ denote samples from model $\tilde{H}_{0}^{\text{C}}$ in \eqref{eq:h0_gp_corr}.
	Given $1\leq b_n \leq n$ and $0<\alpha_0<1$, we define the corrected partial permutation $p$-value as follows,
	\begin{align*}
	\tilde{p}_{\correct}(\bXmatrix, \bs{Y}, \bs{Z}, \bs{\Sigma}) = p(\bXmatrix, \bs{Y}, \bs{Z}, \bs{\Sigma} ) + \tilde{v}(b_n, \xi_n, \bs{\Sigma}, \alpha_0) + \alpha_0,
	\end{align*}
	where $p(\bXmatrix, \bs{Y}, \bs{Z}, \bs{\Sigma})$ is the $p$-value from either the discrete or continuous partial permutation test with kernel $K$, permutation size $b_n$, any test statistic $T$, and covariance matrix $\bs{\Sigma}$, and $\tilde{v}(b_n, \xi_n, \bs{\Sigma}, \alpha_0)$ is as defined in \eqref{eq:v_correct_gaussian}. Then the corrected partial permutation $p$-value is valid under model $\tilde{H}_{0}^{\text{C}}$, that is, 
	$\forall \alpha \in (0,1)$, 
	$
	\Pr_{\tilde{H}_0^{\text{C}}}
	\{
	\tilde{p}_{\correct}(\bXmatrix, \bs{Y}, \bs{Z}, \bs{\Sigma})\leq\alpha
	\mid \bXmatrix,\bs{Z}
	\}
	\leq\alpha. 
	$
\end{theorem}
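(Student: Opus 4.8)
The plan is to reduce Theorem~\ref{thm:gp_finite_property_theorem_corr} to the already-established homoscedastic result, Theorem~\ref{thm:gp_finite_property_theorem}, by whitening the noise. Conditioning throughout on $(\bXmatrix, \bs{Z})$, the covariance $\bs{\Sigma}$ is a fixed, known, positive-definite matrix, so $\bs{\Sigma}^{-1/2}$ is well defined and nonrandom. First I would apply $\bs{\Sigma}^{-1/2}$ to both sides of model $\tilde{H}_0^{\text{C}}$ in \eqref{eq:h0_gp_corr} to obtain $\bs{Y}^{\text{C}} = \bs{\Sigma}^{-1/2}\bs{Y} = \bs{\Sigma}^{-1/2}\bs{f} + \bs{\Sigma}^{-1/2}\bs{\varepsilon}$. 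Because $\bs{\varepsilon}\mid \bXmatrix,\bs{Z} \sim \mathcal{N}(\bs{0}, \sigma_0^2 \bs{\Sigma})$, the whitened noise $\bs{\Sigma}^{-1/2}\bs{\varepsilon} \sim \mathcal{N}(\bs{0}, \sigma_0^2 \bs{I}_n)$ is exactly $\iid$ Gaussian, and since the vector of function evaluations satisfies $\bs{f} \sim \mathcal{N}(\bs{0}, (\delta_0^2/n^{1-\gamma})\bs{K}_n)$ under the Gaussian process prior, the whitened signal obeys $\bs{\Sigma}^{-1/2}\bs{f} \sim \mathcal{N}(\bs{0}, (\delta_0^2/n^{1-\gamma}) \bs{K}_n^{\text{C}})$ with $\bs{K}_n^{\text{C}} = \bs{\Sigma}^{-1/2}\bs{K}_n\bs{\Sigma}^{-1/2}$. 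Thus $\bs{Y}^{\text{C}}$ satisfies a model of exactly the form $\tilde{H}_0$ in \eqref{eq:h0_gp}, the sole change being that the kernel matrix $\bs{K}_n$ is replaced by the congruent (hence positive semidefinite) matrix $\bs{K}_n^{\text{C}}$.

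Next I would identify the test itself. By construction (the paragraph following \eqref{eq:H_0_fixed_corr}), the correlated procedure runs Algorithm~\ref{alg:partial_permu} verbatim on the pair $(\bs{Y}^{\text{C}}, \bs{K}_n^{\text{C}})$: it eigendecomposes $\bs{K}_n^{\text{C}} = \bs{\Gamma}\bs{C}\bs{\Gamma}^\top$, forms $\bs{W} = \bs{\Gamma}^\top\bs{Y}^{\text{C}}$, permutes the last $b_n$ coordinates, and compares the arbitrary statistic $T$ at the permuted and observed whitened responses. Consequently $p(\bXmatrix, \bs{Y}, \bs{Z}, \bs{\Sigma})$ is literally the partial permutation $p$-value of Algorithm~\ref{alg:partial_permu} applied to the whitened problem. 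Writing $\zeta_1 \ge \cdots \ge \zeta_n$ for the eigenvalues of $\bs{K}_n^{\text{C}}$, the leading permuted coordinate carries variance ratio $\xi_n \zeta_{n-b_n+1}$, so the quantity $\tilde{\omega}(b_n, \xi_n) = \xi_n c_{n-b_n+1}$ appearing in Theorem~\ref{thm:gp_finite_property_theorem} becomes exactly $\tilde{\omega}_{\text{C}}(b_n, \xi_n, \bs{\Sigma}) = \xi_n \zeta_{n-b_n+1}$, and the correction \eqref{eq:v_correct_gaussian} becomes $\tilde{v}_{\text{C}}$ in \eqref{eq:v_correct_gaussian_corr}. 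With this dictionary in place, invoking Theorem~\ref{thm:gp_finite_property_theorem} on the whitened data yields directly that $\Pr\{ p(\bXmatrix, \bs{Y}, \bs{Z}, \bs{\Sigma}) + \tilde{v}_{\text{C}}(b_n, \xi_n, \bs{\Sigma}, \alpha_0) + \alpha_0 \le \alpha \mid \bXmatrix, \bs{Z}\} \le \alpha$, which is precisely the claimed validity of $\tilde{p}_{\correct}$.

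The point requiring care — and what I regard as the main obstacle — is confirming that the reduction is faithful at the level of the \emph{proof}, not merely the statement: one must check that the finite-sample argument behind Theorem~\ref{thm:gp_finite_property_theorem} uses $\bs{K}_n$ only through its eigenstructure together with the joint Gaussianity of signal and noise, and never through its being a kernel matrix evaluated at the raw samples. The crux of that argument is a likelihood-ratio (or total-variation) bound controlling the departure from exchangeability of the permuted coordinates, which depends only on their being independent $\mathcal{N}(0, \sigma_0^2(1+\xi_n\zeta_i))$ variables with maximal inflation $\tilde{\omega}_{\text{C}}$ and on the high-probability $\chi^2_{b_n}$ control encoded by $Q_{b_n}(1-\alpha_0)$. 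Since $\bs{K}_n^{\text{C}}$ is positive semidefinite and the whitened signal and noise are jointly Gaussian with the stated covariances, this bound transfers unchanged with $c_{n-b_n+1}$ replaced by $\zeta_{n-b_n+1}$, so no new analytic difficulty arises. A secondary verification is that conditioning on $(\bXmatrix, \bs{Z})$ renders $\bs{\Sigma}$, and hence the whitening map and $\bs{K}_n^{\text{C}}$, nonrandom, guaranteeing that every distributional identity above is exact rather than approximate.
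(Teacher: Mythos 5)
Your proposal is correct and follows essentially the same route as the paper: the paper's own proof likewise whitens to note that $\bs{\Sigma}^{-1/2}\bs{Y}\mid \bXmatrix,\bs{Z} \sim \mathcal{N}\bigl(\bs{0},\ (\delta_0^2/n^{1-\gamma})\bs{K}_n^{\text{C}} + \sigma_0^2\bs{I}_n\bigr)$ and then invokes the argument of Theorem~\ref{thm:gp_finite_property_theorem} with $\bs{K}_n$ replaced by $\bs{K}_n^{\text{C}}$. Your additional check that the underlying total-variation lemma uses $\bs{K}_n$ only through its eigenstructure (it is stated for an arbitrary nonnegative diagonal $\bs{D}$) is exactly the detail the paper omits "for conciseness," so nothing is missing.
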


By the same logic as Section \ref{sec:choice_size}, we can then use Theorem \ref{thm:gp_finite_property_theorem_corr} to guide the choice of permutation size in finite samples.

\section{Simulation Study}\label{sec:simu}

In this section, we present simulation results based on various choices of the kernels and  
discrete partial permutation tests described in Algorithm \ref{alg:partial_permu}. 
Specifically, in Sections \ref{sec:scalar_simulation}--\ref{sec:simu_non_smooth} we investigate type-I error control under the null hypothesis, 
and in Sections \ref{sec:power_Ftest} and \ref{sec:power_parallel} we compare  powers of the partial permutation test and some other methods. 
We also conduct simulations with non-Gaussian or correlated noises, which are  
 relegated to  Supplementary Material.
Moreover, 
for simulation under the null hypothesis, 
we focus mainly on the Gaussian kernel 
and choose the tuning parameters,
permutation size $b_n$ and test statistic $T$, as follows:
(1) we standardize both the response and covariates, and consider Gaussian kernel $K_{\text{G}}$ in \eqref{eq:gaussian_kernel} with the
maximum marginal likelihood estimates for  parameters $\omega_k$'s as discussed in Section \ref{sec:kernel}; 
(2) we choose the permutation size $b_n$ as suggested in Section \ref{sec:choice_size} based on model $\tilde{H}_0$ 
with significance level $\alpha=0.05$;
(3) we choose the likelihood ratio of $\tilde{H}_{\text{pseudo}}$ versus $\tilde{H}_0$ as  the test statistic $T$ due to its lower computation cost, unless otherwise stated.

\subsection{Simulation under the null hypothesis with scalar covariate}\label{sec:scalar_simulation}

We first consider partial permutation test under $H_0$ with a scalar covariate and two groups. 
We generate data as $\iid$ samples from the following model: 
\begin{align}\label{eq:scalar_null}
\text{Scenario 1:} \quad & Y = f_0(X) + \varepsilon, \quad \varepsilon \mid X, Z \sim \mathcal{N}(0,\sigma_0^2), 
\nonumber
\\
& X \mid Z \sim  a_{Z} \cdot \text{Unif}(-1,0)+(1-a_{Z}) \cdot \text{Unif}(0,1), 
\nonumber
\\
& \Pr(Z=h) = p_h, \quad h=1,2,
\end{align}
where 
$\text{Unif}(-1,0)$ and $\text{Unif}(0,1)$ refer to uniform distributions on $(-1, 0)$ and $(0,1)$,  
$(a_1, a_2)$ control the mixture weights for covariate distributions in two groups, 
and 
$(p_1, p_2)$ denote the fractions of observations (in expectation) from two groups. We consider the five cases in Table \ref{tab:balance} that vary both the proportions of units and the covariate distributions in two groups.
\begin{table}[htbp]
    \centering
    \caption{Cases with varying balancedness of group sizes and covariate distributions between the two groups in comparison.}
    \label{tab:balance}
    \begin{tabular}{ccccc}
    \toprule
    Case & Groups & Covariates & $(p_1, p_2)$ & $(a_1, a_2)$  \\
    \midrule
    (a) & Balanced & Balanced & (0.5, 0.5) & (0.5, 0.5)  \\
    (b) & Unbalanced & Balanced & (0.2, 0.8)  & (0.5, 0.5) \\
    (c) & Balanced & Unbalanced & (0.5, 0.5) & (0.8, 0.2)  \\
    (d) & Unbalanced & Unbalanced & (0.2, 0.8) & (0.8, 0.2)  \\
    (e) & Balanced & Non-overlap %
     & (0.5, 0.5) & (1, 0). \\
    \bottomrule
    \end{tabular}
\end{table}
Specifically, 
in case (e), 
covariates from the two groups do not  overlap at all. 
Therefore, case (e) resembles the regression discontinuity design, under which we can interpret the null  hypothesis $H_0$ in \eqref{eq:H_0_fixed} as that the underlying functions for the two groups can be smoothly connected at the boundary.
Finally, 
we fix $\sigma_0^2=0.1$ for all cases in Table \ref{tab:balance}, and consider the following six choices of the underlying function $f_0$, all in the range of $[-1, 1]$: 
\begin{align}\label{eq:f_choice_scalar}
\begin{tabular}{lll}
$\text{(i)}~ f_0 = x,$ 
& $\text{(ii)}~ f_0 = 2x^2-1,$ 
& $\text{(iii)}~  f_0 = 4x^3/3 - x/3,$
\\
$\text{(iv)}~  f_0 = 4/(1+x^2) - 3,$ 
& 
$\text{(v)}~  f_0 = \sin(4x),$ 
& 
$\text{(vi)}~  f_0 = \sin(6x)$.  
\end{tabular}
\end{align}%
\begin{figure}[htb]
	\centering
		\includegraphics[width=0.8\linewidth]{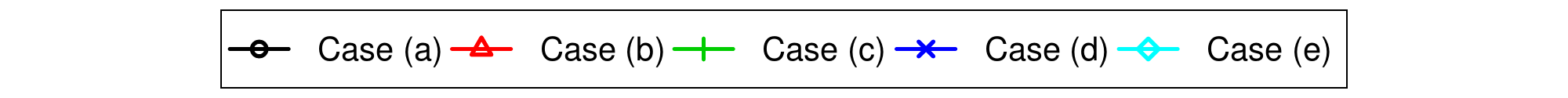}
	\begin{subfigure}{.25\textwidth}
		\centering
		\includegraphics[width=1\linewidth]{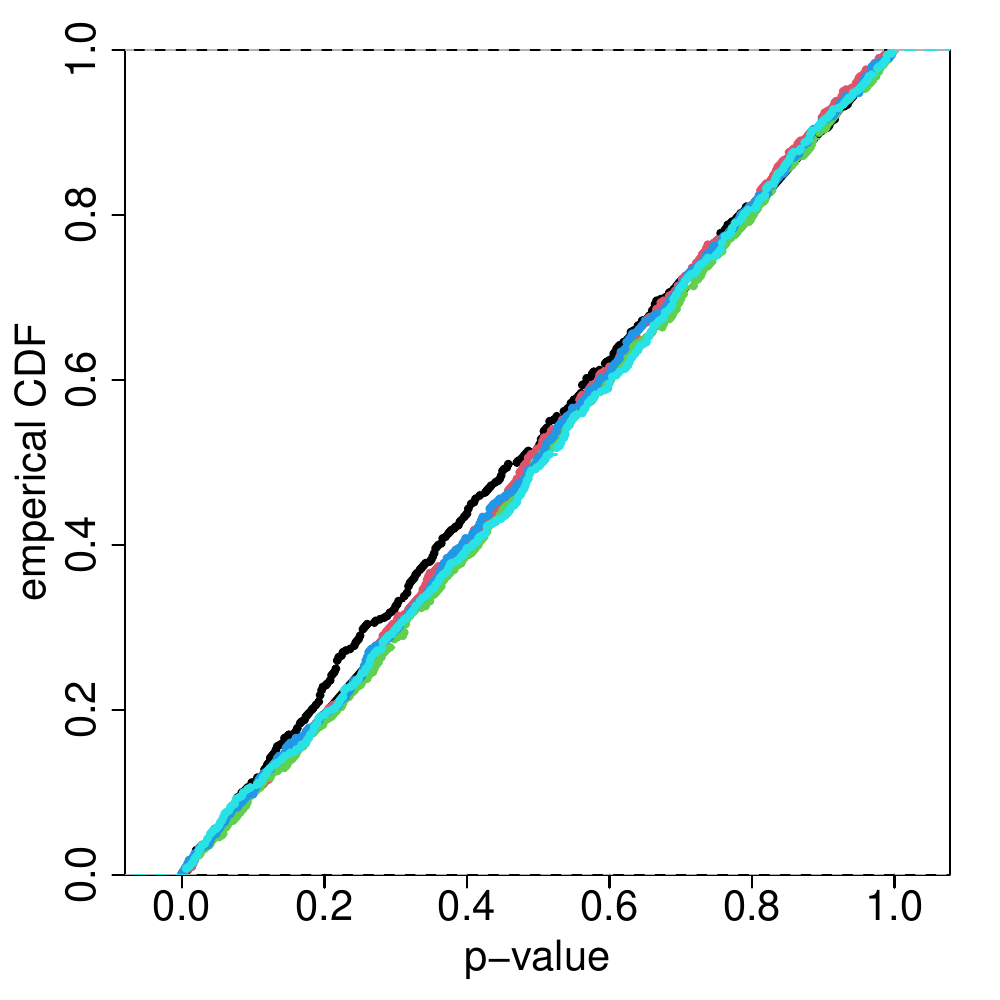}
		\caption*{$x$}
	\end{subfigure}%
	\begin{subfigure}{.25\textwidth}
		\centering
		\includegraphics[width=1\linewidth]{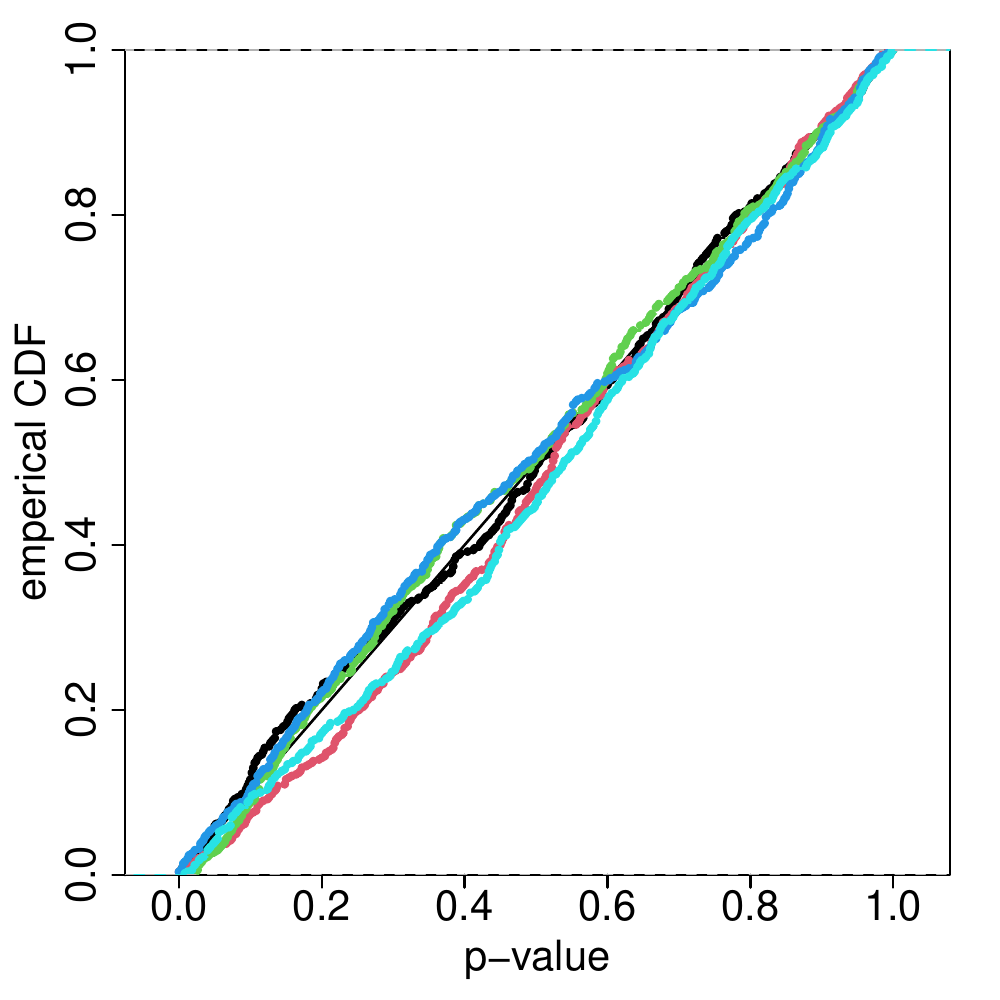}
		\caption*{$x^2$}
	\end{subfigure}%
	\begin{subfigure}{.25\textwidth}
		\centering
		\includegraphics[width=1\linewidth]{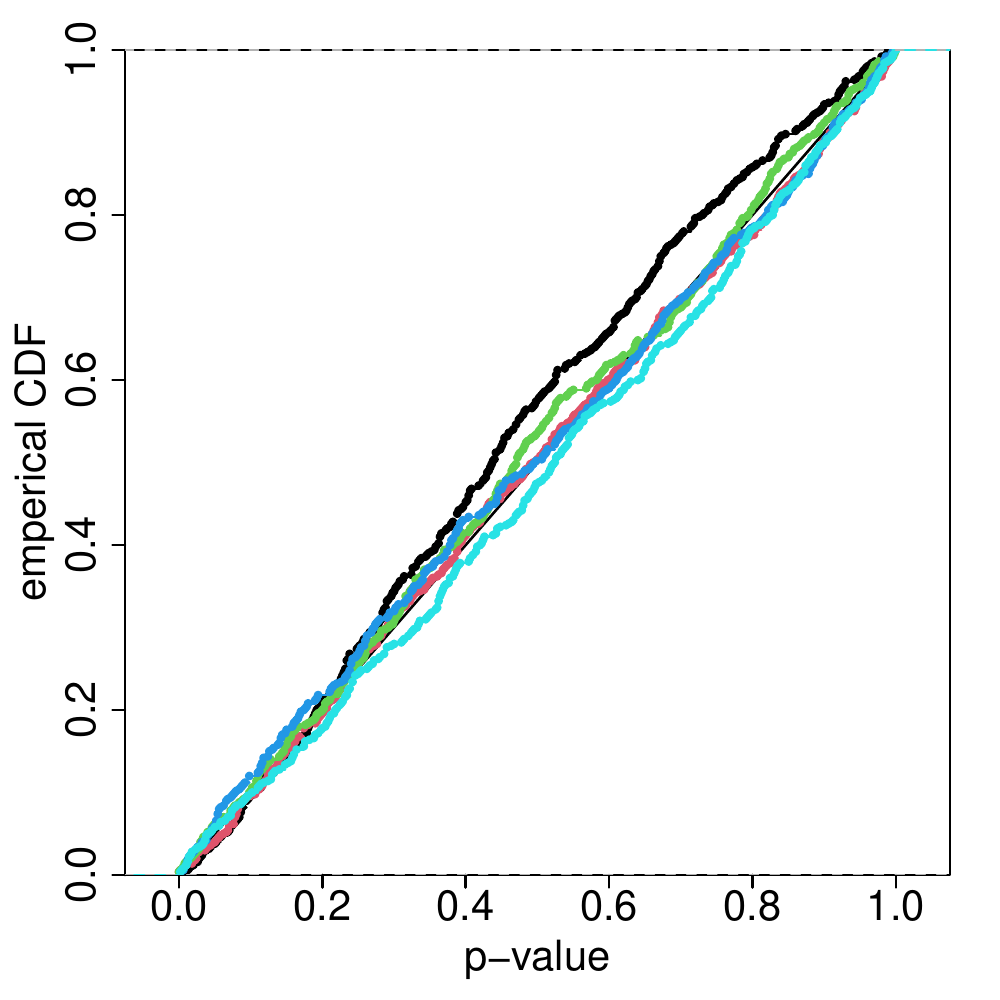}
		\caption*{$x^3 - x/4$}
	\end{subfigure}
	\begin{subfigure}{.25\textwidth}
		\centering
		\includegraphics[width=1\linewidth]{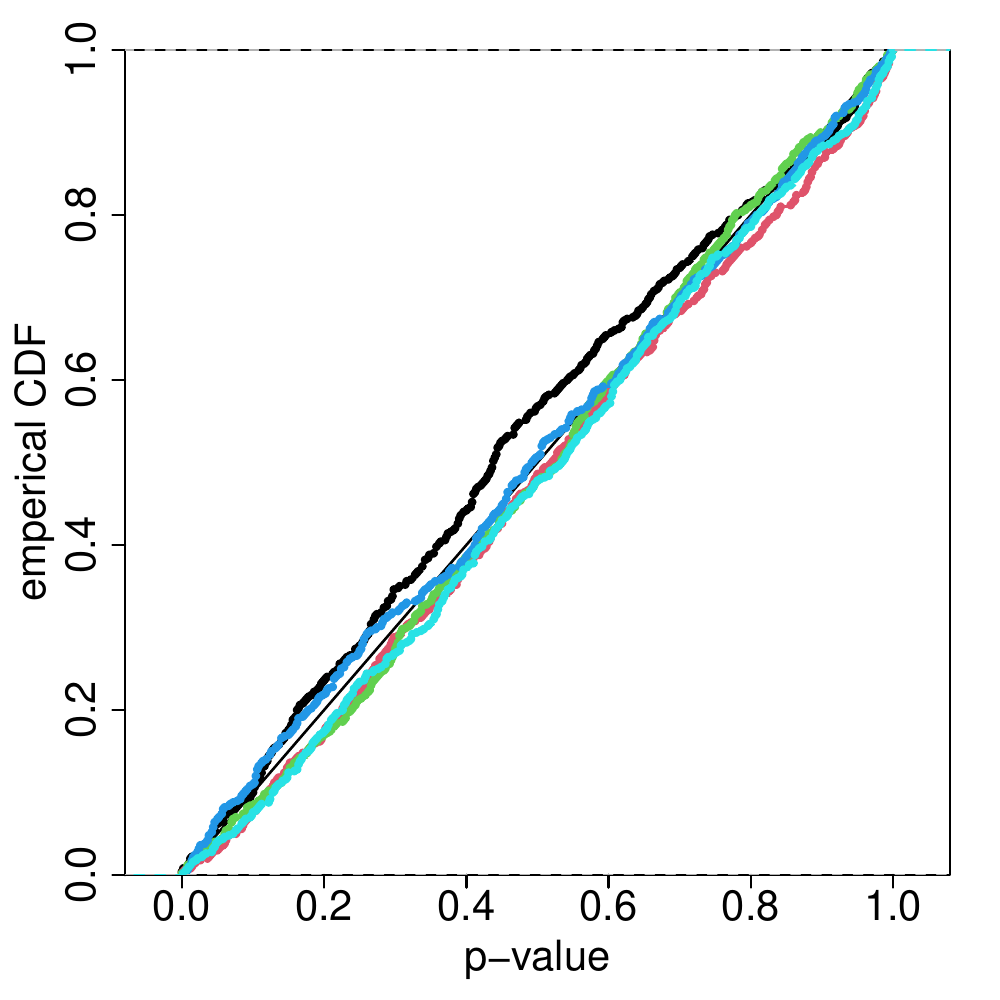}
		\caption*{$(1+x^2)^{-1}$} 
	\end{subfigure}
	\begin{subfigure}{.25\textwidth}
		\centering
		\includegraphics[width=1\linewidth]{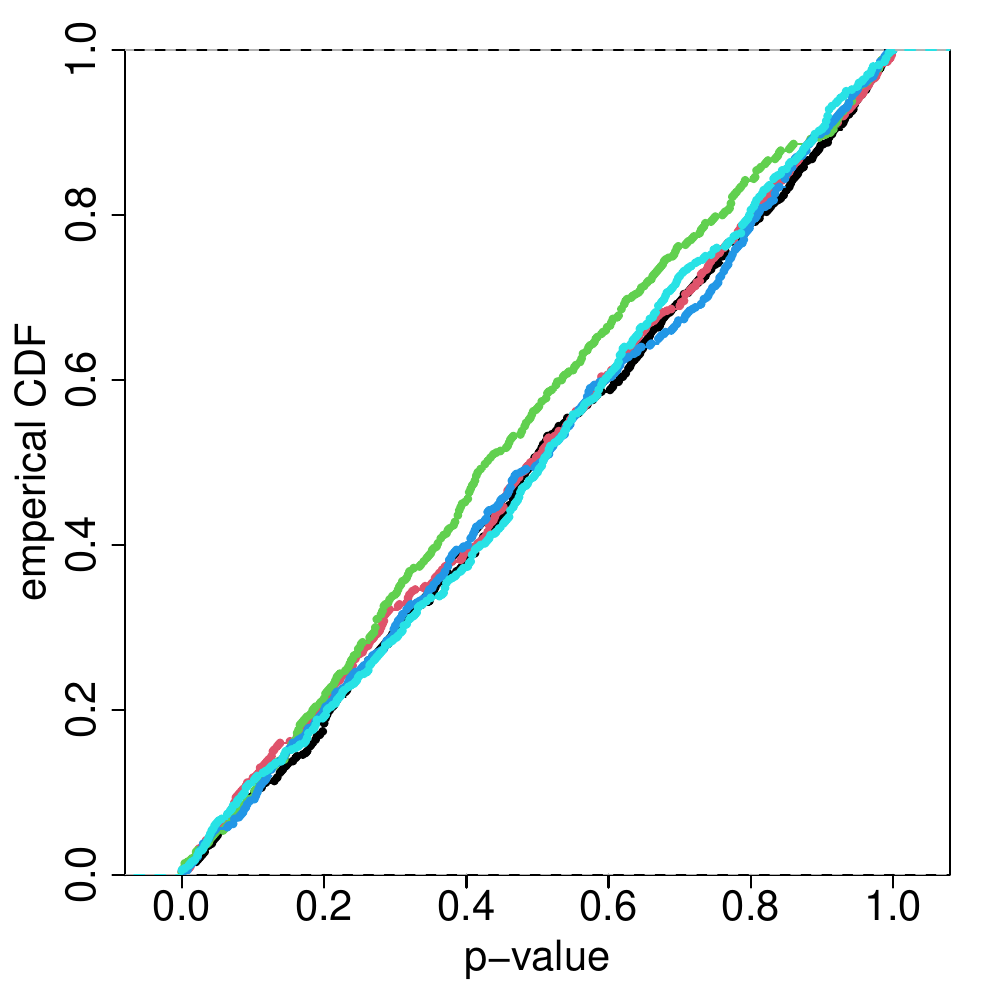}
		\caption*{$\sin(4x)$} 
	\end{subfigure}%
	\begin{subfigure}{.25\textwidth}
	\centering
	\includegraphics[width=1\linewidth]{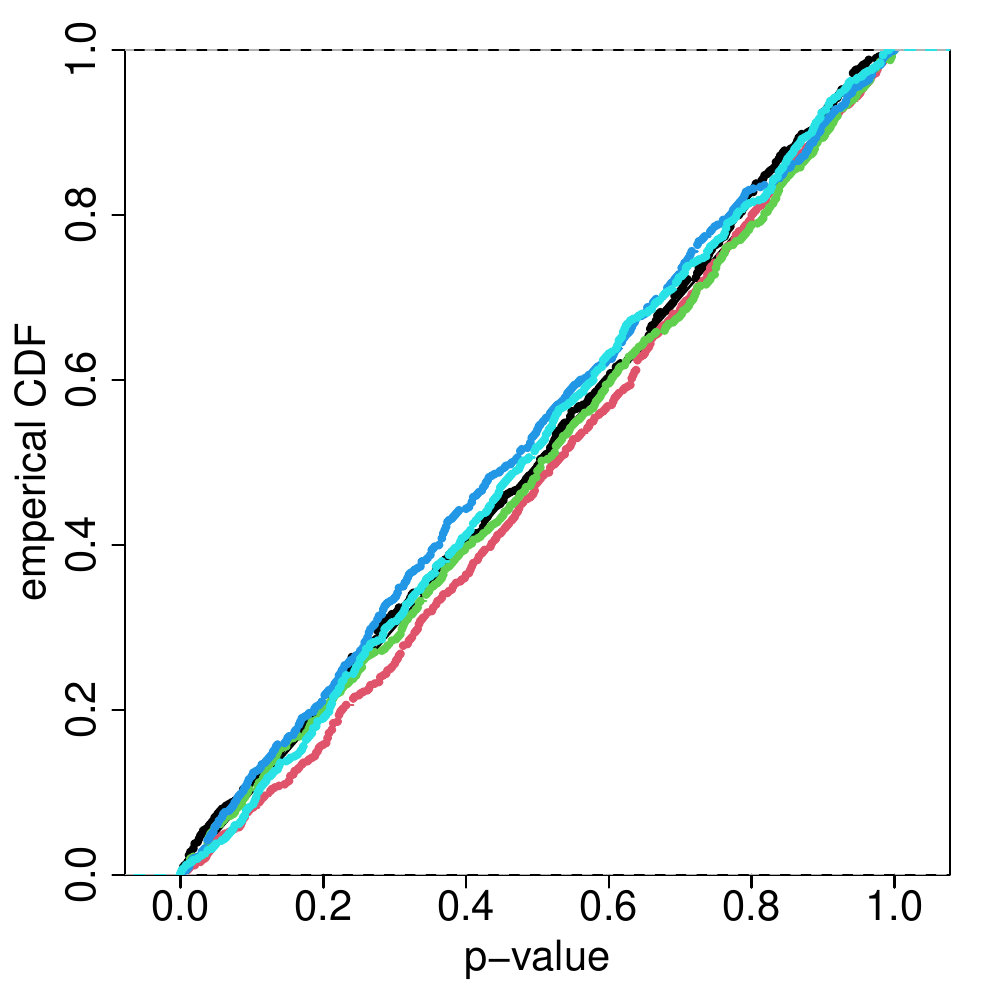}
	\caption*{$\sin(6x)$}
	\end{subfigure}
	\caption{
	Empirical distributions of the partial permutation $p$-values when data are generated from Scenario 1 in \eqref{eq:scalar_null} under all cases in Table \ref{tab:balance} with sample size $n=200$. 
	The six figures correspond to six choices of the underlying function $f_0$ in \eqref{eq:f_choice_scalar}. 
    }
      \label{S1_0_5_GP}
\end{figure}%
Figure \ref{S1_0_5_GP} shows the empirical distribution functions of the partial permutation $p$-values under all cases with sample size $n=200$,
showing that all are very close to Unif$(0,1)$ and demonstrating the 
validity of 
the partial permutation test.

\subsection{Simulation under the null hypothesis with two-dimensional covariates}\label{sec:simu_null_two_dim}
We generate data as $\iid$ samples from the following two-dimensional covariates model: 
\begin{align}\label{eq:model_two_cov}
\text{Scenario 2:} \quad & Y = f_0(X_1, X_2) + \varepsilon, \quad \varepsilon \mid \bs{X}, Z \sim \mathcal{N}(0,\sigma_0^2), \nonumber \\
& X_{1}\mid Z \sim a_{Z} \cdot \text{Unif}[-1,0] + (1-a_{Z}) \cdot \text{Unif}[0,1], \nonumber \\
& X_{2}\mid Z \sim a_{Z} \cdot \text{Unif}[-1,0] + (1-a_{Z}) \cdot \text{Unif}[0,1], \nonumber \\
& X_{1} \ind X_{2} \mid Z, \ \ P(Z=h) = p_h, \quad h=1,2,
\end{align}
where the choice of $(a_1,a_2)$ and $(p_1,p_2)$ is the same as in Table \ref{tab:balance}. 
We again fix $\sigma_0^2=0.1$ and consider the following six choices of the underlying function $f_0$, all in the range of $[-1, 1]$: 
\begin{align}\label{eq:choice_f_two_cov}
\begin{tabular}{lll}
$\text{(i)}~ f_0 = (x_1 + x_2)/2,$ 
&
$\text{(ii)}~ f_0  = x_1 x_2,$ 
&
$\text{(iii)}~ f_0 = 2(x_1+x_2)^3/15 - (x_1+x_2)/30,$
\\
$\text{(iv)}~ f_0 = 3/(1+x_1^2+x_2^2) - 2,$ 
&
$\text{(v)}~ f_0 = \sin(6x_1)+x_2,$
&
$\text{(vi)}~ f_0 = \sin(6x_1+6x_2).$
\end{tabular}
\end{align}

Figure \ref{S2_0_5_GP} shows the empirical distribution functions of the partial permutation $p$-values, which are close to Unif$(0,1)$ for all cases. 
\begin{figure}[h]
	\centering
		\includegraphics[width=0.8\linewidth]{plots/null_balance_legend_1031}
	\begin{subfigure}{.25\textwidth}
		\centering
		\includegraphics[width=1\linewidth]{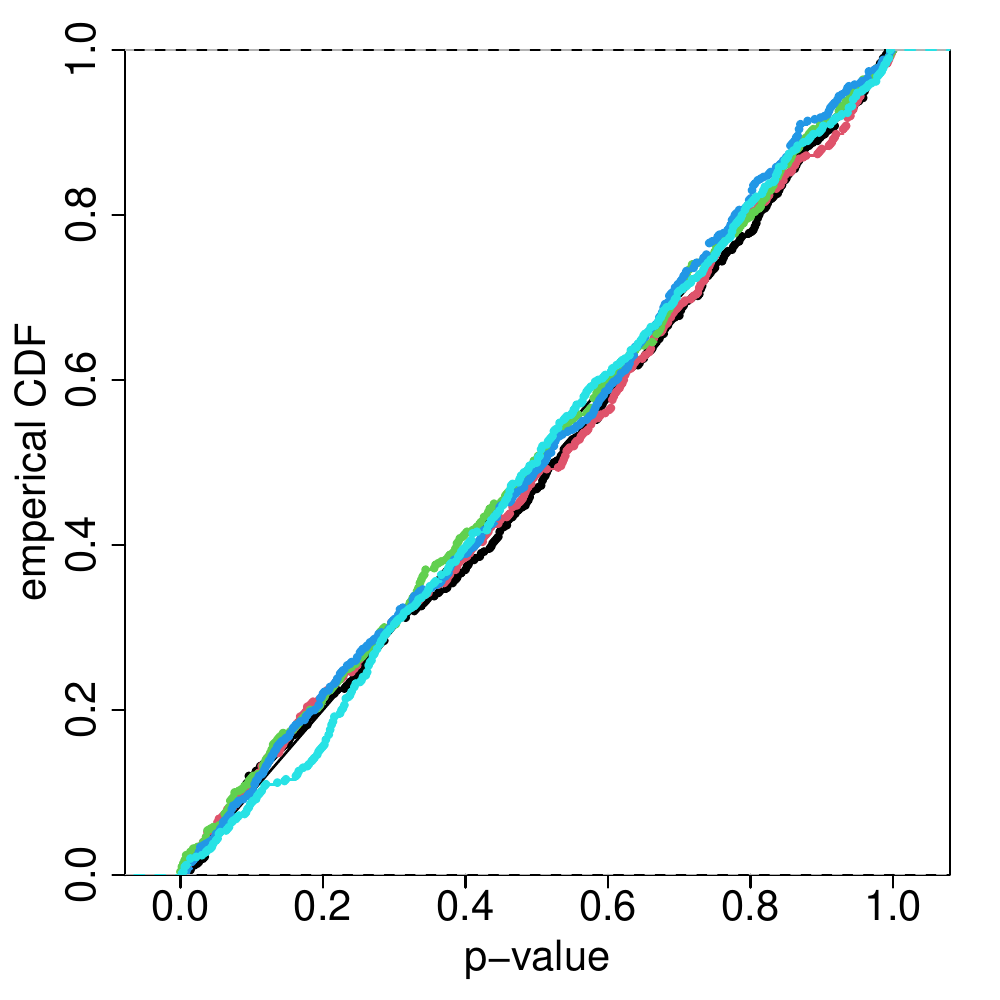}
		\caption*{$x_1 + x_2$}
	\end{subfigure}%
	\begin{subfigure}{.25\textwidth}
		\centering
		\includegraphics[width=1\linewidth]{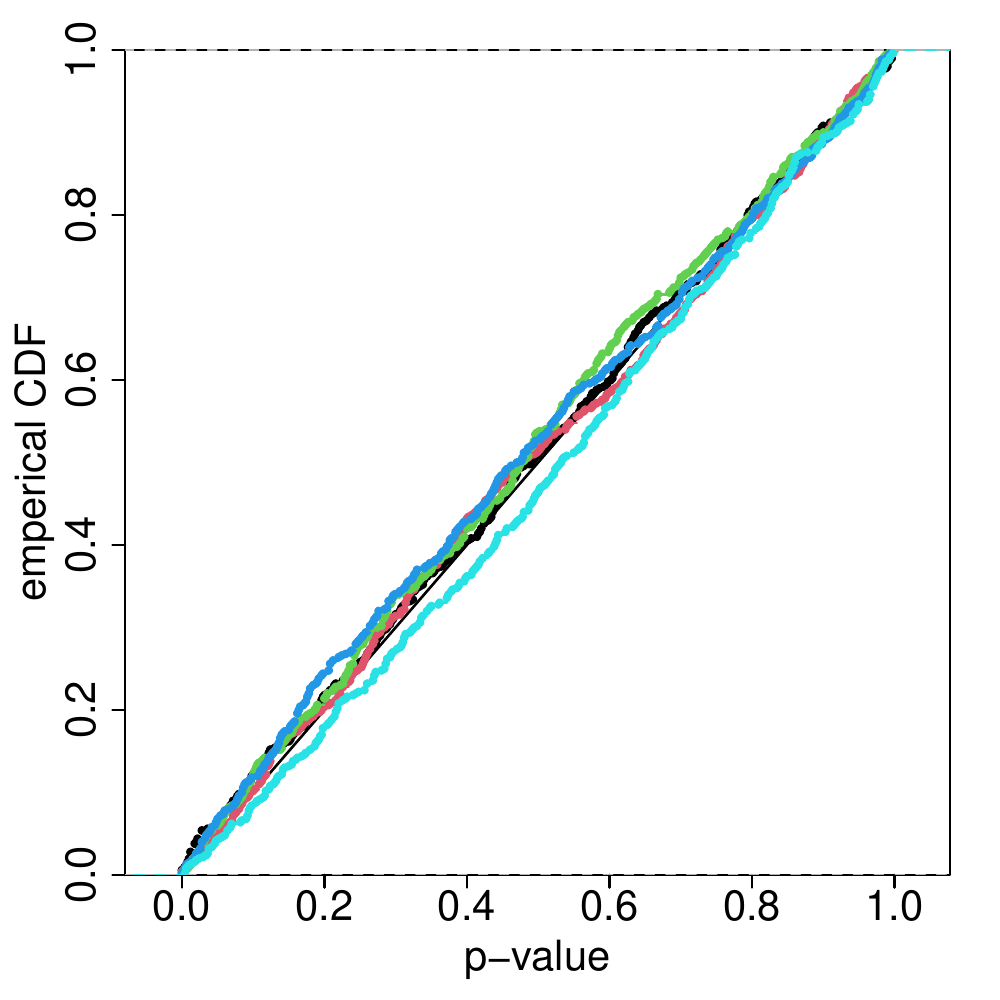}
		\caption*{$x_1 x_2$}
	\end{subfigure}%
	\begin{subfigure}{.25\textwidth}
		\centering
		\includegraphics[width=1\linewidth]{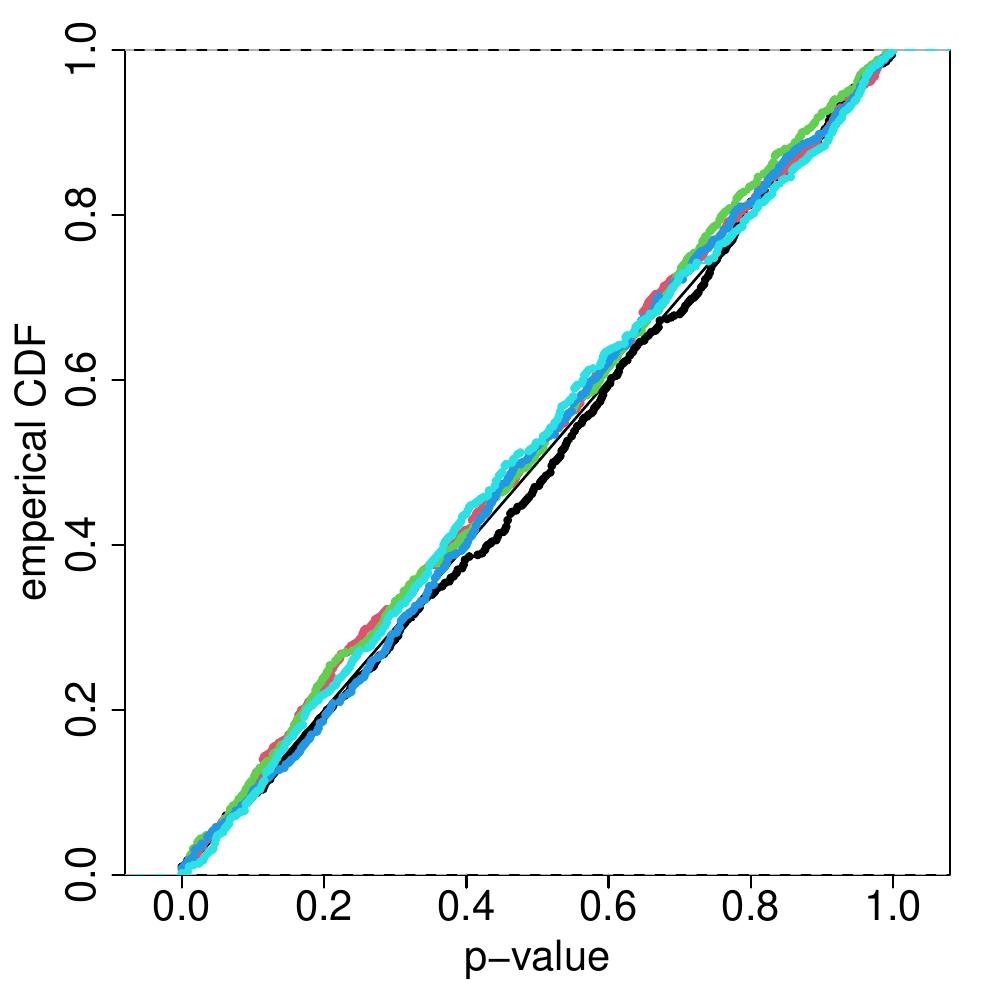}
		\caption*{\footnotesize {$(x_1+x_2)^2 - \frac{(x_1+x_2)}{4}$}}
	\end{subfigure}
	\begin{subfigure}{.25\textwidth}
		\centering
		\includegraphics[width=1\linewidth]{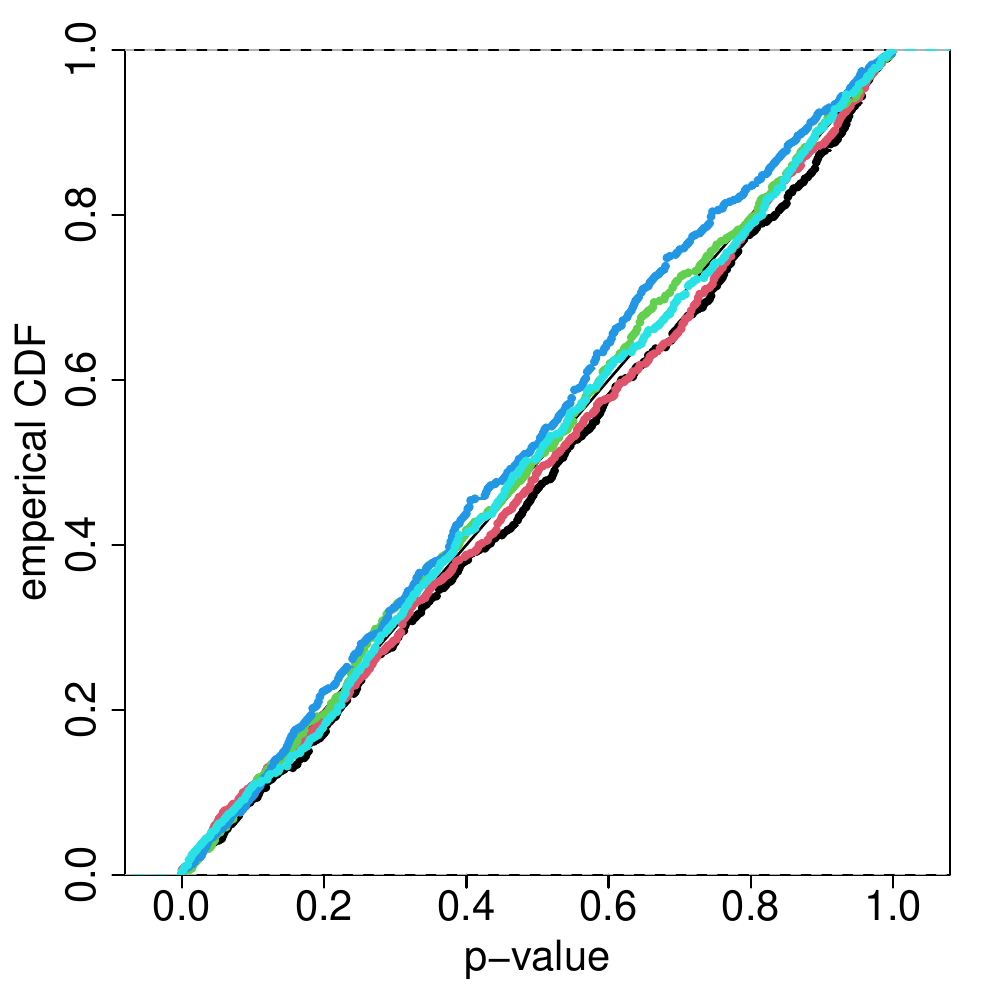}
		\caption*{{$(1+x_1^2+x_2^2)^{-1}$}} 
	\end{subfigure}
	\begin{subfigure}{.25\textwidth}
		\centering
		\includegraphics[width=1\linewidth]{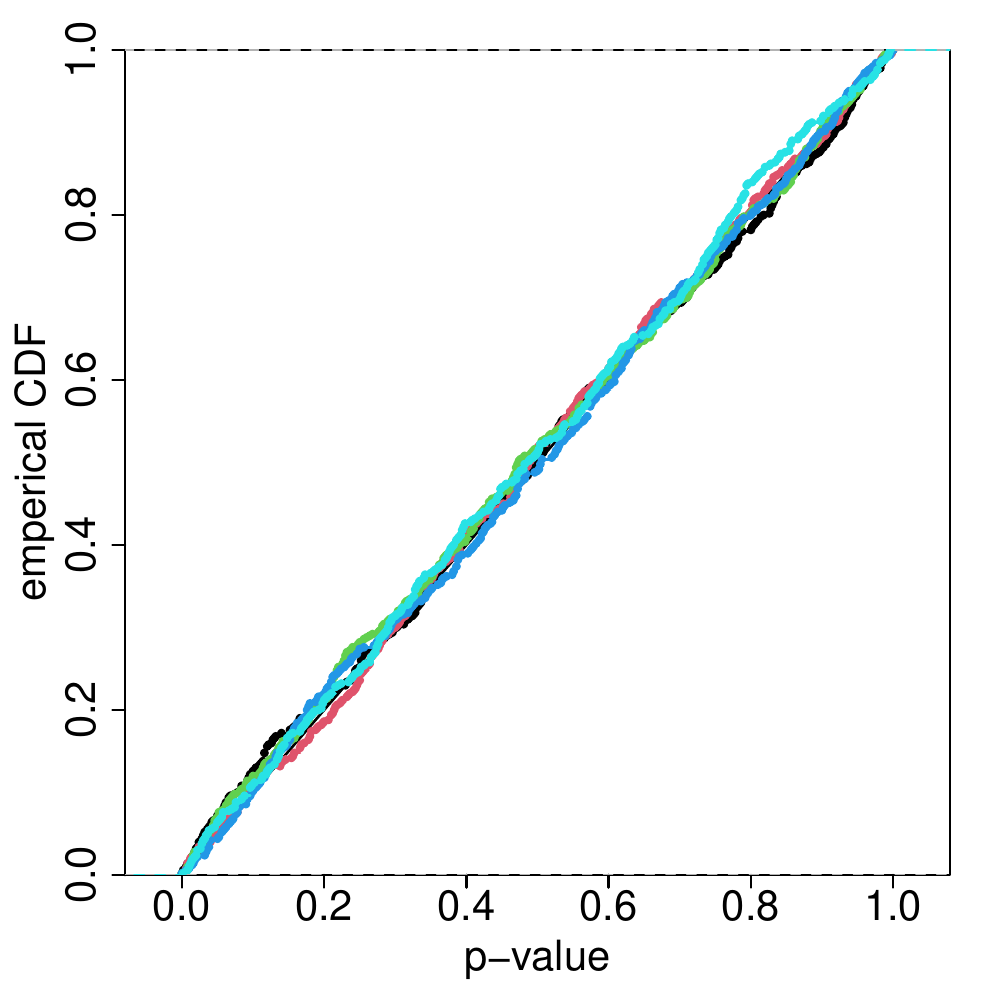}
		\caption*{$\sin(6x_1)+x_2$} 
	\end{subfigure}%
	\begin{subfigure}{.25\textwidth}
	\centering
	\includegraphics[width=1\linewidth]{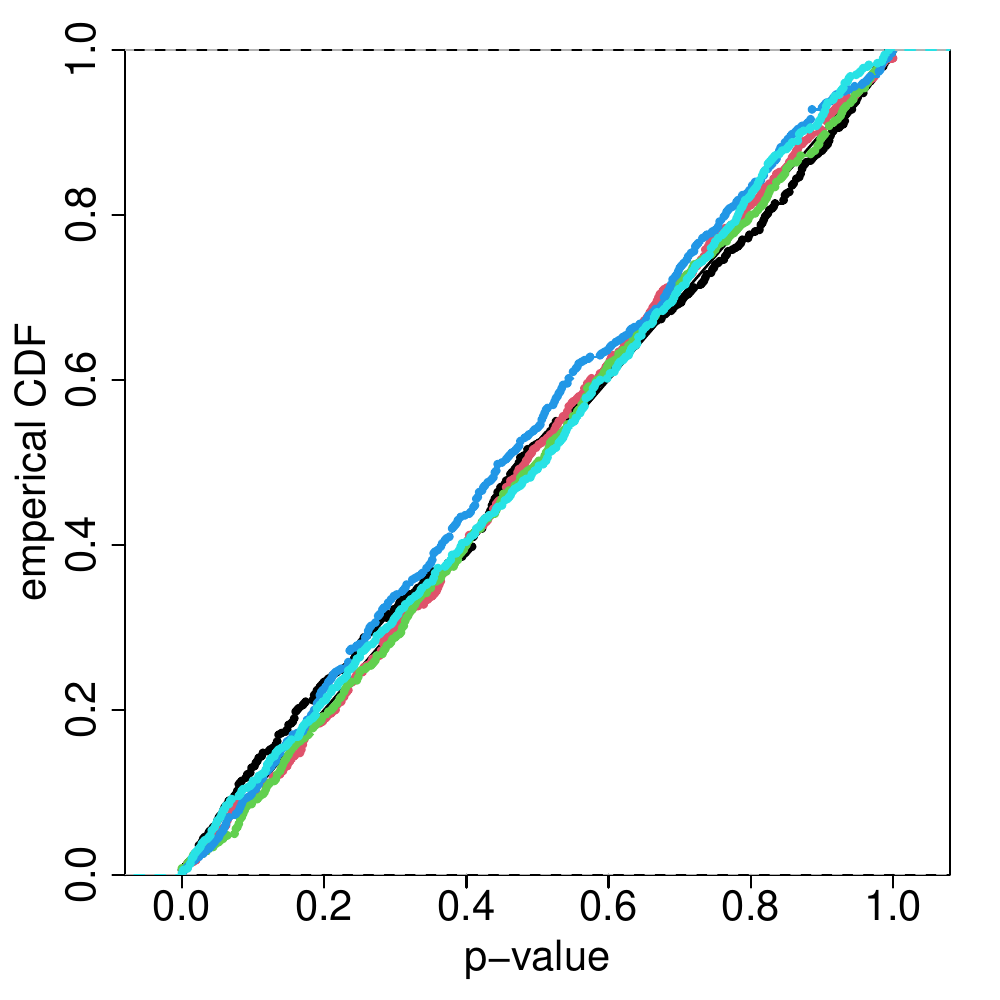}
	\caption*{ $\sin(6x_1+6x_2)$ } 
	\end{subfigure}
	\caption{
	Empirical distributions of the partial permutation $p$-values when data are generated from Scenario 2 in \eqref{eq:model_two_cov} under all cases in Table \ref{tab:balance} with sample size $n=200$. 
	The six figures correspond to six choices of the underlying function $f_0$ in \eqref{eq:choice_f_two_cov}. 
    }
      \label{S2_0_5_GP}
\end{figure}

\subsection{Simulation under the null hypothesis with non-smooth functions}\label{sec:simu_non_smooth}

In the previous two subsections we focus on null hypothesis with smooth functions. Here we consider 
the following continuous but non-differentiable univariate function: 
\begin{align}\label{eq:non_smooth_fun}
g_0(x) = 2 * \min\{|3x-\lfloor 3x \rfloor|, |3x-\lfloor 3x \rfloor-1|\} \cdot (\lfloor 3x \rfloor \text{ mod } 2 + 1) - 1, 
\end{align}
where $\lfloor 3x \rfloor$ denotes the largest integer less than or equal to $3x$ and $( \lfloor 3x \rfloor \text{ mod } 2 )$ denotes the remainder of $\lfloor 3x \rfloor$ divided by $2$. Figure \ref{fig:non_smooth_fun}(a) 
shows the shape of $g_0(x)$.

We consider simulations from model \eqref{eq:scalar_null} with a single covariate and function $f_0(x) = g_0(x)$, 
and from model \eqref{eq:model_two_cov} with two covariates and function $f_0(x_1, x_2) = g_0(x_1) g_0(x_2)$, with sample size $n=200$. 
Figures \ref{fig:non_smooth_fun}(b) and (c) show the empirical distributions of the partial permutation $p$-values, for models \eqref{eq:scalar_null} and \eqref{eq:model_two_cov} respectively, under the five cases with varying imbalance in covariate distributions and group sizes as shown in Table \ref{tab:balance},
which 
demonstrate that the type-I error is still approximately controlled. 
Note that, with two-dimensional covariates, the distributions of the partial permutation $p$-values are quite different from Unif$(0,1)$, and the $p$-values appear to be slightly conservative at significance levels higher than 0.3. 
The reason is that, over all simulations, about $25\%$ of the time the partial permutation test has permutation size 1 and thus results in $p$-value equal to 1. 
Such extreme permutation size is due to the non-smoothness of the underlying functional relationship, under which we lack enough permutation size as well as power for rejecting the null hypothesis. 
This is also intuitive as it is difficult to distinguish whether the multiple groups in comparison share the same functional relation if the underlying function is very non-smooth. 
In such cases, a conservative $p$-value is preferred so as to avoid inflating the type-I error. 

\begin{figure}[h]
	\centering
		\includegraphics[width=0.8\linewidth]{plots/null_balance_legend_1031}
	\begin{subfigure}{.25\textwidth}
		\centering
		\includegraphics[width=1\linewidth]{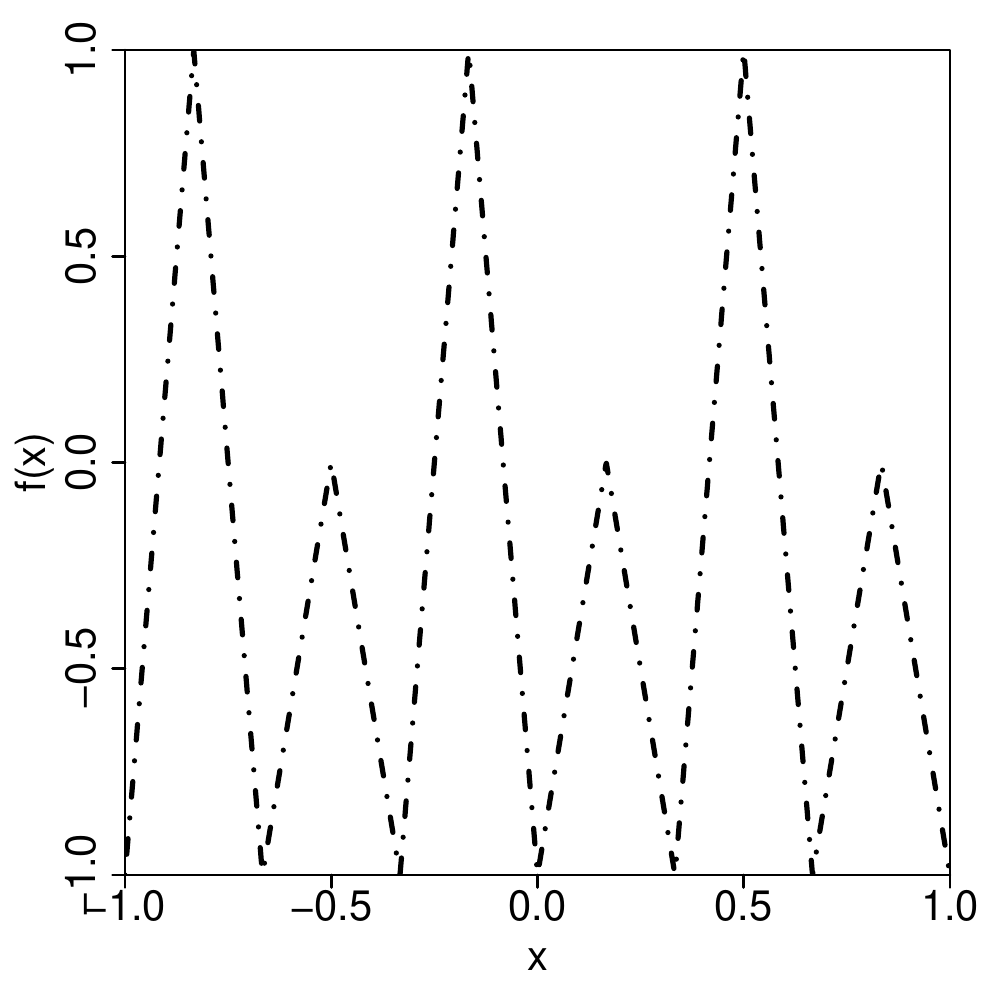}
		\caption{}
	\end{subfigure}%
	\begin{subfigure}{.25\textwidth}
		\centering
		\includegraphics[width=1\linewidth]{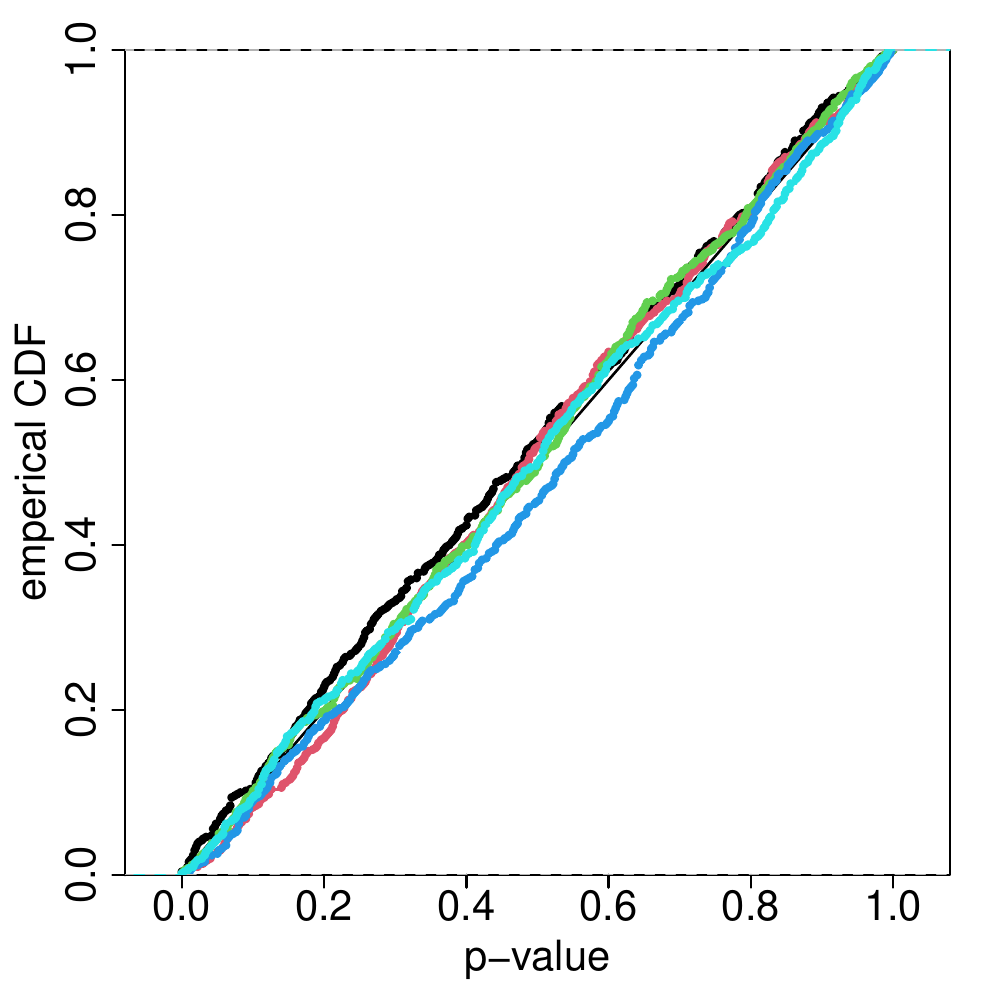}
		\caption{}
	\end{subfigure}%
	\begin{subfigure}{.25\textwidth}
		\centering
		\includegraphics[width=1\linewidth]{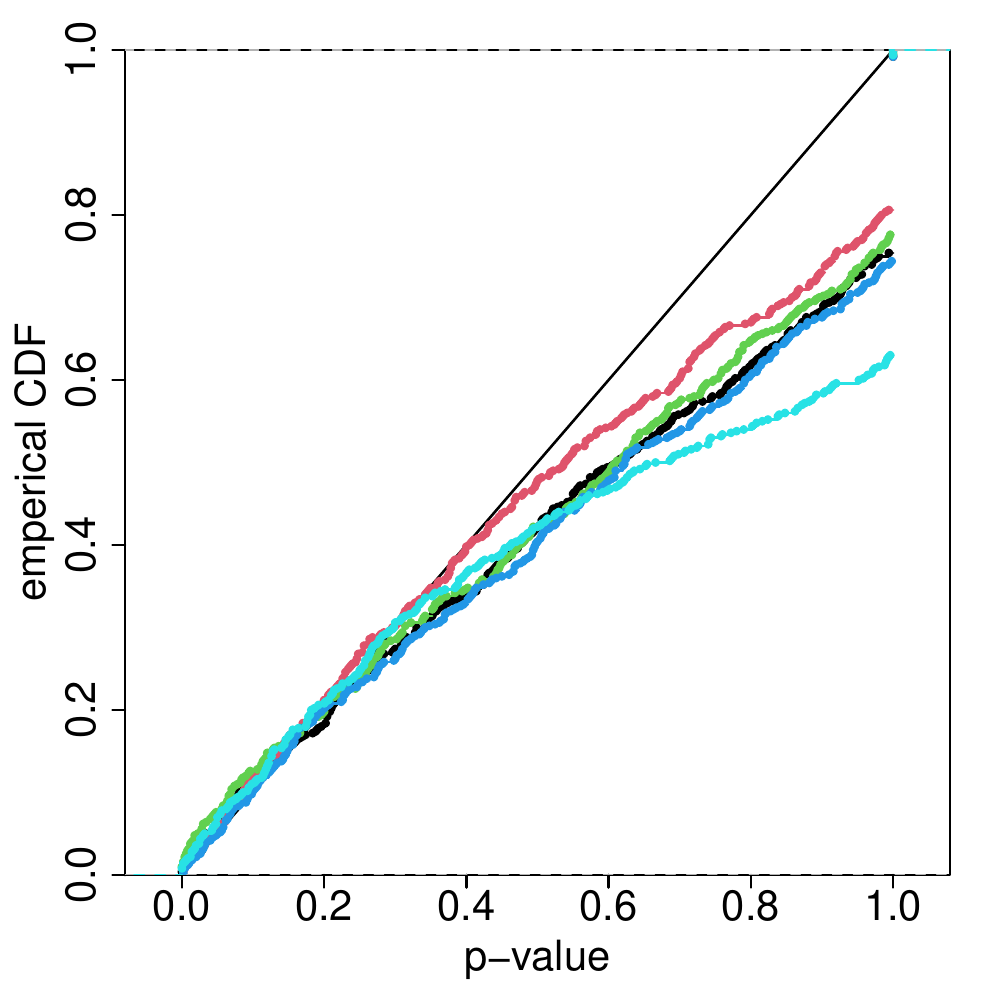}
		\caption{}
	\end{subfigure}
	\caption {(a) Plot  of function $g_0(x)$ in \eqref{eq:non_smooth_fun}.
(b) and (c): Empirical distributions of the partial permutation $p$-values when data are generated from models \eqref{eq:scalar_null} with underlying function $g_0(x)$, 
	and model \eqref{eq:model_two_cov} with the underlying function $g_0(x_1)g_0(x_2)$, respectively. 
    }\label{fig:non_smooth_fun}
\end{figure}

\subsection{Power comparison with the classical F-test under the alternative  hypotheses}\label{sec:power_Ftest}
We generate $\iid$ samples from  the following two data generating scenarios (under alternative hypotheses) with one- and two-dimensional covariates:
\begin{align}\label{eq:alternative_scalar}
    \text{Scenario 3:} \quad & Y = f_{Z}(X) + \varepsilon,\ \  \varepsilon \mid X, Z \sim \mathcal{N}(0,\sigma_0^2),
    \nonumber
    \\
    & X \mid Z \sim \text{Unif}(-1,1),
    \nonumber
    \\
    & P(Z = h)=p_{h}, \quad h=1,2,
\end{align}
and %
\begin{align}\label{eq:alternative_two_dim}
    \text{Scenario 4:} \quad & Y = f_{Z}(X_1, X_2) + \varepsilon, \ \  \varepsilon \mid \bs{X}, Z \sim \mathcal{N}(0,\sigma_0^2), \nonumber \\
& X_{k} \mid Z \sim \text{Unif}(-1,1), \quad k=1,2, \nonumber \\
& X_{1}\ind X_{2} \mid Z, \ \ P(Z=h) = p_h, \quad h=1,2.
\end{align}
For Scenario 3, we consider the following three choices of $(f_1,f_2)$:
\begin{align}\label{eq:alter_f_scalar}
\begin{tabular}{lll}
$\text{(i)}~ f_1 = 1+x,$ \qquad \quad
&
$f_2 = 2+3x,$ 
\\
$\text{(ii)}~ f_1  = 1/3 + x/2,$ \qquad \quad
&
$f_2 = (x+1)^2/4,$ 
\\
$\text{(iii)}~ f_1 = 1/3 + x/2,$ \qquad \quad
&
$f_2 = 1/5 + x/2 - x^4 + x^2;$ 
\end{tabular}
\end{align}
For Scenario 4, 
we consider the following three choices of $(f_1,f_2)$: 
\begin{align}\label{eq:alter_f_two_scalar}
\begin{tabular}{lll}
$\text{(iv)}~ f_1 = 1 + x_1 + x_2,$ \qquad 
\quad
&
$f_2 = 2 + 3x_1 + x_2,$ 
\\
$\text{(v)}~ f_1  = 1/3 + x_1/2 + x_2/2,$ \qquad \quad
&
$f_2 = (x_1+1)^2/4 + (x_2+1)^2/4 - 1/3,$ 
\\
$\text{(vi)}~ f_1 = 1/3 + x_1/2 + x_2/2,$ \qquad \quad
&
$f_2 = 1/3 + x_1/2 + x_2/2 + \sin(\pi x_1) \cdot \sin(\pi x_2).$ 
\end{tabular}
\end{align}

We conduct partial permutation test using either the Gaussian or polynomial kernels. 
For the Gaussian kernel, we 
consider three choices of test statistics, 
the likelihood ratio \eqref{eq:lik_ratio} of $\tilde{H}_1$ against $\tilde{H}_0$, the pseudo likelihood ratio of $\tilde{H}_{\text{pseudo}}$ against $\tilde{H}_0$, 
and \eqref{eq:test_mse} based on the mean squared errors from the pooled and group-specific kernel regression, 
and choose the permutation size based on $\tilde{H}_0$ as discussed in Section \ref{sec:choice_size}. 
For polynomial kernels, we consider degree $p$ of 1, 2 and 3, use the likelihood ratio of the model where the underlying functions are polynomial of degree up to $p$ and can vary across groups against that with the same polynomial function of degree up to $p$ across all groups, and choose the permutation size based on Theorem \ref{poly_permutation_pval_valid}. 
We also consider the classical F-test or equivalently the likelihood ratio test for whether the functions for different groups are the same polynomial function of degree $p$, for $p=1,2,3$. 
Here, the F-test is considered to be most powerful as long as the polynomial regression model is true within each group and does not include unnecessary higher order terms.

\begin{figure}[h]
	\centering
		\includegraphics[width=0.8\linewidth]{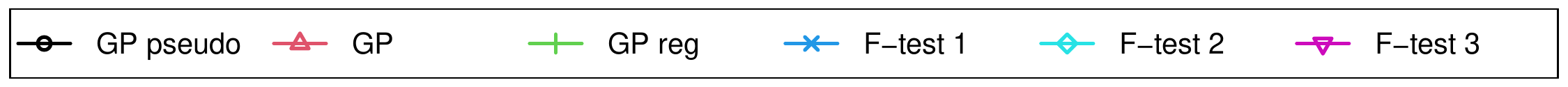}
	\begin{subfigure}{.25\textwidth}
		\centering
		\includegraphics[width=1\linewidth]{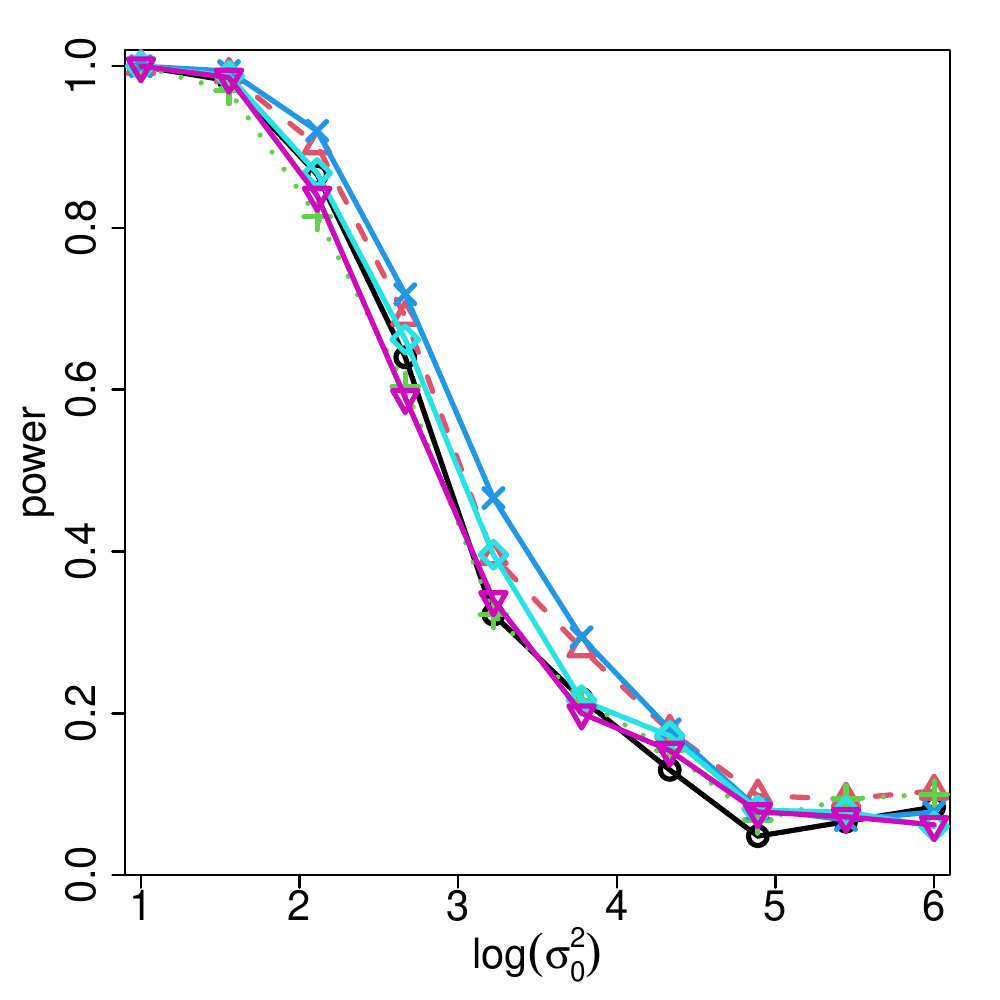}
		\caption*{(i)}
	\end{subfigure}%
	\begin{subfigure}{.25\textwidth}
		\centering
		\includegraphics[width=1\linewidth]{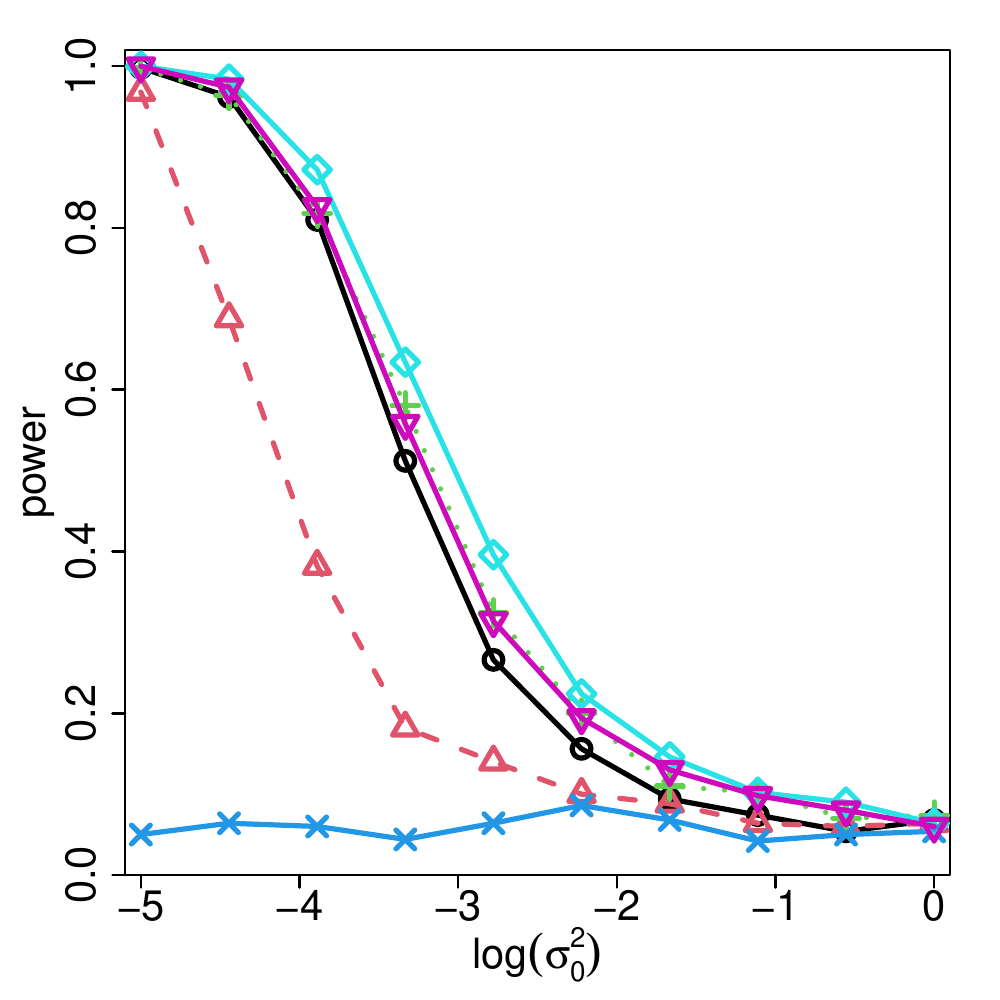}
		\caption*{(ii)}
	\end{subfigure}%
	\begin{subfigure}{.25\textwidth}
		\centering
		\includegraphics[width=1\linewidth]{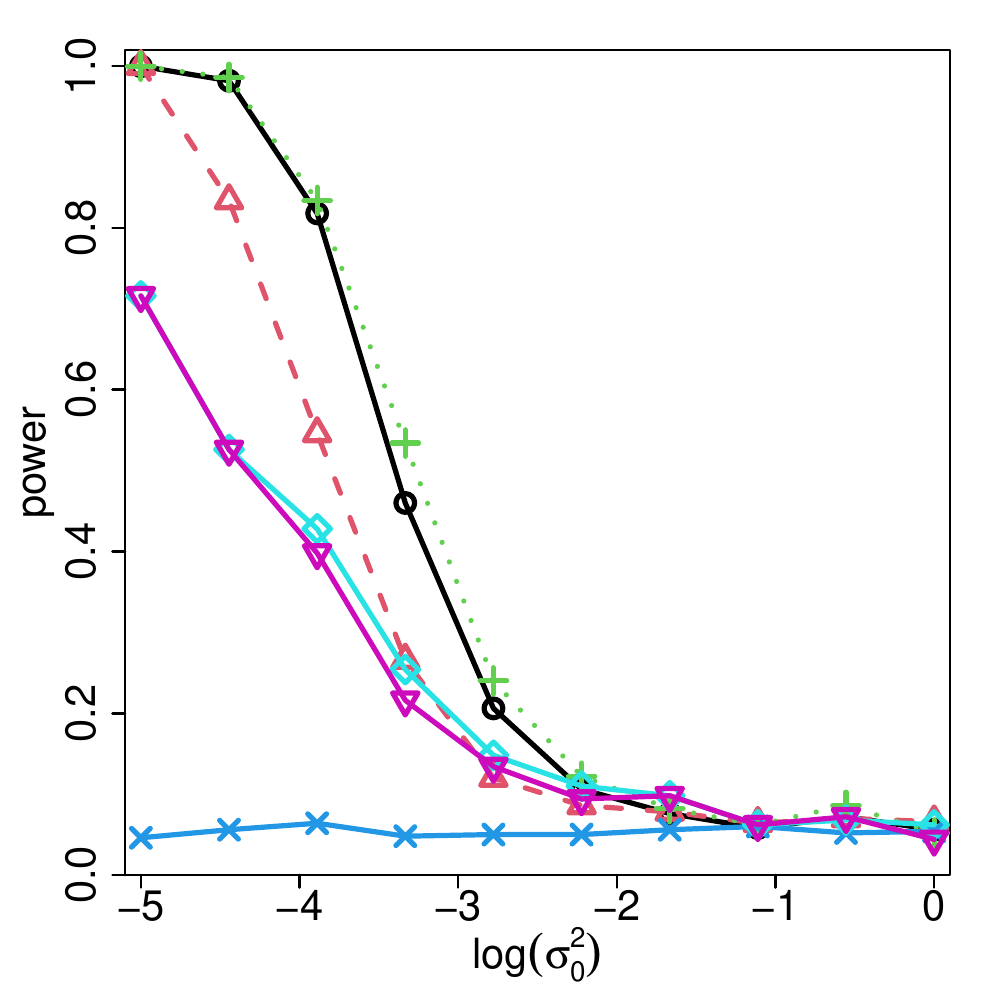}
		\caption*{(iii)}
	\end{subfigure}
	\begin{subfigure}{.25\textwidth}
		\centering
		\includegraphics[width=1\linewidth]{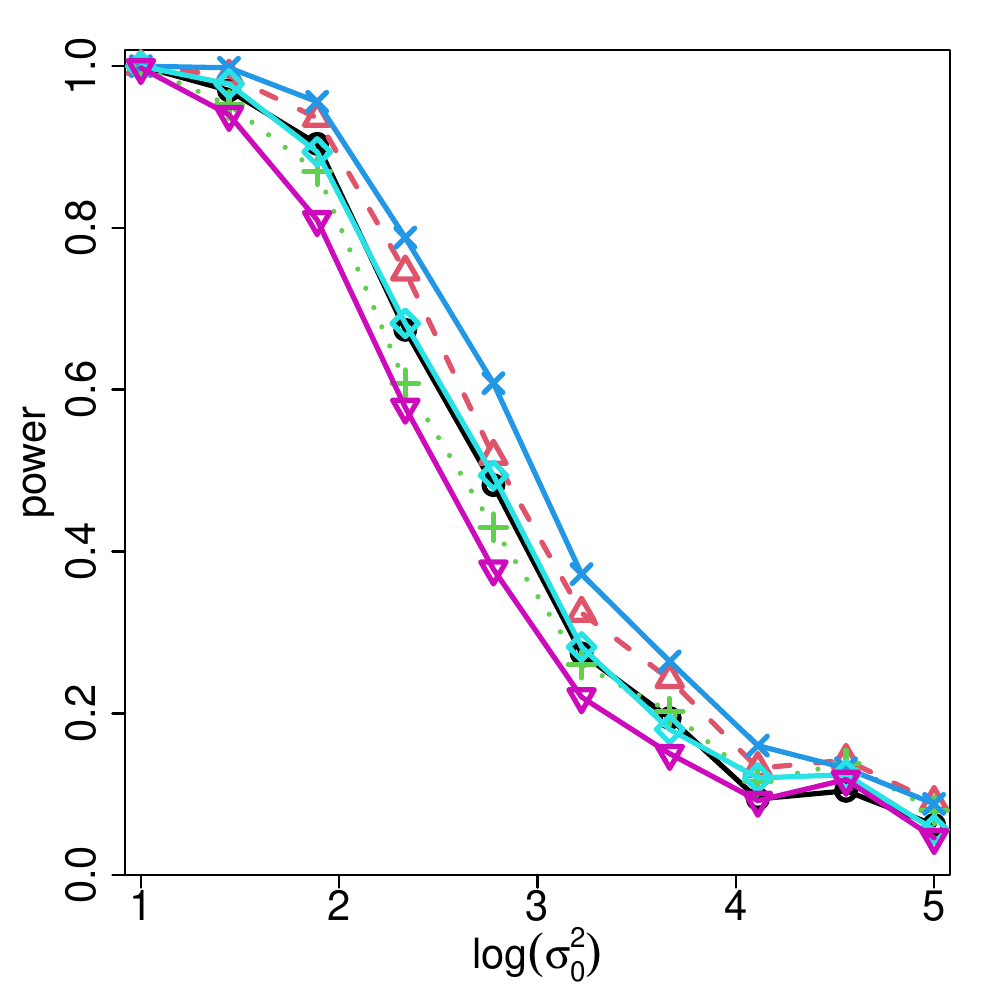}
		\caption*{(iv)} 
	\end{subfigure}
	\begin{subfigure}{.25\textwidth}
		\centering
		\includegraphics[width=1\linewidth]{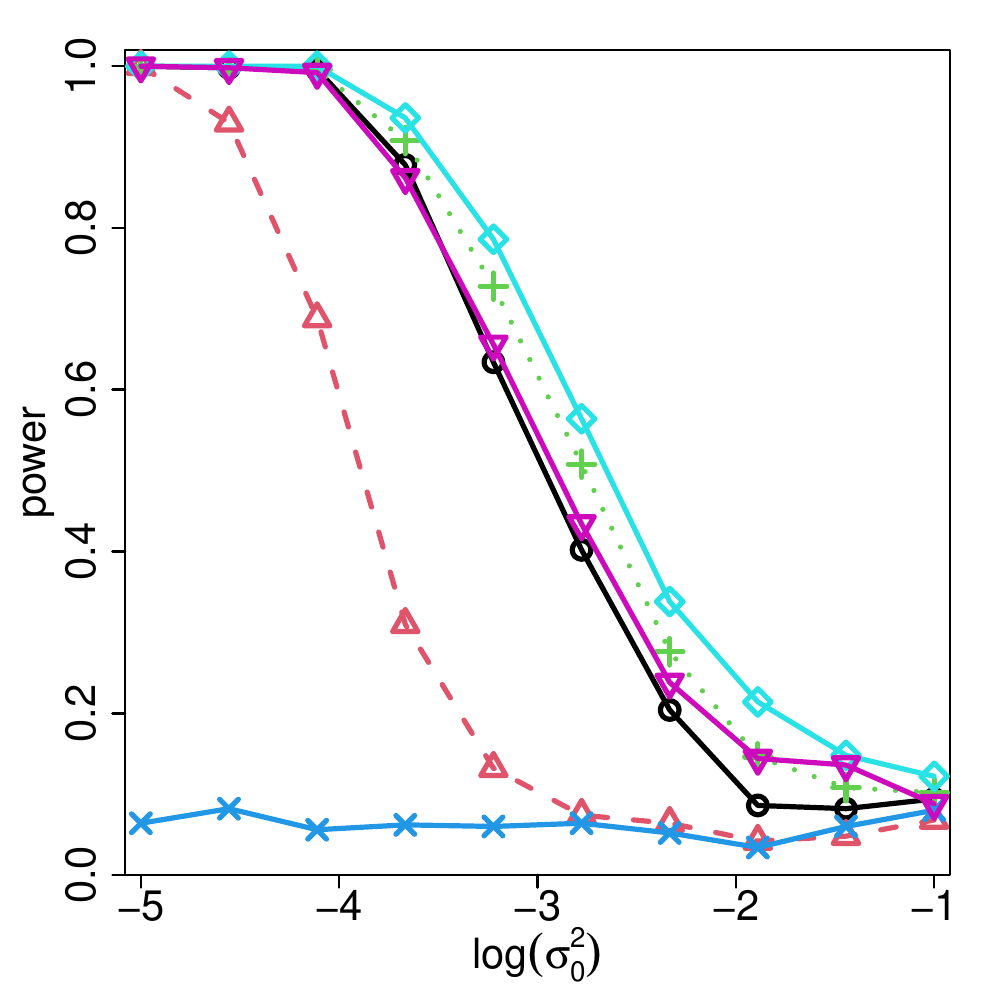}
		\caption*{(v)} 
	\end{subfigure}%
	\begin{subfigure}{.25\textwidth}
	    \centering
	    \includegraphics[width=1\linewidth]{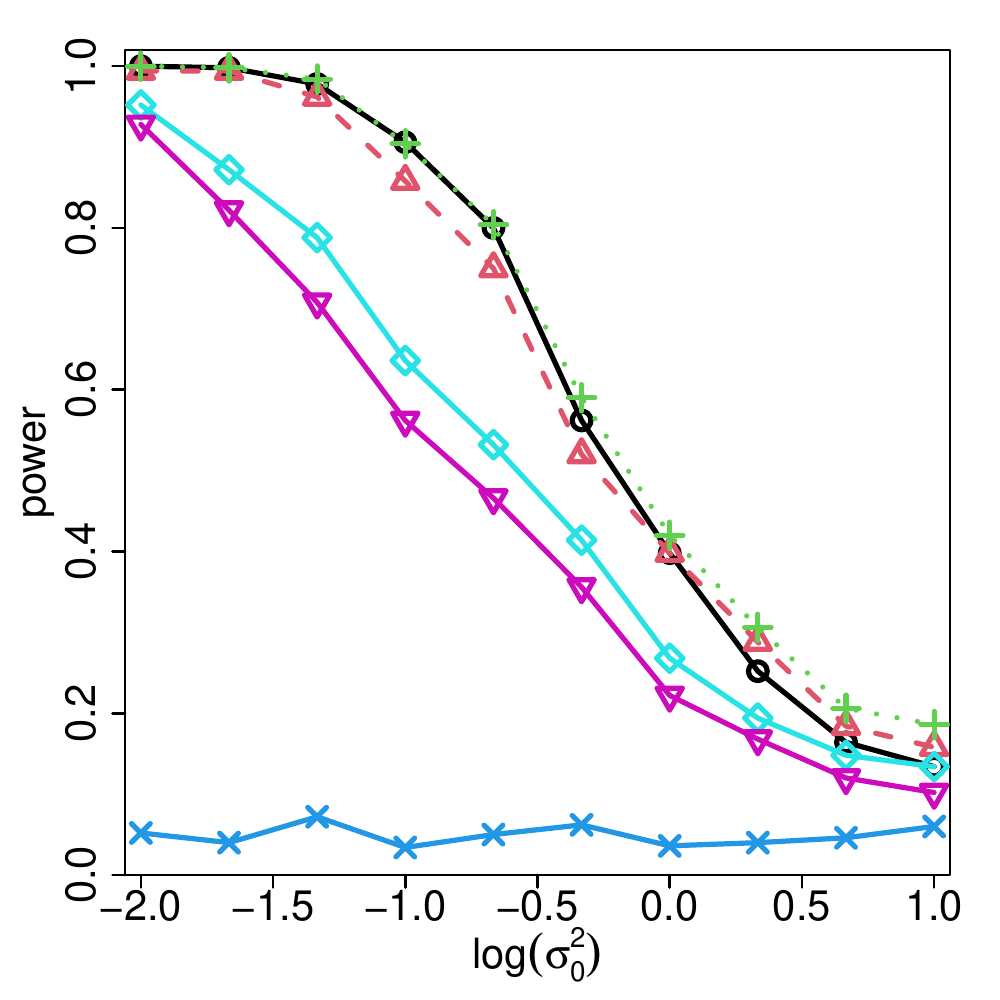}
	\caption*{(vi)}
	\end{subfigure}
	\caption{
	Power of the partial permutation tests when data are generated from Scenario 3 in \eqref{eq:alternative_scalar} and Scenario 4 in \eqref{eq:alternative_two_dim} with sample size $n=200$. 
	The six figures correspond to six choices of the underlying functions $f_1$ and $f_2$ in \eqref{eq:alter_f_scalar} and \eqref{eq:alter_f_two_scalar}. The partial permutation tests here use Gaussian kernel with three choices of test statistics, 
	the likelihood ratio \eqref{eq:lik_ratio} of $\tilde{H}_1$ against $\tilde{H}_0$ (denoted by GP), the pseudo likelihood ratio of $\tilde{H}_{\text{pseudo}}$ against $\tilde{H}_0$ (denoted by GP pseudo), 
    and \eqref{eq:test_mse} based on the mean squared errors (denoted by GP reg). 
    The F-tests test whether the functions for different groups are the same polynomial functions of degree p (denoted by F-test $p$), for $p=1,2,3$.
	}\label{S34_0_5_GP}
\end{figure}

Figure \ref{S34_0_5_GP} 
shows the power of different tests. 
Since the partial permutation tests using polynomial kernels have almost the same power as the corresponding F-tests,
which is not surprising given Theorem \ref{thm:general_F_test}, they are omitted in Figure \ref{S34_0_5_GP}.
As shown in Figures \ref{S34_0_5_GP}(i), (ii), (iv) and (v), 
when the underlying functions are indeed polynomial, 
the F-test with the correct degrees of freedom is the most powerful one. 
However, as suggested by Figures \ref{S34_0_5_GP}(ii) and (v), if we fail to include some higher order terms, it is possible that the F-tests have almost no power to detect the functional heterogeneity across two groups. 
Furthermore, the powers of the partial permutation test using  the Gaussian kernel 
with either test statistic \eqref{eq:test_mse} or the pseudo likelihood ratio statistic are similar and are also close to that of the corresponding most powerful F-test, although the gap seems to increase with the dimension of the covariates. They both performed better than that with the likelihood ratio statistic of $H_1$ versus $H_0$, 
partly because the former two consider different noise variances in different groups. 
Finally, 
as shown in Figures \ref{S34_0_5_GP}(iii) and (vi), 
when the underlying functions contain either higher-order or non-polynomial terms, 
the partial permutation test using Gaussian kernel can have a much higher power than 
the classical F-test.

\subsection{Power comparison with other nonparametric methods under balanced covariates}\label{sec:power_parallel}

Our partial permutation test focuses on whether samples from different groups share the same functional relationship.
This is closely related to the literature focusing on whether different groups share parallel functional relationship \citep{wu2012parallel, xin2020}. 
Specifically, with exactly balanced covariates as in \eqref{eq:balanced} and centered response within each group (assuming the true average function values within each group is known), 
the groups in comparison shares parallel functional relation if and only if they share the same functional relation. 
Following \citet{xin2020}, we generate data from the following model:  
\begin{align}\label{eq:parallel_model}
    \text{Scenario 5:} \quad & Y_i = f_{Z_i}(X_i) + \varepsilon_i, \quad  \varepsilon_i \mid X_i, Z_i \ \overset{\iid}{\sim} \ \mathcal{N}(0,\sigma_0^2),
    \nonumber
    \\
    & X_{i} \equiv X_{n/2+i} \ \overset{\iid}{\sim} \  \text{Unif}(0,1), \quad (1\le i \le n/2)
    \nonumber
    \\
    & Z_1 = \ldots = Z_{n/2} = 1, \quad 
    Z_{n/2+1} =\ldots = Z_{n} = 2. 
\end{align}
and consider the following two choices in which the two groups share neither the same nor parallel functional relationships: 
\begin{align}\label{eq:parallel_f_choice}
\begin{tabular}{lll}
$\text{(i)}~ f_1 = 2.5 \cdot \sin (3\pi x) \cdot (1-x) - m_1,$ \qquad \quad
&
$f_2 = 3.5 \cdot \sin (3\pi x) \cdot (1-x) - m_2,$ 
\\
$\text{(ii)}~ f_1  = 2.5 \cdot \sin (3\pi x) \cdot (1-x) - m_1,$ \qquad \quad
&
$f_2 = 2.5 \cdot \sin(3.4 \pi x) \cdot (1-x) - m_3,$ 
\end{tabular}
\end{align}
To make the comparison fairer, we choose  constants $m_1, m_2$ and $m_3$ such that each function has mean zero, i.e., $\E (f_k(X)) = 0$ with $X\sim \text{Unif}(0, 1)$, which helps avoid the partial permutation test to gain additional power by the mean shift. 
Tables \ref{tab:parallel_1} and \ref{tab:parallel_2} show the power of the partial permutation test using the pseudo likelihood ratio as the test statistic and that of the minimax nonparametric parallelism test  in \citet{xin2020}, which was shown to be superior to other %
tests in the literature under similar simulation settings. 
For Tables \ref{tab:parallel_1} and \ref{tab:parallel_2}, we let the
sample size $n=200$, 
 the noise level $\sigma_0^2$ vary in [0.01, 4.5], and the significance level be fixed at 0.05. 

Tables \ref{tab:parallel_1} and \ref{tab:parallel_2} show that, although the parallelism test has a better power, its type-I error is significantly inflated. In contrast,
the partial permutation test controls its type-I errors well at the nominal level.   
After correcting the type-I error by using the $0.05$ quantile of the null distribution (i.e., the functions in both groups are the same, as $f_1$ in \eqref{eq:parallel_f_choice}) of the $p$-value as the threshold, the power of the two tests %
becomes similar. 
We further increase the sample size to $n=500$ and $1000$. 
As shown in Table \ref{tab:n5001000_parallel},
type-I errors of  the partial permutation test are always well controlled, whereas those  of the parallelism test are still inflated but are closer to the nominal level as the sample size increases. The two tests always have similar powers after the type-I error correction.

\begin{table}[htb]
    \centering
    \caption{
    Comparison between the parallelism and partial permutation tests. Data are generated from model
    in \eqref{eq:parallel_model} with functions in \eqref{eq:parallel_f_choice} (i) and sample size $n=200$. The heading row indicates  various noise levels.
    PPT and
    PPT+Parallel refer to the partial permutation tests  using the pseudo likelihood ratio and the minus $p$-values from the parallelism test, respectively, as test statistics. 
    }\label{tab:parallel_1}
    \resizebox{\textwidth}{!}{
    \begin{tabular}{cccccccccccc}
    \toprule
    Result & Method & 
    0.01 & 0.50 & 1.00 & 1.50 & 2.00 & 2.50 & 3.00 & 3.50 & 4.00 & 4.50
    \\
    \midrule
    Power & Parallel & 
     1.000 & 0.920 & 0.624 & 0.484 & 0.378 & 0.312 & 0.300 & 0.270 & 0.244 & 0.196
    \\
    under
    &  PPT & 
    1.000 & 0.824 & 0.482 & 0.342 & 0.280 & 0.216 & 0.200 & 0.156 & 0.172 & 0.152
    \\
    $H_1$ &  PPT + Parallel & 
    1.000 & 0.836 & 0.490 & 0.352 & 0.256 & 0.188 & 0.206 & 0.168 & 0.162 & 0.114
    \\
    \midrule
    Type I  & Parallel & 
    0.146 & 0.130 & 0.138 & 0.090 & 0.112 & 0.070 & 0.110 & 0.092 & 0.112 & 0.078
    \\
    error 
    &  PPT  &  
    0.044 & 0.056 & 0.058 & 0.046 & 0.046 & 0.036 & 0.044 & 0.042 & 0.050 & 0.042
    \\
    under $H_0$ &  PPT + Parallel & 
    0.062 & 0.062 & 0.054 & 0.040 & 0.064 & 0.040 & 0.036 & 0.040 & 0.040 & 0.032
    \\
    \midrule
    Corrected & Parallel & 
    1.000 & 0.838 & 0.476 & 0.376 & 0.242 & 0.238 & 0.230 & 0.204 & 0.186 & 0.140
    \\
    Power &  PPT & 
    1.000 & 0.812 & 0.468 & 0.346 & 0.296 & 0.266 & 0.216 & 0.192 & 0.172 & 0.166
    \\
    under $H_1$ &  PPT + Parallel & 
    1.000 & 0.798 & 0.484 & 0.394 & 0.242 & 0.248 & 0.230 & 0.202 & 0.176 & 0.150
    \\
    \bottomrule
    \end{tabular}%
    }
\end{table}

Note that the partial permutation test allows for an arbitrary choice of the test statistic. As shown in Tables~\ref{tab:parallel_1} and \ref{tab:parallel_2}, we also 
use
the minus $p$-value from the parallelism test as our test statistic. 
From Tables \ref{tab:parallel_1} and \ref{tab:parallel_2}, the resulting type-I error is well controlled and the power is similar to the original parallelism test after correcting the inflated type-I error.  In practice, however, 
such 
type-I error corrections cannot be easily achieved since  the underlying true functions are unknown.  We may use the distribution from the partial permutation as a reference null distribution to calibrate the $p$-value from the parallelism test. 

Similar to other permutation-based method, our partial permutation test relies on permutations to generate the reference distribution instead of a closed-form asymptotic approximation, and thus requires more computation. 
Averaging over all simulations for Tables \ref{tab:parallel_1} and \ref{tab:parallel_2} with $n=200$, 
the parallelism test, ``PPT'', and ``PPT+Parellel'' took   $0.39$, $34.57$, and  $269.17$ seconds, respectively.
For Table \ref{tab:n5001000_parallel} with sample size $n=500$ and $1000$, on average, the parallelism test took $3.05$ and $21.29$ seconds, while 
the ``PPT''
took $61.23$ and $404.22$ seconds. 
The issue of computational cost for the permutation method can be mitigated by parallelizing the calculation of the test statistic over permutations.

\begin{table}[htb]
    \centering
    \caption{Comparison between the parallelism  and partial permutation tests. Data are generated from  \eqref{eq:parallel_model} with functions in \eqref{eq:parallel_f_choice}(ii) and sample size $n=200$. 
    The description of the table is the same as that of Table \ref{tab:parallel_1}. 
    }\label{tab:parallel_2}
    \resizebox{\textwidth}{!}{
    \begin{tabular}{cccccccccccc}
    \toprule
    Result & Method & 
    0.01 & 0.50 & 1.00 & 1.50 & 2.00 & 2.50 & 3.00 & 3.50 & 4.00 & 4.50
    \\
    \midrule
    Power & Parallel & 
    1.000 & 0.886 & 0.642 & 0.440 & 0.354 & 0.286 & 0.262 & 0.218 & 0.208 & 0.188
    \\
    under
    &  PPT & 
    1.000 & 0.764 & 0.482 & 0.288 & 0.202 & 0.160 & 0.178 & 0.098 & 0.122 & 0.092
    \\
    $H_1$ &  PPT + Parallel & 
    1.000 & 0.764 & 0.486 & 0.290 & 0.232 & 0.170 & 0.156 & 0.122 & 0.126 & 0.102
    \\
    \midrule
    Type I  & Parallel & 
    0.144 & 0.128 & 0.140 & 0.092 & 0.108 & 0.076 & 0.112 & 0.096 & 0.110 & 0.082
    \\
    error 
    &  PPT & 
    0.046 & 0.056 & 0.050 & 0.050 & 0.044 & 0.034 & 0.044 & 0.036 & 0.050 & 0.048
    \\
    under $H_0$ &  PPT + Parallel & 
    0.062 & 0.062 & 0.056 & 0.042 & 0.058 & 0.042 & 0.036 & 0.040 & 0.038 & 0.038
    \\
    \midrule
    Corrected & Parallel & 
    1.000 & 0.764 & 0.480 & 0.308 & 0.200 & 0.198 & 0.172 & 0.144 & 0.146 & 0.138
    \\
    Power &  PPT & 
    1.000 & 0.752 & 0.482 & 0.288 & 0.224 & 0.206 & 0.190 & 0.126 & 0.122 & 0.100
    \\
    under $H_1$ &  PPT + Parallel & 
    1.000 & 0.730 & 0.474 & 0.320 & 0.192 & 0.210 & 0.176 & 0.136 & 0.142 & 0.156
    \\
    \bottomrule
    \end{tabular}}%
\end{table}

\begin{table}[htb]
    \centering
    \caption{Comparison between the parallelism and partial permutation tests with sample sizes $n=500$ and $1000$. Data are generated from \eqref{eq:parallel_model} with functions  in \eqref{eq:parallel_f_choice}(i).
    The description of the table is the same as that of Table \ref{tab:parallel_1}, except that here we do not consider ``PPT+Parallel''. %
    }\label{tab:n5001000_parallel}
    \resizebox{\textwidth}{!}{
    \begin{tabular}{ccccccccccccc}
    \toprule
    $n$ & Result & Method & 
    0.01 & 0.50 & 1.00 & 1.50 & 2.00 & 2.50 & 3.00 & 3.50 & 4.00 & 4.50
    \\
    \midrule[0.8pt]
    $500$  & Power & Parallel & 1.000 & 1.000 &0.958 &0.838 &0.688 &0.624 &0.534 &0.452 &0.392 &0.384
    \\
    & under $H_1$
    &  PPT & 1.000 & 1.000 &0.892 &0.718 &0.580 &0.492 &0.424 &0.350 &0.300 &0.262
    \\
    \midrule
    & Type I error  & Parallel & 0.078 &0.086 &0.098 &0.082 &0.086 &0.082 &0.096 &0.068 &0.092 &0.09
    \\
    & under $H_0$
    &  PPT & 0.044 &0.054 &0.042 &0.046 &0.058 &0.070 &0.064 &0.050 &0.038 &0.05
    \\
    \midrule
    & Corrected Power  & Parallel & 1.000 &0.998 &0.936 &0.786 &0.612 &0.500 &0.462 &0.35 &0.290 &0.322
    \\
    & under $H_1$ &  PPT & 1.000 &0.998 &0.898 &0.732 &0.542 &0.442 &0.402 &0.35 &0.336 &0.282
    \\
    \midrule[0.8pt]

    $1000$ & Power & Parallel & 1.000 & 1.000 & 1.000 & 0.988 & 0.946 & 0.886 & 0.836 & 0.724 & 0.716 & 0.634
    \\
    & under $H_1$
    &  PPT & 1.000 & 1.000 & 1.000 & 0.964 & 0.892 & 0.812 & 0.752 & 0.652 & 0.604 & 0.494
    \\
    \midrule
    & Type I error  & Parallel & 0.064 & 0.078 & 0.074 & 0.072 & 0.080 & 0.078 & 0.074 & 0.088 & 0.074 & 0.066
    \\
    & under $H_0$
    &  PPT & 0.034 & 0.036 & 0.066 & 0.064 & 0.046 & 0.050 & 0.042 & 0.062 & 0.036 & 0.044
    \\
    \midrule
    & Corrected Power  & Parallel & 1.000 & 1.000 & 1.000 & 0.982 & 0.918 & 0.858 & 0.820 & 0.652 & 0.674 & 0.586
    \\
    & under $H_1$ &  PPT & 1.000 & 1.000 & 1.000 & 0.962 & 0.894 & 0.812 & 0.786 & 0.604 & 0.648 & 0.520
    \\
    \bottomrule
    \end{tabular}}%
\end{table}

\section{Application}\label{sec:app}
We apply the partial permutation test to a data set analyzed in \citet{pardo2007testing}, 
which consists of monthly expenditures of several Dutch households and the numbers of members in each households. The data set includes accumulated expenditures on food and total expenditures over the year (October 1986 to September 1987) for households with two members (159 in total), three members (45 in total) and four members (73 in total).

\begin{table}[htbp]
    \centering
    \caption{Partial permutation test $p$-values for comparing relationships between the expenditure on food and the total expenditure among households with different numbers of members.}
    \label{tab:real_data}
    \begin{tabular}{lccccccccc}
        \toprule
        Test statistic & \multicolumn{4}{c}{Comparison} & & \multicolumn{4}{c}{Comparison after truncation}
        \\
        \cline{2-5} \cline{7-10}
        & $(2, 3, 4)$ & $(2, 3)$ & $(3,4)$ & $(2,4)$ & &  $(2, 3, 4)$ & $(2, 3)$ & $(3,4)$ & $(2,4)$ \\
        \midrule
        \eqref{eq:lik_ratio} with $\tilde{H}_1$ vs $\tilde{H}_0$ & 0.002 & 0.050 & 0.908 & 0 & & 0.006 & 0.048 & 0.780 & 0\\
     \eqref{eq:lik_ratio} with $\tilde{H}_1'$ vs $\tilde{H}_0$ & 0 & 0.030 & 0.664 & 0.002 & & 0.004 & 0.064 & 0.860 & 0\\
        \eqref{eq:lik_ratio} with $\tilde{H}_{\text{pseudo}}$ vs $\tilde{H}_0$ & 0.002 & 0.006 & 0.498 & 0.002 & & 0 & 0.006 & 0.532 & 0\\
        \eqref{eq:test_mse}  & 0 & 0.002 & 0.346 & 0 & & 0 & 0 & 0.440 & 0
        \\
        \bottomrule
    \end{tabular}
\end{table}

Let $Y$ be the logarithm of the expenditure on food, 
 $X$ be the logarithm of the total expenditure, 
and $Z$ be the number of house members minus one (indicating the size of a family). 
To compare the relationship between $Y$ and $X$ among the three groups  defined by $Z$, 
we use the partial permutation test with Gaussian kernel 
after standardizing both covariates and outcomes, 
and choose the permutation size based on model $\tilde{H}_0$
as suggested in Section \ref{sec:choice_size} 
at significance level $\alpha=0.05$. 
We first test whether the same functional relationship between $Y$ and $X$ holds across all three groups, and then perform pairwise comparisons. 
Table \ref{tab:real_data} shows the resulting $p$-values using different test statistics, 
including the likelihood ratio statistics in \eqref{eq:lik_ratio} of $\tilde{H}_1$, $\tilde{H}_1'$ and $\tilde{H}_{\text{pseudo}}$ against $\tilde{H}_0$, and the test statistic \eqref{eq:test_mse} based on mean squared errors from pooled and group-specific kernel regression. 
It is very interesting to observe from
Table \ref{tab:real_data}  that
the relationship between $X$ and $Y$  differs significantly
between ``no-kid'' households (size=2) and larger-sized ones.
However, between the households of size  3 and those of size 4, the relationships between $X$ and $Y$ are not significantly different. 
To avoid potential sensitivity to heavy-tailed errors in the data, we also conducted the tests after truncating extreme fitted residuals; see the Supplementary Material for details.
 Table \ref{tab:real_data} shows that the conclusions are consistent across different test statistics, and are robust to the use of truncation. 
Our results confirm the findings in \citet{pardo2007testing}.

\section{Discussion}\label{sec:dicuss}

We developed a partial permutation test for comparing across different groups the functional relationship between a response variable and some covariates, 
and studied its properties under null models \eqref{eq:H_0_fixed} and \eqref{eq:h0_gp} when the underlying function either is fixed or follows a Gaussian process. 
The key idea of the proposed tests is to keep invariant the projection of the response vector onto the space spanned by leading principle components of the kernel matrix, and permute the remaining (residual) part.
Practically, we can also accommodate
multiple kernels by
conducting a partial permutation test that retains the projections of the response vector on the leading principle components of multiple kernel matrices. 
For example, 
if we use both the polynomial kernel of degree $p$ and the Gaussian kernel, 
then the partial permutation test is exactly valid when the underlying function is polynomial up to degree $p$ as implied by Theorem \ref{poly_permutation_pval_valid}, 
and  also has nice properties with flexible underlying functions as implied by Theorems \ref{thm:fixed_property_theorem}, \ref{thm:general_ppt_valid} and \ref{thm:gp_finite_property_theorem}. 
Furthermore, based on the simulation studies, we suggest to use test statistics based on a comparison between 
the null GPR model and its pseudo alternative as in \eqref{eq:h_ps}, 
or a comparison between mean squared errors from pooled and group-specific kernel regressions. 
These test statistics are easy to calculate and have a superior power.

Our testing procedure is also related to Bayesian model checking, especially the conditional predictive $p$-value proposed by
\citet{bayarri1997measures, bayarri1999quantifying, bayarri2000p}. 
The authors generated predictive samples from the model with parameters following the prior distribution, but only kept those samples that have the same value of a summary statistic $U$ as the observed data. Then, they compared the test statistic of predictive samples with that of the observed samples. As pointed by \citet{bayarri1999quantifying}, the intuition behind a suitable choice of $U$ is that $U$ should contain as much information about the unknown parameters as possible. In the extreme case where $U$ is chosen to be the sufficient statistic for all parameters, the conditional predictive $p$-value is valid under the model where the parameters are fixed and unknown. 

In our case, although we are considering a nonparametric model \eqref{eq:H_0_fixed}, the idea of conditional predictive $p$-value can still be applied. Suppose the variance of residuals is fixed and known and the underlying function follows a Gaussian process prior. We can 
perform the conditional predictive checking by choosing $U$ to be $(\bXmatrix,\mathcal{S}_y, \bs{Z} )$, where $\mathcal{S}_y$ is from either the discrete or the continuous partial permutation test in Algorithm \ref{alg:partial_permu}. Such choice of $U$ contains information about the smooth components of the underlying function. However, 
it is generally computationally challenging to generate the predictive samples. From Theorem \ref{thm:general_ppt_valid}, under some regularity conditions on the Gaussian process prior, the predictive samples can be asymptotically equivalent to the ones from  partial  permutation, 
and 
the conditional predictive $p$-value can be approximated by the partial permutation $p$-value given the same choice of the test statistic.

In practice, we may face high-dimensional covariates, under which the comparison of functional relation among various groups becomes much more challenging. 
Generally, the permutation size of our partial permutation test decreases as the dimension of covariates increases, and will eventually lose power due to the lack of permutation size. 
This is intuitive due to the nature of the problem: with high-dimensional covariates, the underlying function can have a complex structure making it hard to distinguish whether multiple groups share the same functional relation or not, especially when there are limited sample size and limited overlaps of covariates from different groups. 
The issue may be mitigated by imposing additional structural assumptions, such as sparsity, and we leave it for future work.

\section*{Supplementary Material}
The Supplementary Material contains computation details for 
maximizing the likelihood under GPR models, 
additional simulations for non-Gaussian or correlated noises and the choice of kernel parameters, 
and the proofs of all theorems, corollaries and propositions.

\section*{Acknowledgments}

We thank the Associate Editor and two reviewers for constructive comments. 

\bibliographystyle{plainnat}
\bibliography{bibliography}

\newpage

\setcounter{equation}{0}
\setcounter{section}{0}
\setcounter{figure}{0}
\setcounter{example}{0}
\setcounter{proposition}{0}
\setcounter{corollary}{0}
\setcounter{theorem}{0}
\setcounter{table}{0}

\renewcommand {\theproposition} {A\arabic{proposition}}
\renewcommand {\theexample} {A\arabic{example}}
\renewcommand {\thefigure} {A\arabic{figure}}
\renewcommand {\thetable} {A\arabic{table}}
\renewcommand {\theequation} {A\arabic{equation}}
\renewcommand {\thelemma} {A\arabic{lemma}}
\renewcommand {\thesection} {A\arabic{section}}
\renewcommand {\thetheorem} {A\arabic{theorem}}
\renewcommand {\thecorollary} {A\arabic{corollary}}

\setcounter{page}{1}
\renewcommand {\thepage} {A\arabic{page}}

\begin{center}
	\bf \LARGE 
	Supplementary Material 
\end{center}

Appendix \ref{sec:mle_gpr} investigates the computation of maximum likelihood estimates under GPR models. 

Appendix \ref{sec:simu_kerpara} conducts simulations to study the choice of kernel parameters.  

Appendix \ref{sec:simu_nonGauss} conducts simulations for non-Gaussian noises. 

Appendix \ref{sec:simu_corr} conducts simulations for correlated noises. 

Appendix \ref{sec:proof_fix_null} studies validity of the partial permutation test with a fixed unknown underlying function, including the proofs of Theorems \ref{linear_permutation_pval_valid}--\ref{thm:fixed_property_theorem} and 
Corollaries \ref{cor:kernel_finite_dim_feature_space}-- \ref{cor:fixed_property_theorem_balanced_design}. 

Appendix \ref{sec:proof_GPR_null} studies validity of the partial permutation test under the GPR model, including the proofs of Proposition \ref{thm:general_consistency} and Theorems \ref{thm:general_ppt_valid} and \ref{thm:gp_finite_property_theorem}. 

Appendix \ref{sec:proof_alter} studies the partial permutation test under alternative hypotheses, including the proofs of Theorems \ref{thm:general_F_test} and \ref{thm:power_diverg_kernel}. 

Appendix \ref{sec:proof_kernel} studies properties of kernels and proves Proposition \ref{prop:decay_eigenvalue}. 

Appendix \ref{sec:proof_corr} studies validity of the partial permutation test with a fixed unknown underlying function and correlated noises, including the proofs of Theorem \ref{thm:fixed_property_theorem_corr} and Corollaries \ref{cor:kernel_finite_dim_feature_space_corr}--\ref{cor:fixed_property_theorem_balanced_design_corr}. 

Appendix \ref{sec:proof_corr_GPR} studies validity of the partial permutation test under the GPR model with correlated noises, including the proofs of \ref{thm:general_ppt_valid_corr} and \ref{thm:gp_finite_property_theorem_corr}.

\section{Computation details for the maximum likelihood estimate under Gaussian process regression models}\label{sec:mle_gpr}

We consider here the computation of the maximum likelihood estimates under null, alternative, and "pseudo" alternative GPR models. Note that  conditional distribution of the response given the covariates and group indicators for each of these models can be written as:
\begin{align}\label{eq:vcm_all_model}
\bs{Y}  \mid \bXmatrix, \bs{Z}, \tilde{H}_0 & \sim \mathcal{N}\left(0,\frac{\delta_{0}^2}{n^{1-\gamma}} \bs{K}_{n}+\sigma_{0}^{2}\bs{I}_n\right),\\\nonumber
\bs{Y}  \mid  \bXmatrix, \bs{Z}, \tilde{H}_1 & \sim \mathcal{N}\left(0,\frac{\delta_{0}^2}{n^{1-\gamma}} \bs{K}_{n}+
\sum_{h=1}^H\frac{\delta_{h}^2}{n^{1-\gamma}} \bs{K}_{n}^{(h)} + 
\sigma_{0}^{2}\bs{I}_n\right),\\\nonumber
\bs{Y}  \mid  \bXmatrix, \bs{Z}, \tilde{H}_1' & \sim \mathcal{N}\left(0,\frac{\delta_{0}^2}{n^{1-\gamma}} \bs{K}_{n}+
\sum_{h=1}^H \frac{\delta_{h}^2}{n^{1-\gamma}} \bs{K}_{n}^{(h)} + 
\sum_{h=1}^H \sigma_{h}^{2}\bs{I}_n^{(h)}\right),  \\\nonumber
\bs{Y}  \mid \bXmatrix, \bs{Z},  \tilde{H}_{\text{pseudo}} & \sim \mathcal{N}\left(0,
\sum_{h=1}^H\frac{\delta_{h}^2}{n^{1-\gamma}} \bs{K}_{n}^{(h)} + 
\sigma_{0}^{2}\bs{I}_n\right), 
\end{align}
where $\bs{K}_n$ is the kernel matrix with $[\bs{K}_n]_{ij} = K(\bs{X}_i, \bs{X}_j)$, 
$\bs{K}_n^{(h)}$ is the kernel matrix for group $h$ with 
$[\bs{K}_n^{(h)}]_{ij} = 
\I(Z_i = h, Z_j = h) \cdot [\bs{K}_n]_{ij}$,  
$\bs{I}_n$ is an $n \times n$ identity matrix, 
and $\bs{I}_n^{(h)}$ is an $n \times n$ matrix with 
$[\bs{I}_n^{(h)}]_{ij} = \I(Z_i = h) \cdot \I(i=j)$. 
Therefore, we only need to consider maximizing likelihoods for the following two types of variance component models (VCM),
\begin{align*}
\text{VCM I:} \quad & \bs{Y}  \sim \mathcal{N}(0, \tau_1^2\bs{G}_1 + \tau_2^2 \bs{I}),\\\nonumber
\text{VCM II:} \quad & \bs{Y}  \sim \mathcal{N}(0, \sum_{j=1}^J \tau_j^2 \bs{G}_j).
\end{align*}
For VCM I, we can directly use the EM algorithm. However, for VCM II, the EM algorithm may suffer from slow convergence. Thus, we propose to use Newton's method %
with Fisher scoring by solving a quadratic programming problem at each iterative step. We will describe these algorithms in details in the following subsections.

\subsection{The EM Algorithm for VCM I \& II}
Consider the more general variance component model VCM II. Let $\bs{Y} = \sum_{j=1}^J \bs{\xi}^j$, where $\bs{\xi}^j \sim \mathcal{N}(0, \tau_j^2 G_j)$ for $1\leq j \leq J$. We can treat $\{ \bs{\xi}^j: j=1,2,\cdots, J-1\}$ as missing data and use EM algorithm, as summarized in Algorithm \ref{alg:em}.
\begin{algorithm}
\caption{EM algorithm for VCM II}\label{alg:em}
\begin{itemize}
\item[1)] Suppose at $t$-step, we have $\{\tau_{j,t}^2: j=1,2,\cdots,J\}$.
\item[2)] The iteration of EM algorithm is given by:
\begin{align*}
\tau_{j,t+1}^{2}=\frac{1}{n}\left[\text{tr}(\bs{G}_{j}^{-1}\Lambda_{j,t})+\mu_{j,t}^\top\bs{G}_{j}^{-1}\mu_{j,t}\right], \quad 1\leq j\leq J.
\end{align*}
where
\begin{align*}
\mu_{j,t} = & \tau_{j,t}^{2}\bs{G}_{j}\left(\sum_{j=1}^{J}\tau_{j,t}^{2}\bs{G}_{j}\right)^{-1} \bs{Y}, \quad 1\leq j \leq J,\\\nonumber
\Lambda_{j,t} = & \tau_{j,t}^{2}\bs{G}_{j}-\tau_{j,t}^{2}\bs{G}_{j}\left(\sum_{j=1}^{J}\tau_{j,t}^{2}\bs{G}_{j}\right)^{-1}\tau_{j,t}^{2}\bs{G}_{j}, \quad 1\leq j \leq J.
\end{align*}
\end{itemize}
\end{algorithm}
There are high computational costs since we need to invert %
an 
$n\times n$ matrix at each iterative step, and the convergence of the EM algorithm is slow under $\tilde{H}_1$ and $\tilde{H}'_1$. 
However, for  VCM I, which corresponds to model $H_0$, the EM iterative step can be greatly simplified and is computationally very efficient, as described in Algorithm \ref{alg:em_0}. 
\begin{algorithm}
\caption{EM algorithm for VCM I}\label{alg:em_0}
\begin{itemize}
\item[1)] Suppose at $t$-step, we have $\{\tau_{1,t}^2, \tau_{2,t}^2 \}$.
\item[2)] Perform eigen-decompostion on $\bs{G}_1$, i.e. $\bs{G}_1=\bs{V} \bs{D} \bs{V}^\top$, and let $\bs{U} = \bs{V}^\top \bs{Y}$. Then the iteration of the EM algorithm is
\begin{align*}
\tau^2_{1,t+1} &= \frac{1}{n}\left[ \text{tr}\left( \bs{G}_1^{-1}\bs{\Lambda}_{1,t}  \right) + \mu_{1,t}^\top\bs{G}_1^{-1} \mu_{1,t}  \right],\\\nonumber
\tau^2_{2,t+1} &= \frac{1}{n} \left[\text{tr}\left(\bs{\Lambda}_{2,t}\right)+ \mu_{2,t}^\top\mu_{2,t} \right],
\end{align*}
where
\begin{align*}
\mu_{1,t}^\top\bs{G}_1^{-1} \mu_{1,t} &= (\tau_{1,t}^2)^2\bs{U}^\top(\tau_{1,t}^2 \bs{D}+\tau_{2,t}^2\bs{I})^{-1}\bs{D}(\tau_{1,t}^2\bs{D}+\tau_{2,t}^2\bs{I})^{-1}\bs{U},\\\nonumber
\mu_{2,t}^\top\mu_{2,t} &= (\tau_{2,t}^2)^2\bs{U}^\top(\tau_{1,t}^2\bs{D}+\tau_{2,t}^2\bs{I})^{-2}\bs{U},\\\nonumber
\text{tr}\left( \bs{G}_1^{-1}\bs{\Lambda}_{1,t}  \right) &= \tau_{2,t}^2 \cdot \text{tr}\left\{ \bs{D}^{-1}\left[   \bs{I}-\tau_{2,t}^2(\tau_{1,t}^2\bs{D}+\tau_{2,t}^2\bs{I})^{-1} \right]  \right\},\\\nonumber
\text{tr}\left(\bs{\Lambda}_{2,t}\right) &= \tau_{2,t}^2 \cdot \text{tr}\left\{   \bs{I}-\tau_{2,t}^2(\tau_{1,t}^2\bs{D}+\tau_{2,t}^2\bs{I})^{-1}  \right\}.
\end{align*}
\end{itemize}
\end{algorithm}
Note that $\bs{D}$ is a diagonal matrix, under which the calculation for matrix multiplication and inversion can be greatly simplified. 
For the "pseudo" alternative model $\tilde{H}_{\text{pseudo}}$, the corresponding likelihood  can be decomposed into $H$ components corresponding to samples in the $H$ groups, each of which is of type VCM I.
Therefore, the maximum likelihood under $\tilde{H}_{\text{pseudo}}$ can be obtained by repeating Algorithm \ref{alg:em_0} %
$H$ times for the $H$ groups of samples.

\subsection{Newton's Method for VCM II}
We use $l(\tau^2_{\cdot})$ to denote the log-likelihood of $\tau^2_{\cdot} = (\tau^2_1, \tau^2_2, \cdots, \tau^2_J)$ under model VCM II. Given $\tau^2_{\cdot,t}$ at $t$-step and by Taylor expansion $l(\tau^2_{\cdot})$ at $\tau^2_{\cdot,t}$, we have
\begin{align*}
l(\tau^2_{\cdot})
= & l(\tau^2_{\cdot,t})+\left(\frac{\partial l(\tau^2_{\cdot,t})}{\partial\tau^2_\cdot}\right)^\top(\tau^2_{\cdot}-\tau^2_{\cdot,t})+\frac{1}{2}(\tau^2_{\cdot}-\tau^2_{\cdot,t})^\top\left(\frac{\partial^{2}l(\tau^2_{\cdot,t})}{\partial\tau^2_\cdot\partial{\tau^{2}_{\cdot}}^\top}\right)(\tau^2_{\cdot}-\tau^2_{\cdot,t}) +o(\|\tau^2_{\cdot}-\tau^2_{\cdot,t}\|^{2})\\\nonumber
\approx & l_2(\tau^2_\cdot | \tau^2_{\cdot,t}) + o(\|\tau^2_{\cdot}-\tau^2_{\cdot,t}\|^{2}).
\end{align*}
Newton's method solves the equation:
$
{\partial l_2(\tau^2_\cdot | \tau^2_{\cdot,t})}/{\partial \tau^2_\cdot}=0.
$ 
However, the solution may not satisfying the constraint that all the coordinates are nonnegative. Since $l_2$ is an approximation of $l$ and our goal is to find $\tau^2_\cdot$ that maximizes $l$, we want to find $\tau^2_\cdot$ that maximizes $l_2(\tau^2_\cdot | \tau^2_{\cdot,t})$ subject to that all the coordinates of $\tau^2_\cdot$ are nonnegative. 
Because the Hessian matrix $-{\partial^{2}l(\tau^2_{\cdot,t})}/{\partial\tau^2_\cdot\partial{\tau^{2}_{\cdot}}^\top}$ may not be positive semidefinite, 
simply maximizing $l_2(\tau^2_\cdot | \tau^2_{\cdot,t})$ may produce maximizer with infinite coordinates. 
Therefore, 
we instead use Fisher information
$\mathbb{E}\left(-\frac{\partial^{2}l(\tau^2_{\cdot,t})}{\partial\tau^2_\cdot\partial{\tau^{2}_{\cdot}}^\top}\right)$ 
to replace
$-\frac{\partial^{2}l(\tau^2_{\cdot,t})}{\partial\tau^2_\cdot\partial{\tau^{2}_{\cdot}}^\top}$, 
that is, Fisher scoring. 
Consequently, the optimization at $t$-step becomes 
\begin{align*}\nonumber
\tau^2_{\cdot,t+1} 
& =\arg\max_{\tau^2_{\cdot}\geq 0}
l(\tau^2_{\cdot,t})+\left(\frac{\partial l(\tau^2_{\cdot,t})}{\partial\tau^2_\cdot}\right)^\top(\tau^2_{\cdot}-\tau^2_{\cdot,t})+\frac{1}{2}(\tau^2_{\cdot}-\tau^2_{\cdot,t})^\top \mathbb{E}\left(\frac{\partial^{2}l(\tau^2_{\cdot,t})}{\partial\tau^2_\cdot\partial{\tau^{2}_{\cdot}}^\top}\right)(\tau^2_{\cdot}-\tau^2_{\cdot,t})\\\nonumber
& = \arg\min_{\tau^2_{\cdot}\succcurlyeq 0}
\left(-\frac{\partial l(\tau^2_{\cdot,t})}{\partial\tau^2_\cdot}\right)^\top(\tau^2_{\cdot}-\tau^2_{\cdot,t})+\frac{1}{2}(\tau^2_{\cdot}-\tau^2_{\cdot,t})^\top \mathbb{E}\left(-\frac{\partial^{2}l(\tau^2_{\cdot,t})}{\partial\tau^2_\cdot\partial{\tau^{2}_{\cdot}}^\top}\right)(\tau^2_{\cdot}-\tau^2_{\cdot,t}),
\end{align*}
where the constraint $\tau^2_{\cdot}\succcurlyeq 0$ means that each coordinate of $\tau^2_{\cdot}$ is nonnegative. This reduces to a quadratic programming problem, and the procedure is summarized in Algorithm \ref{alg:newton}.
\begin{algorithm}
\caption{Newton's method for VCM II}\label{alg:newton}
\begin{itemize}
\item[1)] Suppose at $t$-step we have $\tau_{\cdot,t}^2 = (\tau_{1,t}^2,\cdots, \tau_{J,t}^2)$.
\item[2)] The iteration of Newton's method with Fisher scoring is obtained by solving the following quadratic programming problem:
\begin{align*}
\tau^2_{\cdot,t+1} = & \arg\min_{\tau^2_{\cdot}\geq 0}
\left(-\frac{\partial l(\tau^2_{\cdot,t})}{\partial\tau^2_\cdot}\right)^\top(\tau^2_{\cdot}-\tau^2_{\cdot,t})+\frac{1}{2}(\tau^2_{\cdot}-\tau^2_{\cdot,t})^\top \mathbb{E}\left(-\frac{\partial^{2}l(\tau^2_{\cdot,t})}{\partial\tau^2_\cdot\partial{\tau^{2}_{\cdot}}^\top}\right)(\tau^2_{\cdot}-\tau^2_{\cdot,t}).
\end{align*}
\end{itemize}
\end{algorithm}

\subsection{Numerical Singularity Issue of Kernel Matrices} The kernel matrix $\bs{K}_n$ for, say, the Gaussian kernel $K_{\text{G}}$ in \eqref{eq:gaussian_kernel} with distinct covariates is theoretically positive definite. However, in practice, the kernel matrix can be numerically singular. 
Thus we propose to use $\bs{K}_{n,s} = \bs{K}_n + s \bs{I}_n$ instead of $\bs{K}_n$, with very small $s$ (say $s=10^{-5}$), to avoid the singularity issue. Note that for model $\tilde{H}_0$ in (\ref{eq:h0_gp}) and $\tilde{H}_1'$ in (\ref{eq:h1prime}), according to (\ref{eq:vcm_all_model}), we have
\begin{align*}
\bs{Y} \mid \bXmatrix, \bs{Z}, \tilde{H}_0 & \sim \mathcal{N}\left(\bs{0},\frac{\delta_{0}^2}{n^{1-\gamma}} \bs{K}_{n,s}+(\sigma_{0}^{2}-s\frac{\delta_{0}^2}{n^{1-\gamma}})\bs{I}_n\right),\\\nonumber
\bs{Y} \mid \bXmatrix, \bs{Z}, \tilde{H}_1' & \sim \mathcal{N}\left(\bs{0},\frac{\delta_{0}^2}{n^{1-\gamma}} \bs{K}_{n,s}+
\sum_{h=1}^H \frac{\delta_{h}^2}{n^{1-\gamma}} \bs{K}_{n,s}^{(h)} %
+ 
\sum_{h=1}^H (\sigma_{h}^{2}-s\frac{\delta_{0}^2}{n^{1-\gamma}} - s\frac{\delta_{h}^2}{n^{1-\gamma}} )\bs{I}_n^{(h)}\right).  
\end{align*}
If the maximum likelihood estimates under model $\tilde{H}_0$ or $\tilde{H}_1'$ satisfy
$
\hat{\sigma}_{0}^{2}-s{\hat{\delta}_{0}^2}/{n^{1-\gamma}} \geq 0
$ 
or
$
\hat{\sigma}_{h}^{2}-s{\hat{\delta}_{0}^2}/{n^{1-\gamma}} - s{\hat{\delta}_{h}^2}/{n^{1-\gamma}} \geq 0,
$ 
then using $\bs{K}_{n,s}$ instead of $\bs{K}_n$ will not change the maximum likelihood estimates under $\tilde{H}_0$ or $\tilde{H}_1'$. 
In contrast, 
under model $\tilde{H}_1$ or $\tilde{H}_{\text{pseudo}}$, using $\bs{K}_{n,s}$ instead of $\bs{K}_n$ may change the maximum likelihood estimates. 
However, 
because the partial permutation test allows for an arbitrary choice of test statistic, 
even if we compute maximum likelihood ratio of $\tilde{H}_1$, $\tilde{H}_1'$, or $\tilde{H}_{\text{pseudo}}$ against $\tilde{H}_0$ with $\bs{K}_n$ replaced by $\bs{K}_{n,s}$, the resulting test can still be asymptotically valid under model $\tilde{H}_0$. 
Moreover, we can view the likelihood ratio statistic using $\bs{K}_{n,s}$ %
as an approximation to the one using $\bs{K}_n$.

\section{Simulation to Study the Choice of Kernel Parameters}\label{sec:simu_kerpara}
In this section we conduct simulations to investigate the choice of kernel parameters for our partial permutation test. 
Specifically, we consider the following two choices of functional relationship between response and covariates within each of the two groups: 
\begin{align}\label{eq:kernel_choice_f_choice}
\begin{tabular}{lll}
$\text{(i)}~ f_1 = 4x^3/3-x/3,$ \qquad \quad
&
$f_2 = 4x^3/3 - x/3 + \delta x,$ 
\\
$\text{(ii)}~ f_1  = \sin (6x),$ \qquad \quad
&
$f_2 = \sin (6x) + \delta x,$ 
\end{tabular}
\end{align}
where $\delta$ characterizes the functional heterogeneity between the two groups. 
We generate the covariates, group indicators and noises in the same way as in \eqref{eq:scalar_null}, and consider both cases (a) and (c) in Table \ref{tab:balance} corresponding to balanced and unbalanced covariate distributions between the two groups. We fix the noise level $\sigma^2 = 0.1$, and vary $\delta$ from $0$ to $1$ by a step of $0.1$. Obviously, $\delta = 0$ corresponds to the case where the null hypothesis $H_0$ in \eqref{eq:H_0_fixed} holds. 
We consider partial permutation test based on Gaussian kernel $K_{\text{G}}(x, x') = \exp\{-\omega(x-x')^2\}$, and investigate the impact of the kernel parameter $\omega$.

We first consider the scenario where the underlying functions have form (i) in \eqref{eq:kernel_choice_f_choice}. 
Tables \ref{tab:f3case1_ker_para} and \ref{tab:f3case3_ker_para} show type-I errors (when $\delta=0$) and powers (when $\delta>0$) of the partial permutation test when the kernel parameter $\omega$ increases from $10^{-4}$ to 10. 
Both tables also show mean squared errors  between the estimated and the   true underlying function under the null hypothesis with $\delta=0$ and various choices of $\omega$, where the estimator is  the posterior mean from the GPR model, as discussed in Section \ref{sec:choice_size}. 
Tables \ref{tab:f3case1_ker_para} and \ref{tab:f3case3_ker_para} show that the mean squared errors are significantly larger when the kernel parameter $\omega$ is much smaller, 
whereas the estimation ls less sensitive to larger $\omega$'s.
In both cases, the type-I error  ($\delta=0$) is well controlled over a wide range of $\omega$, and the test is very conservative when $\omega=10^{-4}$ in case (c). 
Nevertheless, an overly small or large $\omega$ leads to a significant power loss under the alternative ($\delta>0$). 
The last rows of both tables show the average values of $\omega$ from the maximum likelihood approach as discussed in Section \ref{sec:kernel} under the null hypothesis,
as well as the corresponding power under the alternative hypotheses with $\delta>0$. 
From both tables, with the estimated $\omega$, the type-I error  is well controlled, 
and the power is among the best of the $\omega$ values under consideration.

We then consider the scenario where the underlying functions have form (ii) in \eqref{eq:kernel_choice_f_choice}. 
Tables \ref{tab:f6case1_ker_para} and \ref{tab:f6case3_ker_para} show the mean squared error for estimating the underlying function and type-I error when $\delta=0$ and the power of the partial permutation test when $\delta>0$, similar to that in Tables \ref{tab:f3case1_ker_para} and \ref{tab:f3case3_ker_para}. 
The observations are quite close to that from the previous scenario, except that in case (c) small $\omega$ can lead to inflated type-I errors. 
Again, with the estimated kernel parameter $\omega$, the test controls the type-I error well, and its power is among the best in both tables.

\begin{table}[htbp]
    \centering
    \caption{
    Partial permutation test based on Gaussian with varying kernel parameter $\omega$ when the underlying function has the form (i) in \eqref{eq:kernel_choice_f_choice} and the covariates and group indicators are balanced (case (a) in Table \ref{tab:balance}). 
    The 1st column shows the $\omega$ value, where the last row is the average value of $\omega$ when it is estimated based on maximum likelihood. 
    The 2nd column shows the mean squared error for estimating the underlying function under the null hypothesis with $\delta=0$. 
    The 3rd column shows the type-I error of the test under null. 
    The 4th to 13th column show the power of the test when $\delta$ increases from $0.1$ to $1$.}\label{tab:f3case1_ker_para}
    \resizebox{\textwidth}{!}{
    \begin{tabular}{cccccccccccccc}
    \toprule
    $\omega$ & MSE & 0 & 0.1 & 0.2 & 0.3 & 0.4 & 0.5 & 0.6 & 0.7 & 0.8 & 0.9 & 1\\
    \midrule
0.0001&0.0276&0.060&0.114&0.318&0.546&0.706&0.802&0.862&0.870&0.930&0.962&0.98\\
0.0010&0.0276&0.050&0.132&0.404&0.758&0.942&0.984&1.000&1.000&1.000&1.000&1.000\\
0.0020&0.0277&0.056&0.130&0.404&0.746&0.944&0.982&1.000&1.000&1.000&1.000&1.000\\
0.0100&0.0279&0.052&0.136&0.422&0.746&0.934&0.982&0.998&1.000&1.000&1.000&1.000\\
0.0200&0.0261&0.054&0.110&0.384&0.698&0.910&0.972&0.998&1.000&1.000&1.000&1.000\\
0.1000&0.0079&0.064&0.128&0.414&0.826&0.978&0.998&1.000&1.000&1.000&1.000&1.000\\
0.2000&0.0081&0.060&0.122&0.412&0.840&0.974&0.998&1.000&1.000&1.000&1.000&1.000\\
0.3000&0.0084&0.046&0.142&0.412&0.824&0.974&0.998&1.000&1.000&1.000&1.000&1.000\\
0.4000&0.0085&0.054&0.132&0.420&0.822&0.972&0.998&1.000&1.000&1.000&1.000&1.000\\
0.5000&0.0085&0.056&0.128&0.416&0.812&0.968&0.998&1.000&1.000&1.000&1.000&1.000\\
0.6000&0.0085&0.052&0.128&0.422&0.792&0.966&0.998&1.000&1.000&1.000&1.000&1.000\\
1.0000&0.0086&0.062&0.128&0.404&0.792&0.970&0.996&1.000&1.000&1.000&1.000&1.000\\
2.0000&0.0090&0.058&0.112&0.376&0.750&0.962&0.996&1.000&1.000&1.000&1.000&1.000\\
4.0000&0.0094&0.060&0.110&0.346&0.702&0.948&0.996&1.000&1.000&1.000&1.000&1.000\\
10.0000&0.0103&0.052&0.092&0.308&0.666&0.924&0.992&1.000&1.000&1.000&1.000&1.000\\
{\bf 0.7086}&{\bf 0.0089}&
{\bf 0.058}&{\bf 0.138}&{\bf 0.418} &{\bf 0.806} &{\bf 0.972}&{\bf 0.998}&{\bf 1.000 }&{\bf 1.000 }&{\bf 1.000 }&{\bf 1.000 }&{\bf 1.000 }\\
    \bottomrule
    \end{tabular}}%
\end{table}

\begin{table}[htbp]
    \centering
    \caption{
    Partial permutation test based on Gaussian with varying kernel parameter $\omega$ when the underlying function has the form (i) in \eqref{eq:kernel_choice_f_choice} and the covariates are unbalanced (case (c) in Table \ref{tab:balance}). 
    The description of the table is the same as that of Table \ref{tab:f3case1_ker_para}. 
    }\label{tab:f3case3_ker_para}
    \resizebox{\textwidth}{!}{
    \begin{tabular}{cccccccccccccc}
    \toprule
    $\omega$ & MSE & 0 & 0.1 & 0.2 & 0.3 & 0.4 & 0.5 & 0.6 & 0.7 & 0.8 & 0.9 & 1\\
    \midrule
0.0001&0.0276&0.002&0.002&0.018&0.066&0.154&0.266&0.410&0.582&0.700&0.792&0.834\\
0.0010&0.0276&0.056&0.080&0.216&0.428&0.664&0.874&0.960&0.988&0.998&1.000&1.000\\
0.0020&0.0277&0.068&0.098&0.202&0.398&0.636&0.854&0.958&0.990&1.000&1.000&1.000\\
0.0100&0.0278&0.050&0.082&0.182&0.336&0.560&0.776&0.912&0.984&0.996&1.000&1.000\\
0.0200&0.0265&0.048&0.064&0.140&0.274&0.512&0.736&0.902&0.980&1.000&1.000&1.000\\
0.1000&0.0080&0.044&0.096&0.276&0.592&0.850&0.974&0.998&1.000&1.000&1.000&1.000\\
0.2000&0.0083&0.038&0.098&0.292&0.594&0.866&0.976&1.000&1.000&1.000&1.000&1.000\\
0.3000&0.0085&0.044&0.098&0.296&0.608&0.860&0.974&1.000&1.000&1.000&1.000&1.000\\
0.4000&0.0087&0.046&0.092&0.296&0.586&0.856&0.970&1.000&1.000&1.000&1.000&1.000\\
0.5000&0.0087&0.046&0.104&0.280&0.586&0.854&0.972&1.000&1.000&1.000&1.000&1.000\\
0.6000&0.0087&0.050&0.094&0.300&0.596&0.854&0.970&0.998&1.000&1.000&1.000&1.000\\
1.0000&0.0088&0.056&0.088&0.278&0.582&0.844&0.956&0.998&1.000&1.000&1.000&1.000\\
2.0000&0.0091&0.054&0.090&0.254&0.540&0.822&0.944&0.998&1.000&1.000&1.000&1.000\\
4.0000&0.0095&0.062&0.078&0.234&0.516&0.806&0.934&0.998&1.000&1.000&1.000&1.000\\
10.0000&0.0104&0.058&0.072&0.198&0.492&0.748&0.908&0.996&1.000&1.000&1.000&1.000\\
{\bf 0.6453} & {\bf 0.0090} & {\bf 0.046} & {\bf 0.090} & {\bf 0.276} & {\bf 0.606} & {\bf 0.870} & {\bf 0.966} & {\bf 1.000} & {\bf 1.000} & {\bf 1.000} & {\bf 1.000} & {\bf 1.000}\\
    \bottomrule
    \end{tabular}}%
\end{table}

\begin{table}[htbp]
    \centering
    \caption{
    Partial permutation test based on Gaussian with varying kernel parameter $\omega$ when the underlying function has the form (ii) in \eqref{eq:kernel_choice_f_choice} and the covariates and group indicators are balanced (case (a) in Table \ref{tab:balance}). 
    The description of the table is the same as that of Table \ref{tab:f3case1_ker_para}.
    }\label{tab:f6case1_ker_para}
    \resizebox{\textwidth}{!}{
    \begin{tabular}{cccccccccccccc}
    \toprule
    $\omega$ & MSE & 0  & 0.1 &  0.2  & 0.3 &  0.4 &  0.5 &  0.6 &  0.7 &  0.8 &  0.9    & 1
    \\
    \midrule
0.01&0.4440&0.012&0.022&0.038&0.120&0.230&0.336&0.410&0.418&0.442&0.554&0.772\\
0.05&0.3763&0.054&0.082&0.240&0.496&0.598&0.560&0.512&0.520&0.636&0.816&0.948\\
0.10&0.0116&0.042&0.068&0.156&0.364&0.544&0.690&0.784&0.866&0.896&0.948&0.980\\
0.50&0.0039&0.060&0.120&0.382&0.780&0.962&0.996&1.000&1.000&1.000&1.000&1.000\\
1.00&0.0040&0.054&0.114&0.372&0.754&0.954&0.996&1.000&1.000&1.000&1.000&1.000\\
1.50&0.0044&0.056&0.112&0.360&0.724&0.958&0.996&1.000&1.000&1.000&1.000&1.000\\
2.00&0.0047&0.054&0.112&0.350&0.718&0.952&0.996&1.000&1.000&1.000&1.000&1.000\\
2.50&0.0049&0.050&0.112&0.326&0.718&0.952&0.996&1.000&1.000&1.000&1.000&1.000\\
3.00&0.0051&0.056&0.108&0.316&0.696&0.948&0.994&1.000&1.000&1.000&1.000&1.000\\
5.00&0.0058&0.044&0.100&0.300&0.672&0.928&0.996&1.000&1.000&1.000&1.000&1.000\\
10.00&0.0074&0.050&0.084&0.274&0.598&0.900&0.988&0.998&1.000&1.000&1.000&1.000\\
20.00&0.0097&0.046&0.080&0.214&0.530&0.846&0.982&0.996&1.000&1.000&1.000&1.000\\
{\bf 2.18} & {\bf 0.0049} & {\bf 0.052} & {\bf 0.120} & {\bf 0.340} & {\bf 0.716} & {\bf 0.956} & {\bf 0.996} & {\bf 1.000} & {\bf 1.000} & {\bf 1.000} & {\bf 1.000} & {\bf 1.000}\\
    \bottomrule
    \end{tabular}}%
\end{table}

\begin{table}[htbp]
    \centering
    \caption{
    Partial permutation test based on Gaussian with varying kernel parameter $\omega$ when the underlying function has the form (ii) in \eqref{eq:kernel_choice_f_choice} and the covariates are unbalanced  (case (c) in Table \ref{tab:balance}). 
    The description of the table is the same as that of Table \ref{tab:f3case1_ker_para}.
    }\label{tab:f6case3_ker_para}
    \resizebox{\textwidth}{!}{
    \begin{tabular}{cccccccccccccc}
    \toprule
    $\omega$ & MSE & 0 & 0.1 & 0.2 & 0.3 & 0.4 & 0.5 & 0.6 & 0.7 & 0.8 & 0.9 & 1\\
    \midrule
0.01&0.4437&0.708&0.678&0.662&0.700&0.756&0.828&0.872&0.904&0.924&0.944&0.954\\
0.05&0.3724&0.224&0.244&0.340&0.428&0.552&0.710&0.820&0.908&0.938&0.952&0.964\\
0.10&0.0119&0.068&0.074&0.130&0.242&0.380&0.566&0.662&0.736&0.818&0.884&0.934\\
0.50&0.0041&0.058&0.096&0.252&0.540&0.826&0.938&0.996&1.000&1.000&1.000&1.000\\
1.00&0.0041&0.044&0.088&0.248&0.534&0.826&0.942&0.996&1.000&1.000&1.000&1.000\\
1.50&0.0044&0.056&0.080&0.228&0.512&0.814&0.940&0.994&1.000&1.000&1.000&1.000\\
2.00&0.0047&0.058&0.072&0.226&0.524&0.804&0.942&0.998&1.000&1.000&1.000&1.000\\
2.50&0.0049&0.058&0.070&0.218&0.508&0.804&0.938&0.996&1.000&1.000&1.000&1.000\\
3.00&0.0051&0.056&0.086&0.214&0.514&0.800&0.934&0.994&1.000&1.000&1.000&1.000\\
5.00&0.0058&0.054&0.072&0.202&0.472&0.774&0.910&0.992&1.000&1.000&1.000&1.000\\
10.00&0.0074&0.050&0.074&0.178&0.428&0.710&0.894&0.982&1.000&1.000&1.000&1.000\\
20.00&0.0098&0.044&0.060&0.168&0.366&0.648&0.858&0.962&1.000&1.000&1.000&1.000\\
{\bf 2.17} & {\bf 0.0049} & {\bf 0.054} & {\bf 0.078} & {\bf 0.228} & {\bf 0.508} & {\bf 0.806} & {\bf 0.940} & {\bf 0.994} & {\bf 1.000} & {\bf 1.000} & {\bf 1.000} & {\bf 1.000} \\
    \bottomrule
    \end{tabular}}%
\end{table}

\section{Simulation under the null hypothesis with non-Gaussian noises}\label{sec:simu_nonGauss}

We here investigate the sensitivity of the partial permutation test  to the violation of the Gaussian noise assumption. 
Specifically, we generate data from model \eqref{eq:scalar_null} but with errors following either  Unif$(-\sqrt{3},\sqrt{3})$ (so that its variance is 1)
or the $t$-distribution with 5 degrees of freedom.
Moreover, we consider both linear and nonlinear underlying functional relations, i.e, (i) $f_0=x$ and (v) $f_0=\sin(4x)$ in \eqref{eq:f_choice_scalar}, 
and consider both balanced and unbalanced groups and covariate distributions, i.e., cases (a) and (d) in Table \ref{tab:balance}. 
\begin{figure}[h]
	\centering
		\includegraphics[width=0.7\linewidth]{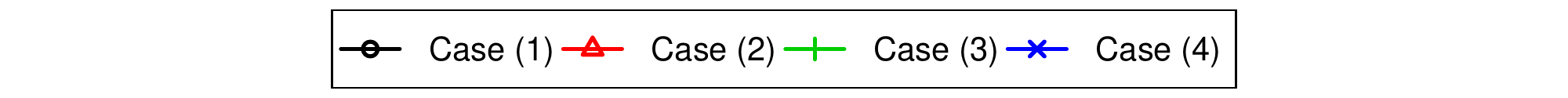}
	\begin{subfigure}{.3\textwidth}
		\centering
		\includegraphics[width=1\linewidth]{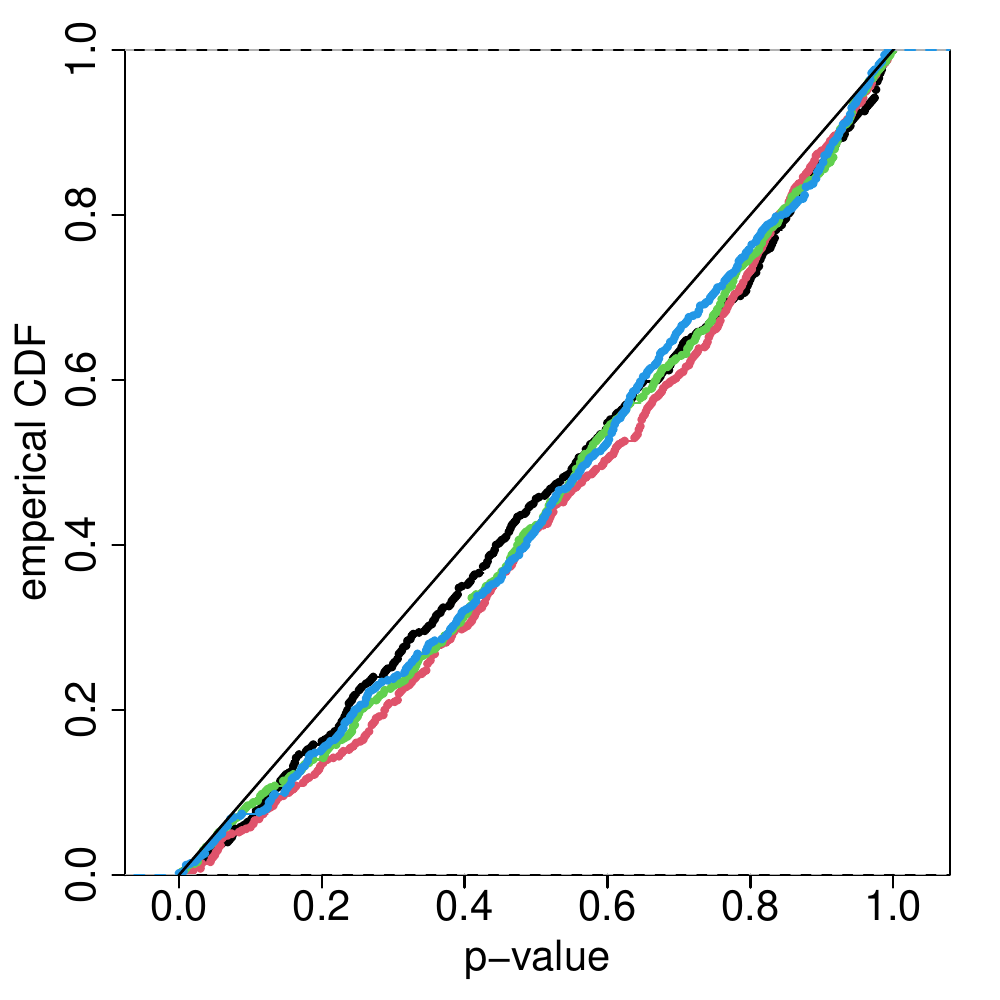}
		\caption{Unif($-\sqrt{3}, \sqrt{3}$)}
	\end{subfigure}%
	\begin{subfigure}{.3\textwidth}
		\centering
		\includegraphics[width=1\linewidth]{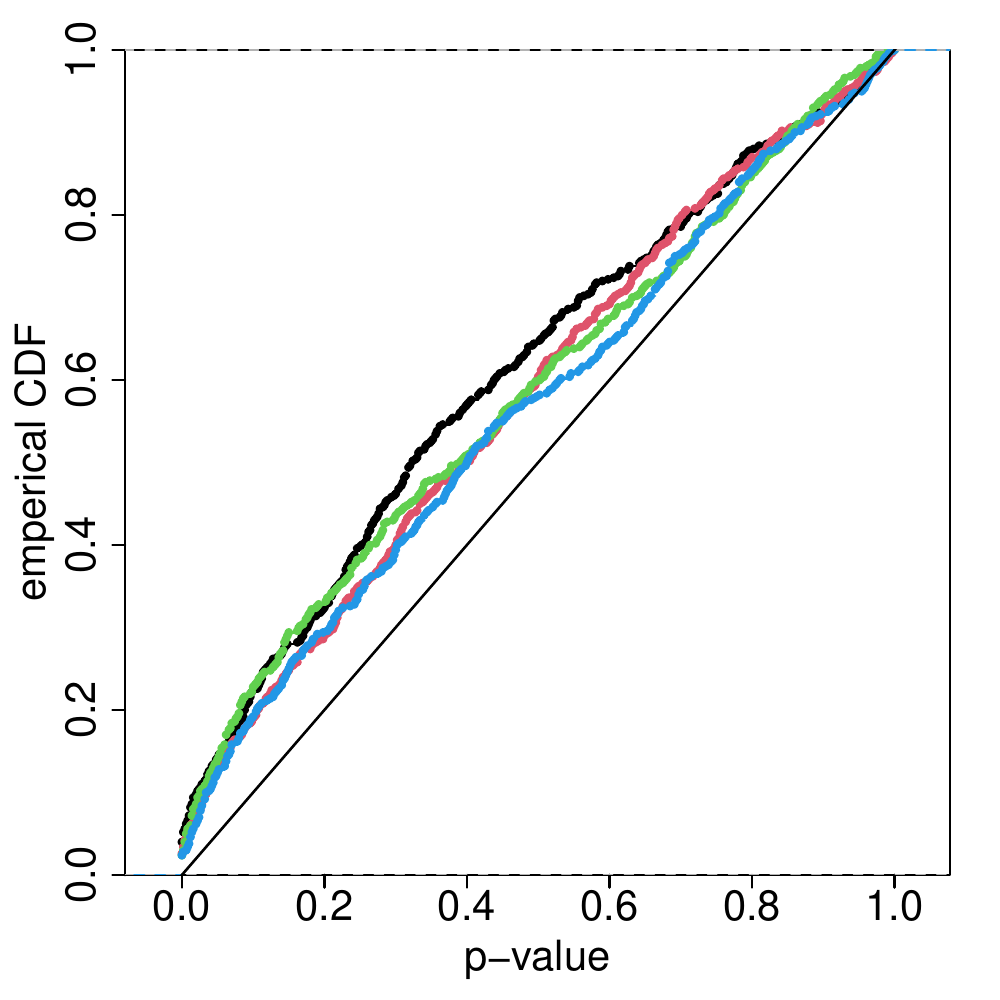}
		\caption{$t_5$}
	\end{subfigure}%
	\begin{subfigure}{.3\textwidth}
		\centering
		\includegraphics[width=1\linewidth]{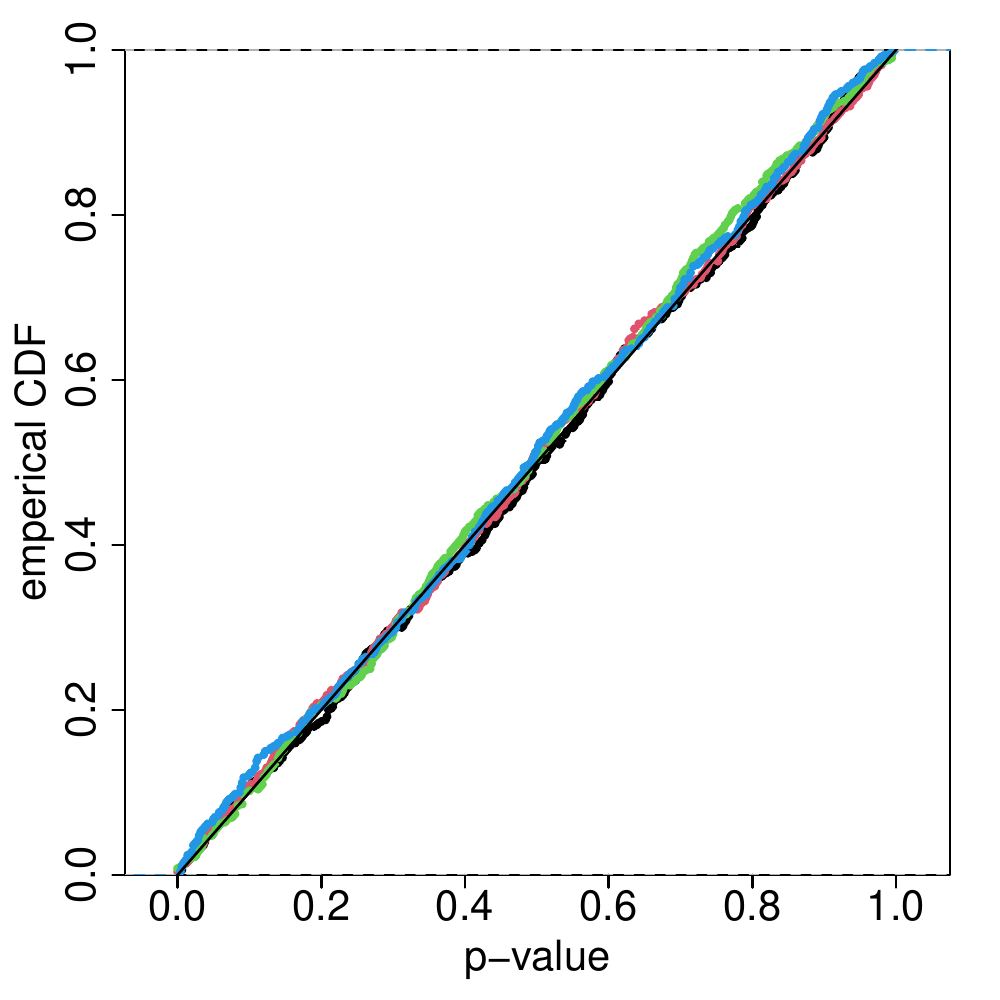}
		\caption{$t_5$ \& truncation}
	\end{subfigure}
	\caption{
	Empirical distributions of the partial permutation $p$-values using Gaussian kernel. Data are generated as in \eqref{eq:scalar_null} with non-Gaussian errors and sample size $n=200$. 
	Cases (1) and (2): $f_0=x$, and covariate distributions are balanced and unbalanced (cases (a) and (d) in Table \ref{tab:balance}), respectively;
	Cases (3) and (4): $f_0=\sin(4x)$, and covariate distributions are balanced and unbalanced, respectively. 
	Observation errors in Figures (a) and (b) are generated as indicated in their respective labels, In Figure (c), the errors are from $t_5$ distribution, but the extreme values of the fitted residuals are truncated. 
    }\label{fig:heavy_tail}
\end{figure}
Figure \ref{fig:heavy_tail} (a) and (b) show the empirical distributions of the partial permutation $p$-values using Gaussian kernel in these situations.

Figure \ref{fig:heavy_tail}(a) suggests that as long as the distribution of the observation noises are light-tailed (such as sub-Gaussian), the type-I error of the partial permutation test should be  well under control, and is often slightly conservative.
However, Figure \ref{fig:heavy_tail}(b) shows that the partial permutation test is sensitive to heavy-tailed observation noises, which result in inflated type-I errors. 
To overcome this vulnerability of the method in the presence of outliers, 
we suggest to pre-process the data by truncating extreme noise values. 
Specifically, we first get an estimate of the underlying function in the same way as in Section \ref{sec:choice_size} and obtain the fitted residuals.
We then truncate the fitted residuals on both the lower and upper $2\%$ of their sample quantiles. 
For example, for fitted residuals above the upper $2\%$ sample quantile, we set their values to be the upper $2\%$ quantile.
The empirical distributions of the resulting partial permutation $p$-values are shown in Figure \ref{fig:heavy_tail}(c), which demonstrates that truncating extreme residual values can get
the type-I error nearly perfectly calibrated at the nominal level (i.e., conforming to the uniform distribution).

\section{Simulation under the null hypothesis with correlated noises}\label{sec:simu_corr}

Following Section \ref{sec:corr}, we conduct a simulation study for the partial permutation test in the presence of correlated noises. 
Specifically, we generate data from the following model:
\begin{align}\label{eq:corr_model}
    \text{Scenario 6:} \quad & Y_i = 2.5 \cdot \sin (3\pi X_i) \cdot (1-X_i) + \varepsilon_i, 
    \quad 
    (\varepsilon_i, \varepsilon_{n/2+i})^\top \mid \mathrm{\bs{X}}, \bs{Z}
    \sim 
    \mathcal{N}(0, \sigma_0^2 \bs{R}_{\rho}), 
    \nonumber
    \\
    & 
    X_{i} \equiv X_{n/2+i} \ \overset{\iid}{\sim} \  \text{Unif}[0,1], \quad (1\le i \le n/2)
    \nonumber
    \\
    & Z_1 = \ldots = Z_{n/2} = 1, \quad 
    Z_{n/2+1} =\ldots = Z_{n} = 2, 
\end{align}
where $\bs{R}_{\rho}$ is a $2\times 2$ correlation matrix with off-diagonal elements $\rho$. 
For example, the $i$th and $(n/2+i)$th samples can correspond to two measurements of the same individual, either the same outcome measured at two time points or two different outcomes, with possibly correlated noises. 
We fix the sample size at $n=200$ and the noise level at $\sigma_0^2 = 0.1$, and consider two cases with correlations $\pm 0.5$. 
The partial permutation test can accommodate correlated measurement errors that are characterized by the $n\times n$ covariance matrix $\bs{\Sigma}$ by using the transformed response vector $\bs{Y}^{\text{C}} = \bs{\Sigma}^{-1/2} \bs{Y}$ and kernel matrix $\bs{K}_n^{\text{C}} = \bs{\Sigma}^{-1/2} \bs{K}_n \bs{\Sigma}^{-1/2}$ as discussed in Section \ref{sec:corr}. %
We here consider the partial permutation test with a pseudo likelihood ratio test statistic using: (i)   the true  $\bs{\Sigma}$;
(ii) an estimated  $\bs{\Sigma}$ from the data; (iii)  the identity  matrix (ignoring the correlations).
Specifically, we use the likelihood ratio of $\tilde{H}_{\text{pseudo}}$ versus $\tilde{H}_0$ in \eqref{eq:vcm_all_model} but with $\bs{Y}$ and $\bs{K}_n$ there replaced by $\bs{Y}^{\text{C}}$ and $\bs{K}_n^{\text{C}}$.

As shown in  Figure \ref{fig:corr_noise}, the $p$-values from the partial permutation test that uses either the true or estimated correlation matrix is approximately uniformly distributed. 
However, %
ignoring the correlation among the noises resulted in  either very conservative $p$-value as in Figure \ref{fig:corr_noise}(a) with positively correlated noises or invalid $p$-value with highly inflated type I error as in Figure \ref{fig:corr_noise}(b). 
In conclusion, when there exist possible correlations among measurement errors, we suggest to estimate the error covariance matrix $\bs{\Sigma}$ and take it into account when conducting the partial permutation test.
This procedure can control the type-I error well if  $\bs{\Sigma}$ can be estimated well. Since  $\bs{\Sigma}$ is of size $n\times n$, it is only estimable if there is a certain special structure to govern how errors are correlated, such as the case of this example.

\begin{figure}[h]
	\centering
		\includegraphics[width=0.7\linewidth]{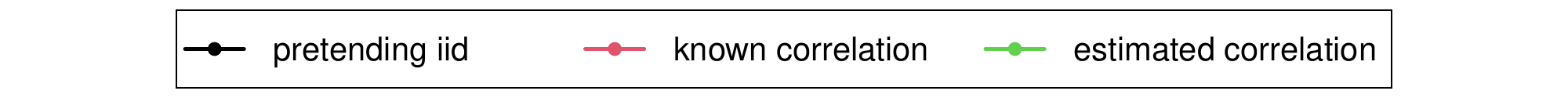}
	\begin{subfigure}{.3\textwidth}
		\centering
		\includegraphics[width=1\linewidth]{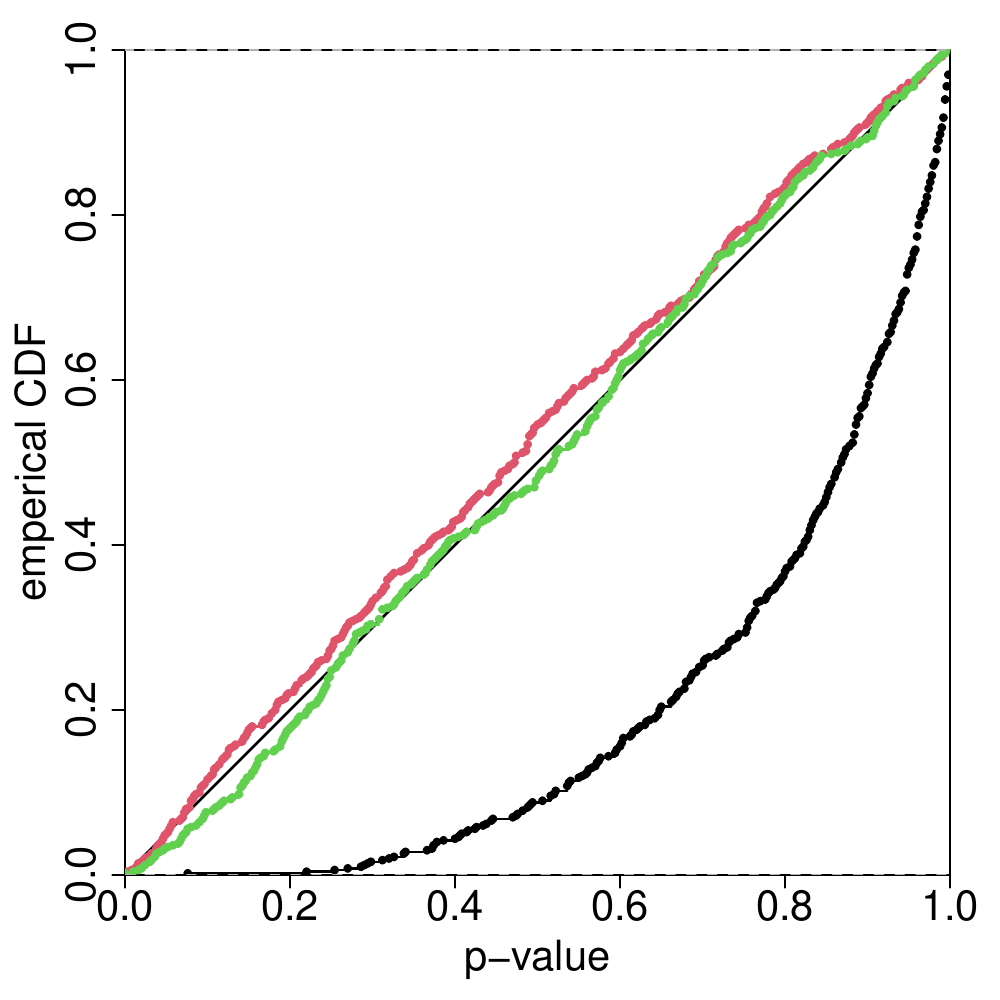}
		\caption{$\rho = 0.5$}
	\end{subfigure}%
	\begin{subfigure}{.3\textwidth}
		\centering
		\includegraphics[width=1\linewidth]{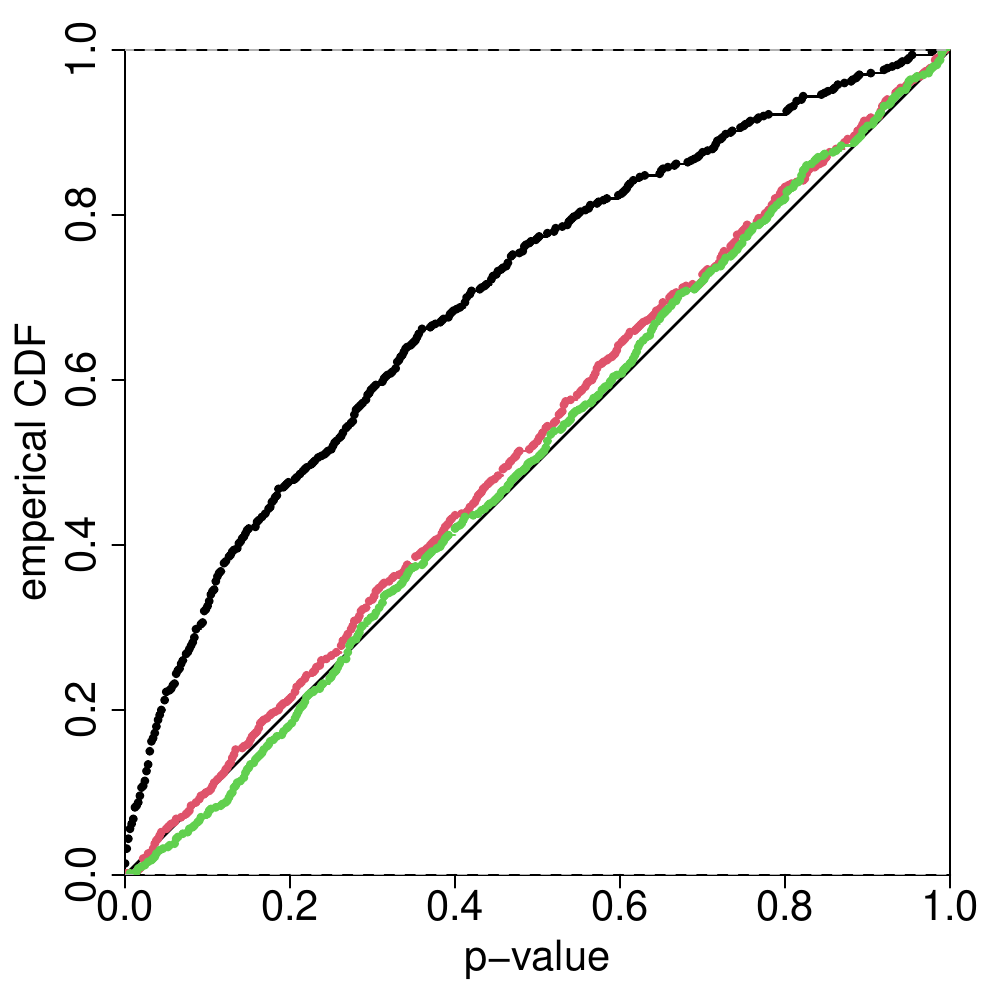}
		\caption{$\rho = -0.5$}
	\end{subfigure}
	\caption{
	Empirical CDFs of the partial permutation $p$-values using Gaussian kernel for the data of size $n=200$ generated from Scenario 6 in \eqref{eq:corr_model} with correlated Gaussian errors. 
    }\label{fig:corr_noise}
\end{figure}

\section{Partial Permutation Test with Fixed Functional Relationship}\label{sec:proof_fix_null}

Note that Theorems \ref{linear_permutation_pval_valid} and \ref{poly_permutation_pval_valid} are special cases of Corollary \ref{cor:kernel_finite_dim_feature_space}, we omit their proofs. 
To prove Theorem \ref{thm:fixed_property_theorem}, we need the following three lemmas. 

\begin{lemma}\label{lemma:survival_dominant_by_uniform}
	Let $Z$ be a univariate random variable with CDF $F(z)$. Let 
	\begin{align*}
	G(z) = \Pr(Z\geq z)=1-F(z-),
	\end{align*}
	then we have $\Pr(G(Z)\leq \alpha)\leq \alpha, \forall \alpha\in [0,1]$.
\end{lemma}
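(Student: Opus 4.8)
The statement is the classical fact that a $p$-value built from a survival (upper-tail) function is \emph{super-uniform}: its lower tail is dominated by that of a uniform variable. The plan is to exploit the monotonicity of $G$ together with the continuity of probability measures, and to deal carefully with the possible atoms of $Z$ (the only place where the discrete/continuous distinction matters).

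First I would record the two structural facts about $G(z)=\Pr(Z\ge z)$: it is non-increasing in $z$ with $G(z)\to 1$ as $z\to-\infty$, so the sublevel set $A=\{z:G(z)\le\alpha\}$ is an up-set, i.e.\ a half-line that is bounded below whenever $\alpha<1$ (the case $\alpha=1$, where $A=\mathbb{R}$, is trivial since any probability is at most $1$). If $A=\emptyset$ then $\Pr(G(Z)\le\alpha)=0\le\alpha$ and we are done; otherwise set $c=\inf A$, which is finite, and monotonicity forces $A$ to equal either $[c,\infty)$ or $(c,\infty)$. In either case $\{G(Z)\le\alpha\}=\{Z\in A\}$.

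Then I would split into two cases according to whether the infimum is attained. If $c\in A$, then $A=[c,\infty)$ and $\Pr(G(Z)\le\alpha)=\Pr(Z\ge c)=G(c)\le\alpha$ directly, since $c\in A$ means precisely $G(c)\le\alpha$. If $c\notin A$, then $A=(c,\infty)$ and I would invoke continuity of the probability measure from below: since $\{Z\ge c+\epsilon\}\uparrow\{Z>c\}$ as $\epsilon\downarrow 0$, we obtain $\Pr(Z>c)=\lim_{\epsilon\downarrow 0}G(c+\epsilon)$; as every point $c+\epsilon$ lies in $A$ we have $G(c+\epsilon)\le\alpha$, so the limit is $\le\alpha$, giving $\Pr(G(Z)\le\alpha)=\Pr(Z>c)\le\alpha$.

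I expect the only real subtlety—hence the main obstacle—to be the treatment of an atom of $Z$ at the threshold $c$: when $\Pr(Z=c)>0$ the events $\{Z\ge c\}$ and $\{Z>c\}$ differ, and naively bounding by $G(c)=\Pr(Z\ge c)$ can exceed $\alpha$. The case split above, together with the right-hand limit of $G$ at $c$, is exactly what resolves this, and it is also the reason the lemma is stated with $G(z)=1-F(z-)$ rather than $1-F(z)$. Notably, no assumption of continuity or of a density for $Z$ is needed anywhere in the argument.
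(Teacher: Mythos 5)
Your proof is correct and follows essentially the same route as the paper's: your threshold $c=\inf\{z:G(z)\le\alpha\}$ is exactly the paper's $G^{-1}(\alpha)$, your case split on whether $c$ belongs to the sublevel set matches the paper's split on whether $G(G^{-1}(\alpha))\le\alpha$, and the right-limit argument $\lim_{\epsilon\downarrow 0}G(c+\epsilon)\le\alpha$ is the paper's $\lim_{m\to\infty}G(G^{-1}(\alpha)+1/m)\le\alpha$. No gaps.
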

\begin{proof}[Proof of Lemma \ref{lemma:survival_dominant_by_uniform}]
By definition, $G(z)$ is nonincreasing in $z$. Define
\begin{align*}
G^{-1}(\alpha) = \inf \{x: G(x)\leq \alpha\}, \quad  (0 \le \alpha \le 1). 
\end{align*}
For any $\alpha\in [0,1]$, 
if $G(G^{-1}(\alpha))\leq \alpha$, then 
\begin{align*}
\Pr(G(Z)\leq \alpha) & \leq \Pr(Z\geq G^{-1}(\alpha))=G(G^{-1}(\alpha))\leq \alpha;
\end{align*}
otherwise, $G(G^{-1}(\alpha))>\alpha$, and 
\begin{align*}
	\Pr(G(Z)\leq \alpha) \leq& \Pr(Z> G^{-1}(\alpha))=\lim_{m\rightarrow \infty} \Pr\left(Z\geq G^{-1}(\alpha)+\frac{1}{m}\right)
	= \lim_{m\rightarrow \infty} G\left(G^{-1}(\alpha)+\frac{1}{m}\right)\leq \alpha.
\end{align*}
Therefore, Lemma \ref{lemma:survival_dominant_by_uniform} holds. 
\end{proof}

\begin{lemma}\label{lemma:basic_ineq_perm_p_val}
For any random elements $\bXmatrix \in \mathbb{R}^{n\times d}, \bs{Y} \in \mathbb{R}^n$ and $\bs{Z} \in \mathbb{R}^n$, 
let 
$\mathcal{S}_y$ be the discrete (or continuous) permutation set defined as in Algorithm \ref{alg:partial_permu}, $\bs{Y}^p$ be a random vector uniformly distributed on $\mathcal{S}_y$ given $\bXmatrix, \mathcal{S}_y$ and $\bs{Z}$, and $\nu$ and $\nu_0$ be two measures on $\mathbb{R}^n$ defined as 
\begin{align*}
\nu(\mathcal{A}) = \Pr( \bs{Y} \in \mathcal{A} \mid \bXmatrix, \mathcal{S}_y, \bs{Z} ),
\qquad
\nu_0(\mathcal{A}) = \Pr(\bs{Y}^p \in \mathcal{A} \mid \bXmatrix,\mathcal{S}_y, \bs{Z} ),
\end{align*}
for any measurable set $\mathcal{A}\subset \mathbb{R}^n$. 
For any test statistic $T$, 
the corresponding discrete (or continuous) permutation $p$-value from Algorithm \ref{alg:partial_permu} satisfies that, for any $\alpha,\delta\geq 0$,  
\begin{align*}
\Pr\{
p( \bXmatrix, \bs{Y}, \bs{Z} )\leq \alpha  \mid \bXmatrix,  \bs{Z} 
\} \leq \alpha + \delta + \Pr\{
\left\| \nu - \nu_0 \right\|_{\TV} > \delta \mid \bXmatrix,  \bs{Z} 
\}.
\end{align*}
\end{lemma}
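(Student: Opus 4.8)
The plan is to condition on the permutation set $\mathcal{S}_y$ and reduce the claim to a total-variation comparison between the true conditional law $\nu$ of $\bs{Y}$ and the uniform permutation law $\nu_0$, after which exactness under $\nu_0$ follows from Lemma \ref{lemma:survival_dominant_by_uniform}. First I would record the key invariance property of the construction in Algorithm \ref{alg:partial_permu}: the set $\mathcal{S}_y$ is constant along its own orbit, i.e. $\mathcal{S}_{\bs{y}'} = \mathcal{S}_{\bs{y}}$ for every $\bs{y}' \in \mathcal{S}_{\bs{y}}$ (both the discrete and the continuous versions keep the first $n-b_n$ coordinates of $\bs{\Gamma}^\top \bs{y}$ fixed and only reshuffle or rotate the last $b_n$), and in particular $\bs{Y} \in \mathcal{S}_y$ always. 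Hence conditioning on $(\bXmatrix, \mathcal{S}_y, \bs{Z})$ is well defined, both $\nu$ and $\nu_0$ are supported on the common orbit $\mathcal{S}_y$, and the quantity $\|\nu-\nu_0\|_{\TV}$ is meaningful.

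Next I would rewrite the permutation $p$-value in survival-function form. Conditionally on $(\bXmatrix, \mathcal{S}_y, \bs{Z})$, let
\begin{align*}
G(t) = \nu_0\bigl(\{\bs{y}: T(\bXmatrix, \bs{y}, \bs{Z}) \geq t\}\bigr)
\end{align*}
be the survival function of $T(\bXmatrix, \bs{Y}^{\perm}, \bs{Z})$ under $\bs{Y}^{\perm} \sim \nu_0$. Since $\mathcal{S}_y$ is determined by $(\bXmatrix, \bs{Y})$, the $p$-value in Algorithm \ref{alg:partial_permu} is exactly $p(\bXmatrix, \bs{Y}, \bs{Z}) = G\bigl(T(\bXmatrix, \bs{Y}, \bs{Z})\bigr)$, so the rejection event becomes
\begin{align*}
\{p(\bXmatrix, \bs{Y}, \bs{Z}) \leq \alpha\} = \{\bs{Y} \in \mathcal{A}\}, \qquad \mathcal{A} \equiv \bigl\{\bs{y}: G(T(\bXmatrix, \bs{y}, \bs{Z})) \leq \alpha\bigr\}.
\end{align*}
Applying Lemma \ref{lemma:survival_dominant_by_uniform} to $T(\bXmatrix, \bs{Y}^{\perm}, \bs{Z})$ under $\nu_0$ then gives $\nu_0(\mathcal{A}) = \Pr_{\nu_0}\{G(T(\bXmatrix, \bs{Y}^{\perm}, \bs{Z})) \leq \alpha\} \leq \alpha$.

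Then I would transfer this bound from $\nu_0$ to $\nu$ and integrate out $\mathcal{S}_y$. By the definition of the total-variation norm, $\nu(\mathcal{A}) \leq \nu_0(\mathcal{A}) + \|\nu-\nu_0\|_{\TV} \leq \alpha + \|\nu-\nu_0\|_{\TV}$, and therefore
\begin{align*}
\Pr\{p(\bXmatrix, \bs{Y}, \bs{Z}) \leq \alpha \mid \bXmatrix, \mathcal{S}_y, \bs{Z}\} = \nu(\mathcal{A}) \leq \alpha + \|\nu-\nu_0\|_{\TV}.
\end{align*}
As $\|\nu-\nu_0\|_{\TV}$ is a function of $(\bXmatrix, \mathcal{S}_y, \bs{Z})$, I would split according to whether it is at most $\delta$ or exceeds $\delta$: on the former event the right-hand side is bounded by $\alpha+\delta$, while on the latter I bound the conditional probability trivially by $1$. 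Taking conditional expectation given $(\bXmatrix, \bs{Z})$ via the tower property yields
\begin{align*}
\Pr\{p(\bXmatrix, \bs{Y}, \bs{Z}) \leq \alpha \mid \bXmatrix, \bs{Z}\} \leq (\alpha+\delta) + \Pr\{\|\nu-\nu_0\|_{\TV} > \delta \mid \bXmatrix, \bs{Z}\},
\end{align*}
which is the asserted inequality.

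The main obstacle I anticipate is not the chain of inequalities, which is routine, but making the conditioning fully rigorous: one must verify the orbit-invariance of $\mathcal{S}_y$ together with the joint measurability of the maps $(\bXmatrix, \bs{Y}) \mapsto \mathcal{S}_y$ and $(\bXmatrix, \mathcal{S}_y, \bs{Z}) \mapsto (\nu, \nu_0, G)$, so that the regular conditional distributions $\nu$ and $\nu_0$, the survival function $G$, and hence the identity $p(\bXmatrix, \bs{Y}, \bs{Z}) = G(T(\bXmatrix, \bs{Y}, \bs{Z}))$, are all well defined on a common conditioning $\sigma$-field. Care is also needed in the continuous case, where $\nu_0$ is the uniform law on a sphere embedded in $\mathbb{R}^n$ rather than a finite set, so that $\mathcal{A}$ must be checked to be measurable and $G$ handled appropriately before Lemma \ref{lemma:survival_dominant_by_uniform} applies verbatim.
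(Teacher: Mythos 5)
Your proof is correct and follows essentially the same route as the paper's: both condition on $(\bXmatrix, \mathcal{S}_y, \bs{Z})$, invoke Lemma \ref{lemma:survival_dominant_by_uniform}, compare $\nu$ and $\nu_0$ in total variation, and split on the event $\{\|\nu-\nu_0\|_{\TV}>\delta\}$ before applying the tower property. The only (harmless) reorganization is that you apply Lemma \ref{lemma:survival_dominant_by_uniform} under the permutation law $\nu_0$ and then transfer the bound to $\nu$ on the single rejection set $\mathcal{A}$, whereas the paper applies it under $\nu$ and instead bounds the difference of the two survival functions uniformly by $\|\nu-\nu_0\|_{\TV}$, shifting the threshold to $\alpha+\delta$; both yield the stated inequality.
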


\begin{proof}[Proof of Lemma \ref{lemma:basic_ineq_perm_p_val}]
Define $G$ and $G_u$ as 
\begin{align*}
G(z) & =  \Pr\left\{ T( \bXmatrix, \bs{Y}, \bs{Z} ) \geq z\mid \bXmatrix, \mathcal{S}_y, \bs{Z} \right\},
\quad 
G_u(z) =   \Pr\left\{ T( \bXmatrix, \bs{Y}^p, \bs{Z} ) \geq z\mid \bXmatrix,\mathcal{S}_y, \bs{Z} \right\}.
\end{align*}
By definition, for any $z\in \mathbb{R}$, 
$
|G(z)-G_u(z)| \leq \|\nu - \nu_0\|_{\TV},  
$
which implies that 
\begin{align*}
 |G(T( \bXmatrix, \bs{Y}, \bs{Z} ))-p( \bXmatrix, \bs{Y}, \bs{Z} )|
& = |G(T( \bXmatrix, \bs{Y}, \bs{Z} ))-G_u(T( \bXmatrix, \bs{Y}, \bs{Z} ))| \leq  \|\nu - \nu_0\|_{\TV}.
\end{align*} 
From Lemma \ref{lemma:survival_dominant_by_uniform}, for any $\alpha \in [0,1]$, 
$
\Pr\{ G( T( \bXmatrix, \bs{Y}, \bs{Z} ) ) \leq \alpha \mid \bXmatrix, \mathcal{S}_y, \bs{Z} \} \leq \alpha.
$
Thus, for any $\delta \geq 0$, 
\begin{eqnarray*}
& & \Pr\{  p( \bXmatrix, \bs{Y}, \bs{Z} ) \leq \alpha \mid \bXmatrix, \mathcal{S}_y, \bs{Z} \} \\
& = & \Pr\{  p( \bXmatrix, \bs{Y}, \bs{Z} ) \leq \alpha, \|\nu -\nu_0\|_{\TV} \leq \delta \mid \bXmatrix, \mathcal{S}_y, \bs{Z} \} + \Pr\{  p( \bXmatrix, \bs{Y}, \bs{Z} ) \leq \alpha, \|\nu -\nu_0\|_{\TV} > \delta \mid \bXmatrix, \mathcal{S}_y, \bs{Z} \} \\
& \leq & \Pr\{  G(T( \bXmatrix, \bs{Y}, \bs{Z} )) \leq \alpha+\delta \mid \bXmatrix, \mathcal{S}_y, \bs{Z} \} + \Pr\{\|\nu -\nu_0\|_{\TV} > \delta \mid \bXmatrix, \mathcal{S}_y, \bs{Z} \} \\
& \leq &\alpha+\delta+\Pr\{\| \nu -\nu_0 \|_{\TV} > \delta \mid  \bXmatrix, \mathcal{S}_y, \bs{Z} \}.
\end{eqnarray*}
Taking conditional expectations given $\bXmatrix$ and $\bs{Z}$ on both sides of the above inequality, we can then derive Lemma \ref{lemma:basic_ineq_perm_p_val}.
\end{proof}

\begin{lemma}\label{lemma:bound_TV}
Let $\{(\bs{X}_i,Y_i,Z_i)\}_{1\leq i \leq n}$ be samples from model ${H}_0$ in \eqref{eq:H_0_fixed},
$\mathcal{S}_y$ be the discrete (or continuous) permutation set defined as in Algorithm \ref{alg:partial_permu}, $\bs{Y}^p$ be a random vector uniformly distributed on $\mathcal{S}_y$ given $\bXmatrix, \mathcal{S}_y$ and $\bs{Z}$, and $\nu$ and $\nu_0$ be two measures on $\mathbb{R}^n$ defined as 
\begin{align*}
\nu(\mathcal{A}) = \Pr( \bs{Y} \in \mathcal{A} \mid \bXmatrix, \mathcal{S}_y, \bs{Z} ),
\qquad
\nu_0(\mathcal{A}) = \Pr(\bs{Y}^p \in \mathcal{A} \mid \bXmatrix,\mathcal{S}_y, \bs{Z} ),
\end{align*}
for any measurable set $\mathcal{A}\subset \mathbb{R}^n$. 
Then for any $1 \le b_n \le n$, 
$
   \|\nu - \nu_0\|_{\TV} \le  (e^{\Delta_n}-1)/2, 
$
where 
\begin{align*}
\Delta_n 
& =  2 \sqrt{2} \sqrt{\omega(b_n, \sigma_0^{-1}f_0) } \cdot \sqrt{
	\omega(b_n, \sigma_0^{-1}f_0) + \sigma_0^{-2}
	\sum_{i=n-b_n+1}^{n}(\bs{\gamma}_{i}^\top\bs{\varepsilon})^2 }
\end{align*}
and $\omega(b_n, \sigma_0^{-1}f_0) =  \sigma_0^{-2}\sum_{i=a_n}^n(\bs{\gamma}_i^\top\bs{f}_0)^2$ is defined the same as in Section \ref{sec:general}. 
\end{lemma}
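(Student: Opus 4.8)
The plan is to work in the rotated coordinates $\bs{W} = \bs{\Gamma}^\top \bs{Y}$. Since $\bs{\Gamma}$ is a fixed orthogonal matrix given $\bXmatrix$ and since $\bs{Y} = \bs{\Gamma}\bs{W}$, $\bs{Y}^p = \bs{\Gamma}\bs{W}^p$, the total variation distance is invariant under the (bijective) pushforward $\bs{W}\mapsto \bs{\Gamma}\bs{W}$; moreover the first $n-b_n$ coordinates are held fixed in both laws, so it suffices to bound the TV distance between the conditional laws of the tail block $\bs{V} = (W_{n-b_n+1},\ldots,W_n)^\top$ and its permuted/rotated counterpart. Under $H_0$, orthogonality of $\bs{\Gamma}$ gives $W_i = \bs{\gamma}_i^\top \bs{f}_0 + \bs{\gamma}_i^\top\bs{\varepsilon}$ with the $\bs{\gamma}_i^\top\bs{\varepsilon}$ being i.i.d. $\mathcal{N}(0,\sigma_0^2)$ conditional on $\bXmatrix,\bs{Z}$, so $\bs{V}$ is Gaussian with independent coordinates, mean $\bs{\mu} = (\bs{\gamma}_{n-b_n+1}^\top\bs{f}_0,\ldots,\bs{\gamma}_n^\top\bs{f}_0)^\top$ and covariance $\sigma_0^2\bs{I}_{b_n}$; note $\|\bs{\mu}\|^2 = \sigma_0^2\,\omega(b_n,\sigma_0^{-1}f_0)$.

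Next I would identify the conditioning. Conditioning on $\mathcal{S}_y$ (together with $\bXmatrix,\bs{Z}$) fixes the first $n-b_n$ coordinates of $\bs{W}$ and, in the continuous case, fixes the squared radius $R^2 = \sum_{i=n-b_n+1}^n W_i^2 = \|\bs{V}\|^2$ (in the discrete case it fixes the multiset of the tail values). Hence $\nu_0$ is uniform on the sphere of radius $R$ (respectively, uniform over the permutations of $\bs{V}$), while $\nu$ is the conditional law of $\bs{V}$ given $\|\bs{V}\| = R$ (respectively, given its multiset). The key computation is the Radon--Nikodym derivative $p = d\nu/d\nu_0$: since the Gaussian density of $\bs{V}$ is proportional to $\exp\{-(2\sigma_0^2)^{-1}(\|\bs{v}\|^2 - 2\bs{v}^\top\bs{\mu} + \|\bs{\mu}\|^2)\}$ and both $\|\bs{v}\|^2$ and $\|\bs{\mu}\|^2$ are constant over the conditioning set, $p$ reduces to the exponential tilt $p(\bs{v}) = e^{g(\bs{v})}/\mathbb{E}_{\nu_0}[e^{g}]$ with $g(\bs{v}) = \sigma_0^{-2}\bs{v}^\top\bs{\mu}$.

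With the tilt in hand I would bound $g$ uniformly. By Cauchy--Schwarz $|g(\bs{v})| \le \sigma_0^{-2}R\,\|\bs{\mu}\|$ for every admissible $\bs{v}$ (all such $\bs{v}$ have norm $R$, whether on the sphere or in the permutation orbit). Writing $\bs{\eta} = (\bs{\gamma}_{n-b_n+1}^\top\bs{\varepsilon},\ldots,\bs{\gamma}_n^\top\bs{\varepsilon})^\top$ so that $\bs{V} = \bs{\mu}+\bs{\eta}$ and $R^2 = \|\bs{\mu}+\bs{\eta}\|^2 \le 2(\|\bs{\mu}\|^2 + \|\bs{\eta}\|^2)$, and using $\|\bs{\mu}\| = \sigma_0\sqrt{\omega(b_n,\sigma_0^{-1}f_0)}$, this yields $|g(\bs{v})| \le \sqrt{2\,\omega(b_n,\sigma_0^{-1}f_0)}\,\sqrt{\omega(b_n,\sigma_0^{-1}f_0) + \sigma_0^{-2}\sum_{i=n-b_n+1}^n(\bs{\gamma}_i^\top\bs{\varepsilon})^2} = \Delta_n/2$. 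Since $|g|\le \Delta_n/2$ pointwise, the normalizer obeys $e^{-\Delta_n/2} \le \mathbb{E}_{\nu_0}[e^g] \le e^{\Delta_n/2}$, so $e^{-\Delta_n} \le p \le e^{\Delta_n}$ and thus $|p-1|\le e^{\Delta_n}-1$ everywhere; integrating gives $\|\nu-\nu_0\|_{\TV} = \tfrac12\int |p-1|\,d\nu_0 \le \tfrac12(e^{\Delta_n}-1)$, as claimed.

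The main obstacle I anticipate is the bookkeeping for the discrete case: verifying that conditioning on $\mathcal{S}_y$ fixes precisely the multiset of the tail coordinates (allowing for repeated permutations as flagged in Algorithm~\ref{alg:partial_permu}), so that the same exponential-tilt representation of $p$ and the same uniform bound $|\bs{v}^\top\bs{\mu}| \le R\|\bs{\mu}\|$ apply to a rearrangement rather than a point on a sphere. Everything else is routine once the exponential-tilt form of $d\nu/d\nu_0$ and the elementary bound $R^2 \le 2(\|\bs{\mu}\|^2 + \|\bs{\eta}\|^2)$ are in place.
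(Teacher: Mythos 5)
Your proposal is correct and follows essentially the same route as the paper's proof: the same Cauchy--Schwarz bound on $\bs{v}^\top\bs{\mu}$ combined with $R^2\le 2\|\bs{\mu}\|^2+2\|\bs{\eta}\|^2$ shows the density ratio between any two points of the conditioning set lies in $[e^{-\Delta_n},e^{\Delta_n}]$, and the TV bound then follows by integrating $|d\nu/d\nu_0-1|$. Your exponential-tilt formulation of the Radon--Nikodym derivative is a slightly cleaner packaging that treats the discrete and continuous cases uniformly (the paper writes out only the discrete case and declares the continuous one similar), but the substance is identical.
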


\begin{proof}[Proof of Lemma \ref{lemma:bound_TV}]
We prove only the lemma for the discrete permutation set $\mathcal{S}_y$, since the proof for the continuous permutation set is very similar. 
Recall that $\bs{f}_0=(f_0(\bs{X}_1), \cdots, f_0(\bs{X}_n))^\top$ and $\bs{\varepsilon} = (\varepsilon_1, \varepsilon_2, \ldots, \varepsilon_n)^\top$. Then, 
$
\bs{Y} = \bs{f}_0+\bs{\varepsilon},$
and 
$
\bs{W} = \bs{\Gamma}^\top \bs{Y} = \bs{\Gamma}^\top \bs{f}_0+\bs{\Gamma}^\top \bs{\varepsilon}.
$
For any $\psi \in \mathcal{M}(n,b_n)$ and $\bs{Y}_\psi = \bs{\Gamma} \bs{W}_{\psi} = \bs{\Gamma}(W_{\psi(1)}, \ldots, W_{\psi(n)})^\top \in \mathcal{S}_y$, 
the density for the conditional distribution of $\bs{Y}$ given $(\bXmatrix, \bs{Z})$ under model $H_0$ evaluated at $\bs{Y}_\psi$ has the following equivalent forms:
\begin{align*}
g( \bs{Y}_\psi \mid \bXmatrix, \bs{Z} ) 
& = 
\left(2\pi\sigma_0^2\right)^{-n/2}\exp\left\{
-\frac{1}{2\sigma_0^{2}} \sum_{i=1}^{n-b_n}\left(W_{\psi(i)}-\bs{\gamma}_{i}^\top\bs{f}_0\right)^{2}
-\frac{1}{2\sigma_0^{2}} \sum_{i=n-b_n+1}^{n}\left(W_{\psi(i)}-\bs{\gamma}_{i}^\top\bs{f}_0\right)^{2}
\right\}\\
& = \left(2\pi\sigma_0^2\right)^{-n/2}\exp\left\{
-\frac{1}{2\sigma_0^{2}} \sum_{i=1}^{n-b_n}\left(W_{i}-\bs{\gamma}_{i}^\top\bs{f}_0\right)^{2}
\right\}
\exp\left\{
-\frac{1}{2\sigma_0^{2}} \sum_{i=n-b_n+1}^{n}\left(W_{\psi(i)}-\bs{\gamma}_{i}^\top\bs{f}_0\right)^{2}
\right\},
\end{align*}
where the last equality holds by the definition of $\mathcal{M}(n,b_n)$. 
Thus, for any $\psi, \phi \in \mathcal{M}_{n,b_n}$, 
\begin{align*}
	\left|
	\sigma_0^2\log\left(\frac{
		g( \bs{Y}_\psi \mid \bXmatrix, \bs{Z} ) 
	}{
		g( \bs{Y}_\phi \mid \bXmatrix, \bs{Z} ) 
	}\right) 
	\right|
	& =   
	\left| \frac{1}{2}\sum_{i=n-b_n+1}^{n}\left(W_{\phi(i)}-\bs{\gamma}_{i}^\top\bs{f}_0\right)^{2}- \frac{1}{2}\sum_{i=n-b_n+1}^{n}\left(W_{\psi(i)}-\bs{\gamma}_{i}^\top\bs{f}_0\right)^{2}
	\right|
	\\
	& =  \left| \sum_{i=n-b_n+1}^{n}W_{\psi(i)}\bs{\gamma}_{i}^\top\bs{f}_0
	- \sum_{i=n-b_n+1}^{n}W_{\phi(i)}\bs{\gamma}_{i}^\top\bs{f}_0\right|
	\\
	& \leq 
	\left| \sum_{i=n-b_n+1}^{n}W_{\psi(i)}\bs{\gamma}_{i}^\top\bs{f}_0
	\right| + 
	\left|\sum_{i=n-b_n+1}^{n}W_{\phi(i)}\bs{\gamma}_{i}^\top\bs{f}_0\right|. 
\end{align*}
By the Cauchy–Schwarz inequality, we have
\begin{align*}
\left|
\sigma_0^2\log\left(\frac{
g( \bs{Y}_\psi \mid \bXmatrix, \bs{Z} ) 
}{
g( \bs{Y}_\phi \mid \bXmatrix, \bs{Z} ) 
}\right) 
\right| 
&  \leq  
\sqrt{\sum_{i=n-b_n+1}^{n}W_{\psi(i)}^2 \cdot \sum_{i=n-b_n+1}^{n} (\bs{\gamma}_{i}^\top\bs{f}_0)^2} + \sqrt{\sum_{i=n-b_n+1}^{n}W_{\phi(i)}^2 \cdot \sum_{i=n-b_n+1}^{n} (\bs{\gamma}_{i}^\top\bs{f}_0)^2}
\\
& = 2 \sqrt{\sum_{i=n-b_n+1}^{n}W_{i}^2 \cdot \sum_{i=n-b_n+1}^{n} (\bs{\gamma}_{i}^\top\bs{f}_0)^2} \ .
\end{align*}
Note that  
$W_i^2 = (\bs{\gamma}_{i}^\top\bs{f}_0 + \bs{\gamma}_{i}^\top\bs{\varepsilon})^2 \le 2 (\bs{\gamma}_{i}^\top\bs{f}_0)^2 + 2 (\bs{\gamma}_{i}^\top\bs{\varepsilon})^2$. 
We then have  
\begin{align*}
\left|
\sigma_0^2\log\left(\frac{
	g( \bs{Y}_\psi \mid \bXmatrix, \bs{Z} ) 
}{
	g( \bs{Y}_\phi \mid \bXmatrix, \bs{Z} ) 
}\right) 
\right| 
& \le 2 \sqrt{\sum_{i=n-b_n+1}^{n} (\bs{\gamma}_{i}^\top\bs{f}_0)^2 \cdot \sum_{i=n-b_n+1}^{n}W_{i}^2}
\\
& \leq 2\sqrt{2} \sqrt{\sum_{i=n-b_n+1}^{n} (\bs{\gamma}_{i}^\top\bs{f}_0)^2} \cdot \sqrt{
	\sum_{i=n-b_n+1}^{n}(\bs{\gamma}_{i}^\top\bs{f}_0)^2 + 
	\sum_{i=n-b_n+1}^{n}(\bs{\gamma}_{i}^\top\bs{\varepsilon})^2 }
\end{align*}
By the definition of $\omega(b_n, \sigma_0^{-1}f_0)$ and $\Delta_n$, this further implies that 
\begin{align*}
\left|
\log\left(\frac{
	g( \bs{Y}_\psi \mid \bXmatrix, \bs{Z} ) 
}{
	g( \bs{Y}_\phi \mid \bXmatrix, \bs{Z} ) 
}\right) 
\right| 
\le 
2\sqrt{2} \sqrt{\omega(b_n, \sigma_0^{-1}f_0)} \sqrt{\omega(b_n, \sigma_0^{-1}f_0) + \sigma_0^{-2} \sum_{i=n-b_n+1}^{n}(\bs{\gamma}_{i}^\top\bs{\varepsilon})^2 }
= 
\Delta_n. 
\end{align*}
Consequently, for any $\psi, \phi \in \mathcal{M}_{n,b_n}$, 
$
\exp(-\Delta_n) \le g(Y_\psi \mid \bXmatrix, \bs{Z} ) / g(Y_\phi \mid \bXmatrix, \bs{Z} )  \le \exp(\Delta_n). 
$

Let $|\mathcal{M}_{n,b_n}|$ be the cardinality of set $\mathcal{M}_{n,b_n}$. From the discussion before, for any $\psi \in \mathcal{M}(n, b_n)$, the measure of $\nu$ on $\{ \bs{Y}_\psi\}$ satisfies
\begin{align*}
\nu(\{Y_\psi\}) = \frac{g(\bs{Y}_\psi \mid \bXmatrix, \bs{Z} )}{\sum_{\phi \in \mathcal{M}(n,b_n)}g( \bs{Y}_\phi \mid \bXmatrix, \bs{Z} )} \in \left[\frac{1}{|\mathcal{M}(n,b_n)|}e^{-\Delta_n},
\ \ 
\frac{1}{|\mathcal{M}(n,b_n)|}e^{\Delta_n}
\right],
\end{align*}
which further implies that
\begin{align*}
\nu(\{Y_\psi\}) - \nu_0(\{Y_\psi\}) \in \left[-\frac{1}{|\mathcal{M}(n,b_n)|}(1-e^{-\Delta_n}),
\ \ \ 
\frac{1}{|\mathcal{M}(n,b_n)|}(e^{\Delta_n}-1)
\right].
\end{align*}
Therefore, the total variation distance between $\nu$ and $\nu_0$ satisfies 
\begin{align*}
\|\nu - \nu_0\|_{\TV} & = \frac{1}{2} \sum_{\psi \in \mathcal{M}(n,b_n)} \left|\nu(\{Y_\psi\}) - \nu_0(\{Y_\psi\}) \right|
\leq  \frac{1}{2}\sum_{\psi \in \mathcal{M}(n,b_n)} \frac{1}{ |\mathcal{M}(n,b_n)| }\max \{
1-e^{-\Delta_n}, e^{\Delta_n}-1
\}\\
& 
= \frac{1}{2} \max \{
1-e^{-\Delta_n}, e^{\Delta_n}-1
\}
\leq \frac{1}{2} (e^{\Delta_n}-1),
\end{align*}
i.e., Lemma \ref{lemma:bound_TV} holds. 
\end{proof}

\begin{proof}[{\bf Proof of Theorem \ref{thm:fixed_property_theorem}}]
We prove only the theorem for the discrete permutation $p$-value, since the proof for the continuous permutation $p$-value is very similar.
Define 
$\nu$, $\nu_0$ and $\Delta_n$ the same as in Lemma \ref{lemma:bound_TV}. 
From Lemmas \ref{lemma:basic_ineq_perm_p_val} and \ref{lemma:bound_TV}, the permutation $p$-value satisfies that, for any $\alpha, \delta \ge 0$,  
\begin{align*}
\Pr \{  p( \bXmatrix, \bs{Y}, \bs{Z} ) \leq \alpha \mid \bXmatrix, \bs{Z} \}
\leq & \alpha+ \delta +\Pr \{\| \nu - \nu_0 \|_{\TV} > \delta \mid \bXmatrix, \bs{Z} \}
\leq  \alpha+ \delta + \Pr \left( \frac{e^{\Delta_n}-1}{2} > \delta\mid \bXmatrix, \bs{Z}  \right). 
\end{align*}
Given any $\alpha_0\in (0,1)$,
let $\delta = v(b_n, \sigma_0^{-1} f_0, \alpha_0)$ defined the same as in \eqref{eq:v_fixed_H0}. 
By definition, we then have 
\begin{align*}
 & \ \frac{e^{\Delta_n}-1}{2} > \delta 
 \\
 \Longleftrightarrow & \ 
\frac{1}{2}\exp\left\{
2 \sqrt{2} \sqrt{\omega(b_n, \sigma_0^{-1}f_0) } \cdot \sqrt{
	\omega(b_n, \sigma_0^{-1}f_0) + \sigma_0^{-2}
	\sum_{i=n-b_n+1}^{n}(\bs{\gamma}_{i}^\top\bs{\varepsilon})^2 }
\right\} - \frac{1}{2}\\
& \quad \ 
> 
\frac{1}{2}\exp\left\{
2\sqrt{2\omega(b_n,\sigma_0^{-1}f_0) } \sqrt{Q_{b_n}(1-\alpha_0)+ \omega(b_n,\sigma_0^{-1}f_0)} 
\right\}-\frac{1}{2}
\\
 \Longleftrightarrow & \ 
 \omega(b_n, \sigma_0^{-1}f_0) > 0  \ \ \text{and} \ \ 
 \sigma_0^{-2}
 \sum_{i=n-b_n+1}^{n}(\bs{\gamma}_{i}^\top\bs{\varepsilon})^2 > Q_{b_n}(1-\alpha_0),
\end{align*}
where $Q_{b_n}$ is the quantile function of the chi-square distribution with degrees of freedom $b_n$. 
This then implies that 
\begin{align*}
\Pr \left( \frac{e^{\Delta_n}-1}{2} > \delta\mid \bXmatrix, \bs{Z}  \right)
& \le 
\Pr \left(
 \sigma_0^{-2}
\sum_{i=n-b_n+1}^{n}(\bs{\gamma}_{i}^\top\bs{\varepsilon})^2 > Q_{b_n}(1-\alpha_0)
\mid \bXmatrix, \bs{Z}  \right)
= \alpha_0, 
\end{align*}
where the last equality holds because 
$ \sigma_0^{-2}
\sum_{i=n-b_n+1}^{n}(\bs{\gamma}_{i}^\top\bs{\varepsilon})^2 \sim \chi^2_{b_n}$ conditional on $\bXmatrix$ and $\bs{Z}$. 
Consequently, for any $\alpha \in [0,1]$,  
\begin{align*}
\Pr \{  p( \bXmatrix, \bs{Y}, \bs{Z} ) \leq \alpha \mid \bXmatrix, \bs{Z} \}
& 
\leq  \alpha+ \delta + \Pr \left( \frac{e^{\Delta_n}-1}{2} > \delta\mid \bXmatrix, \bs{Z}  \right)
\le \alpha + v(b_n, \sigma_0^{-1} f_0, \alpha_0) + \alpha_0. 
\end{align*}
By the definition of $p_c(\bXmatrix, \bs{Y}, \bs{Z})$, 
this immediately implies that, for any $\alpha \in (0,1)$, 
\begin{align*}
& \quad \ \Pr \{  p_c( \bXmatrix, \bs{Y}, \bs{Z} ) \leq \alpha \mid \bXmatrix, \bs{Z} \}
\\
& = 
\Pr \{  p( \bXmatrix, \bs{Y}, \bs{Z} ) \leq \alpha - v(b_n, \sigma_0^{-1} f_0, \alpha_0) - \alpha_0 \mid \bXmatrix, \bs{Z} \}
\\
& \le 
\I\left\{\alpha \ge v(b_n, \sigma_0^{-1} f_0, \alpha_0) + \alpha_0\right\} 
\Pr \{  p( \bXmatrix, \bs{Y}, \bs{Z} ) \leq \alpha - v(b_n, \sigma_0^{-1} f_0, \alpha_0) - \alpha_0 \mid \bXmatrix, \bs{Z} \}
\\
& \le 
\I\left\{\alpha \ge v(b_n, \sigma_0^{-1} f_0, \alpha_0) + \alpha_0\right\} \cdot \alpha \le \alpha. 
\end{align*}
Therefore, Theorem \ref{thm:fixed_property_theorem} holds. 
\end{proof}

\begin{proof}[\bf Proof of Corollary \ref{cor:kernel_finite_dim_feature_space}]
Let $\bs{\Phi} = (\phi(\bs{X}_1), \phi(\bs{X}_2), \ldots, \phi(\bs{X}_n))^\top \in \mathbb{R}^{n \times q}$ be the matrix consisting of all covariates mapped into the feature space. 
We can then verify that the kernel matrix $\bs{K}_n$ can be equivalently written as $\bs{K}_n = \bs{\Phi} \bs{\bs{\Phi}}^\top$, whose rank is at most $q$. 
Thus, the eigen-decomposition of $\bs{K}_n$ must satisfy that $\bs{\gamma}_i^\top \bs{K}_n \bs{\gamma} = 0$ for $q+1 \le i \le n$, 
or equivalently $\bs{\gamma}_i^\top \bs{\Phi} \bs{\bs{\Phi}}^\top \bs{\gamma} = 0$ for $q+1 \le i \le n$. 
This further implies that $\bs{\gamma}_i^\top \bs{\Phi} = \bs{0}_{1\times q}$ for $q+1 \le i \le n$.

Because the underlying function $f_0(\bs{x})$ is linear in $\phi(x)$, 
we can write $f_0(\bs{x})$ as $f_0(\bs{x}) = \phi(x)^\top \bs{\beta}$ for some $\bs{\beta}\in \mathbb{R}^q$. 
Consequently, the vector consisting of the function values evaluated at all the covariates has the equivalent form
$\bs{f}_0 = \bs{\Phi} \bs{\beta}$. 
From the discussion before, we can know that, for $q+1 \le i \le n$, $\bs{\gamma}_i^\top \bs{f}_0 = \bs{\gamma}_i^\top \bs{\Phi} \bs{\beta} = \bs{0}_{1\times q} \bs{\beta} =0$. 
Thus, for any $1 \le b_n \le n-q$, we have 
\begin{align}\label{eq:omega_0_finite_feature}
\omega(b_n, \sigma_0^{-1} f_0) = \sigma_0^{-2} \sum_{i=n-b_n+1} \left( \bs{\gamma}_i^\top \bs{f}_0 \right)^2 = 0. 
\end{align}
Using Lemmas \ref{lemma:basic_ineq_perm_p_val} and \ref{lemma:bound_TV}, the permutation $p$-value satisfies that, for any $\alpha, \delta \ge 0$,  
\begin{align*}
\Pr \{  p( \bXmatrix, \bs{Y}, \bs{Z} ) \leq \alpha \mid \bXmatrix, \bs{Z} \}
\leq & 
\alpha+ \delta + \Pr \left( \frac{e^{\Delta_n}-1}{2} > \delta\mid \bXmatrix, \bs{Z}  \right), 
\end{align*}
where $\Delta_n$ is defined the same as in Lemma \ref{lemma:bound_TV} and equals zero here due to \eqref{eq:omega_0_finite_feature}. 
Thus, we must have 
$\Pr \{  p( \bXmatrix, \bs{Y}, \bs{Z} ) \leq \alpha \mid \bXmatrix, \bs{Z} \}
\le 
\alpha,$ i.e., Corollary \ref{cor:kernel_finite_dim_feature_space} holds. 
\end{proof}

\begin{proof}[\bf Proof of Corollary \ref{cor:diverg_kernel}]
    First, we consider the limiting behavior of $\omega(b_n, \sigma_0^{-1} f_0)$ as $n \rightarrow \infty$. 
    For $q\ge 1$, we introduce $\bs{\beta}_q\in \mathbb{R}^q$ to denote the coefficient vector for the best linear approximation of $f_0$ using the first $q$ basis functions, 
    i.e., 
    $\remainder(f_0;q) = \int ( f - \bs{\beta}_q^\top \phi_q )^2 \text{d} \mu$. 
    Let $\bs{\Phi}_q = (\phi_q(\bs{X}_1), \ldots, \phi_q(\bs{X}_n))^\top \in \mathbb{R}^{n\times q}$ be the matrix consisting of transformed covariates using the first $q$ basis functions. 
    We can verify that the kernel matrix $\bs{K}_n$ can be written as $\bs{K}_n = \bs{\Phi}_{q_n} \bs{\Phi}_{q_n}^\top$, whose rank is at most $q_n < n$. 
    Thus, the eigenvectors of $\bs{K}_n$ must satisfy that $0 = \bs{\gamma}_i^\top \bs{K}_n \bs{\gamma}_i = \bs{\gamma}_i^\top \bs{\Phi}_{q_n} \bs{\Phi}_{q_n}^\top \bs{\gamma}_i$ for $q_n < i \le n$. 
    Consequently, 
    for $q_n < i \le n$, 
    $\bs{\gamma}_i^\top \bs{\Phi}_{q_n} = \bs{0}$, 
    and 
    $
        \bs{\gamma}_i^\top \bs{f}_0 
        = 
        \bs{\gamma}_i^\top 
        ( \bs{\Phi}_{q_n} \bs{\beta}_{q_n} + \bs{f}_0 - \bs{\Phi}_{q_n} \bs{\beta}_{q_n} )
        = \bs{\gamma}_i^\top (\bs{f}_0 - \bs{\Phi}_{q_n} \bs{\beta}_{q_n} ), 
    $
    where $\bs{f}_0 = (f_0(\bs{X}_1), \ldots, f_0(\bs{X}_n))$. 
    This then implies that 
    \begin{align*}
        \sum_{i=q_n+1}^n ( \bs{\gamma}_i^\top \bs{f}_0 )^2 
        & = 
        \sum_{i=q_n+1}^n 
        \big\{ \bs{\gamma}_i^\top (\bs{f}_0 - \bs{\Phi}_{q_n} \bs{\beta}_{q_n} ) \big\}^2 
        \le 
        \| \bs{f}_0 - \bs{\Phi}_{q_n} \bs{\beta}_{q_n} \|_2^2 
        = 
        \sum_{i=1}^n 
        \{ f_0(\bs{X}_i) - \bs{\beta}_{q_n}^\top \phi_{q_n}(\bs{X}_i) \}^2 
        \\
        & = 
        \sum_{i=1}^n \E [ \{ f_0(\bs{X}_i) - \bs{\beta}_{q_n}^\top \phi_{q_n}(\bs{X}_i) \}^2 ]
        \cdot
        O_{\Pr}(1)
        = 
        n \remainder(f_0; q_n) \cdot O_{\Pr}(1)
    \end{align*}
    Consequently, we have 
    $
       \omega(b_n,\sigma_0^{-1}f_0)
       = 
       \sigma_0^{-2}
       \sum_{i=n-b_n+1}^n ( \bs{\gamma}_{i}^\top \bs{f}_0)^2
       =  n \remainder(f_0; q_n) \cdot O_{\Pr}(1). 
    $
    
    Second, we prove the asymptotic validity of the partial permutation test. 
    From Lemmas \ref{lemma:basic_ineq_perm_p_val} and \ref{lemma:bound_TV} and by the law of the iterated expectation, for any $\alpha, \delta > 0$, 
    the $p$-value from the partial permutation test with kernel $K_{q_n}$, permutation size $b_n \le n-q_n$ and any test statistic $T$ satisfies that 
    \begin{align}\label{eq:pval_bound_diverging}
    \Pr \{  p( \bXmatrix, \bs{Y}, \bs{Z} ) \leq \alpha \}
    \leq & 
    \alpha+ \delta + \Pr \left( e^{\Delta_n}-1 > 2\delta \right)
    = 
    \alpha+ \delta + \Pr \left\{ \Delta_n > \log(1+2\delta) \right\}
    \end{align}
    with 
    $
        \Delta_n 
        =  2 \sqrt{2} \sqrt{\omega(b_n, \sigma_0^{-1}f_0) } \cdot \sqrt{
	\omega(b_n, \sigma_0^{-1}f_0) + \sigma_0^{-2}
	\sum_{i=n-b_n+1}^{n}(\bs{\gamma}_{i}^\top\bs{\varepsilon})^2 }.
    $
    Note that $(\bs{\gamma}_1^\top \bs{\varepsilon}, \ldots, \bs{\gamma}_n^\top \bs{\varepsilon})$ are i.i.d.\ standard Gaussian. We must have 
    $
    \sum_{i=n-b_n+1}^{n}(\bs{\gamma}_{i}^\top\bs{\varepsilon})^2 
    = O_{\Pr}(b_n ). 
    $
    Consequently, 
    \begin{align*}
    \Delta_n 
    & =  \sqrt{n \remainder (f_0; q_n) } \cdot \sqrt{
	n \remainder (f_0; q_n)   + b_n } \cdot O_{\Pr}(1)
	=
	\sqrt{
	\{ n (n-q_n) \remainder (f_0; q_n)\}^2 + n (n-q_n) \remainder (f_0; q_n)
	}\cdot O_{\Pr}(1),
    \end{align*}
    where the last equality holds because $\max\{1, b_n\}\le n-q_n$. 
    From the condition in Corollary \ref{cor:diverg_kernel}, we must have $\Delta_n = o_{\Pr}(1)$. 
    Letting $n$ go to infinity in \eqref{eq:pval_bound_diverging}, we can know that, for any $\alpha, \delta>0$, 
    \begin{align*}
    \limsup_{n\rightarrow \infty}\Pr \{  p( \bXmatrix, \bs{Y}, \bs{Z} ) \leq \alpha \}
    \leq & 
    \alpha+ \delta + 
    \limsup_{n\rightarrow \infty} \Pr \left\{ \Delta_n > \log(1+2\delta) \right\}
    = 
    \alpha+\delta. 
    \end{align*}
    Because the above inequality holds for any $\delta> 0$, 
    we must have 
    $
    \limsup_{n\rightarrow \infty}\Pr \{  p( \bXmatrix, \bs{Y}, \bs{Z} ) \leq \alpha  \}
    \le 
    \alpha. 
    $
    Therefore, $p( \bXmatrix, \bs{Y}, \bs{Z} )$ is an asymptotically valid $p$-value. 
    
    From the above, Corollary \ref{cor:diverg_kernel} holds. 
\end{proof}

\begin{proof}[\bf Proof of Corollary \ref{cor:fixed_property_theorem_balanced_design}]
    From Lemmas \ref{lemma:basic_ineq_perm_p_val} and \ref{lemma:bound_TV} and by the same logic as the proof of Corollary \ref{cor:kernel_finite_dim_feature_space}, it suffices to prove that $\omega(b_n, \sigma_0^{-1}f_0) = 0$ for $b_n \le n - r$. 
    For descriptive convenience, we introduce $m=n/H$. 
    
    Without loss of generality, we assume the units are ordered according to their group indicators in the sense that 
    $Z_{(k-1)m + 1} = Z_{(k-1)m + 1} = \ldots = Z_{km} = k$ for $1\le k \le H$, 
    and the covariates in these groups satisfy 
    \begin{align}\label{eq:cov_balance_groups}
        \begin{pmatrix}
            \bs{X}_1^\top\\
            \bs{X}_2^\top\\
            \vdots \\
            \bs{X}_m^\top
        \end{pmatrix}
        = 
        \begin{pmatrix}
            \bs{X}_{m+1}^\top\\
            \bs{X}_{m+2}^\top\\
            \vdots \\
            \bs{X}_{2m}^\top
        \end{pmatrix}
        = 
        \ldots
        = 
        \begin{pmatrix}
            \bs{X}_{(H-1)m+1}^\top\\
            \bs{X}_{(H-1)m+2}^\top\\
            \vdots \\
            \bs{X}_{Hm}^\top
        \end{pmatrix}. 
    \end{align}
    We further assume that the covariates within each group are ordered such that samples with the same covariate values are ordered consecutively, and use $m_1, m_2, \ldots, m_r$ to denote the number of samples with distinct covariate values within each group. Obviously, $\sum_{l=1}^r m_l = n/H$. 
    Let $(\tilde{\bs{X}}_1, \ldots, \tilde{\bs{X}}_r)$ denote these distinct values of covariates. 
    
    First, we consider simplifying the kernel matrix $\bs{K}_n$. 
    Let $\bs{G} \in \mathbb{R}^{r \times r}$ be the kernel matrix for the $r$ distinct covariate values in each group, with $G_{ij} = K(\tilde{\bs{X}}_i, \tilde{\bs{X}}_j)$ being its $(i,j)$ element. 
    We can then verify that the kernel matrix $\bs{J}_m$ for samples within each group has the following equivalent forms:
    \begin{align*}
        \bs{J}_m & = 
        \begin{pmatrix}
            K(\bs{X}_1, \bs{X}_1) & K(\bs{X}_1, \bs{X}_2) & \cdots & K(\bs{X}_1, \bs{X}_m)\\
            K(\bs{X}_2, \bs{X}_1) & K(\bs{X}_2, \bs{X}_2) & \cdots & K(\bs{X}_2, \bs{X}_m)\\
            \vdots & \vdots & \ddots & \vdots\\
            K(\bs{X}_m, \bs{X}_1) & K(\bs{X}_m, \bs{X}_2) & \cdots & K(\bs{X}_m, \bs{X}_m)
        \end{pmatrix}
        = 
        \begin{pmatrix}
        G_{11} \bs{1}_{m_1}\bs{1}_{m_1}^\top & G_{12} \bs{1}_{m_1}\bs{1}_{m_2}^\top & \vdots & G_{1r} \bs{1}_{m_1}\bs{1}_{m_r}^\top\\
        G_{21} \bs{1}_{m_2}\bs{1}_{m_1}^\top & G_{22} \bs{1}_{m_2}\bs{1}_{m_2}^\top & \vdots & G_{2r} \bs{1}_{m_2}\bs{1}_{m_r}^\top\\
        \vdots & \vdots & \ddots & \vdots\\
        G_{r1} \bs{1}_{m_r}\bs{1}_{m_1}^\top & G_{r2} \bs{1}_{m_r}\bs{1}_{m_2}^\top & \vdots & G_{rr} \bs{1}_{m_r}\bs{1}_{m_r}^\top
        \end{pmatrix}\\
        & = 
        \begin{pmatrix}
            \bs{1}_{m_1} & \bs{0} & \cdots & \bs{0}\\
            \bs{0} & \bs{1}_{m_2} & \cdots & \bs{0}\\
            \vdots & \vdots & \ddots & \vdots\\
            \bs{0} & \bs{0} & \cdots & \bs{1}_{m_r}
        \end{pmatrix}
        \begin{pmatrix}
            G_{11} & G_{12} & \cdots & G_{1r}\\
            G_{21} & G_{22} & \cdots & G_{2r}\\
            \vdots & \vdots & \ddots & \vdots\\
            G_{r1} & G_{r2} & \cdots & G_{rr}
        \end{pmatrix}
        \begin{pmatrix}
            \bs{1}_{m_1}^\top & \bs{0} & \cdots & \bs{0}\\
            \bs{0} & \bs{1}_{m_2}^\top & \cdots & \bs{0}\\
            \vdots & \vdots & \ddots & \vdots\\
            \bs{0} & \bs{0} & \cdots & \bs{1}_{m_r}^\top
        \end{pmatrix}, 
    \end{align*}
    where $\bs{1}_{m_l} \in \mathbb{R}^{m_l}$ denotes a column vector with all elements being 1. 
    Consequently, the kernel matrix $\bs{K}_n$ for all samples 
    has the following block structure and simplifies to 
    \begin{align*}
        \bs{K}_n & = 
        \begin{pmatrix}
            \bs{J}_m & \bs{J}_m & \cdots & \bs{J}_m\\
            \bs{J}_m & \bs{J}_m & \cdots & \bs{J}_m\\
            \vdots & \vdots & \ddots & \vdots\\
            \bs{J}_m & \bs{J}_m & \cdots & \bs{J}_m
        \end{pmatrix}
        = 
        \begin{pmatrix}
            \bs{I}_m\\
            \bs{I}_m\\
            \vdots\\
            \bs{I}_m
        \end{pmatrix}
        \bs{J}_m
        \begin{pmatrix}
            \bs{I}_m & \bs{I}_m & \ldots & \bs{I}_m
        \end{pmatrix}
        \\
        & = 
        \begin{pmatrix}
            \bs{I}_m\\
            \bs{I}_m\\
            \vdots\\
            \bs{I}_m
        \end{pmatrix}
        \begin{pmatrix}
            \bs{1}_{m_1} & \bs{0} & \cdots & \bs{0}\\
            \bs{0} & \bs{1}_{m_2} & \cdots & \bs{0}\\
            \vdots & \vdots & \ddots & \vdots\\
            \bs{0} & \bs{0} & \cdots & \bs{1}_{m_r}
        \end{pmatrix}
        \bs{G}
        \begin{pmatrix}
            \bs{1}_{m_1}^\top & \bs{0} & \cdots & \bs{0}\\
            \bs{0} & \bs{1}_{m_2}^\top & \cdots & \bs{0}\\
            \vdots & \vdots & \ddots & \vdots\\
            \bs{0} & \bs{0} & \cdots & \bs{1}_{m_r}^\top
        \end{pmatrix}
        \begin{pmatrix}
            \bs{I}_m & \bs{I}_m & \ldots & \bs{I}_m
        \end{pmatrix}\\
        & = 
        \bs{\Pi} 
        \bs{G} 
        \bs{\Pi}^\top, 
    \end{align*}
    where $\bs{I}_m\in \mathbb{R}^{m\times m}$ denotes an identity matrix, and 
    \begin{align}\label{eq:Pi}
        \bs{\Pi} & = 
        \begin{pmatrix}
            \bs{I}_m\\
            \bs{I}_m\\
            \vdots\\
            \bs{I}_m
        \end{pmatrix}
        \begin{pmatrix}
            \bs{1}_{m_1} & \bs{0} & \cdots & \bs{0}\\
            \bs{0} & \bs{1}_{m_2} & \cdots & \bs{0}\\
            \vdots & \vdots & \ddots & \vdots\\
            \bs{0} & \bs{0} & \cdots & \bs{1}_{m_r}
        \end{pmatrix}. 
    \end{align}
    From the condition in Corollary \ref{cor:fixed_property_theorem_balanced_design}, 
    $\bs{G}$ is positive define. 
    Thus, we can verify that the kernel matrix $\bs{K}_n$ is of rank $r$, and the smallest $n-r$ eigenvalues of $\bs{K}_n$ must all be zero. 
    Thus, for $r+1 \le i \le n$, 
    the eigenvector corresponding to the $i$th largest eigenvalue must satisfy that 
    $
    \bs{\gamma}_i^\top \bs{\Pi} \bs{G} \bs{\Pi}^\top\bs{\gamma}_i  = \bs{\gamma}_i^\top \bs{K}_n \bs{\gamma}_i = 0, 
    $
    which further implies that $\bs{\gamma}_i^\top \bs{\Pi} = \bs{0}$.
    
    Second, we consider the function value vector $\bs{f}_0 = (f_0(\bs{X}_1), f_0(\bs{X}_2), \ldots, f_0(\bs{X}_n))^\top$. 
    By the property of the covariate matrix $\bXmatrix$, 
    the vector $\bs{f}_0 \in \mathbb{R}^n$ 
    has the following block structure and simplifies to  
    \begin{align*}
        \bs{f}_0 & = 
        \begin{pmatrix}
            f_0(\bs{X}_1)\\
            f_0(\bs{X}_2)\\
            \vdots\\
            f_0(\bs{X}_n)
        \end{pmatrix}
        = 
        \begin{pmatrix}
            \bs{I}_m\\
            \bs{I}_m\\
            \vdots\\
            \bs{I}_m
        \end{pmatrix}  
        \begin{pmatrix}
            f_0(\bs{X}_1)\\
            f_0(\bs{X}_2)\\
            \vdots\\
            f_0(\bs{X}_m)
        \end{pmatrix}
        = 
        \begin{pmatrix}
            \bs{I}_m\\
            \bs{I}_m\\
            \vdots\\
            \bs{I}_m
        \end{pmatrix}  
        \begin{pmatrix}
            f_0(\tilde{\bs{X}}_1) \bs{1}_{m_1}\\
            f_0(\tilde{\bs{X}}_2) \bs{1}_{m_2}\\
            \vdots\\
            f_0(\tilde{\bs{X}}_r) \bs{1}_{m_r}
        \end{pmatrix} 
        \\
        & = 
        \begin{pmatrix}
            \bs{I}_m\\
            \bs{I}_m\\
            \vdots\\
            \bs{I}_m
        \end{pmatrix}
        \begin{pmatrix}
            \bs{1}_{m_1} & \bs{0} & \cdots & \bs{0}\\
            \bs{0} & \bs{1}_{m_2} & \cdots & \bs{0}\\
            \vdots & \vdots & \ddots & \vdots\\
            \bs{0} & \bs{0} & \cdots & \bs{1}_{m_r}
        \end{pmatrix}
        \begin{pmatrix}
            f_0(\tilde{\bs{X}}_1) \\
            f_0(\tilde{\bs{X}}_2) \\
            \vdots\\
            f_0(\tilde{\bs{X}}_r) 
        \end{pmatrix} 
        = 
        \bs{\Pi} \tilde{\bs{f}}_0, 
    \end{align*}
    where $\tilde{\bs{f}}_0 = (f_0(\tilde{\bs{X}}_1), f_0(\tilde{\bs{X}}_2), \ldots, f_0(\tilde{\bs{X}}_r))^\top$. 
    From the discussion before,  $
    \bs{\gamma}^\top_i \bs{f}_0 = \bs{\gamma}_i^\top \bs{\Pi} \tilde{\bs{f}}_0 = 0
    $ for $r+1 \le i \le n$.  
    Therefore, for $1 \le b_n \le n - r$, we have
    $
        \omega(b_n,\sigma_0^{-1}f_0) =  \sigma_0^{-2} \sum_{i=n-b_n+1}^n (\bs{\gamma}_{i}^\top \bs{f}_0)^2 = 0.
    $
    
    From the above, Corollary \ref{cor:fixed_property_theorem_balanced_design} holds. 
\end{proof}

\section{Partial Permutation Test under Gaussian Process Regression}\label{sec:proof_GPR_null}

To prove Proposition \ref{thm:general_consistency}, we need the following two lemmas. 
\begin{lemma}\label{bayesian_penal}
The posterior mean induced by the GPR model $\tilde{H}_0$ in \eqref{eq:h0_gp} is the same as $\hat{f}_{n,\tau_n}$ defined in formula \eqref{eq:kernel_penal} when $\tau_n = \sigma_0^2/(n^{\gamma}\delta_0^2)$.
\end{lemma}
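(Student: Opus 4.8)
The plan is to compute both objects in closed form at an arbitrary evaluation point $\bs{x}$ and to check that the two expressions coincide once we substitute $\tau_n = \sigma_0^2/(n^\gamma \delta_0^2)$. Throughout I write $\bs{k}(\bs{x}) = (K(\bs{x}, \bs{X}_1), \ldots, K(\bs{x}, \bs{X}_n))^\top$ for the vector of kernel evaluations of $\bs{x}$ against the observed covariates, and $\bs{K}_n$ for the sample kernel matrix.

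First I would treat the kernel regression estimator $\hat{f}_{n,\tau_n}$ in \eqref{eq:kernel_penal}. By the representer theorem, its minimizer over $\mathcal{H}_K$ admits the finite expansion $\hat{f}_{n,\tau_n}(\cdot) = \sum_{i=1}^n \alpha_i K(\cdot, \bs{X}_i)$ for some $\bs{\alpha} \in \mathbb{R}^n$. Substituting this form, and using $\hat{f}_{n,\tau_n}(\bs{X}_j) = [\bs{K}_n\bs{\alpha}]_j$ together with $\|\hat{f}_{n,\tau_n}\|_{\mathcal{H}_K}^2 = \bs{\alpha}^\top \bs{K}_n \bs{\alpha}$, the objective reduces to the finite-dimensional quadratic $n^{-1}(\bs{Y}-\bs{K}_n\bs{\alpha})^\top(\bs{Y}-\bs{K}_n\bs{\alpha}) + \tau_n \bs{\alpha}^\top \bs{K}_n \bs{\alpha}$. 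Setting its gradient to zero gives $\bs{K}_n(\bs{K}_n + n\tau_n\bs{I}_n)\bs{\alpha} = \bs{K}_n\bs{Y}$, which is solved by $\bs{\alpha} = (\bs{K}_n + n\tau_n \bs{I}_n)^{-1}\bs{Y}$, the inverse existing because $n\tau_n>0$. Hence $\hat{f}_{n,\tau_n}(\bs{x}) = \bs{k}(\bs{x})^\top (\bs{K}_n + n\tau_n\bs{I}_n)^{-1}\bs{Y}$.

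Next I would compute the posterior mean under $\tilde{H}_0$. Under the prior $f \sim \text{GP}(0, \delta_0^2 n^{-(1-\gamma)}K)$ with $\bs{Y} = \bs{f} + \bs{\varepsilon}$ and $\bs{\varepsilon}\sim\mathcal{N}(\bs{0}, \sigma_0^2\bs{I}_n)$, the pair $(f(\bs{x}), \bs{Y})$ is jointly Gaussian with $\Cov(f(\bs{x}), \bs{Y}) = \delta_0^2 n^{-(1-\gamma)}\bs{k}(\bs{x})^\top$ and $\Var(\bs{Y}) = \delta_0^2 n^{-(1-\gamma)}\bs{K}_n + \sigma_0^2\bs{I}_n$. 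The standard conditional-Gaussian formula then gives $\tilde{f}(\bs{x}) = \delta_0^2 n^{-(1-\gamma)}\bs{k}(\bs{x})^\top(\delta_0^2 n^{-(1-\gamma)}\bs{K}_n + \sigma_0^2\bs{I}_n)^{-1}\bs{Y}$. Factoring $\delta_0^2 n^{-(1-\gamma)}$ out of the inverse and cancelling yields $\tilde{f}(\bs{x}) = \bs{k}(\bs{x})^\top(\bs{K}_n + n^{1-\gamma}\sigma_0^2\delta_0^{-2}\bs{I}_n)^{-1}\bs{Y}$. Since $n^{1-\gamma}\sigma_0^2\delta_0^{-2} = n\cdot \sigma_0^2/(n^\gamma\delta_0^2) = n\tau_n$, this matches exactly the expression derived for $\hat{f}_{n,\tau_n}$, establishing the lemma.

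The computations are routine, so there is no serious obstacle; the only points requiring care are the bookkeeping of the scalar $\delta_0^2 n^{-(1-\gamma)}$ as it is pulled through the matrix inversion (this is what makes the effective regularization strength line up as $n\tau_n$ rather than $\tau_n$), and the observation that both formulas hold as identities for every $\bs{x}$, not merely at the design points, and remain well defined even when $\bs{K}_n$ is singular, because only the strictly positive-definite matrix $\bs{K}_n + n\tau_n\bs{I}_n$ is inverted.
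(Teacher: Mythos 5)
Your proof is correct and follows essentially the same route as the paper's: the paper simply writes down the two closed-form expressions $\bs{k}(\bs{x})^\top(\bs{K}_n + n\tau_n\bs{I})^{-1}\bs{Y}$ and $\bs{k}(\bs{x})^\top(\bs{K}_n + n\sigma_0^2/(n^\gamma\delta_0^2)\bs{I})^{-1}\bs{Y}$ and notes they coincide, whereas you additionally derive them via the representer theorem and the conditional-Gaussian formula. The extra derivations and the remark about well-definedness when $\bs{K}_n$ is singular are sound.
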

\begin{proof}[Proof of Lemma \ref{bayesian_penal}]
Note that the solution from kernel based regression in  \eqref{eq:kernel_penal} is
\begin{align*}
\hat{f}_{n,\tau_n}(\bs{x}) =& \left(
K(\bs{x},\bs{X}_1), K(\bs{x},\bs{X}_2), \cdots, K(\bs{x},\bs{X}_n)
\right)(\bs{K}_n + n\tau_n \bs{I})^{-1} \bs{Y},
\end{align*}
and 
the posterior mean induced by model $\tilde{H}_0$ in \eqref{eq:h0_gp} is
\begin{align*}
\tilde{f}_{n}(\bs{x}) =& \left(
K(\bs{x},\bs{X}_1), K(\bs{x},\bs{X}_2), \cdots, K(\bs{x},\bs{X}_n)
\right)
 \left( 
 \bs{K}_n + n\frac{\sigma_0^2}{n^{\gamma}\delta_0^2}\bs{I}
 \right)^{-1} \bs{Y}.
\end{align*}
Therefore, if $\tau_n = {\sigma_0^2}/(n^{\gamma}\delta_0^2)$, then  $\hat{f}_{n,\tau_n}=\tilde{f}_{n}$.
\end{proof}

\begin{lemma}%
\label{consistency_penalty}
Under model $H_0$ in \eqref{eq:H_0_fixed}, if 
$\mathcal{X} \in \mathcal{R}^d$ is compact, and $\mathcal{H}_K$ is a RKHS of a universal kernel on $\mathcal{X}$, then for any sequence $\{\tau_n\} \subset (0,\infty)$ with $\tau_n \rightarrow 0$ and $\tau_n^{4}n \rightarrow \infty$,  $\hat{f}_{n,\tau_n}$ in \eqref{eq:kernel_penal} is consistent for the true $f$, that is,
$
\mathbb{E} | \hat{f}_{n,\tau_n}(\bs{X}) -  f(\bs{X}) |^2   \stackrel{\Pr}{\longrightarrow} 0. 
$
\end{lemma}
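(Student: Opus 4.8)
The plan is to identify $\hat{f}_{n,\tau_n}$ in \eqref{eq:kernel_penal} as a regularized kernel-based (support vector machine) estimator under the least-squares loss $L(y,t)=|y-t|^2$ with regularization weight $\tau_n$ on $\|f\|_{\mathcal{H}_K}^2$, and then to verify the hypotheses of the general consistency theorem of \citet{christmann2007consistency} (their Theorem 12), so that the lemma follows as a direct corollary. The first step I would record is the reduction specific to the squared loss. Writing $\mathcal{R}(f)=\E|Y-f(\bs{X})|^2$ for the $L$-risk under model $H_0$ in \eqref{eq:H_0_fixed}, the minimizer of $\mathcal{R}$ over all measurable functions is the regression function $\E(Y\mid\bs{X})=f_0$, and moreover
\[
\mathcal{R}(f)-\mathcal{R}(f_0)=\E|f(\bs{X})-f_0(\bs{X})|^2.
\]
Hence the claimed $L^2(\mu)$-consistency $\E|\hat{f}_{n,\tau_n}(\bs{X})-f_0(\bs{X})|^2\convergep 0$ is \emph{equivalent} to $L$-risk consistency $\mathcal{R}(\hat{f}_{n,\tau_n})-\mathcal{R}(f_0)\convergep 0$, which is exactly the quantity controlled by the cited theory.

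Second, I would verify the structural hypotheses needed to invoke the theorem and set up the standard error decomposition. Compactness of $\mathcal{X}$ is assumed; the Gaussian-noise model $H_0$ guarantees $\E(Y^2)<\infty$ (provided $f_0$ is square-integrable), so the quadratic-growth risk is finite; and universality of $K$ gives that $\mathcal{H}_K$ is dense in $\mathcal{C}(\mathcal{X})$, hence (by compactness of $\mathcal{X}$) dense in $L^2(\mu)$. This density forces the approximation error
\[
A(\tau)\equiv\inf_{f\in\mathcal{H}_K}\bigl\{\tau\|f\|_{\mathcal{H}_K}^2+\mathcal{R}(f)-\mathcal{R}(f_0)\bigr\}
\]
to tend to $0$ as $\tau\to 0$. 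The estimation (sample) error is then handled by the concentration bounds of \citet{christmann2007consistency}, and this is where the growth rate of $\tau_n$ enters: the regularized objective evaluated at $f=0$ gives $\|\hat{f}_{n,\tau_n}\|_{\mathcal{H}_K}=O_{\Pr}(\tau_n^{-1/2})$, and combining this diverging norm bound with the local Lipschitz constant of the quadratic loss on the corresponding ball produces a growth requirement on $\tau_n$. Combining vanishing approximation error (from $\tau_n\to 0$) with vanishing estimation error (from $\tau_n^4 n\to\infty$) then yields $\mathcal{R}(\hat{f}_{n,\tau_n})-\mathcal{R}(f_0)\convergep 0$, and hence the lemma.

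The hard part will be that the least-squares loss is unbounded and not globally Lipschitz, so the clean consistency arguments available for bounded Lipschitz losses do not apply verbatim; one must control the estimation error for a loss of quadratic growth against an estimator whose norm is allowed to diverge like $\tau_n^{-1/2}$, while the Gaussian response itself is unbounded. This is precisely the delicate part carried out in \citet{christmann2007consistency} via a clipping/truncation device together with moment control of $Y$, and it is exactly this analysis that sharpens the mild condition typical of bounded Lipschitz losses to the stated $\tau_n^4 n\to\infty$. I would therefore spend the bulk of the write-up checking that the quadratic loss and the Gaussian-noise response meet the growth and moment conditions of their Theorem 12, after which the conclusion is immediate. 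As a consistency check on the rate, note that with $\tau_n=\sigma_0^2/(n^\gamma\delta_0^2)\asymp n^{-\gamma}$ the two conditions $\tau_n\to 0$ and $\tau_n^4 n\to\infty$ reduce to $0<\gamma<1/4$, matching the hypothesis later used in Proposition \ref{thm:general_consistency}.
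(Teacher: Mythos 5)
Your proposal is correct and follows essentially the same route as the paper: the paper's proof is a one-line appeal to Theorem 12 of \citet{christmann2007consistency} specialized to the squared loss, and your write-up simply makes explicit the hypothesis-checking (risk identity for quadratic loss, denseness from universality, the $\|\hat{f}_{n,\tau_n}\|_{\mathcal{H}_K}=O_{\Pr}(\tau_n^{-1/2})$ bound behind the $\tau_n^4 n\to\infty$ condition) that the paper leaves implicit. No gap; your version is just a more detailed account of the same citation.
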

\begin{proof}[Proof of Lemma \ref{consistency_penalty}]
Lemma \ref{consistency_penalty} follows directly from  \citet[][Theorem 12]{christmann2007consistency} with squared loss. 
\end{proof}

\begin{proof}[{\bf Proof of Proposition \ref{thm:general_consistency}}]
Let $\tau_n = \sigma_0^2/(n^{\gamma}\delta_0^2).$ Because $0<\gamma<1/4$, we have
$
\tau_n = {\sigma_0^2}/( {n^{\gamma}\delta_0^2} ) \rightarrow 0,$
and 
$\tau_n^{4}n = {\sigma_0^8}/{\delta_0^8} \cdot n^{1-4\gamma}\rightarrow \infty.
$
From Lemmas \ref{bayesian_penal} and \ref{consistency_penalty}, Proposition \ref{thm:general_consistency} holds. 
\end{proof}

To prove Theorem \ref{thm:gp_finite_property_theorem}, we need the following two lemmas.

\begin{lemma}\label{total_var_eta}
Let $\bs{W} =(W_1,\ldots, W_n)$ be a random vector in $\mathbb{R}^n$, and $\bs{W} \sim \mathcal{N}(\bs{0}, \sigma^2\bs{D} + \sigma_0^2 \bs{I}_n)$, where $\bs{D}$ is a diagonal matrix with diagonal elements $d_1\geq d_2 \geq \cdots \geq d_n$. Let $\bs{w}=(w_1,\cdots, w_n)$ be a constant vector in $\mathbb{R}^n$, and $\mathcal{S}$ be a discrete permutation set of $\bs{w}$, 
\begin{align*}
	\mathcal{S} =& \left\{ \bs{w}_{\psi}=(w_{\psi(1)},\cdots, w_{\psi(n)}): \ \ \ \psi \in \mathcal{M}(b,b_n)
	\right\},
\end{align*}
or a continuous permutation set of $\bs{w}$,
\begin{align*}
\mathcal{S} = \{\bs{w}^*: 
\bs{w}^*_i = \bs{w}_i, 1\leq i \leq n-b_n, \sum_{j=n-b_n+1}^n {\bs{w}^*_j}^2
= \sum_{j=n-b_n+1}^n {\bs{w}_j}^2
\}.
\end{align*}
Let $\bs{W}^p$ be a random vector uniformly distributed on $\mathcal{S}$, and define measures $\nu$ and $\nu_0$ on $\mathbb{R}^n$ as
\begin{align*}
\nu(\mathcal{A}) & = \Pr(\bs{W} \in \mathcal{A} \mid \bs{W} \in \mathcal{S}), \quad 
\nu_0(\mathcal{A}) = \Pr(\bs{W}^p \in \mathcal{A} \mid \mathcal{S}), 
\qquad \text{for any measurable $\mathcal{A} \subset \mathbb{R}^n$}. 
\end{align*}
Then we have
$
\left\| \nu - \nu_0 \right\|_{\TV}\leq (e^{\overline{\Delta}_{n}}-1)/2,
$
with 
$$
\overline{\Delta}_{n} = 
\frac{d_{a_n} \sigma^{2}}{2\sigma_{0}^{2}(\sigma^2 d_{a_n} + \sigma_0^2)} \sum_{j=a_{n}}^{n} w_{j}^{2}
$$ 
and $a_n = n-b_n+1$.
\end{lemma}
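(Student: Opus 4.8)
The plan is to mirror the argument used for Lemma~\ref{lemma:bound_TV}: first bound the ratio of the Gaussian density of $\bs{W}$ evaluated at any two points of the permutation set $\mathcal{S}$, and then convert this pointwise ratio bound into a total-variation bound. Because $\bs{W} \sim \mathcal{N}(\bs{0}, \sigma^2 \bs{D} + \sigma_0^2 \bs{I}_n)$ with $\bs{D}$ diagonal, the coordinates of $\bs{W}$ are independent, with $W_i$ having variance $s_i \equiv \sigma^2 d_i + \sigma_0^2$. Writing the density $g$ of $\bs{W}$ as a product over coordinates, for any two elements $\bs{w}_\psi, \bs{w}_\phi \in \mathcal{S}$ the per-coordinate normalizing constants cancel (they depend on the index $i$, not on the permuted value), and the first $n-b_n$ coordinates are shared by both points and also cancel. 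Hence, with $a_n = n-b_n+1$,
\[
\log \frac{g(\bs{w}_\psi)}{g(\bs{w}_\phi)} = \frac{1}{2}\sum_{i=a_n}^n \frac{w_{\phi(i)}^2}{s_i} - \frac{1}{2}\sum_{i=a_n}^n \frac{w_{\psi(i)}^2}{s_i}.
\]

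The key step will be to bound this difference. Since $d_1 \ge \cdots \ge d_n$, for every $i \ge a_n$ we have $\sigma_0^2 \le s_i \le s_{a_n} = \sigma^2 d_{a_n} + \sigma_0^2$, so $s_{a_n}^{-1} \le s_i^{-1} \le \sigma_0^{-2}$. Because permuting (or, in the continuous case, rotating) only the last $b_n$ coordinates preserves $\sum_{i=a_n}^n w_{\psi(i)}^2 = \sum_{j=a_n}^n w_j^2 \equiv R$, each weighted sum $\sum_{i=a_n}^n w_{\psi(i)}^2/s_i$ lies in the fixed interval $[R/s_{a_n},\, R/\sigma_0^2]$, whose length is $(\sigma_0^{-2} - s_{a_n}^{-1})R$. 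Using the identity $\sigma_0^{-2} - s_{a_n}^{-1} = \sigma^2 d_{a_n}/(\sigma_0^2 s_{a_n})$, this length equals exactly $2\overline{\Delta}_n$. Consequently the two weighted sums differ by at most $2\overline{\Delta}_n$, giving $|\log(g(\bs{w}_\psi)/g(\bs{w}_\phi))| \le \overline{\Delta}_n$ for every pair of elements in $\mathcal{S}$.

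With this uniform density-ratio bound established, the passage to total variation is identical to the end of the proof of Lemma~\ref{lemma:bound_TV}. For the discrete set, $\nu(\{\bs{w}_\psi\}) = g(\bs{w}_\psi)/\sum_{\phi} g(\bs{w}_\phi)$ lies within $[\,|\mathcal{S}|^{-1} e^{-\overline{\Delta}_n},\, |\mathcal{S}|^{-1} e^{\overline{\Delta}_n}\,]$ while $\nu_0(\{\bs{w}_\psi\}) = |\mathcal{S}|^{-1}$, so the discrepancy on each atom is at most $|\mathcal{S}|^{-1}(e^{\overline{\Delta}_n}-1)$; summing over atoms and halving yields $\|\nu - \nu_0\|_{\TV} \le (e^{\overline{\Delta}_n}-1)/2$. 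For the continuous set, the same ratio bound shows that the Radon--Nikodym derivative $d\nu/d\nu_0$ takes values in $[e^{-\overline{\Delta}_n}, e^{\overline{\Delta}_n}]$ on the sphere $\{\sum_{i\ge a_n}(w_i^*)^2 = R\}$, whence $\|\nu - \nu_0\|_{\TV} = \tfrac{1}{2}\int |d\nu/d\nu_0 - 1|\, d\nu_0 \le (e^{\overline{\Delta}_n}-1)/2$.

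I do not expect a serious obstacle here: the independence of the coordinates makes the log-density ratio an explicit separable quadratic form, and the monotonicity $d_i \le d_{a_n}$ reduces the bound to the elementary interval-length estimate above. The only points requiring care are verifying the algebraic identity that matches the interval length to the stated $\overline{\Delta}_n$, and treating the continuous case through the Radon--Nikodym derivative rather than atomic masses; both are routine, so the core of the argument is really the single density-ratio inequality.
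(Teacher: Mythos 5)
Your proposal is correct and follows essentially the same route as the paper: both establish the uniform pairwise density-ratio bound $|\log(g(\bs{w}_\psi)/g(\bs{w}_\phi))|\le \overline{\Delta}_n$ (the paper does this by comparing each exponent to the fixed reference $\exp\{-\sum_{j\ge a_n} w_j^2/(2\sigma_0^2)\}$, you do it via the interval $[R/s_{a_n}, R/\sigma_0^2]$ containing every weighted sum, which is the same algebra) and then convert it to the total-variation bound atom by atom exactly as in the paper. No gap.
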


\begin{proof}[Proof of Lemma \ref{total_var_eta}]
We prove only the case with a discrete permutation set, since the proof for the case with a continuous permutation set is very similar.
For any $\psi \in \mathcal{M}(n,b_n)$,  measure $\nu$ satisfies 
\begin{align*}
\nu(\{\bs{w}_\psi\}) & \propto   (2\pi)^{-n/2}\left\{\prod_{j=1}^{n}
(\sigma^{2}d_{j}+\sigma_{0}^{2})\right\}^{-1/2}\exp\left\{ -\sum_{j=1}^{a_{n}-1}\frac{w_{j}^{2}}{2(\sigma^{2}d_{j}+\sigma_{0}^{2})}\right\} \exp\left\{ -\sum_{j=a_{n}}^{n}\frac{w_{\psi(j)}^{2}}{2(\sigma^{2}d_{j}+\sigma_{0}^{2})}\right\} \\
& \propto \exp\left\{ -\sum_{j=a_{n}}^{n}\frac{w_{\psi(j)}^{2}}{2(\sigma^{2}d_{j}+\sigma_{0}^{2})}\right\}. 
\end{align*}
This implies that for  any $\psi \in \mathcal{M}(n,b_n)$, 
\begin{align*}
\nu(\{\bs{w}_\psi\}) = \frac{1}{C}\exp\left\{ -\sum_{j=a_{n}}^{n}\frac{w_{\psi(j)}^{2}}{2(\sigma^{2}d_{j}+\sigma_{0}^{2})}\right\},\quad 
\text{with }
C=\sum_{\phi\in \mathcal{M}(n,b_n)} \exp\left\{ -\sum_{j=a_{n}}^{n}\frac{w_{\phi(j)}^{2}}{2(\sigma^{2}d_{j}+\sigma_{0}^{2})}\right\}. 
\end{align*}
Note that for any $\psi \in \mathcal{M}(n, b_n)$, 
\begin{align*}
0
& \le  
\sum_{j=a_{n}}^{n}\frac{w_{\psi(j)}^{2}}{2\sigma_{0}^{2}} -\sum_{j=a_{n}}^{n}\frac{w_{\psi(j)}^{2}}{2(\sigma^{2}d_{j}+\sigma_{0}^{2})} \\
& \leq 
\sum_{j=a_{n}}^{n}\frac{w_{\psi(j)}^{2}}{2\sigma_{0}^{2}} -\sum_{j=a_{n}}^{n}\frac{w_{\psi(j)}^{2}}{2(\sigma^{2}d_{a_n}+\sigma_{0}^{2})}  = 
\frac{1}{2}\left( \frac{1}{\sigma_0^2} - \frac{1}{\sigma^2d_{a_n} + \sigma_0^2} \right)\sum_{j=a_{n}}^{n}w_{\psi(j)}^{2} 
\\
& = 
\frac{\sigma^{2}d_{a_n}}{2\sigma_0^2(\sigma^{2}d_{a_n}+\sigma_{0}^{2})}\sum_{j=a_{n}}^{n}w_{j}^{2} 
=\overline{\Delta}_n.
\end{align*}
Thus, for any $\psi \in \mathcal{M}(n,b_n)$,
\begin{align*}
1 \leq \frac{\exp\left\{ -\sum_{j=a_{n}}^{n}\frac{w_{\psi(j)}^{2}}{2(\sigma^{2}d_{j}+\sigma_{0}^{2})}\right\}}{\exp\left(-\sum_{j=a_{n}}^{n}\frac{w_{j}^{2}}{2\sigma_{0}^{2}}\right)} 
= 
\exp\left\{
\sum_{j=a_{n}}^{n}\frac{w_{\psi(j)}^{2}}{2\sigma_{0}^{2}} -\sum_{j=a_{n}}^{n}\frac{w_{\psi(j)}^{2}}{2(\sigma^{2}d_{j}+\sigma_{0}^{2})}
\right\}
\leq e^{\overline{\Delta}_n}.
\end{align*}
Let $C_0 = \exp\{-\sum_{j=a_{n}}^{n}w_{j}^{2}/({2\sigma_{0}^{2}})\}$. We then have
\begin{align*}
C_0 \le \exp\left\{ -\sum_{j=a_{n}}^{n}\frac{w_{\psi(j)}^{2}}{2(\sigma^{2}d_{j}+\sigma_{0}^{2})}\right\} \le C_0 \cdot e^{\overline{\Delta}_n},
\end{align*}
and 
\begin{align*}
|\mathcal{M}(n,b_n)| \cdot C_0 
\le 
C=\sum_{\phi\in \mathcal{M}(n,b_n)} \exp\left\{ -\sum_{j=a_{n}}^{n}\frac{w_{\phi(j)}^{2}}{2(\sigma^{2}d_{j}+\sigma_{0}^{2})}\right\} \le |\mathcal{M}(n,b_n)|\cdot C_0 \cdot e^{\overline{\Delta}_n}, 
\end{align*}
where $|\mathcal{M}(n,b_n)|$ denotes the cardinality of the set $\mathcal{M}(n,b_n)$. 
These imply that
\begin{align*}
\nu(\{\bs{w}_\psi\}) = \frac{1}{C}\exp\left\{ -\sum_{j=a_{n}}^{n}\frac{w_{\psi(j)}^{2}}{2(\sigma^{2}d_{j}+\sigma_{0}^{2})}\right\} \in \left[\frac{1}{|\mathcal{M}(n,b_n)|}e^{-\overline{\Delta}_n} ,\  \frac{1}{|\mathcal{M}(n,b_n)|}e^{\overline{\Delta}_n}\right].
\end{align*}
Consequently, 
\begin{align*}
\nu(\{\bs{w}_\psi\}) - \nu_0(\{\bs{w}_\psi\}) \in \left[ - \frac{1}{|\mathcal{M}(n,b_n)|}(1 - e^{-\overline{\Delta}_n}) ,\  \frac{1}{|\mathcal{M}(n,b_n)|}(e^{\overline{\Delta}_n}-1)\right],
\end{align*}
which immediately implies that
\begin{align*}
\left| \nu(\{\bs{w}_\psi\}) - \nu_0(\{\bs{w}_\psi\}) \right| \leq \frac{1}{|\mathcal{M}(n,b_n)|}\max\{1-e^{-\overline{\Delta}_n}, e^{\overline{\Delta}_n}-1\}\leq \frac{1}{|\mathcal{M}(n,b_n)|}(e^{\overline{\Delta}_{n}}-1).
\end{align*}
Therefore,
\begin{align*}
\|\nu - \nu_0 \|_{\TV} = & \frac{1}{2}\sum_{\psi\in \mathcal{M}(n,b_n)}\left| \nu(\{\bs{w}_\psi\}) - \nu_0(\{\bs{w}_\psi\}) \right|
\leq \frac{1}{2}(e^{\overline{\Delta}_{n}}-1), 
\end{align*}
i.e., Lemma \ref{total_var_eta} holds. 
\end{proof}

\begin{lemma}\label{lemma:bound_TV_GP}
Let $\{(\bs{X}_i,Y_i,Z_i)\}_{1\leq i \leq n}$ be samples from model $\tilde{H}_0$ in \eqref{eq:h0_gp},
$\mathcal{S}_y$ be the discrete (or continuous) permutation set defined as in Algorithm \ref{alg:partial_permu}, $\bs{Y}^p$ be a random vector uniformly distributed on $\mathcal{S}_y$ given $\bXmatrix, \mathcal{S}_y$ and $\bs{Z}$, and $\nu$ and $\nu_0$ be two measures on $\mathbb{R}^n$ defined as 
\begin{align*}
\nu(\mathcal{A}) = \Pr( \bs{Y} \in \mathcal{A} \mid \bXmatrix, \mathcal{S}_y, \bs{Z} ),
\qquad
\nu_0(\mathcal{A}) = \Pr(\bs{Y}^p \in \mathcal{A} \mid \bXmatrix,\mathcal{S}_y, \bs{Z} ),
\end{align*}
for any measurable set $\mathcal{A}\subset \mathbb{R}^n$. 
Then for any $1 \le b_n \le n$, 
$
   \|\nu - \nu_0\|_{\TV} \le  (e^{\tilde{\Delta}_n}-1)/2, 
$
where 
\begin{align*}
\tilde{\Delta}_n 
&
\equiv 
\frac{c_{n-b_n+1} \delta_0^2/n^{1-\gamma}}{ 2 \sigma_{0}^{2}} \sum_{j=n-b_n+1}^{n} \frac{\left(\bs{\gamma}_j^\top \bs{Y}\right)^{2}}{c_{j} \delta_0^2/n^{1-\gamma} + \sigma_0^2}
= 
\frac{c_{n-b_n+1} \delta_0^2/n^{1-\gamma}}{ 2 \sigma_{0}^{2}} \sum_{j=n-b_n+1}^{n} \frac{W_{j}^{2}}{c_{j} \delta_0^2/n^{1-\gamma} + \sigma_0^2}, 
\end{align*}
where $\bs{\gamma}_j$ is the eigenvector of the kernel matrix $\bs{K}_n$ corresponding to the $j$th largest eigenvalue, and $c_{n-b_n+1}$ is the $(n-b_n+1)$th largest eigenvalue of $\bs{K}_n$. 
\end{lemma}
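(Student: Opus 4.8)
The plan is to reduce the statement to the independent-coordinate Gaussian setting already handled by Lemma~\ref{total_var_eta}. First I would record the marginal law of $\bs{Y}$ under $\tilde{H}_0$. Since $f$ is independent of $(\bXmatrix,\bs{Z},\bs{\varepsilon})$ and $f\sim\mathrm{GP}(0,\tfrac{\delta_0^2}{n^{1-\gamma}}K)$, conditionally on $(\bXmatrix,\bs{Z})$ the vector $(f(\bs{X}_1),\dots,f(\bs{X}_n))^\top$ is $\mathcal{N}(\bs{0},\tfrac{\delta_0^2}{n^{1-\gamma}}\bs{K}_n)$ and independent of $\bs{\varepsilon}\sim\mathcal{N}(\bs{0},\sigma_0^2\bs{I}_n)$, so that $\bs{Y}\mid\bXmatrix,\bs{Z}\sim\mathcal{N}(\bs{0},\tfrac{\delta_0^2}{n^{1-\gamma}}\bs{K}_n+\sigma_0^2\bs{I}_n)$, exactly the first line of \eqref{eq:vcm_all_model}. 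Using $\bs{K}_n=\bs{\Gamma}\bs{C}\bs{\Gamma}^\top$ gives $\bs{W}=\bs{\Gamma}^\top\bs{Y}\mid\bXmatrix,\bs{Z}\sim\mathcal{N}(\bs{0},\tfrac{\delta_0^2}{n^{1-\gamma}}\bs{C}+\sigma_0^2\bs{I}_n)$, whose coordinates $W_j$ are independent with $W_j\sim\mathcal{N}(0,\tfrac{\delta_0^2}{n^{1-\gamma}}c_j+\sigma_0^2)$.

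Next I would transfer the total variation computation from $\bs{Y}$-space to $\bs{W}$-space. Because $\bs{\Gamma}$ is orthogonal and $\bs{Y}=\bs{\Gamma}\bs{W}$, the permutation set satisfies $\mathcal{S}_y=\{\bs{\Gamma}\bs{w}:\bs{w}\in\mathcal{S}_w\}$ and the map $\bs{w}\mapsto\bs{\Gamma}\bs{w}$ is a measurable bijection. Hence conditioning on $\mathcal{S}_y$ is equivalent to conditioning on $\mathcal{S}_w$, the law of $\bs{Y}$ restricted to $\mathcal{S}_y$ is the pushforward under $\bs{\Gamma}$ of the law of $\bs{W}$ restricted to $\mathcal{S}_w$, and the uniform law on $\mathcal{S}_w$ pushes forward to the uniform law on $\mathcal{S}_y$. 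Since total variation distance is invariant under a bijection, $\|\nu-\nu_0\|_{\TV}$ equals the total variation distance between the restricted Gaussian and the uniform measure on $\mathcal{S}_w$.

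This places me exactly in the hypotheses of Lemma~\ref{total_var_eta}, applied with $\sigma^2=\delta_0^2/n^{1-\gamma}$, diagonal matrix $\bs{D}=\bs{C}$ (so $d_j=c_j$, still descending), observed vector $\bs{w}=\bs{W}$, and $a_n=n-b_n+1$. That lemma yields $\|\nu-\nu_0\|_{\TV}\le(e^{\overline{\Delta}_n}-1)/2$ with $\overline{\Delta}_n=\frac{c_{a_n}\sigma^2}{2\sigma_0^2(\sigma^2 c_{a_n}+\sigma_0^2)}\sum_{j=a_n}^n W_j^2$. It remains to compare $\overline{\Delta}_n$ with the stated $\tilde{\Delta}_n$. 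Writing both as $\sum_{j=a_n}^n W_j^2\cdot\frac{\sigma^2 c_{a_n}}{2\sigma_0^2(\,\cdot\,)}$, the denominator in $\overline{\Delta}_n$ is $\sigma^2 c_{a_n}+\sigma_0^2$ while in $\tilde{\Delta}_n$ it is the $j$-dependent $\sigma^2 c_j+\sigma_0^2$; since $c_j\le c_{a_n}$ for every $j\ge a_n$, each summand of $\tilde{\Delta}_n$ dominates the corresponding summand of $\overline{\Delta}_n$ (term by term, with the same $W_j^2$), so $\overline{\Delta}_n\le\tilde{\Delta}_n$. As $x\mapsto(e^x-1)/2$ is increasing, $\|\nu-\nu_0\|_{\TV}\le(e^{\overline{\Delta}_n}-1)/2\le(e^{\tilde{\Delta}_n}-1)/2$, which is the claim.

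The delicate points here are bookkeeping rather than conceptual. The main one is the reduction in the second step: one must verify that conditioning on the random sets $\mathcal{S}_y$ and $\mathcal{S}_w$ is genuinely equivalent and that invariance of total variation under $\bs{\Gamma}$ applies to both the discrete and the continuous constructions in Algorithm~\ref{alg:partial_permu}; in the continuous case the sphere $\sum_{j>n-b_n}(W_j^{*})^2=\text{const}$ is carried to its $\bs{\Gamma}$-image, and Lemma~\ref{total_var_eta} already treats it. A second point worth flagging is why the lemma is deliberately stated with the sharper $\tilde{\Delta}_n$ rather than the $\overline{\Delta}_n$ that Lemma~\ref{total_var_eta} delivers: because $W_j^2/(\sigma^2 c_j+\sigma_0^2)$ are independent $\chi^2_1$ variables, one has $\tilde{\Delta}_n=\tfrac{1}{2}\tilde{\omega}(b_n,\xi_n)\,\chi^2_{b_n}$ in distribution, the clean chi-square that the correction term \eqref{eq:v_correct_gaussian} of Theorem~\ref{thm:gp_finite_property_theorem} requires. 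One cannot read $\tilde{\Delta}_n$ off directly as a maximal log-density ratio (a rearrangement inequality intervenes once the per-coordinate variances differ), so the term-by-term monotone comparison $\overline{\Delta}_n\le\tilde{\Delta}_n$ is exactly the device that upgrades the crude bound into the form in which the chi-square structure is transparent.
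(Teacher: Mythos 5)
Your proposal is correct and follows essentially the same route as the paper's proof: reduce to the diagonal Gaussian law of $\bs{W}=\bs{\Gamma}^\top\bs{Y}$, invoke Lemma~\ref{total_var_eta} with $\sigma^2=\delta_0^2/n^{1-\gamma}$ and $\bs{D}=\bs{C}$, and then pass from $\overline{\Delta}_n$ to $\tilde{\Delta}_n$ via the term-by-term inequality $c_j\le c_{n-b_n+1}$ for $j\ge n-b_n+1$. Your closing remarks on why the lemma is stated with the looser but chi-square-friendly $\tilde{\Delta}_n$ accurately reflect how the bound is used in Theorem~\ref{thm:gp_finite_property_theorem}.
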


\begin{proof}[Proof of Lemma \ref{lemma:bound_TV_GP}]
We prove only the case with the discrete permutation set $\mathcal{S}_y$, since the proof for the case with continuous permutation set is very similar. 
Under model $\tilde{H}_0$ in \eqref{eq:h0_gp}, we have 
\begin{align*}
\bs{Y} \mid \bXmatrix, \bs{Z},\tilde{H}_{0} \sim \mathcal{N}\left(\bs{0}, \frac{\delta_0^{2}}{n^{1-\gamma}}\bs{K}_n+\sigma_{0}^{2}\bs{I}_n\right).
\end{align*}
Recall that $\bs{K}_n=\bs{\Gamma} \bs{C} \bs{\Gamma}^\top$ is the eigen-decomposition of $\bs{K}_n$, where $\bs{\Gamma}$ is an orthogonal matrix and $\bs{C}=\text{diag}\{c_1,\cdots, c_n\}$ with $c_1 \geq c_2 \geq \cdots \geq c_n$, $\bs{W} = \bs{\Gamma}^\top \bs{Y}$, and 
$\mathcal{S}_{w} = \bs{\Gamma}^\top \mathcal{S}_y \equiv \{\bs{\Gamma}^\top \bs{y}: \bs{y} \in \mathcal{S}_y \}$. 
We then have
\begin{align*}
\bs{W} \mid \bXmatrix, \bs{Z}, \tilde{H}_{0}\sim \mathcal{N}\left(\bs{0},\frac{\delta_0^{2}}{n^{1-\gamma}}\bs{C} +\sigma_{0}^{2}\bs{I}_n\right).
\end{align*}
Moreover, for any measurable set $\mathcal{A} \in \mathbb{R}^n$, 
\begin{align*}
    \nu(\mathcal{A}) & = \Pr( \bs{Y} \in \mathcal{A} \mid \bXmatrix, \mathcal{S}_y, \bs{Z} ) 
    = \Pr( \bs{W} \in \bs{\Gamma}^\top \mathcal{A} \mid \bs{W} \in  \mathcal{S}_w,  \bXmatrix, \bs{Z}), \\
    \nu_0(\mathcal{A}) & = \Pr(\bs{Y}^p \in \mathcal{A} \mid \bXmatrix,\mathcal{S}_y, \bs{Z} )
    = 
    \Pr(\bs{W}^p \in \bs{\Gamma}^\top \mathcal{A} \mid  \mathcal{S}_w,  \bXmatrix, \bs{Z}), 
\end{align*}
where $\bs{\Gamma}^\top \mathcal{A} = \{\bs{\Gamma}^\top \bs{y}: \bs{y} \in \mathcal{A}\}$ and $\bs{W}^p$ is uniformly distributed in $\mathcal{S}_{w}$. 
From Lemma \ref{total_var_eta}, the total variation distance between the two measures $\nu$ and $\nu_0$ can be bounded by 
\begin{align*}
    \| \nu - \nu_0 \|_{\TV} & \le 
    \frac{1}{2}
    \exp\left\{
    \frac{c_{n-b_n+1} \delta_0^2/n^{1-\gamma}}{2\sigma_{0}^{2}(c_{n-b_n+1} \delta_0^2/n^{1-\gamma}  + \sigma_0^2)} \sum_{j=n - b_n + 1 }^{n} W_{j}^{2} \right\} - \frac{1}{2}
    \\
    & = 
    \frac{1}{2}
    \exp\left\{
    \frac{c_{n-b_n+1} \delta_0^2/n^{1-\gamma}}{2\sigma_{0}^{2}} \sum_{j=n-b_{n}+1}^{n} 
    \frac{W_{j}^{2}}{c_{n-b_n+1} \delta_0^2/n^{1-\gamma}  + \sigma_0^2} \right\} - \frac{1}{2}
    \\
    & \le 
    \frac{1}{2}
    \exp\left\{
    \frac{c_{n-b_n+1} \delta_0^2/n^{1-\gamma}}{2\sigma_{0}^{2}} \sum_{j=n-b_n+1}^{n} 
    \frac{W_{j}^{2}}{c_{j} \delta_0^2/n^{1-\gamma}  + \sigma_0^2} \right\} - \frac{1}{2}
    = 
    \frac{1}{2} \left( e^{\tilde{\Delta}} - 1 \right). 
\end{align*}
Therefore, Lemma \ref{lemma:bound_TV_GP} holds. 
\end{proof}

\begin{proof}[{\bf Proof of Theorem \ref{thm:gp_finite_property_theorem}}]
    Define $\nu, \nu_0$ and $\tilde{\Delta}_n$ the same as in Lemma \ref{lemma:bound_TV_GP}. 
	From Lemma \ref{lemma:basic_ineq_perm_p_val}, for any $\alpha, \eta \ge 0$, 
	\begin{align*}
	\Pr \{  p(\bXmatrix, \bs{Y}, \bs{Z} ) \leq \alpha  \mid \bXmatrix, \bs{Z} \} 
	& \leq  \alpha+ \eta +\Pr \{\| \nu - \nu_0 \|_{\TV} > \eta \mid \bXmatrix, \bs{Z}  \}, 
	\end{align*}
	Given any $\alpha \in (0,1)$, let $\eta = \tilde{v} (b_n, \xi_n, \alpha_0)$. 
	By definition and from Lemma \ref{lemma:bound_TV_GP}, we have 
	\begin{align*}
	   \| \nu - \nu_0 \|_{\TV} > \tilde{v} (b_n, \xi_n, \alpha_0)
	   & \Longrightarrow 
	   \frac{1}{2} \left( e^{\tilde{\Delta}} - 1 \right) > 
	   \frac{1}{2} \exp\left\{ \frac{1}{2} c_{n-b_n+1} \frac{\delta_0^2/n^{1-\gamma}}{\sigma_0^2} Q_{b_n}(1-\alpha_0) \right\} - \frac{1}{2}
	   \\
	   & \Longrightarrow  \tilde{\Delta} > \frac{1}{2} c_{n-b_n+1} \frac{\delta_0^2/n^{1-\gamma}}{\sigma_0^2} Q_{b_n}(1-\alpha_0)
	   \\
	   & \Longrightarrow 
	   \sum_{j=n-b_n+1}^{n} 
    \frac{W_{j}^{2}}{c_{j} \delta_0^2/n^{1-\gamma}  + \sigma_0^2} > Q_{b_n}(1-\alpha_0). 
	\end{align*}
	Note that conditional on $(\bXmatrix, \bs{Z})$, 
	$
	\sum_{j=a_{n}}^{n} W_j^2 / (c_{j} \delta_0^2/n^{1-\gamma}  + \sigma_0^2)
	$
	follows $\chi^2$ distribution with degrees of freedom $b_n$. 
	Thus, 
	\begin{align*}
	   \Pr \{\| \nu - \nu_0 \|_{\TV} > \tilde{v} (b_n, \xi_n, \alpha_0) \mid \bXmatrix, \bs{Z}  \} 
	   \le 
	   \Pr\left\{
	   \sum_{j=n-b_n+1}^{n} 
    \frac{W_{j}^{2}}{c_{j} \delta_0^2/n^{1-\gamma}  + \sigma_0^2} > Q_{b_n}(1-\alpha_0) 
    \mid  \bXmatrix, \bs{Z} 
	   \right\}
	   = \alpha_0, 
	\end{align*}
	and consequently, 
	\begin{align*}
	\Pr \{  p( \bXmatrix, \bs{Y}, \bs{Z} ) \leq \alpha  \mid \bXmatrix, \bs{Z} \} 
	& \leq  \alpha+ \tilde{v}(b_n, \xi_n, \alpha_0) +\Pr \{\| \nu - \nu_0 \|_{\TV} > \tilde{v}(b_n, \xi_n, \alpha_0) \mid \bXmatrix, \bs{Z}  \}
	\nonumber
	\\
	& = \alpha+ \tilde{v}(b_n, \xi_n, \alpha_0) + \alpha_0. 
	\end{align*}
    By definition, this immediately implies that, for any $\alpha\in (0,1)$, 
	\begin{align*}
	& \quad \ \Pr(\tilde{p}_c( \bXmatrix, \bs{Y}, \bs{Z} )\leq\alpha \mid \bXmatrix,  \bs{Z} )
	= \Pr\left\{
	p( \bXmatrix, \bs{Y}, \bs{Z} )  \leq \alpha-\tilde{v}(b_n, \xi_n, \alpha_0)- \alpha_0 \mid \bXmatrix,  \bs{Z}
	\right\}\\
	& \le 
	\I(\alpha \ge \tilde{v}(b_n, \xi_n, \alpha_0) + \alpha_0) \cdot
	\Pr\left\{
	p( \bXmatrix, \bs{Y}, \bs{Z} )  \leq \alpha-\tilde{v}(b_n, \xi_n, \alpha_0)- \alpha_0 \mid \bXmatrix,  \bs{Z}
	\right\} \\
	& \le \I(\alpha \ge \tilde{v}(b_n, \xi_n, \alpha_0) + \alpha_0) \cdot \alpha 
	\le \alpha. 
	\end{align*}
	Therefore, 
	Theorem \ref{thm:gp_finite_property_theorem} holds. 
\end{proof}

To prove Theorem \ref{thm:general_ppt_valid}, we need the following three lemmas. 

\begin{lemma}\label{lemma:accurate_bound}
Let $K$ be a Mercer kernel on a probability space $\mathcal{X}$
with probability measure $\mu$, which can be written as
\[
K(\bs{x}_1,\bs{x}_2)=\sum_{i=1}^{\infty}\lambda_{i}\psi_{i}(\bs{x}_1)\psi_{i}(\bs{x}_2),
\]
where $\{\lambda_i\}_{i\geq 1}$ is a sequence of summable non-negative, non-increasing numbers, and $\{\psi_i\}_{i\geq 1}$ is a family of mutually orthogonal unit norm functions with respect to the scalar product 
$(f,g)\mapsto \int _{\mathcal{X}}fg\text{d}\mu$. 
We have
\begin{itemize}
\item[1)]for $1\leq r\leq n$, $1\leq i \le  n$, 
	\begin{align*}
	|l_{i}-\lambda_{i}|\leq \lambda_{i}C(r,n)+E(r,n), 
	\end{align*}
where $l_{1},l_{2},\cdots,l_{n}$ are the eigenvalues of ${\bs{K}}_{n}/n$
	satisfying $l_{1}\geq l_{2}\geq\cdots\geq l_{n}\geq0$,
$\left[\bs{K}_{n}\right]_{ij}=K(\bs{X}_{i},\bs{X}_{j})$, $\bs{X}_i$'s are i.i.d.\ samples from $(\mathcal{X}, \mu)$, and 	
	 $C(r,n)$ and $E(r,n)$ are defined in \citet*{braun2006accurate};
\item[2)]if $K$ also satisfies $K(\bs{x},\bs{x})\leq K_0<\infty$
	for all $x\in\mathcal{X}$, then, for $1\leq r\leq n$, with probability
	larger than $1-\delta$, 
	\begin{align*}
	C(r,n) & < \tilde{C}(r, n, \delta) \equiv  r\sqrt{\frac{2K_0}{n\lambda_{r}}\log\frac{2r(r+1)}{\delta}}+\frac{4K_0r}{3n\lambda_{r}}\log\frac{2r(r+1)}{\delta},\\
	E(r,n) & < \tilde{E}(r, n, \delta) \equiv \lambda_{r}+\Lambda_{>r}+\sqrt{\frac{2K_0\Lambda_{>r}}{n}\log\frac{2}{\delta}}+\frac{2K_0}{3n}\log\frac{2}{\delta},
	\end{align*}
where $\Lambda_{>r}=\sum_{i=r+1}^{\infty}\lambda_i$.
\end{itemize}
\end{lemma}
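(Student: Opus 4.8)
The plan is to obtain both statements as direct consequences of the eigenvalue perturbation analysis in \citet{braun2006accurate}. Recall that for a Mercer kernel the population eigenvalues $\{\lambda_i\}$ are the eigenvalues of the integral operator, while $l_1 \ge \cdots \ge l_n$ are the eigenvalues of the empirical Gram operator $\bs{K}_n/n$. Braun's starting point is to express $\bs{K}_n/n$ through the random empirical second-moment matrix of the eigenfunctions, whose $(i,j)$ entry is $n^{-1}\sum_{k=1}^n \psi_i(\bs{X}_k)\psi_j(\bs{X}_k)$; the quantities $C(r,n)$ and $E(r,n)$ are exactly the control terms Braun introduces for the leading $r\times r$ block and for the tail beyond index $r$, respectively. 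Thus the whole lemma is a transcription of Braun's bounds into the present notation.

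For part 1), I would invoke Braun's main deterministic eigenvalue bound. After truncating at level $r$, the discrepancy between $l_i$ and $\lambda_i$ splits into a multiplicative perturbation of the $i$th eigenvalue, governed by the operator-norm deviation $C(r,n)$ of the leading empirical block from the identity, together with an additive contribution $E(r,n)$ coming from eigenfunctions indexed beyond $r$. Combining Braun's block decomposition with Weyl's inequality then yields $|l_i - \lambda_i| \le \lambda_i C(r,n) + E(r,n)$ uniformly over $1\le i \le n$ and every truncation $1 \le r \le n$, which is precisely the claim.

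For part 2), I would bound $C(r,n)$ and $E(r,n)$ in probability via Bernstein's inequality. The boundedness hypothesis $K(\bs{x},\bs{x}) \le K_0$ gives $\sum_i \lambda_i \psi_i(\bs{x})^2 \le K_0$ pointwise, which simultaneously controls the range and the variance of each summand $\psi_i(\bs{X}_k)\psi_j(\bs{X}_k)$ feeding into the empirical second-moment matrix. Applying Bernstein to each of the $r(r+1)/2$ entries of the leading block and taking a union bound produces the expression $\tilde{C}(r,n,\delta)$, whose logarithmic factor $\log\{2r(r+1)/\delta\}$ reflects that union bound; applying Bernstein to the single tail sum $\sum_{i>r}\lambda_i\, n^{-1}\sum_k \psi_i(\bs{X}_k)^2$, whose expectation is $\Lambda_{>r}$, produces $\tilde{E}(r,n,\delta)$ with its leading deterministic terms $\lambda_r + \Lambda_{>r}$ and the Bernstein fluctuation terms.

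The main obstacle here is essentially bookkeeping rather than mathematics: one must match the precise definitions of $C(r,n)$ and $E(r,n)$ in \citet{braun2006accurate} to the present setup, and verify that the bounded-kernel assumption supplies exactly the variance proxy $K_0 \Lambda_{>r}/n$ and the range bound $K_0/n$ needed to reproduce the stated numerical constants (the factors $2$, $4/3$, and $2/3$). No genuinely new probabilistic argument is required beyond carefully invoking Braun's deterministic bound and its accompanying Bernstein-type concentration estimates.
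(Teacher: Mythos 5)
Your proposal is correct and matches the paper's approach: the paper's entire proof of this lemma is a one-line citation to Theorems 1 and 5 of \citet{braun2006accurate}, and your sketch accurately describes what those theorems contain (the relative--absolute eigenvalue bound for part 1, and the Bernstein-plus-union-bound concentration estimates for part 2). The additional detail you provide about the block decomposition and the origin of the $\log\{2r(r+1)/\delta\}$ factor is a faithful elaboration rather than a different route.
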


\begin{proof}[Proof of Lemma \ref{lemma:accurate_bound}]
Lemma \ref{lemma:accurate_bound} follows directly from \citet[][Theorems 1 and 5]{braun2006accurate}. 
\end{proof}

\begin{lemma}\label{ineq_lambda_n}
	Let $K$ be a Mercer kernel on $\mathcal{X}$ with probability measure $\mu$ satisfying $K(x,x)\leq K_0<\infty$
	for all $x\in\mathcal{X}$. If the eigenvalues of $K$ satisfying
	$\lambda_{k}=O(k^{-\rho})$, $\rho>1$, i.e. there exists $M$
	such that $\lambda_{k}\leq Mk^{-\rho}$. Then, for any fixed $v>0$, any $n\geq2,$ and any $i\geq[n^{1/\rho}]$,
	\[
	P\left(|l_{i}-\lambda_{i}|\leq  A_{0}\frac{\log n}{n^{1-1/\rho}}\right)\geq1-\frac{1}{n^{v}},
	\]
	where $A_{0}$ only depends on $K_0,M,v,\rho$ and does not depends
	on $n$, $[x]$ denotes the smaller integer larger than or equal to $x$, $l_{1}\geq l_{2}\geq\cdots\geq l_{n}\geq0$
	are eigenvalues of $\bs{K}_{n}/n$ with $\left[\bs{K}_{n}\right]_{ij}=K(\bs{X}_{i},\bs{X}_{j})$, and $\bs{X}_i$'s are i.i.d.\ samples from $(\mathcal{X}, \mu)$.
\end{lemma}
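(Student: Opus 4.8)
The plan is to invoke the deterministic bound of Lemma~\ref{lemma:accurate_bound}, namely $|l_i-\lambda_i|\le \lambda_i C(r,n)+E(r,n)$, with the cutoff index chosen as $r=[n^{1/\rho}]$ and then to show that both the multiplicative term $\lambda_i C(r,n)$ and the additive term $E(r,n)$ are of order $\log n/n^{1-1/\rho}$ uniformly over the range $i\ge [n^{1/\rho}]$. Since part~2) of Lemma~\ref{lemma:accurate_bound} controls $C(r,n)$ and $E(r,n)$ separately, each on an event of probability at least $1-\delta$, I would first set $\delta=\tfrac12 n^{-v}$ and take a union bound, so that with probability at least $1-n^{-v}$ we simultaneously have $C(r,n)<\tilde C(r,n,\delta)$ and $E(r,n)<\tilde E(r,n,\delta)$. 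It then remains to estimate $\tilde C$ and $\tilde E$ under the polynomial decay $\lambda_k\le Mk^{-\rho}$.

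For the additive term, the polynomial decay gives the tail bound $\Lambda_{>r}=\sum_{k>r}\lambda_k\le M\sum_{k>r}k^{-\rho}\le \frac{M}{\rho-1}r^{1-\rho}$, and since $r=[n^{1/\rho}]\ge n^{1/\rho}$ we get $\Lambda_{>r}=O(n^{1/\rho-1})$ and $\lambda_r\le Mr^{-\rho}=O(n^{-1})$, while $\log(2/\delta)=O(\log n)$. Inspecting $\tilde E(r,n,\delta)=\lambda_r+\Lambda_{>r}+\sqrt{2K_0\Lambda_{>r}n^{-1}\log(2/\delta)}+\tfrac{2K_0}{3n}\log(2/\delta)$, the leading contribution is $\Lambda_{>r}=O(n^{1/\rho-1})$; the square-root cross term is $O(n^{1/(2\rho)-1}\sqrt{\log n})$ and the last term is $O(n^{-1}\log n)$, both dominated by $\log n/n^{1-1/\rho}$. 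Hence $\tilde E(r,n,\delta)=O(\log n/n^{1-1/\rho})$, using only the upper bound on the eigenvalues.

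The delicate step is the multiplicative term, because $\tilde C(r,n,\delta)$ involves $1/\lambda_r$ and $1/\sqrt{\lambda_r}$, and the hypothesis supplies only an upper bound on the eigenvalues, so $1/\lambda_r$ cannot be controlled directly. The observation that resolves this is monotonicity: for every index $i\ge [n^{1/\rho}]=r$ we have $\lambda_i\le\lambda_r$, so after multiplication the ratios satisfy $\lambda_i/\lambda_r\le 1$ and $\lambda_i/\sqrt{\lambda_r}\le\sqrt{\lambda_r}$. Writing $L=\log(2r(r+1)/\delta)=O(\log n)$, the two pieces of $\lambda_i\tilde C(r,n,\delta)$ become $r\sqrt{2K_0L/n}\,(\lambda_i/\sqrt{\lambda_r})\le r\sqrt{2K_0L/n}\,\sqrt{\lambda_r}\le r^{1-\rho/2}\sqrt{2K_0ML/n}$ and $\tfrac{4K_0rL}{3n}(\lambda_i/\lambda_r)\le\tfrac{4K_0rL}{3n}$; with $r\asymp n^{1/\rho}$ these are $O(n^{1/\rho-1}\sqrt{\log n})$ and $O(n^{1/\rho-1}\log n)$ respectively. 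Thus $\lambda_i C(r,n)\le\lambda_i\tilde C(r,n,\delta)=O(\log n/n^{1-1/\rho})$, again using only the upper bound.

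Combining the two estimates gives $|l_i-\lambda_i|\le\lambda_i\tilde C+\tilde E=O(\log n/n^{1-1/\rho})$ on an event of probability at least $1-n^{-v}$, and collecting all the constants (which depend only on $K_0,M,v,\rho$) into a single $A_0$ completes the argument. The main obstacle is precisely the $1/\lambda_r$ dependence in $\tilde C$: the whole proof hinges on choosing the cutoff $r=[n^{1/\rho}]$ to coincide with the lower end of the index range $i\ge[n^{1/\rho}]$, so that the factor $\lambda_i$ carried by the multiplicative structure of Lemma~\ref{lemma:accurate_bound}, together with the monotonicity $\lambda_i\le\lambda_r$, cancels the otherwise-uncontrolled $1/\lambda_r$ without ever requiring a lower bound on the eigenvalues.
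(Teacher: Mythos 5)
Your proposal is correct and follows essentially the same route as the paper's proof: apply Lemma \ref{lemma:accurate_bound} with cutoff $r=[n^{1/\rho}]$ and $\delta\asymp n^{-v}$, bound $\tilde E$ via the tail sum $\Lambda_{>r}=O(r^{1-\rho})$, and handle the $1/\lambda_r$ in $\tilde C$ by exploiting $\lambda_i\le\lambda_r$ for $i\ge r$ together with the upper bound $\lambda_r\le Mr^{-\rho}$, which is exactly the cancellation the paper uses. The only cosmetic difference is your union bound with $\delta/2$ per event, whereas the paper reads the two bounds of Lemma \ref{lemma:accurate_bound} part 2) as holding simultaneously with probability $1-\delta$; this affects only the constant $A_0$.
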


\begin{proof}[Proof of Lemma \ref{ineq_lambda_n}]
Because 
$\lambda_{k}\leq Mk^{-\rho}$ for all $k\geq 1$, we have 
$$
\Lambda_{>r}=\sum_{i=r+1}^{\infty}\lambda_i \leq M \sum_{i=r+1}^\infty i^{-\rho} \leq M \int_{r}^{\infty} x^{-\rho} \text{d}x = \frac{M}{\rho-1} \frac{1}{r^{\rho-1}}. 
$$ 
For any $\delta \in (0,1)$ and $i \ge r \ge 1$,  by the definition in Lemma \ref{lemma:accurate_bound}, we have $\lambda_i \le \lambda_r$, 
\begin{align}\label{eq:C_tilde_bound}
\lambda_i \tilde{C}(r, n, \delta) 
& \le 
\lambda_r \tilde{C}(r, n, \delta) 
= 
\lambda_r
\left\{ r\sqrt{\frac{2K_0}{n\lambda_{r}}\log\frac{2r(r+1)}{\delta}}+\frac{4K_0r}{3n\lambda_{r}}\log\frac{2r(r+1)}{\delta}
\right\}
\nonumber
\\
& = 
r\sqrt{ \lambda_{r} \frac{2K_0}{n}\log\frac{2r(r+1)}{\delta}}+\frac{4K_0r}{3n}\log\frac{2r(r+1)}{\delta}
\le 
r\sqrt{ \lambda_{r} \frac{2K_0}{n}\log\frac{4r^2}{\delta}}+\frac{4K_0r}{3n}\log\frac{4r^2}{\delta}
\nonumber
\\
& \le 
r\sqrt{ \frac{M}{r^\rho} \frac{2K_0}{n}\log\frac{4r^2}{\delta}}+\frac{4K_0r}{3n}\log\frac{4r^2}{\delta},
\end{align}
and 
\begin{align}\label{eq:E_tilde_bound}
	\tilde{E}(r, n, \delta) & = \lambda_{r}+\Lambda_{>r}+\sqrt{\frac{2K_0\Lambda_{>r}}{n}\log\frac{2}{\delta}}+\frac{2K_0}{3n}\log\frac{2}{\delta}
	\nonumber
	\\
	& \le 
	\frac{M}{r^\rho} + \frac{M}{\rho-1} \frac{1}{r^{\rho-1}} + 
	\sqrt{\frac{2K_0}{n}\frac{M}{\rho-1}\frac{1}{r^{\rho-1}} \log\frac{2}{\delta}
	}+\frac{2K_0}{3n}\log\frac{2}{\delta}. 
\end{align}
Let $i\geq r=[n^{1/\rho}],\delta=n^{-v}$. We have $r\leq 2n^{1/\rho}$, $r^\rho\geq n$, $r^{\rho-1}\geq n^{1-1/\rho}$, 
and 
\begin{align*}
	\log\frac{4r^2}{\delta} & \le \log\left\{
	4 \cdot (2n^{1/\rho})^2 \cdot n^v
	\right\}
	= 
	\log 16 + \left(\frac{2}{\rho} + v\right) \log n, \\
	\log\frac{2}{\delta} & = \log\left(2\cdot n^v\right) = \log 2 + v \log n. 
\end{align*}
From \eqref{eq:C_tilde_bound} and \eqref{eq:E_tilde_bound}, 
these imply that when $i\geq r=[n^{1/\rho}]$ and $\delta=n^{-v}$, 
\begin{align*}
\lambda_i \tilde{C}(r, n, \delta) 
& \le 
r\sqrt{ \frac{M}{r^\rho} \frac{2K_0}{n}\log\frac{4r^2}{\delta}}+\frac{4K_0r}{3n}\log\frac{4r^2}{\delta}\\
& \le 
2 n^{1/\rho}\sqrt{ \frac{M}{n} \frac{2K_0}{n}
\left\{	\log 16 + \left(\frac{2}{\rho} + v\right) \log n\right\}
}+\frac{4K_0 \cdot 2 n^{1/\rho}}{3n}
\left\{
\log 16 + \left(\frac{2}{\rho} + v\right) \log n
\right\}\\
& \le A_1 \frac{\sqrt{\log n}}{n^{1-1/\rho}} + A_2 \frac{\log n}{n^{1-1/\rho}}, 
\end{align*}
and 
\begin{align*}
\tilde{E}(r, n, \delta) 
& \le 
\frac{M}{r^\rho} + \frac{M}{\rho-1} \frac{1}{r^{\rho-1}} + 
\sqrt{\frac{2K_0}{n}\frac{M}{\rho-1}\frac{1}{r^{\rho-1}} \log\frac{2}{\delta}
}+\frac{2K_0}{3n}\log\frac{2}{\delta}\\
& \le 
\left( \frac{M}{n} + \frac{M}{\rho-1} \cdot \frac{1}{n^{1-1/\rho}} \right) + 
\sqrt{\frac{2K_0}{n}\frac{M}{\rho-1}\frac{1}{n^{1-1/\rho}} \left(\log 2 + v \log n\right)
}+\frac{2K_0}{3n}\left(\log 2 + v \log n\right)\\
& \le 
A_3 \frac{1}{n^{1-1/\rho}} + A_{4}\frac{1}{n^{1/(2\rho)}}\frac{\sqrt{\log n}}{n^{1-1/\rho}}+A_{5}\frac{1}{n^{1/\rho}}\frac{\log n}{n^{1-1/\rho}},
\end{align*}
where $(A_1, \ldots, A_5)$ depend on $K_0,M,v,\rho$ but do not depend on $n$. 
Thus, 
for $i\geq [n^{1/\rho}]$ and $\delta=n^{-v}$, 
there must exist $A_0$, which depends on $K_0,M,v,\rho$ but does not depends on $n$, such that 
$$
	\lambda_i \tilde{C}(r, n, \delta)  + \tilde{E}(r, n, \delta) 
	\le A_0 \frac{\log n}{n^{1-1/\rho}}.  
$$
From Lemma \ref{lemma:accurate_bound}, for any $i\geq [n^{1/\rho}]$ and $\delta=n^{-v}$, we have 
\begin{align*}
	1 - \frac{1}{n^v} & = 1 - \delta \le \Pr\left\{
	C(r,n) < \tilde{C}(r, n, \delta), 
	E(r,n) < \tilde{E}(r, n, \delta)
	\right\}\\
	& \le 
	\Pr\left\{
	\lambda_{i}C(r,n)+E(r,n) < \lambda_{i}\tilde{C}(r, n, \delta) + \tilde{E}(r, n, \delta) 
	\right\}\\
	& \le \Pr\left\{
	|l_i - \lambda_i |< \lambda_{i}\tilde{C}(r, n, \delta) + \tilde{E}(r, n, \delta) 
	\right\}\\
	& \le \Pr\left\{
	|l_i - \lambda_i | \le A_0 \frac{\log n}{n^{1-1/\rho}}
	\right\}.
\end{align*}
Therefore, Lemma \ref{ineq_lambda_n} holds. 
\end{proof}

\begin{lemma}\label{dn_times_bn}
	Let $K$ be a Mercer kernel on a compact set $\mathcal{X}$ with prbability measure $\mu$,
if the eigenvalues of $K$ satisfying
	$\lambda_{k}=O(k^{-\rho})$, $\rho>1$, i.e. there exists $M$
	such that $\lambda_{k}\leq Mk^{-\rho}$. Let $\gamma$ be a constant in $(0, 1-\rho^{-1})$, and let $\kappa$ be a constant in $(0, 1-\rho^{-1}-\gamma)$. If $b_n$ satisfying $b_n = O(n^\kappa)$, then we have
	\begin{align*}
	 \frac{c_{n-b_n+1}}{n^{1-\gamma}} \cdot b_n \stackrel{\Pr}{\longrightarrow} 0,
	\end{align*}
	where $c_{1}\geq c_{2}\geq\cdots\geq c_{n}\geq0$
	are the eigenvalues of $\bs{K}_{n}$ with $\left[\bs{K}_{n}\right]_{ij}=K(\bs{X}_{i},\bs{X}_{j})$, and $\bs{X}_i$'s are i.i.d samples from $(\mathcal{X}, \mu)$.
\end{lemma}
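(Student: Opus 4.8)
The plan is to reduce the statement to the eigenvalues of the normalized kernel matrix $\bs{K}_n/n$ and then invoke Lemma \ref{ineq_lambda_n}. Writing $l_i = c_i/n$ for the eigenvalues of $\bs{K}_n/n$, so that $c_{n-b_n+1} = n\, l_{n-b_n+1}$, the quantity of interest becomes
$$\frac{c_{n-b_n+1}}{n^{1-\gamma}} b_n = n^{\gamma}\, l_{n-b_n+1}\, b_n,$$
and it therefore suffices to show that $n^{\gamma}\, l_{n-b_n+1}\, b_n \convergep 0$.

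First I would verify that the index $i_n \equiv n - b_n + 1$ is admissible for Lemma \ref{ineq_lambda_n}. Since $b_n = O(n^{\kappa})$ with $\kappa < 1 - \rho^{-1} < 1$, we have $i_n \geq n/2 \geq [n^{1/\rho}]$ for all sufficiently large $n$ (because $1/\rho < 1$). Hence, fixing any $v>0$ (for instance $v=1$), Lemma \ref{ineq_lambda_n} yields that with probability at least $1 - n^{-v}$,
$$\bigl|l_{i_n} - \lambda_{i_n}\bigr| \leq A_0 \frac{\log n}{n^{1-1/\rho}}.$$
Next I would control $\lambda_{i_n}$ itself: since $i_n \geq n/2$ for large $n$ and $\lambda_k \leq M k^{-\rho}$, we obtain $\lambda_{i_n} \leq M (n/2)^{-\rho} = M 2^{\rho} n^{-\rho}$, which is $o(n^{-(1-1/\rho)})$ because $\rho > 1 > 1 - \rho^{-1}$. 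Thus, on the above high-probability event,
$$l_{i_n} \leq \lambda_{i_n} + A_0 \frac{\log n}{n^{1-1/\rho}} = O\!\left(\frac{\log n}{n^{1-1/\rho}}\right),$$
and consequently $n^{\gamma}\, l_{i_n}\, b_n = O\bigl(n^{\gamma + \kappa - (1-1/\rho)} \log n\bigr)$.

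To conclude, I would observe that the exponent $\gamma + \kappa - (1 - \rho^{-1})$ is strictly negative by the assumption $\kappa < 1 - \rho^{-1} - \gamma$, so the polynomial factor dominates the $\log n$ and the right-hand side tends to $0$. Converting this into convergence in probability is then routine: for any $\varepsilon > 0$, on the event of probability at least $1 - n^{-v}$ we have $n^{\gamma}\, l_{i_n}\, b_n \leq C\, n^{\gamma+\kappa-(1-1/\rho)}\log n$, which is eventually below $\varepsilon$, whence $\text{Pr}\bigl(n^{\gamma}\, l_{i_n}\, b_n > \varepsilon\bigr) \leq n^{-v} \to 0$. The only points requiring care are the arithmetic of the exponent (checking $\gamma + \kappa - (1-\rho^{-1}) < 0$) and the admissibility of the index $i_n$ in Lemma \ref{ineq_lambda_n}; both follow immediately from the hypotheses, so once Lemma \ref{ineq_lambda_n} is granted there is no genuinely hard step.
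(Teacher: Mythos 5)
Your proposal is correct and follows essentially the same route as the paper's own proof: both verify that the index $n-b_n+1$ exceeds $[n^{1/\rho}]$ for large $n$, invoke Lemma \ref{ineq_lambda_n} to control $|l_{n-b_n+1}-\lambda_{n-b_n+1}|$, bound $\lambda_{n-b_n+1}$ by the polynomial decay assumption, and finish with the exponent arithmetic $\gamma+\kappa-(1-\rho^{-1})<0$. The only difference is cosmetic: the paper absorbs the $\lambda_{n-b_n+1}$ term into an explicit constant $A_0'$ before taking the probability bound, whereas you keep the two contributions separate.
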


\begin{proof}[Proof of Lemma \ref{dn_times_bn}]
Because $K$ is a Mercer kernel, $K$ is a continuous function on $\mathcal{X}\times \mathcal{X}$. Since $\mathcal{X}$ is compact, we know that there exists $K_0<\infty$ such that $K(\bs{x},\bs{x})\leq K_0$ for all $\bs{x} \in \mathcal{X}$.
Let $l_{1}\geq l_{2}\geq\cdots\geq l_{n}\geq0$
be eigenvalues of $\bs{K}_{n}/n$. 
Obviously, $l_i = c_i/n$, $1\le i \le n$. 
From
Lemma \ref{ineq_lambda_n}, 
for any $v>0$, there exists a constant $A_0$ depending only on $K_0, M, v, \rho$ such that, for any $i\geq[n^{1/\rho}]$, 
\[
\Pr\left(|l_{i}-\lambda_{i}|\leq A_{0}\frac{\log n}{n^{1-1/\rho}}\right)\geq1-\frac{1}{n^{v}}.
\]
Note that for $n\geq2$ and $i\geq[n^{1/\rho}]$, 
\begin{align*}
l_{i}\leq & |l_{i}-\lambda_{i}|+\lambda_{i}
\leq  |l_{i}-\lambda_{i}|+\frac{M}{i^\rho}
\leq  |l_{i}-\lambda_{i}|+\frac{M}{n}
=  |l_{i}-\lambda_{i}|+  \frac{M}{n^{1/\rho}\log n} \cdot \frac{\log n}{n^{1-1/\rho}}\\
\leq & |l_{i}-\lambda_{i}|+\frac{M}{2^{1/\rho}\log2} \cdot \frac{\log n}{n^{1-1/\rho}}.
\end{align*}
Let $A_0'=A_{0}+M/(2^{1/\rho}\log2)$, which depends only on $K_0, M, v, \rho$. 
Then for any $n\geq2$ and $i\geq[n^{1/\rho}]$, we have 
\begin{align}\label{eq:prob_bound_l}
	\Pr \left(l_{i}> A_0' \frac{\log n}{n^{1-1/\rho}}\right)
	& \le  
	\Pr\left(
	|l_{i}-\lambda_{i}|+\frac{M}{2^{1/\rho}\log2} \cdot \frac{\log n}{n^{1-1/\rho}} >  A_0'  \frac{\log n}{n^{1-1/\rho}}
	\right)
	\nonumber
	\\
	& = 
	\Pr\left(
	|l_{i}-\lambda_{i}|  >  A_{0} \frac{\log n}{n^{1-1/\rho}}
	\right)
	= 
	1 - \Pr\left(|l_{i}-\lambda_{i}|\leq A_{0}\frac{\log n}{n^{1-1/\rho}}\right)
	& \le \frac{1}{n^v}. 
\end{align}
Note that  
$$
n-b_n +  1 - [n^{1/\rho}]\geq n-b_n - n^{1/\rho} = n -O(n^\kappa) - n^{1/\rho},
$$
and for any $\varepsilon > 0$, 
\begin{align*}
\lim_{n\rightarrow \infty}
\frac{ A_0' b_n n^\gamma  \log n}{\varepsilon n^{1-1/\rho}}
= \lim_{n\rightarrow \infty}\frac{A_0'}{\varepsilon}\frac{\log n}{n^{1-1/\rho-\gamma-\kappa}}  \frac{b_n}{n^{\kappa}} = 0.
\end{align*}
Thus, for any $\varepsilon>0$, there must exist $N_\varepsilon$ such that when $n > N_\varepsilon$, 
$n-b_n +  1 \ge [n^{1/\rho}]$
and 
$
A_0' b_n n^\gamma  \log n < \varepsilon n^{1-1/\rho}. 
$
From \eqref{eq:prob_bound_l}, 
for any $\varepsilon>0$, when $n>N_{\varepsilon}$, 
we then have
\begin{align*}
\Pr(n^\gamma l_{n-b_n+1} \cdot b_n>\varepsilon) & = \Pr\left(l_{n-b_n+1}>\frac{\varepsilon}{n^\gamma  b_n}\right)
=  \Pr\left(l_{n-b_n+1}>A_0'\frac{\log n}{n^{1-1/\rho}} \cdot  \frac{\varepsilon}{A_0'} \frac{n^{1-1/\rho}}{ b_n n^\gamma\log n}\right)\\
& \leq \Pr\left(l_{n-b_n+1}> A_0 ' \frac{\log n}{n^{1-1/\rho}} \right)
\leq  \frac{1}{n^{v}}. 
\end{align*}
which further implies that 
$
\lim_{n\rightarrow\infty}P(n^\gamma l_{n-b_n+1} \cdot b_n>\varepsilon)=0. 
$
Consequently, 
$
n^\gamma l_{n-b_n+1} \cdot b_n \stackrel{\Pr}{\longrightarrow} 0.
$
By definition, 
we then have 
$
c_{n-b_n+1}/n^{1-\gamma} \cdot b_n \stackrel{\Pr}{\longrightarrow} 0.
$
Therefore, Lemma \ref{dn_times_bn} holds. 
\end{proof}

\begin{proof}[{\bf Proof of Theorem \ref{thm:general_ppt_valid}}]
Define $\nu, \nu_0$ and $\tilde{\Delta}_n$ the same as in Lemma \ref{lemma:bound_TV_GP}. 
From Lemmas \ref{lemma:basic_ineq_perm_p_val} and \ref{lemma:bound_TV_GP}, for any $\alpha, \eta \ge 0$,
\begin{align*}
	\Pr \{  p(\bXmatrix, \bs{Y}, \bs{Z} ) \leq \alpha  \mid \bXmatrix, \bs{Z} \} 
	& \leq  \alpha+ \eta +\Pr \left( \| \nu - \nu_0 \|_{\TV} > \eta \mid \bXmatrix, \bs{Z}  \right)
	\leq  \alpha+ \eta +\Pr \left( \frac{e^{\tilde{\Delta}_n} - 1}{2} > \eta \mid \bXmatrix, \bs{Z}  \right)
	\\
	& = \alpha+ \eta +\Pr \left\{ \tilde{\Delta}_n > \log(2\eta+1) \mid \bXmatrix, \bs{Z}  \right\}. 
\end{align*}
By the definition in Lemma \ref{lemma:bound_TV_GP}, we have 
\begin{align*}
\tilde{\Delta}_n 
&
= 
\frac{c_{n-b_n+1} \delta_0^2/n^{1-\gamma}}{ 2 \sigma_{0}^{2}} \sum_{j=n-b_n+1}^{n} \frac{W_{j}^{2}}{c_{j} \delta_0^2/n^{1-\gamma} + \sigma_0^2}
= \frac{c_{n-b_n+1} \delta_0^2/n^{1-\gamma}}{ 2 \sigma_{0}^{2}} O_{\Pr}(b_n)
= 
\frac{\delta_0^2}{2\sigma_{0}^{2}}
\frac{c_{n-b_n+1}}{n^{1-\gamma} } b_n \cdot O_{\Pr}(1),
\end{align*}
where the second last equality holds because $\sum_{j=n-b_n+1}^{n} W_{j}^{2}/(c_{j} \delta_0^2/n^{1-\gamma} + \sigma_0^2)$ follows $\chi^2$ distribution with degrees of freedom $b_n$ conditional on $\bXmatrix$ and $\bs{Z}$. 
From Lemma \ref{dn_times_bn}, 
as $n\rightarrow \infty$, 
we have $\tilde{\Delta}_n \stackrel{\Pr}{\longrightarrow} 0$ and thus 
$\Pr \{ \tilde{\Delta}_n > \log(2\eta+1) \mid \bXmatrix, \bs{Z}  \} \rightarrow 0$ for any $\eta >0$. 
Thus, for any $\alpha\ge 0$ and $\eta >0$, 
\begin{align*}
\limsup_{n\rightarrow \infty} \Pr \{  p(\bXmatrix, \bs{Y}, \bs{Z} ) \leq \alpha  \mid \bXmatrix, \bs{Z} \} 
\le 
\alpha+ \eta + \limsup_{n\rightarrow \infty}\Pr \left\{ \tilde{\Delta}_n > \log(2\eta+1) \mid \bXmatrix, \bs{Z}  \right\} = \alpha+\eta. 
\end{align*}
Because the above inequality holds for any $\eta > 0$, we then have, for any $\alpha \in (0,1)$, 
\begin{align*}
    \limsup_{n\rightarrow \infty} \Pr \{  p(\bXmatrix, \bs{Y}, \bs{Z} ) \leq \alpha  \mid \bXmatrix, \bs{Z} \}  \le \alpha. 
\end{align*}
Therefore, Theorem \ref{thm:general_ppt_valid} holds. 
\end{proof}

\section{Partial Permutation Test under Alternative Hypotheses}\label{sec:proof_alter}

\begin{proof}[\bf Proof of Theorem \ref{thm:general_F_test}]
Without loss of generality, 
we assume that the samples are ordered based on their group indicators in an increasing way. Thus, the covariates mapped into the feature space have the following block structure:
\begin{align*}
    \begin{pmatrix}
        \phi(\bs{X}_1) & \phi(\bs{X}_2) & \cdots & \phi(\bs{X}_n)
    \end{pmatrix}
    = 
    \begin{pmatrix}
        \bs{\Phi}_1^\top & \bs{\Phi}_2^\top & \cdots & \bs{\Phi}_H^\top
    \end{pmatrix}, 
\end{align*}
where $\bs{\Phi}_h \in \mathbb{R}^{n_h \times q}$ consists of samples in group $h$, $1\le h \le H$. 
Define 
\begin{align}\label{eq:design_mat_null_full}
    \bs{\Phi} = 
    \begin{pmatrix}
        \bs{\Phi}_1\\ 
        \bs{\Phi}_2\\ 
        \vdots \\
        \bs{\Phi}_n
    \end{pmatrix}
    \quad 
    \text{and} 
    \quad 
    \bs{B} 
    = 
    \begin{pmatrix}
        \bs{\Phi}_1 & \bs{0} & \cdots & \bs{0}\\
        \bs{0} & \bs{\Phi}_2 & \cdots & \bs{0}\\
        \vdots & \vdots & \ddots & \vdots \\
        \bs{0} & \bs{0} & \cdots & \bs{\Phi}_H
    \end{pmatrix}
\end{align}
to denote the design matrices for the hypothesized null model (i.e., model \eqref{eq:linear_feature_alter} with $\bs{\beta}_1 = \ldots = \bs{\beta}_H$) and full model (i.e., model \eqref{eq:linear_feature_alter} without any constraint). 
By definition, we know that 
$\bs{\Phi}$ and $\bs{B} $ have ranks $p_0$ and $p_1$, respectively. 
Let $\bs{K}_n = \bs{\Gamma} \bs{C} \bs{\Gamma}^\top$ be the eigen-decomposition
of $\bs{K}_n = \bs{\Phi}\bs{\Phi}^\top$, 
where $\bs{\Gamma}\in \mathbb{R}^{n \times n}$ is a diagonal matrix and $\bs{C}$ is a diagonal matrix with diagonal elements ordered in a decreasing way.   
Let $\bs{\Gamma}_1$ and $\bs{\Gamma}_2$ be the submatrices of $\bs{\Gamma}$ consisting of the first $p_0$ and last $n-p_0$ columns.

First, we consider model \eqref{eq:linear_feature_alter} with the constraint that $\bs{\beta}_1 = \ldots = \bs{\beta}_H$. 
Recall that 
$\bs{P}_{0}$ is the projection matrix onto the column space of $\bs{\Phi}$, 
and note that the column space of $\bs{\Phi}$ is the same as that of $\bs{\Gamma}_1$ and is orthogonal to that of $\bs{\Gamma}_2$.
We can then simplify the residual sum of squares under the null model as 
\begin{align*}
    \text{RSS}_0 
    & = 
    \bs{Y}^\top \big( \bs{I}_n - \bs{P}_{0} \big) \bs{Y}
    =
    \bs{Y}^\top \bs{\Gamma} \bs{\Gamma}^\top \big( \bs{I}_n - \bs{P}_{0} \big) \bs{\Gamma} \bs{\Gamma}^\top \bs{Y}
    = 
    \bs{Y}^\top \bs{\Gamma}
    \begin{pmatrix}
        \bs{\Gamma}_1^\top \\
        \bs{\Gamma}_2^\top 
    \end{pmatrix}
    \big( \bs{I}_n - \bs{P}_{0} \big) (\bs{\Gamma}_1, \bs{\Gamma}_2) \bs{\Gamma}^\top \bs{Y}
    \\
    & = 
    \bs{Y}^\top 
    \bs{\Gamma} 
    \begin{pmatrix}
        \bs{0} & \bs{0} \\
        \bs{0} & \bs{\Gamma}_2^\top \bs{\Gamma}_2
    \end{pmatrix}
    \bs{\Gamma}^\top \bs{Y}
    = 
    \bs{Y}^\top 
    \bs{\Gamma} 
    \begin{pmatrix}
        \bs{0} & \bs{0} \\
        \bs{0} & \bs{I}_{n-p_0}
    \end{pmatrix}
    \bs{\Gamma}^\top \bs{Y},  
\end{align*}
where the last equality holds because $(\bs{\Gamma}_1, \bs{\Gamma}_2)$ is an orthogonal matrix. 

Second, we consider the model \eqref{eq:linear_feature_alter} without any constraint on the coefficients. 
The design matrix under the full model is $\bs{B}$ in \eqref{eq:design_mat_null_full}, 
whose column space obviously covers that of $\bs{\Phi}$ or equivalently $\bs{\Gamma}_1$. 
Recall that $\bs{P}_{1}$ is the projection matrix onto the column space of $\bs{B}$. 
The residual sum of squares under the full model then has the following equivalent forms: 
\begin{align}\label{eq:rss1}
    \text{RSS}_1 & = 
    \bs{Y}^\top (\bs{I}_n - \bs{P}_{1}) \bs{Y}
    = 
    \bs{Y}^\top 
    \bs{\Gamma} 
    \begin{pmatrix}
        \bs{\Gamma}_1^\top \\
        \bs{\Gamma}_2^\top 
    \end{pmatrix}
    (\bs{I}_n - \bs{P}_{1}) 
    (\bs{\Gamma}_1, \bs{\Gamma}_2) \bs{\Gamma}^\top
    \bs{Y}
    = 
    \bs{Y}^\top 
    \bs{\Gamma} 
    \begin{pmatrix}
        \bs{0} & \bs{0} \\
        \bs{0} & 
        \bs{I}_{n-p_0} - \bs{\Gamma}_2^\top \bs{P}_{1}  \bs{\Gamma}_2
    \end{pmatrix}
    \bs{\Gamma}^\top
    \bs{Y}
    \nonumber
    \\
    & = 
    \text{RSS}_0 - \bs{Y}^\top 
    \bs{\Gamma} 
    \begin{pmatrix}
        \bs{0} & \bs{0} \\
        \bs{0} & 
        \bs{\Gamma}_2^\top \bs{P}_{1}  \bs{\Gamma}_2
    \end{pmatrix}
    \bs{\Gamma}^\top
    \bs{Y}. 
\end{align}

Third, we consider the permutation distribution of the statistic $( \text{RSS}_0-\text{RSS}_1 )/\text{RSS}_1$ under our continuous partial permutation test. 
From the description of Algorithm 1, the permutation distribution of this test statistic with permutation size $b_n = n-p_0$ is 
\begin{align*}
    \frac{\widetilde{\text{RSS}}_0 - \widetilde{\text{RSS}}_1}{\widetilde{\text{RSS}}_1}
    & = 
    \big( \bs{\Gamma}^\top \bs{Y}^{\perm} \big)^\top 
    \begin{pmatrix}
        \bs{0} & \bs{0} \\
        \bs{0} & 
        \bs{\Gamma}_2^\top \bs{P}_{1}  \bs{\Gamma}_2
    \end{pmatrix}
    \bs{\Gamma}^\top
    \bs{Y}^{\perm} 
    /
    \big( \bs{\Gamma}^\top \bs{Y}^{\perm} \big)^\top 
    \begin{pmatrix}
        \bs{0} & \bs{0} \\
        \bs{0} & \bs{I}_{n-p_0} - \bs{\Gamma}_2^\top \bs{P}_{1}  \bs{\Gamma}_2
    \end{pmatrix}
    \bs{\Gamma}^\top \bs{Y}^{\perm}\\
    & = 
    \big( \bs{W}^{\perm} \big)^\top 
    \begin{pmatrix}
        \bs{0} & \bs{0} \\
        \bs{0} & 
        \bs{\Gamma}_2^\top \bs{P}_{1}  \bs{\Gamma}_2
    \end{pmatrix}
    \bs{W}^{\perm}
    /
    \big( \bs{W}^{\perm} \big)^\top 
    \begin{pmatrix}
        \bs{0} & \bs{0} \\
        \bs{0} & \bs{I}_{n-p_0} - \bs{\Gamma}_2^\top \bs{P}_{1}  \bs{\Gamma}_2
    \end{pmatrix}
    \bs{W}^{\perm}
    \\
    & = 
    \big( \tilde{\bs{W}}^{\perm} \big)^\top 
       \big( \bs{\Gamma}_2^\top \bs{P}_{1}  \bs{\Gamma}_2  \big)
    \tilde{\bs{W}}^{\perm}
    /
    \big( \tilde{\bs{W}}^{\perm} \big)^\top 
    \big( \bs{I}_{n-p_0} - \bs{\Gamma}_2^\top \bs{P}_{1}  \bs{\Gamma}_2 \big)
    \tilde{\bs{W}}^{\perm}, 
\end{align*}
where $\bs{Y}^{\perm} = \bs{\Gamma} \bs{W}^{\perm}$, 
$\bs{W}^{\perm}$ is uniformly distributed on the set $\mathcal{S}_w$ in Algorithm \ref{alg:partial_permu}, 
and $\tilde{\bs{W}}^{\perm}$ is the subvector of $\bs{W}^{\perm}$ consisting of the last $n-p_0$ coordinates. 
By definition, $\tilde{\bs{W}}^{\perm}/\|\tilde{\bs{W}}^{\perm} \|_2$ is uniformly distributed on the $n-p_0-1$ dimensional unit sphere. 
Thus, $\tilde{\bs{W}}^{\perm}/\|\tilde{\bs{W}}^{\perm} \|_2 \sim \bs{\eta}/\|\bs{\eta}\|_2$, where $\bs{\eta} \sim \mathcal{N}(\bs{0}, \bs{I}_{n-p_0})$, 
and the permutation distribution of the statistic $( \text{RSS}_0-\text{RSS}_1 )/\text{RSS}_1$ then has the following equivalent forms:
\begin{align*}
    \frac{\widetilde{\text{RSS}}_0 - \widetilde{\text{RSS}}_1}{\widetilde{\text{RSS}}_1}
    & = 
    \frac{
    \big( \tilde{\bs{W}}^{\perm} \big)^\top 
       \big( \bs{\Gamma}_2^\top \bs{P}_{1}  \bs{\Gamma}_2  \big)
    \tilde{\bs{W}}^{\perm}
    }{
    \big( \tilde{\bs{W}}^{\perm} \big)^\top 
    \big( \bs{I}_{n-p_0} - \bs{\Gamma}_2^\top \bs{P}_{1}  \bs{\Gamma}_2 \big)
    \tilde{\bs{W}}^{\perm}
    }
    \sim 
    \frac{
    \bs{\eta}^\top 
       \big( \bs{\Gamma}_2^\top \bs{P}_{1}  \bs{\Gamma}_2  \big)
    \bs{\eta}
    }{
    \bs{\eta}^\top 
    \big( \bs{I}_{n-p_0} - \bs{\Gamma}_2^\top \bs{P}_{1}  \bs{\Gamma}_2 \big)
    \bs{\eta}
    }. 
\end{align*}

Fourth, we consider the matrix $\bs{V} \equiv \bs{I}_{n-p_0} -  \bs{\Gamma}_2^\top \bs{P}_{1}  \bs{\Gamma}_2 =  \bs{\Gamma}_2^\top (\bs{I}_n - \bs{P}_{\bs{1}})  \bs{\Gamma}_2$. 
From \eqref{eq:rss1}, we can know that $\text{rank}(\bs{V}) = \text{rank}(\bs{I}_n - \bs{P}_1) = n-p_1$. 
Moreover, 
$\bs{V}^2$ 
has the following equivalent forms: 
\begin{align*}
    \bs{V}^2 = \bs{\Gamma}_2^\top (\bs{I}_n - \bs{P}_{1})  \bs{\Gamma}_2 \bs{\Gamma}_2^\top (\bs{I}_n - \bs{P}_{1})  \bs{\Gamma}_2
    = \bs{\Gamma}_2^\top (\bs{I}_n - \bs{P}_{1})  \bs{\Gamma}_2
    \big( \bs{\Gamma}_2^\top \bs{\Gamma}_2 \big)^{-1}
    \bs{\Gamma}_2^\top (\bs{I}_n - \bs{P}_{1})  \bs{\Gamma}_2. 
\end{align*}
Note that (i) $\bs{\Gamma}_2\big( \bs{\Gamma}_2^\top \bs{\Gamma}_2 \big)^{-1}\bs{\Gamma}_2^\top$ is the projection matrix onto the column space of $\bs{\Gamma}_2$, which is the orthogonal complement of the column space of $\bs{\Gamma}_1$, 
(ii) $\bs{I}_n - \bs{P}_{\bs{1}}$ is the projection matrix onto the  space orthogonal to the column space of $\bs{B}$, 
and 
(iii) the column space of $\bs{\Gamma}_1$ is in the column space of $\bs{B}$. 
Thus, the column space of $\bs{I}_n - \bs{P}_{1}$ must be in the column space of $\bs{\Gamma}_2$, 
and $\bs{V}^2$ simplifies to 
\begin{align*}
    \bs{V}^2 
    = \bs{\Gamma}_2^\top (\bs{I}_n - \bs{P}_{\bs{B}})  \bs{\Gamma}_2
    \big( \bs{\Gamma}_2^\top \bs{\Gamma}_2 \big)^{-1}
    \bs{\Gamma}_2^\top (\bs{I}_n - \bs{P}_{\bs{B}})  \bs{\Gamma}_2
    = 
    \bs{\Gamma}_2^\top (\bs{I}_n - \bs{P}_{\bs{B}}) (\bs{I}_n - \bs{P}_{\bs{B}})  \bs{\Gamma}_2
    = 
    \bs{\Gamma}_2^\top (\bs{I}_n - \bs{P}_{\bs{B}}) \bs{\Gamma}_2
    = 
    \bs{V}. 
\end{align*}
Therefore, $\bs{V}$ is a projection matrix with $n-p_1$ nonzero eigenvalues.

From the above,
the permutation distribution of the F statistic has the following equivalent forms:
\begin{align*}
    \tilde{F}
    & = 
    \frac{(\widetilde{\text{RSS}}_0 -\widetilde{\text{RSS}}_1)/(p_1 - p_0) }{\widetilde{\text{RSS}}_1/(n-p_1)}
    \sim 
    \frac{
    \bs{\eta}^\top (\bs{I}_{n-p_0} - \bs{V}) \bs{\eta}/(p_1 - p_0)
    }{
    \bs{\eta}^\top \bs{V} \bs{\eta}/(n-p_1)
    }.
\end{align*}
By the properties of Gaussian distribution and projection matrix, 
we can verify that 
$\bs{\eta}^\top \bs{V} \bs{\eta}$ and $\bs{\eta}^\top (\bs{I}_{n-p_0} - \bs{V}) \bs{\eta}$ follow $\chi^2$ distributions with degrees of freedom $n-p_1$ and $p_1-p_0$, 
and they are mutually independent. 
Therefore, the permutation distribution of the $F$ statistic follows an $F$ distribution with degrees of freedom $p_1 - p_0$ and $n-p_1$. 
Consequently, Theorem \ref{thm:general_F_test} holds. 
\end{proof}

To prove Theorem \ref{thm:power_diverg_kernel}, we need the following three lemmas. 

\begin{lemma}\label{lemma:ratio_chi_sq_CLT}
Let $\{\eta_{n1}\}, \{\eta_{n2}\}$, $\{U_{n1}\}$ and $\{ U_{n2} \}$ be four sequences of random variables, 
where 
$U_{n1}$ and $U_{n2}$ are independent and follow chi-squared distributions with degrees of freedom $d_{n1}$ and $d_{n2}$. 
If, as $n \rightarrow \infty$, 
$d_{n1} \rightarrow \infty$,
$d_{n1}/d_{n2} \rightarrow 0$, 
$\eta_{n1}/\sqrt{d_{n1}} \convergep \theta$ and $ \eta_{n2} \sqrt{d_{n1}} / d_{n2}  \convergep 0$, 
then 
\begin{align}\label{eq:chi_sq_ratio_std}
    \frac{d_{n2}}{\sqrt{2d_{n1}}} 
    \frac{\eta_{n1} + U_{n1}}{\eta_{n2} + U_{n2}}
    -\sqrt{d_{n1}/2}
    & \converged 
    \mathcal{N}(\theta/\sqrt{2}, \  1).
\end{align}
\end{lemma}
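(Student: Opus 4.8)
The plan is to factor the expression into a numerator part that obeys a central limit theorem and a denominator part that concentrates, then glue them together with Slutsky's theorem. First I would rewrite the target as a product,
\[
\frac{d_{n2}}{\sqrt{2d_{n1}}}\frac{\eta_{n1}+U_{n1}}{\eta_{n2}+U_{n2}} = S_n R_n, \qquad S_n \equiv \frac{\eta_{n1}+U_{n1}}{\sqrt{2d_{n1}}}, \quad R_n \equiv \frac{d_{n2}}{\eta_{n2}+U_{n2}},
\]
and expand $S_nR_n - \sqrt{d_{n1}/2} = (S_n-\sqrt{d_{n1}/2})R_n + \sqrt{d_{n1}/2}(R_n-1)$, so that the analysis reduces to understanding $S_n$, $R_n$, and the amplified remainder $\sqrt{d_{n1}/2}(R_n-1)$ separately.

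For the numerator I would write $S_n = \eta_{n1}/\sqrt{2d_{n1}} + (U_{n1}-d_{n1})/\sqrt{2d_{n1}} + \sqrt{d_{n1}/2}$. The first term converges in probability to $\theta/\sqrt2$ by hypothesis, the middle term converges in distribution to $\mathcal{N}(0,1)$ by the classical central limit theorem applied to $U_{n1}\sim\chi^2_{d_{n1}}$ (valid since $d_{n1}\to\infty$), and Slutsky then gives $S_n - \sqrt{d_{n1}/2}\converged\mathcal{N}(\theta/\sqrt2,1)$. For the denominator, note first that $d_{n2}\to\infty$ is forced by $d_{n1}\to\infty$ together with $d_{n1}/d_{n2}\to0$; hence $U_{n2}/d_{n2}\convergep1$ by Chebyshev (its variance is $2/d_{n2}\to0$) and $\eta_{n2}/d_{n2}\convergep0$, the latter being a weakening of the hypothesis $\eta_{n2}\sqrt{d_{n1}}/d_{n2}\convergep0$. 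Thus $R_n\convergep1$, and a second application of Slutsky yields $(S_n-\sqrt{d_{n1}/2})R_n\converged\mathcal{N}(\theta/\sqrt2,1)$.

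The crux, and the step I expect to be the main obstacle, is showing $\sqrt{d_{n1}/2}(R_n-1)\convergep0$, since we are magnifying the denominator's relative fluctuation by the diverging factor $\sqrt{d_{n1}}$; this is precisely where the two quantitative hypotheses are needed. I would use the identity $R_n-1 = -R_n(\eta_{n2}+U_{n2}-d_{n2})/d_{n2}$ to obtain
\[
\sqrt{d_{n1}/2}\,(R_n-1) = -\frac{R_n}{\sqrt2}\left(\frac{\sqrt{d_{n1}}\,\eta_{n2}}{d_{n2}} + \frac{\sqrt{d_{n1}}\,(U_{n2}-d_{n2})}{d_{n2}}\right).
\]
The first bracketed term vanishes in probability directly by the assumption $\eta_{n2}\sqrt{d_{n1}}/d_{n2}\convergep0$. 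For the second, I would bound it through its second moment: since $\E[(U_{n2}-d_{n2})^2]=2d_{n2}$, the term has mean zero and standard deviation $\sqrt{2d_{n1}/d_{n2}}\to0$ because $d_{n1}/d_{n2}\to0$, so Chebyshev gives convergence in probability to zero. Multiplying by the factor $R_n$, which is bounded in probability (indeed $\convergep1$), preserves the limit, whence $\sqrt{d_{n1}/2}(R_n-1)\convergep0$. Combining the two pieces of the expansion via one final Slutsky argument then delivers the claimed convergence in \eqref{eq:chi_sq_ratio_std}.
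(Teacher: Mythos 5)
Your proposal is correct and follows essentially the same route as the paper's proof: the decomposition $(S_n-\sqrt{d_{n1}/2})R_n+\sqrt{d_{n1}/2}(R_n-1)$ is algebraically identical to the paper's splitting of the expression into $\frac{\eta_{n1}/\sqrt{2d_{n1}}+\tilde U_{n1}}{\eta_{n2}/d_{n2}+U_{n2}/d_{n2}}$ minus $\sqrt{d_{n1}/2}$ times the relative fluctuation of the denominator, and both arguments dispatch the amplified remainder by combining the hypothesis on $\eta_{n2}\sqrt{d_{n1}}/d_{n2}$ with the Chebyshev bound $U_{n2}/d_{n2}-1=O_{\Pr}(d_{n2}^{-1/2})$ and the condition $d_{n1}/d_{n2}\to0$. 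No gaps.
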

\begin{proof}[Proof of Lemma \ref{lemma:ratio_chi_sq_CLT}]
Let $\tilde{U}_{n1} = (U_{n1} - d_{n1})/\sqrt{2d_{n1}}$. 
By the central limit theorem, 
$\tilde{U}_{n1} \converged \mathcal{N}(0, 1)$ as $n\rightarrow \infty$. 
By some algebra, the left hand side of \eqref{eq:chi_sq_ratio_std} has the following equivalent forms: 
\begin{align*}
    & \quad \ \frac{d_{n2}}{\sqrt{2d_{n1}}} 
    \frac{\eta_{n1} + U_{n1}}{\eta_{n2} + U_{n2}}
    -\sqrt{d_{n1}/2}
    \\
    & = 
    \frac{d_{n2}}{\sqrt{2d_{n1}}} 
    \frac{\eta_{n1} + (U_{n1}-d_{n1}) + d_{n1}}{\eta_{n2} + U_{n2}}
    -\sqrt{d_{n1}/2}
    = 
    \frac{\eta_{n1}/\sqrt{2d_{n1}} + \tilde{U}_{n1}+\sqrt{d_{n1}/2}}{
    \eta_{n2}/d_{n2} + U_{n2}/d_{n2}
    }
    - \sqrt{d_{n1}/2}
    \\
    & = 
    \frac{\eta_{n1}/\sqrt{2d_{n1}} + \tilde{U}_{n1}}{
    \eta_{n2}/d_{n2} + U_{n2}/d_{n2}
    }
    - 
    \sqrt{d_{n1}/2} \cdot
    \frac{\eta_{n2}/d_{n2} + U_{n2}/d_{n2}-1}{\eta_{n2}/d_{n2} + U_{n2}/d_{n2}}. 
\end{align*}
By Chebyshev's inequality, 
$
U_{n2}/d_{n2} - 1 = O_{\Pr}(d_{n2}^{-1/2}). 
$
From the conditions in Lemma \ref{lemma:ratio_chi_sq_CLT}, as $n\rightarrow \infty$, 
$\eta_{n1}/\sqrt{2d_{n1}}\convergep \theta/\sqrt{2}$, 
$
\eta_{n2}/d_{n2} + U_{n2}/d_{n2} \convergep 1, 
$
and 
\begin{align*}
   & \quad \ \sqrt{d_{n1}/2} \cdot \left( \eta_{n2}/d_{n2} + U_{n2}/d_{n2}-1 \right) 
   \\ & = \sqrt{d_{n1}/2} \cdot \left\{ \eta_{n2}/d_{n2} + O_{\Pr}(d_{n2}^{-1/2}) \right\}
   = \left( \eta_{n2} \sqrt{d_{n1}} / d_{n2} + 
   \sqrt{d_{n1}/d_{n2}} \right) \cdot O_{\Pr}(1)
   \\
   & = o_{\Pr}(1). 
\end{align*}
By Slutsky's theorem, we can then derive that 
\begin{align*}
    \frac{d_{n2}}{\sqrt{2d_{n1}}} 
    \frac{\eta_{n1} + U_{n1}}{\eta_{n2} + U_{n2}}
    -\sqrt{d_{n1}/2}
    & = 
     \frac{\eta_{n1}/\sqrt{2d_{n1}} + \tilde{U}_{n1}}{
    \eta_{n2}/d_{n2} + U_{n2}/d_{n2}
    }
    - 
    \sqrt{d_{n1}/2} \cdot
    \frac{\eta_{n2}/d_{n2} + U_{n2}/d_{n2}-1}{\eta_{n2}/d_{n2} + U_{n2}/d_{n2}}
    \\
    & \converged 
    \mathcal{N}(\theta/\sqrt{2}, \  1). 
\end{align*}
Therefore, Lemma \ref{lemma:ratio_chi_sq_CLT} holds. \end{proof}

\begin{lemma}\label{lemma:ratio_chi_sq_survival}
Let $\{\eta_{n1}\}, \{\eta_{n2}\}$, $\{U_{n1}\}$ and $\{ U_{n2} \}$ be four sequences of random variables, 
where
$U_{n1}$ and $U_{n2}$ are independent and follow chi-squared distributions with degrees of freedom $d_{n1}$ and $d_{n2}$. 
For $\alpha\in (0,1)$, 
let $\nu_{n,\alpha}$ be the $\alpha$th quantile of the distribution of $U_{n1}/U_{n2}$, 
and $z_{\alpha}$ be the $\alpha$th quantile of the standard Gaussian distribution. 
If, as $n \rightarrow \infty$, 
$d_{n1} \rightarrow \infty$,
$d_{n1}/d_{n2} \rightarrow 0$, 
$\eta_{n1}/\sqrt{d_{n1}} \convergep \theta$ and $ \eta_{n2} \sqrt{d_{n1}} / d_{n2}  \convergep 0$, 
then 
for any $\alpha\in (0,1)$, 
\begin{align*}
    \Pr\left( 
    \frac{\eta_{n1}+U_{n1}}{\eta_{n2}+U_{n2}} 
    \ge \nu_{n,\alpha}
    \right)
    \rightarrow
    1 - \Phi(z_{\alpha} - \theta/\sqrt{2}), 
\end{align*}
where $\Phi(\cdot)$ denotes the distribution function of the standard Gaussian distribution. 
\end{lemma}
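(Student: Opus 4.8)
The plan is to reduce the statement entirely to Lemma \ref{lemma:ratio_chi_sq_CLT}, applied twice. First I would introduce the two standardized ratios
\[
S_n = \frac{d_{n2}}{\sqrt{2d_{n1}}} \frac{\eta_{n1}+U_{n1}}{\eta_{n2}+U_{n2}} - \sqrt{d_{n1}/2}, \qquad T_n = \frac{d_{n2}}{\sqrt{2d_{n1}}} \frac{U_{n1}}{U_{n2}} - \sqrt{d_{n1}/2}.
\]
Under the stated hypotheses, Lemma \ref{lemma:ratio_chi_sq_CLT} gives $S_n \converged \mathcal{N}(\theta/\sqrt{2}, 1)$. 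Applying the same lemma with the degenerate choice $\eta_{n1}=\eta_{n2}=0$ (for which the conditions $\eta_{n1}/\sqrt{d_{n1}}\convergep 0$ and $\eta_{n2}\sqrt{d_{n1}}/d_{n2}\convergep 0$ hold trivially, so that the limiting mean is $\theta=0$) yields $T_n \converged \mathcal{N}(0, 1)$.

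The key technical step, which I expect to be the main obstacle, is to control the deterministic threshold coming from $\nu_{n,\alpha}$. Since $T_n$ is a strictly increasing affine transformation of $U_{n1}/U_{n2}$, its $\alpha$th quantile is exactly $t_{n,\alpha} := \frac{d_{n2}}{\sqrt{2d_{n1}}} \nu_{n,\alpha} - \sqrt{d_{n1}/2}$. Because $T_n \converged \mathcal{N}(0,1)$ and the limiting distribution function $\Phi$ is continuous and strictly increasing, the standard quantile-convergence result (convergence in distribution to a limit with continuous, strictly increasing CDF forces convergence of the $\alpha$-quantiles) gives $t_{n,\alpha} \to z_\alpha$. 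I would justify this cleanly via P\'olya's theorem: since $\Phi$ is continuous, the CDFs of $T_n$ converge to $\Phi$ uniformly, and strict monotonicity of $\Phi$ then pins down the quantiles.

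Finally I would rewrite the target probability in terms of $S_n$,
\[
\Pr\left( \frac{\eta_{n1}+U_{n1}}{\eta_{n2}+U_{n2}} \ge \nu_{n,\alpha} \right) = \Pr\left( S_n \ge t_{n,\alpha} \right),
\]
and combine the two ingredients. Since $S_n \converged \mathcal{N}(\theta/\sqrt{2}, 1)$ (whose CDF $G(x)=\Phi(x-\theta/\sqrt{2})$ is everywhere continuous) and $t_{n,\alpha}\to z_\alpha$ is a deterministic sequence, another appeal to P\'olya's theorem (uniform CDF convergence with a converging threshold) gives $\Pr(S_n \ge t_{n,\alpha}) \to 1 - G(z_\alpha) = 1 - \Phi(z_\alpha - \theta/\sqrt{2})$, which is exactly the claim. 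Once the quantile convergence $t_{n,\alpha}\to z_\alpha$ is established, the remainder is routine bookkeeping built on Lemma \ref{lemma:ratio_chi_sq_CLT} and weak convergence against a converging sequence of cutoffs.
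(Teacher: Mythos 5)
Your proposal is correct and follows essentially the same route as the paper's own proof: two applications of Lemma \ref{lemma:ratio_chi_sq_CLT} (one with $\eta_{n1}=\eta_{n2}=0$ to handle $U_{n1}/U_{n2}$), deduction of the quantile convergence $t_{n,\alpha}\to z_\alpha$, and then weak convergence of $S_n$ against the converging deterministic threshold. You merely spell out the quantile-convergence step (via P\'olya's theorem) more explicitly than the paper, which simply asserts it.
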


\begin{proof}[Proof of Lemma \ref{lemma:ratio_chi_sq_survival}]
From Lemma \ref{lemma:ratio_chi_sq_survival}, we can know that 
\begin{align*}
    \frac{d_{n2}}{\sqrt{2d_{n1}}} 
    \frac{U_{n1}}{U_{n2}}
    -\sqrt{d_{n1}/2}
    \converged 
    \mathcal{N}(0, \  1), 
    \quad
    \frac{d_{n2}}{\sqrt{2d_{n1}}} 
    \frac{\eta_{n1} + U_{n1}}{\eta_{n2} + U_{n2}}
    -\sqrt{d_{n1}/2}
    \converged 
    \mathcal{N}(\theta/\sqrt{2}, \  1).
\end{align*}
These immediately imply that, for any $\alpha\in (0,1)$, 
$
    d_{n2}/\sqrt{2d_{n1}} \cdot
    \nu_{n, \alpha}
    -\sqrt{d_{n1}/2} \rightarrow  z_{\alpha}, 
$
and 
\begin{align*}
    \Pr\left( 
    \frac{\eta_{n1}+U_{n1}}{\eta_{n2}+U_{n2}} 
    \ge \nu_{n,\alpha}
    \right)
    & = 
    \Pr\left\{
    \frac{d_{n2}}{\sqrt{2d_{n1}}} 
    \frac{\eta_{n1} + U_{n1}}{\eta_{n2} + U_{n2}}
    -\sqrt{d_{n1}/2}
    - 
    \left( 
    \frac{d_{n2}}{\sqrt{2d_{n1}}} \nu_{n, \alpha}
    -\sqrt{d_{n1}/2} \right) \ge 0
    \right\}\\
    & \rightarrow
    \Pr(\varepsilon_0 + \theta/\sqrt{2} - z_{\alpha} \ge 0)
    = 1 - \Phi(z_{\alpha} - \theta/\sqrt{2}), 
\end{align*}
where $\varepsilon_0$ denotes a standard Gaussian random variable. 
Therefore, Lemma \ref{lemma:ratio_chi_sq_survival} holds. 
\end{proof}

\begin{lemma}\label{lemma:power_given_X}
Let $\{(\bs{X}_i,Y_i,Z_i)\}_{1\leq i \leq n}$ denote samples from model $H_1$ in \eqref{eq:H_1_equal}, where we allow the underlying functions for the $H$ groups to vary with sample size and will write them explicitly as $f_{n1}, \ldots, f_{nH}$. 
If, as $n\rightarrow \infty$, 
\begin{align*}
    p_{n1} - p_{n0} \rightarrow \infty, \ \  
    \frac{p_{n1} - p_{n0}}{n-p_{n1}} \rightarrow 0, \ \ 
    \frac{\bs{f}^\top (\bs{I}_n - \bs{P}_{n1}) \bs{f}}{\sqrt{p_{n1}-p_{n0}}} \rightarrow 0, \ \ 
    \frac{\bs{f}^\top (\bs{I}_n - \bs{P}_{n0}) \bs{f}}{\sqrt{p_{n1}-p_{n0}}}  = \text{(or $\ge$) } \theta + o(1), 
\end{align*}
then 
$\Pr\{ p(\bXmatrix, \bs{Y}, \bs{Z}) \ge \alpha \mid \bXmatrix\}= \text{(or $\ge$) } \Phi(z_{\alpha} + \theta/\sqrt{2}) + o(1)$, 
where $p_{n1}, p_{n0}, \bs{f}, \bs{P}_{n1}, \bs{P}_{n0}, \Phi$ and $z_\alpha$ are defined the same as that in Theorem \ref{thm:power_diverg_kernel}, and $p(\bXmatrix, \bs{Y}, \bs{Z})$ is the continuous partial permutation $p$-value with kernel $K_{q_n}$, permutation size $n-p_{n0}$ and F test statistic. 
\end{lemma}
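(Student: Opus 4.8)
The plan is to reduce the partial permutation $p$-value to the tail event of an $F$-statistic and then invoke Lemma \ref{lemma:ratio_chi_sq_survival}. Since $K_{q_n}$ has the finite-dimensional feature map $\phi_{q_n}\in\mathbb{R}^{q_n}$ and the permutation size is $n-p_{n0}$, Theorem \ref{thm:general_F_test} applies for each fixed $n$ and gives $p(\bXmatrix,\bs{Y},\bs{Z})=1-F_{p_{n1}-p_{n0},\,n-p_{n1}}(F)$, where $F$ is the $F$-statistic in \eqref{eq:Fstat} built from $\bs{P}_{n0},\bs{P}_{n1}$. Writing $d_{n1}=p_{n1}-p_{n0}$ and $d_{n2}=n-p_{n1}$, the rejection event $\{p\le\alpha\}$ is exactly $\{F\ge q_{n,1-\alpha}\}$, with $q_{n,1-\alpha}$ the upper-$\alpha$ quantile of $F_{d_{n1},d_{n2}}$. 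Because $F_{d_{n1},d_{n2}}\stackrel{d}{=}(d_{n2}/d_{n1})\,U_{n1}/U_{n2}$ for independent central $U_{n1}\sim\chi^2_{d_{n1}}$, $U_{n2}\sim\chi^2_{d_{n2}}$ and the factor $d_{n2}/d_{n1}$ cancels, this event rewrites as $\{(\eta_{n1}+U_{n1})/(\eta_{n2}+U_{n2})\ge\nu_{n,1-\alpha}\}$, where $\nu_{n,1-\alpha}$ is the upper-$\alpha$ quantile of $U_{n1}/U_{n2}$.

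Next I would, conditionally on $\bXmatrix$, decompose the two quadratic forms in $F$. Under $H_1$ we have $\bs{Y}=\bs{f}+\bs{\varepsilon}$ with $\bs{\varepsilon}\sim\mathcal{N}(\bs{0},\sigma_0^2\bs{I}_n)$. Since $\bs{P}_{n0}\preceq\bs{P}_{n1}$ are nested, $\bs{P}_{n1}-\bs{P}_{n0}$ and $\bs{I}_n-\bs{P}_{n1}$ project onto mutually orthogonal subspaces of dimensions $d_{n1}$ and $d_{n2}$. Setting $\bs{u}=\sigma_0^{-1}(\bs{P}_{n1}-\bs{P}_{n0})\bs{\varepsilon}$, $\bs{v}=\sigma_0^{-1}(\bs{I}_n-\bs{P}_{n1})\bs{\varepsilon}$, $\bs{c}=\sigma_0^{-1}(\bs{P}_{n1}-\bs{P}_{n0})\bs{f}$ and $\bs{e}=\sigma_0^{-1}(\bs{I}_n-\bs{P}_{n1})\bs{f}$, the $\sigma_0^2$ cancels in $F$ and the numerator and denominator equal $\eta_{n1}+U_{n1}$ and $\eta_{n2}+U_{n2}$ with $U_{n1}=\|\bs{u}\|_2^2\sim\chi^2_{d_{n1}}$, $U_{n2}=\|\bs{v}\|_2^2\sim\chi^2_{d_{n2}}$, $\eta_{n1}=\|\bs{c}\|_2^2+2\bs{c}^\top\bs{u}$ and $\eta_{n2}=\|\bs{e}\|_2^2+2\bs{e}^\top\bs{v}$. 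Orthogonality of the two subspaces makes the cross-covariance of $\bs{u}$ and $\bs{v}$ vanish, so $U_{n1}$ and $U_{n2}$ are independent, which is precisely the structural hypothesis of Lemma \ref{lemma:ratio_chi_sq_survival} (that lemma does not require $\eta_{nj}$ to be independent of $U_{nj}$).

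It then remains to verify the two convergence hypotheses. Using $\bs{f}^\top(\bs{P}_{n1}-\bs{P}_{n0})\bs{f}=\bs{f}^\top(\bs{I}_n-\bs{P}_{n0})\bs{f}-\bs{f}^\top(\bs{I}_n-\bs{P}_{n1})\bs{f}$ together with the third and fourth conditions, the leading term $\|\bs{c}\|_2^2/\sqrt{d_{n1}}$ converges to $\theta$ (or to a limit $\ge\theta$); condition three also gives $\|\bs{e}\|_2^2=o(\sqrt{d_{n1}})$, so $\|\bs{e}\|_2^2\sqrt{d_{n1}}/d_{n2}=o(d_{n1}/d_{n2})=o(1)$ by condition two. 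The main obstacle is controlling the Gaussian cross terms $2\bs{c}^\top\bs{u}$ and $2\bs{e}^\top\bs{v}$, which couple $\eta_{nj}$ to $U_{nj}$: here one checks $\operatorname{Var}(\bs{c}^\top\bs{u})=\|\bs{c}\|_2^2$, so $2\bs{c}^\top\bs{u}/\sqrt{d_{n1}}=O_{\Pr}\!\big(\sqrt{\|\bs{c}\|_2^2/d_{n1}}\big)=o_{\Pr}(1)$ (as $\|\bs{c}\|_2^2/\sqrt{d_{n1}}\to\theta$ and $d_{n1}\to\infty$), and likewise $2\bs{e}^\top\bs{v}\sqrt{d_{n1}}/d_{n2}=O_{\Pr}(\|\bs{e}\|_2\sqrt{d_{n1}}/d_{n2})=o_{\Pr}(1)$ using $d_{n1}^{3/4}/d_{n2}\to0$. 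Hence $\eta_{n1}/\sqrt{d_{n1}}\convergep\theta$ and $\eta_{n2}\sqrt{d_{n1}}/d_{n2}\convergep0$; because both limits are deterministic, the dependence between $\eta_{nj}$ and $U_{nj}$ is harmless, exactly as Slutsky's theorem absorbs it inside the proof of Lemma \ref{lemma:ratio_chi_sq_survival}.

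Finally I would apply Lemma \ref{lemma:ratio_chi_sq_survival} with its level parameter set to $1-\alpha$: since $z_{1-\alpha}=-z_\alpha$, this yields $\Pr\{(\eta_{n1}+U_{n1})/(\eta_{n2}+U_{n2})\ge\nu_{n,1-\alpha}\mid\bXmatrix\}\to 1-\Phi(z_{1-\alpha}-\theta/\sqrt2)=\Phi(z_{\alpha}+\theta/\sqrt2)$. By the first paragraph the left-hand event coincides with $\{p(\bXmatrix,\bs{Y},\bs{Z})\le\alpha\}$, so $\Pr\{p\le\alpha\mid\bXmatrix\}=(\text{or }\ge)\,\Phi(z_\alpha+\theta/\sqrt2)+o(1)$; this is the limiting conditional rejection probability asserted by the lemma and is exactly the quantity that Theorem \ref{thm:power_diverg_kernel} then integrates over $\bXmatrix$. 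In the one-sided regime the limit of $\eta_{n1}/\sqrt{d_{n1}}$ is $\ge\theta$, and monotonicity of $\Phi$ delivers the corresponding inequality.
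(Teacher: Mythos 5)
Your proposal is correct and follows essentially the same route as the paper's own proof: reduce the $p$-value to an $F$-statistic tail event via Theorem \ref{thm:general_F_test}, decompose $\mathrm{RSS}_0-\mathrm{RSS}_1$ and $\mathrm{RSS}_1$ into independent central $\chi^2$ parts plus $\bs{f}$-dependent remainders, control the Gaussian cross terms by their variances (Chebyshev), and conclude with Lemma \ref{lemma:ratio_chi_sq_survival}. Your explicit remark that Lemma \ref{lemma:ratio_chi_sq_survival} does not require $\eta_{nj}$ to be independent of $U_{nj}$ is a worthwhile clarification of a point the paper leaves implicit, but it is not a different argument.
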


\begin{proof}[Proof of Lemma \ref{lemma:power_given_X}]
First, we give some equivalent forms of the residual sums of squares from the null and full models in \eqref{eq:linear_feature_alter} with $\phi$ replaced by $\phi_{q_n}$.  
By definition, 
\begin{align*}
    \text{RSS}_1 & = \bs{Y}^\top (\bs{I}_n - \bs{P}_{n1}) \bs{Y}
    = (\bs{f} + \bs{\varepsilon})^\top (\bs{I}_n - \bs{P}_{n1}) (\bs{f} + \bs{\varepsilon})
    \\
    & = 
    \big\{ \bs{f}^\top (\bs{I}_n - \bs{P}_{n1}) \bs{f} 
    + 
    2 \bs{f}^\top (\bs{I}_n - \bs{P}_{n1}) \bs{\varepsilon} \big\}
    + \bs{\varepsilon}^\top (\bs{I}_n - \bs{P}_{n1}) \bs{\varepsilon}
    \equiv \eta_{n2} + U_{n2}, 
\end{align*}
where $\eta_{n2} \equiv \bs{f}^\top (\bs{I}_n - \bs{P}_{n1}) \bs{f} + 2 \bs{f}^\top (\bs{I}_n - \bs{P}_{n1}) \bs{\varepsilon}$ and 
$U_{n2} = \bs{\varepsilon}^\top (\bs{I}_n - \bs{P}_{n1}) \bs{\varepsilon}$, 
and 
\begin{align*}
    \text{RSS}_0 - 
    \text{RSS}_1
    & 
    = 
    \bs{Y}^\top (\bs{I}_n - \bs{P}_{n0}) \bs{Y} - \bs{Y}^\top (\bs{I}_n - \bs{P}_{n1}) \bs{Y}
    = \bs{Y}^\top (\bs{P}_{n1}-\bs{P}_{n0}) \bs{Y}
    = 
    (\bs{f} + \bs{\varepsilon})^\top (\bs{P}_{n1}-\bs{P}_{n0}) (\bs{f} + \bs{\varepsilon})
    \\
    & = 
    \big\{ \bs{f}^\top (\bs{P}_{n1}-\bs{P}_{n0}) \bs{f} 
    + 
    2 \bs{f}^\top (\bs{P}_{n1}-\bs{P}_{n0}) \bs{\varepsilon} \big\}
    + \bs{\varepsilon}^\top (\bs{P}_{n1}-\bs{P}_{n0}) \bs{\varepsilon}
    \equiv \eta_{n1} + U_{n1}, 
\end{align*}
where $\eta_{n1} \equiv \bs{f}^\top (\bs{P}_{n1}-\bs{P}_{n0}) \bs{f} + 2 \bs{f}^\top (\bs{P}_{n1}-\bs{P}_{n0}) \bs{\varepsilon}$ 
and $U_{n1} \equiv \bs{\varepsilon}^\top (\bs{P}_{n1}-\bs{P}_{n0}) \bs{\varepsilon}$. 

Second, 
by the properties of Gaussian distributions and projection matrices, conditional on $\bXmatrix$, 
$U_{n1}$ and $U_{n2}$ follow chi-squared distributions with degrees of freedom $d_{n1} \equiv p_{n1} - p_{n0}$ and $d_{n2} \equiv n-p_{n1}$, 
and they are mutually independent. 

Third, we consider the limiting behavior of $\eta_{n1}$ and $\eta_{n2}$ as $n\rightarrow \infty$.
Define $\xi_{n1} \equiv \bs{f}^\top (\bs{P}_1^{(n)}-\bs{P}_0^{(n)}) \bs{f}$ and $\xi_{n2} \equiv \bs{f}^\top (\bs{I}_n - \bs{P}_1^{(n)}) \bs{f}$. 
From the conditions in Lemma \ref{lemma:power_given_X}, 
$\xi_{n2}/\sqrt{d_{n1}} = o(1)$, 
and $(\xi_{n1}+\xi_{n2})/\sqrt{d_{n1}} = \text{(or $\ge$) } \theta + o(1)$. 
These then imply that $\xi_{n1}/\sqrt{d_{n1}} = \text{(or $\ge$) } \theta + o(1)$.
Note that 
conditional on $\bXmatrix$, 
$\eta_{n1} \mid \bXmatrix \sim \mathcal{N}(\xi_{n1}, 4 \xi_{n1})$ 
and 
$\eta_{n2} \mid \bXmatrix \sim \mathcal{N}(\xi_{n2}, 4 \xi_{n2})$.  
By Chebyshev's inequality, 
we must have, 
conditional on $\bXmatrix$, 
\begin{align*}
    \frac{\eta_{n1}}{\sqrt{d_{n1}}}
    & = 
    \frac{\xi_{n1}}{\sqrt{d_{n1}}}
    + \frac{\sqrt{\xi_{n1}}}{\sqrt{d_{n1}}} \cdot O_{\Pr}(1)
    =
    \left( \frac{\xi_{n1}}{\sqrt{d_{n1}}} \right)^{1/2} 
    \left\{ 
    \left( \frac{\xi_{n1}}{\sqrt{d_{n1}}} \right)^{1/2} + \frac{O_{\Pr}(1)}{d_{n1}^{1/4}} 
    \right\}
    = \text{(or $\ge$) } \theta + o_{\Pr}(1), 
\end{align*}
and 
\begin{align*}
    \frac{\eta_{n2}}{\sqrt{d_{n1}}}
    & = 
    \frac{\xi_{n2}}{\sqrt{d_{n1}}}
    + \frac{\sqrt{\xi_{n2}}}{\sqrt{d_{n1}}} \cdot O_{\Pr}(1)
    = 
    \frac{\xi_{n2}}{\sqrt{d_{n1}}}
    + 
    \left( \frac{\xi_{n2}}{\sqrt{d_{n1}}} \right)^{1/2} \frac{1}{d_{n1}^{1/4}} \cdot O_{\Pr}(1)
    = o_{\Pr}(1), 
\end{align*}
where the latter immediately implies that 
$
    \eta_{n2}\sqrt{d_{n1}}/d_{n2}
    = 
    \eta_{n2}/\sqrt{d_{n1}} \cdot 
    d_{n1}/d_{n2} = o_{\Pr}(1). 
$

Fourth, we consider the conditional distribution of the partial permutation $p$-value given $\bXmatrix$. From Theorem \ref{thm:general_F_test}, we have 
$
p(\bXmatrix, \bs{Y}, \bs{Z}) = 1 - F_{d_{n1}, d_{n2}} (F(\bXmatrix, \bs{Y}, \bs{Z})), 
$
where $F_{d_{n1}, d_{n2}}$ denotes the distribution function of the F distribution with degrees of freedom $d_{n1}$ and $d_{n2}$. 
From the conditions in \ref{lemma:power_given_X} and the discussion before, using Lemma \ref{lemma:ratio_chi_sq_survival}, 
we have 
\begin{align*}
    & \quad \ \Pr\left( p(\bXmatrix, \bs{Y}, \bs{Z}) \le \alpha \mid \bXmatrix \right)
    \\ 
    & = 
    \Pr\left( F(\bXmatrix, \bs{Y}, \bs{Z}) \ge F^{-1}_{d_{n1}, d_{n2}}(1-\alpha) \mid \bXmatrix \right)
    = 
    \Pr\left( \frac{\eta_{n1} + U_{n1}}{\eta_{n2} + U_{n2}} \ge \frac{d_{n1}}{d_{n2}} F^{-1}_{d_{n1}, d_{n2}}(1-\alpha) \mid \bXmatrix \right)
    \\
    & = \text{(or $\ge$) } 
    1 - 
    \Phi(z_{1-\alpha} - \theta/\sqrt{2}) + o(1)
    = \Phi(z_{\alpha} + \theta/\sqrt{2}) + o(1). 
\end{align*}

From the above, Lemma \ref{lemma:power_given_X} holds. 
\end{proof}

\begin{proof}[\bf Proof of Theorem \ref{thm:power_diverg_kernel}]
First, we consider the limiting behavior of $\bs{f}^\top (\bs{I}_n - \bs{P}_{n1}) \bs{f}$ as $n\rightarrow \infty$. 
Let $\bs{\beta}_{q_n, h}\in \mathbb{R}^{q_n}$ be the coefficient for the best linear approximation of $f_{nh}$ using the basis functions $\{e_j\}_{j=1}^{q_n}$, i.e., 
$\remainder(f_{nh}; q_n) = \int (f_{nh} - \bs{\beta}_{q_n, h}^\top \phi_{q_n})^2 \text{d} \mu$, for all $n$ and $1\le h \le H$. 
Let $\check{\bs{f}} = (\check{f}_1, \ldots, \check{f}_n)$ with $\check{f}_i = \sum_{i=1}^H \I(Z_i=h) \bs{\beta}_{q_n, h}^\top \phi_{q_n}(\bs{X}_i)$. 
Note that $\bs{P}_{n1}$ is the projection matrix onto the column space of the transformed covariates under the full model. 
We must have 
\begin{align*}
    \bs{f}^\top (\bs{I}_n - \bs{P}_{n1}) \bs{f}
    & = 
    ( \bs{f} - \check{\bs{f}} )^\top (\bs{I}_n - \bs{P}_{n1}) ( \bs{f} - \check{\bs{f}} ) 
    \le 
    \| \bs{f} - \check{\bs{f}} \|_2^2 
    = 
    \sum_{h=1}^H \sum_{i:Z_i=h} \{f_{nh}(\bs{X}_i) - \bs{\beta}_{q_n, h}^\top \phi_{q_n}(\bs{X}_i)\}^2
    \\
    & = \sum_{h=1}^H \sum_{i:Z_i=h} \E [ \{f_{nh}(\bs{X}_i) - \bs{\beta}_{q_n, h}^\top \phi_{q_n}(\bs{X}_i)\}^2 ]
    \cdot O_{\Pr}(1)
    \le n \sum_{h=1}^H \remainder(f_{nh}; q_n) \cdot  O_{\Pr}(1). 
\end{align*}
From the conditions in Theorem \ref{thm:power_diverg_kernel}, we then have 
\begin{align}\label{eq:rss1_prob_converge}
    \frac{\bs{f}^\top (\bs{I}_n - \bs{P}_{n1}) \bs{f}}{\sqrt{p_{n1}-p_{n0}}}
    = \frac{n \sum_{h=1}^H \remainder(f_{nh}; q_n)}{\sqrt{p_{n1}-p_{n0}}} \cdot  O_{\Pr}(1)
    = o_{\Pr}(1). 
\end{align}

Second, we consider the conditional distribution of the partial permutation $p$-value given $\bs{X}$. 
From the conditions in Theorem \ref{thm:power_diverg_kernel} and \eqref{eq:rss1_prob_converge}, 
and using the property of convergence in probability \citep[][Theorem 2.3.2]{durrett2019probability}, 
we can then derive that 
$\Pr\{ p(\bXmatrix, \bs{Y}, \bs{Z}) \ge \alpha \mid \bXmatrix\}= \text{(or $\ge$) } \Phi(z_{\alpha} + \theta/\sqrt{2}) + o_{\Pr}(1)$. 

Third, because $\Pr\{ p(\bXmatrix, \bs{Y}, \bs{Z}) \ge \alpha \mid \bXmatrix\}$ is bounded between 0 and 1, we must have 
\begin{align*}
    \Pr\{ p(\bXmatrix, \bs{Y}, \bs{Z}) \ge \alpha \}
    = 
    \E[ \Pr\{ p(\bXmatrix, \bs{Y}, \bs{Z}) \ge \alpha \mid \bXmatrix\} ] 
    = \text{(or $\ge$) } \Phi(z_{\alpha} + \theta/\sqrt{2}) + o(1). 
\end{align*}
Therefore, Theorem \ref{thm:power_diverg_kernel} holds. 
\end{proof}

\begin{proof}[\bf Comments on the implication from Theorem \ref{thm:power_diverg_kernel} when $f_{n1} = \ldots = f_{nH} = f_0$]\ \\
Let $\bs{\beta}_{q_n}\in \mathbb{R}^{q_n}$ be the coefficient for the best linear approximation of $f_{0}$ using the basis functions $\{e_j\}_{j=1}^{q_n}$, i.e., 
$\remainder(f_{0}; q_n) = \int (f_{0} - \bs{\beta}_{q_n}^\top \phi_{q_n})^2 \text{d} \mu$. 
From the definition of $\bs{P}_{n0}$ and by the same logic as that in the proof of Theorem \ref{thm:power_diverg_kernel}, 
\begin{align*}
    \bs{f}^\top (\bs{I}_n - \bs{P}_{n0}) \bs{f}
    & \le 
    \sum_{i=1}^n (f_0(\bs{X}_i) -  \bs{\beta}_{q_n}^\top \phi_{q_n}(\bs{X}_i))^2 
    = 
    \sum_{i=1}^n \E\{ (f_0(\bs{X}_i) -  \bs{\beta}_{q_n}^\top \phi_{q_n}(\bs{X}_i))^2 \} \cdot O_{\Pr}(1)
    \\
    & = 
    n \remainder(f_{0}; q_n) \cdot O_{\Pr}(1). 
\end{align*}
Thus, when  $p_{n1} - p_{n0} \asymp q_n$, a sufficient condition for 
$\bs{f}^\top (\bs{I}_n - \bs{P}_{n0}) \bs{f}/\sqrt{p_{n1}-p_{n0}} = o_{\Pr}(1)$ is 
$n \remainder(f_{0}; q_n) = o_{\Pr}(q_n^{1/2})$. 
From Theorem \ref{thm:power_diverg_kernel}, 
if $p_{n0} \asymp q_n$, $p_{n1} - p_{n0} \asymp q_n$, $q_n \rightarrow \infty$, $q_n/n\rightarrow 0$ and $n \remainder(f_{0}; q_n) = o_{\Pr}(q_n^{1/2})$, 
then 
the partial permutation test must be asymptotically valid. 
\end{proof}

\begin{proof}[\bf Comments on the implication from Theorem \ref{thm:power_diverg_kernel} when $f_{nh} = f_0 + \delta_n \zeta_h$ for $1\le h \le H$]\  \\
First, we consider the limiting behavior of $n \sum_{h=1}^H \remainder(f_{nh}; q)$. 
By definition, we can verify that 
$
  \remainder(f_{nh}; q)  =    \remainder(f_0 + \delta_n \zeta_h; q) 
\le 
2 \remainder(f_0; q) + 2 \delta_n^2 \remainder(\zeta_h; q)
$
for all $n, q$ and $1\le h\le H$. 
Consequently, 
\begin{align*}
    n \sum_{h=1}^H \remainder(f_{nh}; q) \le 2 n H \remainder(f_0; q) + 2 n \delta_n^2 \sum_{h=1}^H \remainder(\zeta_h; q)
    = 
    n \big\{\remainder(f_0; q) + \sum_{h=1}^H \remainder(\zeta_h; q) \big\} \cdot O(1). 
\end{align*}

Second, we consider the limiting behavior of $\bs{f}^\top (\bs{I}_n - \bs{P}_{n0}) \bs{f}$. 
Assume that the covariates are exactly the same across all groups, and that they are ordered in the same way. 
For samples in group $h$, 
let $\bs{f}_h \in \mathbb{R}^{n/H}$ be the vector of function values, and $\bs{\Phi}_{nh}\in \mathbb{R}^{(n/H)\times q_n}$ be the matrix of transformed covariates. 
Obviously, $\bs{\Phi}_{n1} = \bs{\Phi}_{n2} = \ldots = \bs{\Phi}_{nH}$. 
Moreover, by the property of projection matrix, we know $\bs{f}^\top (\bs{I}_n - \bs{P}_{n0}) \bs{f} = \min_{\bs{b}} \sum_{h=1}^H \left\|\bs{f}_h - \bs{\Phi}_{nh} \bs{b} \right\|_2^2$. 
Note that for any $1\le h,h'\le H$ and any $\bs{b}\in \mathbb{R}^q$, we have 
\begin{align*}
    & \quad \ \left\|\bs{f}_h - \bs{\Phi}_{nh} \bs{b} \right\|_2^2 + \left\|\bs{f}_{h'} - \bs{\Phi}_{nh'} \bs{b} \right\|_2^2
    = \left\|\bs{f}_h - \bs{\Phi}_{nh} \bs{b} \right\|_2^2 + \left\|\bs{f}_{h'} - \bs{\Phi}_{nh} \bs{b} \right\|_2^2\\
    & = 
    \frac{1}{2} 
    \left\{ \left\| (\bs{f}_h - \bs{\Phi}_{nh} \bs{b}) - (\bs{f}_{h'} - \bs{\Phi}_{nh} \bs{b}) \right\|_2^2 + 
    \left\| (\bs{f}_h - \bs{\Phi}_{nh} \bs{b}) + (\bs{f}_{h'} - \bs{\Phi}_{nh} \bs{b}) \right\|_2^2
    \right\}
    \\
    & \ge \frac{1}{2 }\left\| \bs{f}_h - \bs{f}_{h'} \right\|_2^2. 
\end{align*}
Consequently, we must have 
\begin{align*}
    \bs{f}^\top (\bs{I}_n - \bs{P}_{n0}) \bs{f}
    & 
    \ge 
    \frac{1}{2} \max_{h, h'} \|\bs{f}_h - \bs{f}_{h'}\|_2^2
    = 
    \frac{n}{2H} \max_{h, h'} \frac{H}{n} \sum_{i:Z_i=1} \left\{f_{nh}(\bs{X}_i) - f_{nh'}(\bs{X}_i)\right\}^2
    \\
    & = \frac{n\delta_n^2}{2H}  \max_{h, h'} \frac{H}{n} \sum_{i:Z_i=1} \left\{\zeta_{h}(\bs{X}_i) - \zeta_{h'}(\bs{X}_i)\right\}^2
    = 
    \frac{n\delta_n^2}{2H} \big(   \max_{h, h'} \tau_{hh'} +  o_{\Pr}(1) \big), 
\end{align*}
where the last equality holds by the law of large numbers.

Third, we consider the case in which 
$n\remainder(f_0; q) = o(q_n^{1/2})$, $n \remainder(\zeta_h; q) = o(q_n^{1/2})$ for all $h$, and $n\delta_n^2 \ge \theta \sqrt{8H^3q_n}$ for some $\theta>0$. 
Suppose that 
$p_{n0} \asymp q_n$, $p_{n1} - p_{n0} \asymp q_n$, $q_n \rightarrow \infty$ and $q_n/n\rightarrow 0$. 
From the discussion before, we have 
\begin{align*}
    \frac{n \sum_{h=1}^H \remainder(f_{nh}; q)}{\sqrt{p_{n1}-p_{n0}}} 
    = 
    \frac{n \{\remainder(f_0; q) + \sum_{h=1}^H \remainder(\zeta_h; q) \}}{\sqrt{q_n}} \cdot O(1) = o(1), 
\end{align*}
and 
\begin{align*}
    \frac{\bs{f}^\top (\bs{I}_n - \bs{P}_{n0}) \bs{f}}{\sqrt{p_{n1}-p_{n0}}}
    & \ge 
    \frac{\bs{f}^\top (\bs{I}_n - \bs{P}_{n0}) \bs{f}}{\sqrt{Hq_n}}
    \ge 
    \frac{1}{\sqrt{Hq_n}}
    \frac{n\delta_n^2}{2H}  \max_{h, h'} \frac{H}{n} \sum_{i:Z_i=1} \left\{\zeta_{h}(\bs{X}_i) - \zeta_{h'}(\bs{X}_i)\right\}^2
    \\
    & = 
    \frac{\sqrt{2}\  n\delta_n^2}{\sqrt{8 H^3q_n}}
    \max_{h, h'} \frac{H}{n} \sum_{i:Z_i=1} \left\{\zeta_{h}(\bs{X}_i) - \zeta_{h'}(\bs{X}_i)\right\}^2
    \ge 
    \sqrt{2} \ \theta \max_{h, h'} \frac{H}{n} \sum_{i:Z_i=1} \left\{\zeta_{h}(\bs{X}_i) - \zeta_{h'}(\bs{X}_i)\right\}^2
    \\
    & =\sqrt{2} \ \theta \max_{h, h'} \tau_{hh'} +  o_{\Pr}(1). 
\end{align*}
From Theorem \ref{thm:power_diverg_kernel}, the power of the partial permutation test must satisfy that 
$
    \Pr\{ p(\bXmatrix, \bs{Y}, \bs{Z}) \le \alpha\} \ge \Phi(z_{\alpha}+\theta \max_{h, h'} \tau_{hh'}) + o(1). 
$
\end{proof}

\section{Properties of Kernel Functions}\label{sec:proof_kernel}

To prove Proposition \ref{prop:decay_eigenvalue},  we need the following lemma.

\begin{lemma}%
\label{kuhn_1987}
	Let $\mathcal{X}$ be a metric compactum with $\varepsilon_n(\mathcal{X})\asymp \varepsilon_{2n}(\mathcal{X})$, and let $0<s\leq 1$. Then for every positive definite kernel $K \in C^{s,0}(\mathcal{X},\mathcal{X})$ and every finite Borel measure $\mu$ on $\mathcal{X}$ one has
	\begin{align*}
	\lambda_n(T_{K,\mu}) = O(n^{-1}\varepsilon_n(\mathcal{X})^s), n\geq 1,
	\end{align*}
where $\varepsilon_n(\mathcal{X})$ is the $n$th entropy number of $\mathcal{X}$ defined as
\begin{align*}
\varepsilon_n(\mathcal{X}) = \text{inf}\left\{
\varepsilon>0: \mathcal{X} \text{ can be covered by $n$ balls of radius }\varepsilon
\right\}.
\end{align*}
\end{lemma}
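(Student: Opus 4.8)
The plan is to prove this through the entropy-number theory of compact operators rather than through a direct finite-rank approximation, since (as I explain at the end) the naive approximation argument is off by exactly the factor $n^{-1}$ that is the heart of the estimate. Throughout, $T_{K,\mu}$ acts on $L^2(\mu)$ by $(T_{K,\mu}f)(x)=\int K(x,y)f(y)\,\mathrm{d}\mu(y)$; since $K$ is continuous, symmetric and positive definite and $\mathcal{X}$ is a compact metric space, $T_{K,\mu}$ is a positive, self-adjoint, trace-class operator, so its eigenvalues $\lambda_1\ge\lambda_2\ge\cdots\ge 0$ coincide with its singular values and approximation numbers.

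First I would factor $T_{K,\mu}$ through the canonical feature map. Writing $K(x,y)=\langle\Phi(x),\Phi(y)\rangle_{\mathcal H}$ for the feature map $\Phi:\mathcal{X}\to\mathcal H$ into the associated Hilbert space $\mathcal H$, the operator $S:L^2(\mu)\to\mathcal H$, $Sf=\int\Phi(y)f(y)\,\mathrm{d}\mu(y)$, satisfies $T_{K,\mu}=S^{*}S$, so that $\lambda_n(T_{K,\mu})=s_n(S)^2$. The next step transfers the regularity of $K$ to $\Phi$: from $\|\Phi(x)-\Phi(x')\|_{\mathcal H}^2=K(x,x)-2K(x,x')+K(x',x')$ and the Hölder bound $|K(x,y)-K(x',y)|\le\|K\|_{C^{s,0}}\,d(x,x')^{s}$, applied in each slot using symmetry, one obtains $\|\Phi(x)-\Phi(x')\|_{\mathcal H}\le C\,d(x,x')^{s/2}$, i.e. $\Phi$ is Hölder of order $s/2$. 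Consequently a covering of $\mathcal{X}$ by $n$ balls of radius $\varepsilon_n(\mathcal{X})$ maps to a covering of $\Phi(\mathcal{X})$ by $n$ sets of radius $\lesssim\varepsilon_n(\mathcal{X})^{s/2}$, giving the metric-entropy bound $\varepsilon_n(\Phi(\mathcal{X}))\lesssim\varepsilon_n(\mathcal{X})^{s/2}$.

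The core of the argument is to convert this metric-entropy bound on the image set into a bound on the entropy numbers of the operator, and then into an eigenvalue bound via Carl's inequality $|\lambda_n(T_{K,\mu})|\le c\,e_n(T_{K,\mu})$. The mechanism producing the extra decay is a multiscale (dyadic) decomposition: using refining partitions of $\mathcal{X}$ into $2^{j}$ cells of diameter $\lesssim\varepsilon_{2^{j}}(\mathcal{X})$ and the associated conditional-expectation projections $P_j$, I would write $T_{K,\mu}=P_0 T_{K,\mu}P_0+\sum_{j\ge 1}D_j$ with $D_j=P_jT_{K,\mu}P_j-P_{j-1}T_{K,\mu}P_{j-1}$, where each $D_j$ has rank $\lesssim 2^{j}$ and operator norm $\lesssim\varepsilon_{2^{j}}(\mathcal{X})^{s}$ (the norm bound follows from Schur's test and the Hölder continuity of $K$). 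For a rank-$r$ operator of norm $\delta$ the entropy numbers obey $e_k\lesssim\delta\,2^{-(k-1)/(2r)}$, so the subadditivity $e_{\sum_j(k_j-1)+1}\big(\sum_j D_j\big)\le\sum_j e_{k_j}(D_j)$ lets me allocate an entropy budget $k_j$ across scales. Balancing rank against accuracy in this allocation, summed over $j$, yields $e_n(T_{K,\mu})\lesssim n^{-1}\varepsilon_n(\mathcal{X})^{s}$, and Carl's inequality then delivers $\lambda_n(T_{K,\mu})=O(n^{-1}\varepsilon_n(\mathcal{X})^{s})$. The doubling hypothesis $\varepsilon_n(\mathcal{X})\asymp\varepsilon_{2n}(\mathcal{X})$ is used repeatedly to pass between the dyadic indices $2^{j}$ and a general index $n$ and to collapse the multiscale sum into a single power of $\varepsilon_n(\mathcal{X})$.

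The main obstacle I anticipate is precisely this core conversion step, namely extracting the factor $n^{-1}$. A direct finite-rank approximation — replacing $K(x,y)$ by the cell-wise constant $K(x_i,y)$ on a covering into $n$ cells — controls only the approximation numbers, giving $a_{n+1}(T_{K,\mu})\lesssim\varepsilon_n(\mathcal{X})^{s}$ and hence $\lambda_n\lesssim\varepsilon_n(\mathcal{X})^{s}$, which is weaker than the target by a full factor of $n$; moreover, no amount of averaging the decreasing eigenvalues recovers that factor. The improvement is genuinely a Hilbert-space volume phenomenon visible only at the level of entropy numbers, so the delicate part is the rank-versus-accuracy bookkeeping in the dyadic allocation, together with checking that the positive definiteness of $K$ — which underlies both the trace-class property and the $S^{*}S$ factorization — is what makes Carl's inequality applicable with the correct normalization.
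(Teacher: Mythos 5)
First, note that the paper does not actually prove this lemma: its ``proof'' is a one-line citation to Theorem 4 of K\"uhn (1987), so you are attempting to reconstruct a literature result from scratch. Your setup is sound: the factorization $T_{K,\mu}=S^{*}S$ with $\lambda_n(T_{K,\mu})=s_n(S)^2$, the transfer of the H\"older-$s$ regularity of $K$ to H\"older-$s/2$ regularity of the feature map $\Phi$ (this is exactly where positive definiteness enters), and the use of Carl's inequality are all the right ingredients, and your diagnosis that the naive cell-wise approximation only yields $\lambda_n\lesssim\varepsilon_n(\mathcal{X})^{s}$ is correct.

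However, your core step has a genuine gap: the dyadic decomposition $T_{K,\mu}=P_0T_{K,\mu}P_0+\sum_j D_j$ with $\mathrm{rank}(D_j)\lesssim 2^{j}$ and $\|D_j\|\lesssim\varepsilon_{2^{j}}(\mathcal{X})^{s}$ uses only the rank and norm of each block, and that information alone provably cannot produce the factor $n^{-1}$. Carrying out the allocation you describe, the top scale $j^{*}$ with $2^{j^{*}}\asymp n$ contributes $e_{k}(D_{j^{*}})\lesssim \varepsilon_{n}(\mathcal{X})^{s}\,2^{-k/2^{j^{*}+1}}$, and since the available budget is $k\le n\asymp 2^{j^{*}}$, the exponential factor is bounded below by a constant; hence the scheme caps out at $e_n(T_{K,\mu})\lesssim\varepsilon_n(\mathcal{X})^{s}$. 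Indeed, a block-diagonal operator on $\ell_2$ with blocks of size $2^{j}$ and norms $\varepsilon_{2^{j}}(\mathcal{X})^{s}$ satisfies all your rank/norm hypotheses yet has $e_n\asymp\varepsilon_n(\mathcal{X})^{s}$ in the polynomial regime, so no bookkeeping of this type can close the gap --- in this regime entropy numbers built from rank-versus-norm data are equivalent to approximation numbers, which is exactly the bound you rejected. The ``Hilbert-space volume phenomenon'' you correctly sense is needed lives elsewhere: one must factor $S$ through $L_1(\mu)$, observing that $S(B_{L_1})$ is contained in (a multiple of) the closed absolutely convex hull of $\Phi(\mathcal{X})$, and then invoke the entropy estimate for convex hulls in Hilbert space --- equivalently the Carl--Sch\"utt bound $e_k(\mathrm{id}:\ell_1^m\to\ell_2^m)\lesssim\sqrt{k^{-1}\log(1+m/k)}$ --- which converts the metric entropy $\varepsilon_n(\Phi(\mathcal{X}))\lesssim\varepsilon_n(\mathcal{X})^{s/2}$ into the operator bound $e_n(S)\lesssim n^{-1/2}\varepsilon_n(\mathcal{X})^{s/2}$, with the gain of $n^{-1/2}$ per factor. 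Squaring through the factorization, $\lambda_n(T_{K,\mu})\le 2\,e_{\lceil n/2\rceil}(S^{*})\,e_{\lceil n/2\rceil}(S)\lesssim n^{-1}\varepsilon_n(\mathcal{X})^{s}$ (using the doubling hypothesis $\varepsilon_n\asymp\varepsilon_{2n}$), which is K\"uhn's argument. Without the convex-hull/$\ell_1\to\ell_2$ entropy input, your multiscale allocation cannot reach the stated rate.
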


\begin{proof}[Proof of Lemma \ref{kuhn_1987}]
	Lemma \ref{kuhn_1987} follows directly from Theorem 4 in \citet{kuhn1987eigenvalues}. 
\end{proof}

\begin{proof}[\bf Proof of Proposition \ref{prop:decay_eigenvalue}]
Define a measure $\tilde{\mu}$ in $\tilde{\mathcal{X}} = [-b, b]^d$ as 
$
\tilde{\mu}(A) = \mu(A\cap \mathcal{X})
$
for all measurable set $A \subseteq \tilde{\mathcal{X}}$.
It is not difficult to see that 
$K$ defined on $(\mathcal{X}\times \mathcal{X}, \mu\times \mu)$ has the same eigenvalues as $\tilde{K}$ defined on $(\tilde{\mathcal{X}} \times \tilde{\mathcal{X}}, \tilde{\mu}\times\tilde{\mu})$. Because 
$\varepsilon_n([-b,b]^d) \asymp n^{-1/d}$, from Lemma \ref{kuhn_1987}, we have
$
\lambda_n = O(n^{-1}\varepsilon_n(\tilde{\mathcal{X}})^s) = O(n^{-1-s/d}).
$
Therefore, Proposition \ref{prop:decay_eigenvalue} holds. 
\end{proof}

\section{Partial Permutation Test with Fixed Functional Relationship and Correlated Noises}\label{sec:proof_corr}

\begin{proof}[\bf Proof of Theorem \ref{thm:fixed_property_theorem_corr}]
Note that under model $H_{0\text{C}}$ in \eqref{eq:v_fixed_H0_corr}, 
$\bs{\Sigma}^{-1/2} \bs{Y} = \bs{\Sigma}^{-1/2} \bs{f} + \bs{\Sigma}^{-1/2} \bs{\varepsilon}$, 
where $\bs{\Sigma}^{-1/2} \bs{\varepsilon} \sim \mathcal{N}(0, \sigma_0^2 \bs{I})$. 
Moreover, we are essentially conducting the partial permutation test using the response vector $\bs{\Sigma}^{-1/2} \bs{Y}$ and the kernel matrix $\bs{K}_n^{\text{C}} = \bs{\Sigma}^{-1/2} \bs{K}_n \bs{\Sigma}^{-1/2}$. 
Therefore, 
by the same logic as Theorem \ref{thm:fixed_property_theorem},
we can derive Theorem \ref{thm:fixed_property_theorem_corr}. 
For conciseness, we omit the details here. 
\end{proof}

\begin{proof}[\bf Proof of Corollary \ref{cor:kernel_finite_dim_feature_space_corr}]

By the same logic as Corollary \ref{cor:kernel_finite_dim_feature_space}, 
it suffices to prove that 
$\omega_{\text{C}}(b_n,\sigma_0^{-1}f_0, \bs{\Sigma}) =  \sigma_0^{-2} \sum_{i=n-b_n+1}^n (\bs{\gamma}_{i}^\top \bs{\Sigma}^{-1/2} \bs{f}_0)^2 = 0$ for permutation size $b_n \le n-q$. 
Let $\bs{\Phi} = (\phi(\bs{X}_1), \ldots, \phi(\bs{X}_n))^\top \in \mathbb{R}^{n \times q}$ be the matrix consisting of all covariates mapped into the feature space. 
We can then write 
$\bs{K}_n^{\text{C}}$ equivalently as 
$
\bs{K}_n^{\text{C}} = \bs{\Sigma}^{-1/2} \bs{K}_n \bs{\Sigma}^{-1/2}
= \bs{\Sigma}^{-1/2} \bs{\Phi} \bs{\Phi}^{\top} \bs{\Sigma}^{-1/2}, 
$
whose rank is at most $q$. 
Consequently, 
for the eigen-decomposition of $\bs{K}_n^{\text{C}}$, 
we have $\bs{\gamma}_i^\top \bs{K}_n^{\text{C}} \bs{\gamma}_i = \bs{\gamma}_i^\top \bs{\Sigma}^{-1/2} \bs{\Phi} \bs{\Phi}^{\top} \bs{\Sigma}^{-1/2} \bs{\gamma}_i = 0$ for $i>q$, 
which immediately implies that $\bs{\gamma}_i^\top \bs{\Sigma}^{-1/2} \bs{\Phi} = \bs{0}_{1\times q}$ for $i > q$. 

Because the function $f(\bs{x})$ is linear in $\phi(x)$, we can write $f(\bs{x})$ as $f(\bs{x}) = \phi(x)^\top \bs{\beta}$ for some $\bs{\beta}\in \mathbb{R}^q$. 
This immediately implies that $\bs{f} = \bs{\Phi} \bs{\beta}$. 
Thus, for any $i>q$, we have 
$
\bs{\gamma}_{i}^\top \bs{\Sigma}^{-1/2} \bs{f}
= \bs{\gamma}_{i}^\top \bs{\Sigma}^{-1/2} \bs{\Phi} \bs{\beta}
= \bs{0}_{1\times q} \bs{\beta} = 0. 
$
Therefore, 
when $b_n\le n-q$ or equivalently $n-b_n+1\ge q+1$, 
we must have 
$\omega_{\text{C}}(b_n,\sigma_0^{-1}f_0, \bs{\Sigma}) =  \sigma_0^{-2} \sum_{i=n-b_n+1}^n (\bs{\gamma}_{i}^\top \bs{\Sigma}^{-1/2} \bs{f}_0)^2 = 0$. 

From the above, Corollary \ref{cor:kernel_finite_dim_feature_space_corr} holds. 
\end{proof}

\begin{proof}[\bf Proof of Corollary \ref{cor:corr_diverg_kernel}]
First, we consider the limiting behavior of $\omega(b_n, \sigma_0^{-1} f_0, \bs{\Sigma})$ as $n \rightarrow \infty$. 
For $q\ge 1$, let $\bs{\beta}_q\in \mathbb{R}^q$ be the coefficient vector for the best linear approximation of $f_0$ using the first $q$ basis functions, 
    i.e., 
    $\remainder(f_0;q) = \int ( f - \bs{\beta}_q^\top \phi_q )^2 \text{d} \mu$. 
    Let $\bs{\Phi}_q = (\phi_q(\bs{X}_1), \ldots, \phi_q(\bs{X}_n))^\top \in \mathbb{R}^{n\times q}$ be the matrix consisting of transformed covariates using the first $q$ basis functions. 
    Then the kernel matrix $\bs{K}_n^{\text{C}}$ can be written as $\bs{K}_n^{\text{C}} = \bs{\Sigma}^{-1/2}\bs{\Phi}_{q_n} \bs{\Phi}_{q_n}^\top \bs{\Sigma}^{-1/2}$, whose rank is at most $q_n < n$. 
    Thus, the eigenvectors of $\bs{K}_n^{\text{C}}$ must satisfy that $0 = \bs{\gamma}_i^\top \bs{K}_n^{\text{C}} \bs{\gamma}_i = \bs{\gamma}_i^\top \bs{\Sigma}^{-1/2} \bs{\Phi}_{q_n} \bs{\Phi}_{q_n}^\top \bs{\Sigma}^{-1/2} \bs{\gamma}_i$ for $q_n < i \le n$. 
    Consequently, 
    for $q_n < i \le n$, 
    $\bs{\gamma}_i^\top \bs{\Sigma}^{-1/2} \bs{\Phi}_{q_n} = \bs{0}$, 
    and 
    $
        \bs{\gamma}_i^\top \bs{\Sigma}^{-1/2} \bs{f}_0 
        = 
        \bs{\gamma}_i^\top \bs{\Sigma}^{-1/2}
        ( \bs{\Phi}_{q_n} \bs{\beta}_{q_n} + \bs{f}_0 - \bs{\Phi}_{q_n} \bs{\beta}_{q_n} )
        = \bs{\gamma}_i^\top \bs{\Sigma}^{-1/2} (\bs{f}_0 - \bs{\Phi}_{q_n} \bs{\beta}_{q_n} ), 
    $
    where $\bs{f}_0 = (f_0(\bs{X}_1), \ldots, f_0(\bs{X}_n))$. 
    This then implies that 
    \begin{align*}
        \sum_{i=q_n+1}^n ( \bs{\gamma}_i^\top \bs{\Sigma}^{-1/2} \bs{f}_0 )^2 
        & = 
        \sum_{i=q_n+1}^n 
        \big\{ \bs{\gamma}_i^\top \bs{\Sigma}^{-1/2} (\bs{f}_0 - \bs{\Phi}_{q_n} \bs{\beta}_{q_n} ) \big\}^2 
        \le 
        \| \bs{\Sigma}^{-1/2} ( \bs{f}_0 - \bs{\Phi}_{q_n} \bs{\beta}_{q_n} ) \|_2^2 
        \\
        & 
        \le \frac{1}{\lambda_{\min}(\bs{\Sigma})} \|  \bs{f}_0 - \bs{\Phi}_{q_n} \bs{\beta}_{q_n} \|_2^2
        = 
        \frac{1}{\lambda_{\min}(\bs{\Sigma})} \sum_{i=1}^n 
        \{ f_0(\bs{X}_i) - \bs{\beta}_{q_n}^\top \phi_{q_n}(\bs{X}_i) \}^2 
        \\
        & = 
        \frac{1}{\lambda_{\min}(\bs{\Sigma})}
        \sum_{i=1}^n \E [ \{ f_0(\bs{X}_i) - \bs{\beta}_{q_n}^\top \phi_{q_n}(\bs{X}_i) \}^2 ]
        \cdot
        O_{\Pr}(1)
        = 
        \frac{n \remainder(f_0; q_n)}{\lambda_{\min}(\bs{\Sigma})} \cdot O_{\Pr}(1)
    \end{align*}
    Consequently, we have 
    $
       \omega_{\text{C}}(b_n,\sigma_0^{-1}f_0, \bs{\Sigma})
       = 
       \sigma_0^{-2}
       \sum_{i=n-b_n+1}^n ( \bs{\gamma}_{i}^\top \bs{\Sigma}^{-1/2} \bs{f}_0)^2
       =  n \remainder(f_0; q_n)/\lambda_{\min}(\bs{\Sigma}) \cdot O_{\Pr}(1). 
    $
    
    Second, we prove the asymptotic validity of the partial permutation test. 
    By the same logic as the proof of Theorem \ref{thm:fixed_property_theorem_corr}, and using Lemmas \ref{lemma:basic_ineq_perm_p_val} and \ref{lemma:bound_TV}, we can know that, for any $\alpha, \delta > 0$, 
    the $p$-value from the partial permutation test with kernel $K_{q_n}$, permutation size $b_n \le n-q_n$, any test statistic $T$ and covariance matrix $\bs{\Sigma}$ satisfies that 
    \begin{align}\label{eq:pval_bound_diverging_corr}
    \Pr \{  p( \bXmatrix, \bs{Y}, \bs{Z} ) \leq \alpha \}
    \leq & 
    \alpha+ \delta + \Pr \left( e^{\Delta_n}-1 > 2\delta \right)
    = 
    \alpha+ \delta + \Pr \left\{ \Delta_n > \log(1+2\delta) \right\}
    \end{align}
    with 
    $
    \Delta_n 
    =  2 \sqrt{2} \sqrt{\omega_{\text{C}}(b_n, \sigma_0^{-1}f_0, \bs{\Sigma}) } \cdot \sqrt{
	\omega_{\text{C}}(b_n, \sigma_0^{-1}f_0, \bs{\Sigma}) + \sigma_0^{-2}
	\sum_{i=n-b_n+1}^{n}(\bs{\gamma}_{i}^\top\bs{\varepsilon})^2 }.
    $
    Note that $(\bs{\gamma}_1^\top \bs{\varepsilon}, \ldots, \bs{\gamma}_n^\top \bs{\varepsilon})$ are i.i.d.\ standard Gaussian. We must have 
    $
    \sum_{i=n-b_n+1}^{n}(\bs{\gamma}_{i}^\top\bs{\varepsilon})^2 
    = O_{\Pr}(b_n ). 
    $
    Consequently, 
    \begin{align*}
    \Delta_n 
    & =  \sqrt{ \frac{n \remainder (f_0; q_n)}{\lambda_{\min}(\bs{\Sigma})} } \cdot \sqrt{
	\frac{n \remainder (f_0; q_n)}{\lambda_{\min}(\bs{\Sigma})}   + b_n } \cdot O_{\Pr}(1)
	=
	\sqrt{
	\left\{ \frac{n (n-q_n) \remainder (f_0; q_n)}{\lambda_{\min}(\bs{\Sigma})}\right\}^2 + 
	\frac{n (n-q_n) \remainder (f_0; q_n)}{\lambda_{\min}(\bs{\Sigma})}
	}\cdot O_{\Pr}(1),
    \end{align*}
    where the last equality holds because $\max\{1, b_n\}\le n-q_n$. 
    From the condition in Corollary \ref{cor:corr_diverg_kernel}, we must have $\Delta_n = o_{\Pr}(1)$. 
    Letting $n$ go to infinity in \eqref{eq:pval_bound_diverging_corr}, we can know that, for any $\alpha, \delta>0$, 
    \begin{align*}
    \limsup_{n\rightarrow \infty}\Pr \{  p( \bXmatrix, \bs{Y}, \bs{Z}, \bs{\Sigma} ) \leq \alpha \}
    \leq & 
    \alpha+ \delta + 
    \limsup_{n\rightarrow \infty} \Pr \left\{ \Delta_n > \log(1+2\delta) \right\}
    = 
    \alpha+\delta. 
    \end{align*}
    Because the above inequality holds for any $\delta> 0$, 
    we must have 
    $
    \limsup_{n\rightarrow \infty}\Pr \{  p( \bXmatrix, \bs{Y}, \bs{Z}, \bs{\Sigma} ) \leq \alpha  \}
    \le 
    \alpha. 
    $
    Therefore, $p( \bXmatrix, \bs{Y}, \bs{Z}, \bs{\Sigma} )$ is an asymptotically valid $p$-value. 
    
    From the above, Corollary \ref{cor:corr_diverg_kernel} holds. 
\end{proof}

\begin{proof}[\bf Proof of Corollary \ref{cor:fixed_property_theorem_balanced_design_corr}]
By the same logic as Corollary \ref{cor:kernel_finite_dim_feature_space_corr}, it suffices to prove that 
$\omega_{\text{C}}(b_n,\sigma_0^{-1}f_0, \bs{\Sigma}) =  \sigma_0^{-2} \sum_{i=n-b_n+1}^n (\bs{\gamma}_{i}^\top \bs{\Sigma}^{-1/2} \bs{f}_0)^2 = 0$ for permutation size $b_n \le n-r$. 
 Without loss of generality, we assume the units are ordered according to their group indicators in the sense that 
$Z_{(k-1)m + 1} = Z_{(k-1)m + 1} = \ldots = Z_{km} = k$ for $1\le k \le H$,
and the covariates in these groups satisfy \eqref{eq:cov_balance_groups}. 
Following the proof of Corollary \ref{cor:fixed_property_theorem_balanced_design}, 
we can write the kernel matrix $\bs{K}_n$ and the function vector $\bs{f}$ equivalently as 
$\bs{K}_n = \bs{\Pi} \bs{G} \bs{\Pi}^\top$ and $\bs{f} = \bs{\Pi} \tilde{\bs{f}}$, where $\bs{\Pi} \in \mathbb{R}^{n\times r}, \bs{G}\in \mathbb{R}^{r\times r}$ and $\tilde{\bs{f}}\in \mathbb{R}^r$ is defined the same as that in the proof of Corollary \ref{cor:fixed_property_theorem_balanced_design}. 
Moreover, $\bs{G}$ is positive definite, and $\bs{K}_n$ is of rank $r$. 
Thus, $\bs{K}_n^{\text{C}}$ is also of rank $r$, and its eigenvectors satisfies that $\bs{\gamma}_i^\top  \bs{K}_n^{\text{C}} \bs{\gamma}_i = \bs{\gamma}_i^\top \bs{\Sigma}^{-1/2} \bs{K}_n \bs{\Sigma}^{-1/2} \bs{\gamma}_i = \bs{\gamma}_i^\top \bs{\Sigma}^{-1/2} \bs{\Pi} \bs{G} \bs{\Pi}^\top \bs{\Sigma}^{-1/2} \bs{\gamma}_i = 0$ for $i > r$. 
This immediately implies that, for $i>r$, 
$
\bs{\gamma}_i^\top \bs{\Sigma}^{-1/2} \bs{\Pi} = \bs{0}_{1\times r}
$
and 
thus 
$
\bs{\gamma}_i^\top \bs{\Sigma}^{-1/2} \bs{f} = \bs{\gamma}_i^\top \bs{\Sigma}^{-1/2} \bs{\Pi} \tilde{\bs{f}} = 0. 
$
Therefore, when $b_n\le n-r$, we have 
$n-b_n+1 > r$, and consequently 
$\omega_{\text{C}}(b_n,\sigma_0^{-1}f_0, \bs{\Sigma}) =  \sigma_0^{-2} \sum_{i=n-b_n+1}^n (\bs{\gamma}_{i}^\top \bs{\Sigma}^{-1/2} \bs{f}_0)^2 = 0$. From the above, Corollary \ref{cor:fixed_property_theorem_balanced_design_corr} holds. 
\end{proof}

\begin{proof}[\bf Comment on the special case with a particular covariance structure]
Here we consider a special case under the setting of Corollary \ref{cor:fixed_property_theorem_balanced_design_corr}. 
Specifically, all covariates within each group are distinct, and, assuming the covariates are ordered such that \eqref{eq:cov_balance_groups} holds,  the covariate matrix $\bs{\Sigma}$ up to a certain positive scale has the form 
\begin{align*}
\bs{\Sigma} = 
\begin{pmatrix}
    \bs{I}_m & \rho \bs{I}_m & \cdots & \rho \bs{I}_m\\
    \rho \bs{I}_m & \bs{I}_m & \cdots & \rho \bs{I}_m\\
    \vdots & \vdots & \ddots & \vdots\\ 
    \rho \bs{I}_m &  \rho \bs{I}_m & \cdots &\bs{I}_m
\end{pmatrix}
= 
\rho 
\begin{pmatrix}
\bs{I}_m\\
\bs{I}_m\\
\vdots\\
\bs{I}_m
\end{pmatrix}
\begin{pmatrix}
\bs{I}_m & 
\bs{I}_m & 
\cdots & 
\bs{I}_m
\end{pmatrix} + (1-\rho ) \bs{I}_n
= 
\rho \bs{\Pi} \bs{\Pi}^\top + (1-\rho ) \bs{I}_n,
\end{align*}
where $m = n/H$ and $\bs{\Pi}$ is defined the same as in \eqref{eq:Pi} noting that all covariates within each group are all distinct. 
From the proof of Corollary \ref{cor:fixed_property_theorem_balanced_design_corr}, 
we have $\bs{K}_n^{\text{C}} = \bs{\Sigma}^{-1/2} \bs{K}_n \bs{\Sigma}^{-1/2} = \bs{\Sigma}^{-1/2} \bs{\Pi} \bs{G} \bs{\Pi}^\top \bs{\Sigma}^{-1/2}$, 
where $\bs{G}$ is defined the same as in Corollary \ref{cor:fixed_property_theorem_balanced_design_corr} and is positive definite. 
Moreover, for $i>m$, 
the eigenvector of $\bs{K}_n^{\text{C}}$ corresponding to the $i$th largest eigenvalue must satisfy that $\bs{\gamma}_i^\top \bs{\Sigma}^{-1/2} \bs{\Pi} = \bs{0}_{1\times m}$. 

Below we first study the eigenvectors of $\bs{K}_n^{\text{C}}$. 
Let $\bs{\Pi} = \bs{A} \bs{D} \bs{B}^\top$ be the singular decomposition of the matrix $\bs{\Pi}$, 
where $\bs{A}\in \mathbb{R}^{n\times n}$ and $\bs{B}\in \mathbb{R}^{m\times m}$ are orthogonal matrices,  $\bs{D} = (\bs{D}_1^\top, \bs{0}_{m \times (n-m)}) ^\top \in \mathbb{R}^{n\times m}$, $\bs{D}_1\in \mathbb{R}^{m\times m}$ is a diagonal matrix with positive diagonal elements and $\bs{0}_{m \times (n-m)} \in \mathbb{R}^{n\times (n-m)}$ is a matrix with all elements being zero. 
We then have 
\begin{align*}
\bs{\Sigma} & = \rho \bs{\Pi} \bs{\Pi}^\top + (1-\rho ) \bs{I}_n
= \rho \bs{A} \bs{D} \bs{B}^\top \bs{B} \bs{D}^\top \bs{A}^\top + (1-\rho ) \bs{I}_n
= 
\rho \bs{A} \bs{D} \bs{D}^\top \bs{A}^\top + (1-\rho ) \bs{I}_n\\
& = 
\bs{A} \big\{ \rho  \bs{D} \bs{D}^\top + (1-\rho ) \bs{I}_n \big\} \bs{A}^\top. 
\end{align*}
Thus, $\bs{\Sigma}^{-1/2} \bs{\Pi}$ simplifies to 
\begin{align*}
\bs{\Sigma}^{-1/2} \bs{\Pi} & = 
\bs{A} \big\{ \rho  \bs{D} \bs{D}^\top + (1-\rho ) \bs{I}_n \big\}^{-1/2} \bs{A}^\top \bs{A} \bs{D} \bs{B}^\top
= 
\bs{A} \big\{ \rho  \bs{D} \bs{D}^\top + (1-\rho ) \bs{I}_n \big\}^{-1/2} \bs{D} \bs{B}^\top\\
& = 
\bs{A}
\begin{pmatrix}
\big\{ \rho \bs{D}_1^2  + (1-\rho) \bs{I}_m \big\}^{-1/2} & \bs{0} \\
\bs{0} & (1-\rho)^{-1/2} \bs{I}_{n-m}
\end{pmatrix}
\begin{pmatrix}
\bs{D}_1\\
\bs{0}
\end{pmatrix} \bs{B}^\top
\\
& = 
\bs{A}
\begin{pmatrix}
\big\{ \rho \bs{D}_1^2  + (1-\rho) \bs{I}_m \big\}^{-1/2} \bs{D}_1 \\
\bs{0} 
\end{pmatrix}
\bs{B}^\top
= 
\begin{pmatrix}
\bs{A}_1 & \bs{A}_2
\end{pmatrix}
\begin{pmatrix}
\big\{ \rho \bs{D}_1^2  + (1-\rho) \bs{I}_m \big\}^{-1/2} \bs{D}_1 \\
\bs{0} 
\end{pmatrix}
\bs{B}^\top
\\
& = \bs{A}_1 \big\{ \rho \bs{D}_1^2  + (1-\rho) \bs{I}_m \big\}^{-1/2} \bs{D}_1 \bs{B}^\top, 
\end{align*}
where $\bs{A}_1$ and $\bs{A}_2$ are the submatrices of $\bs{A}$ consisting of the first $m$ and the last $n-m$ columns. 
Because $\bs{A}$ is an orthogonal matrix, this then implies that 
$\bs{A}_2^\top \bs{\Sigma}^{-1/2} \bs{\Pi} = \bs{0}_{(n-m)\times m}$. 
Therefore, the space spanned by the eigenvectors $\{\bs{\gamma}_{m+1}, \ldots, \bs{\gamma}_n\}$ must be the same as the column space of $\bs{A}_2$, which is equivalently the space orthogonal to the column space of $\bs{A}_1$ or $\bs{\Pi}$. 

From the above discussion, we can know that the space spanned by the last $n-m$ eigenvectors of $\bs{K}_{n}^{\text{C}}$ is always the same as the space orthogonal to the column space of $\bs{\Pi}$, regardless of the value of $\rho$ for the covariate matrix (up to a positive scale) $\bs{\Sigma}$. 
Therefore, even if we use an incorrect covariate matrix $\tilde{\bs{\Sigma}}$ with correlation $\tilde{\rho} \ne \rho$, the corresponding eigenvectors $\tilde{\bs{\gamma}}_i$'s for $\tilde{\bs{\Sigma}}^{-1/2}\bs{K}_n \tilde{\bs{\Sigma}}^{-1/2}$ will also satisfy that $\tilde{\bs{\gamma}}_i^\top \bs{\Sigma}^{-1/2} \bs{\Pi} = \bs{0}_{1\times m}$ for $i > m$. 
By the same logic as Corollary \ref{cor:fixed_property_theorem_balanced_design_corr}, the resulting partial permutation test is still valid. 
\end{proof}

\section{Partial Permutation Test under Gaussian Process Regression with Correlated Noises}\label{sec:proof_corr_GPR}

\begin{proof}[\bf Proof of Theorem \ref{thm:gp_finite_property_theorem_corr}]
Note that under $\tilde{H}_{0\text{C}}$ in \eqref{eq:h0_gp_corr}, 
\begin{align*}
    \bs{\Sigma}^{-1/2} \bs{Y} \mid \bXmatrix, \bs{Z} \sim 
    \mathcal{N}
    \left( 
    \bs{0}, \ \frac{\delta_0^2}{n^{1-\gamma}} \bs{K}_n^{\text{C}} + \sigma_0^2 \bs{I}_n
    \right).
\end{align*}
Moreover, we are essentially conducting the partial permutation test using the response vector $\bs{\Sigma}^{-1/2} \bs{Y}$ and the kernel matrix $\bs{K}_n^{\text{C}} = \bs{\Sigma}^{-1/2} \bs{K}_n \bs{\Sigma}^{-1/2}$. 
Therefore, 
by the same logic as Theorem \ref{thm:gp_finite_property_theorem},
we can derive Theorem \ref{thm:gp_finite_property_theorem_corr}. 
For conciseness, we omit the details here. 
\end{proof}

\begin{proof}[\bf Proof of Theorem \ref{thm:general_ppt_valid_corr}]
By the same logic as the proof of Theorems \ref{thm:general_ppt_valid_corr} and \ref{thm:general_ppt_valid}, it suffices to show that as $n \rightarrow \infty$, for $b_n = O(n^\kappa)$ with $0 < \kappa < 1 - \rho^{-1} - \zeta - \gamma$, 
$
    {\xi_{n-b_n+1}}/{n^{1-\gamma}} \cdot b_n \stackrel{\Pr}{\longrightarrow} 0, 
$
where $\xi_{n-b_n+1}$ is the $(n-b_n+1)$th largest eigenvalue of $\bs{K}_n^{\bs{C}} = \bs{\Sigma}^{-1/2} \bs{K}_n \bs{\Sigma}^{-1/2}$. 
For any matrix $\bs{A}$, let $\lambda_i(\bs{A})$ denote the $i$th largest eigenvalue of $\bs{A}$, and $\lambda_{\max}(\bs{A})$ and $\lambda_{\min}(\bs{A})$ denote the largest and smallest eigenvalues of $\bs{A}$. 
By the property of eigenvalues\footnote{See, e.g., \url{https://math.stackexchange.com/q/326944}}, we have, for $1\le i \le n$, 
\begin{align*}
    \lambda_i(\bs{K}_n^{\bs{C}})
    & = 
    \lambda_i(\bs{\Sigma}^{-1/2} \bs{K}_n \bs{\Sigma}^{-1/2})
    \le \lambda_i(\bs{K}_n) \lambda_{\max} (\bs{\Sigma}^{-1}) 
    = \lambda_i(\bs{K}_n) / \lambda_{\min} (\bs{\Sigma}). 
\end{align*}
This then implies that $
\xi_{n-b_n+1} \le c_{n-b_n+1} / \lambda_{\min} (\bs{\Sigma}), 
$
recalling that $c_i = \lambda_i(\bs{K}_n)$. 
From the conditions in Theorem \ref{thm:general_ppt_valid_corr}, 
we then have 
\begin{align*}
\frac{\xi_{n-b_n+1}}{n^{1-\gamma}} \cdot b_n \le 
\frac{c_{n-b_n+1} / \lambda_{\min} (\bs{\Sigma}) }{n^{1-\gamma}}  \cdot b_n
= 
\frac{c_{n-b_n+1}}{n^{1-\gamma}} \cdot O(n^{\zeta}) \cdot O(n^{\kappa})
= 
\frac{c_{n-b_n+1}}{n^{1-\gamma}} n^{\zeta+\kappa} \cdot O(1). 
\end{align*}
Note that $\zeta+\kappa$ is less than $1-\rho^{-1} - \gamma$. 
From Lemma \ref{dn_times_bn}, we must have 
${c_{n-b_n+1}}/{n^{1-\gamma}} \cdot n^{\zeta+\kappa} \stackrel{\Pr}{\longrightarrow} 0$ as $n\rightarrow \infty$, 
which immediately implies that ${\xi_{n-b_n+1}}/{n^{1-\gamma}} \cdot b_n \stackrel{\Pr}{\longrightarrow} 0$. 
From the above, Theorem \ref{thm:general_ppt_valid_corr} holds. 
\end{proof}

\end{document}